\DeclarePairedDelimiter{\floor}{\lfloor}{\rfloor}
\DeclareMathOperator*{\argmax}{arg\,max}
\DeclareMathAlphabet{\pazocal}{OMS}{zplm}{m}{n}
\definecolor{darkred}{RGB}{150,0,0}
\definecolor{darkgreen}{RGB}{0,150,0}
\definecolor{darkblue}{RGB}{0,0,200}
\title{
Social Learning with Limited Attention:\\ Negative Reviews Persist under Newest First
}
\author{
Jackie Baek\thanks{Stern School of Business, New York University, \texttt{baek@stern.nyu.edu}} \and 
Atanas Dinev\thanks{Massachusetts Institute of Technology, \texttt{adinev@mit.edu}} \and Thodoris Lykouris\thanks{Massachusetts Institute of Technology, \texttt{lykouris@mit.edu}}
}
\date{First version: June 2024\\Current Version: May 2026\footnote{An extended abstract appeared at the ACM Conference on Economics and Computation (EC) 2024.}}
\begin{document}

\maketitle

\thispagestyle{empty}

\begin{abstract}
We study a model of \emph{social learning from reviews} where customers are computationally limited and make purchases based on reading only the first few reviews displayed by the platform. Under this limited attention, we establish that the \emph{review ordering} policy can have a significant  impact. In particular, the popular \emph{Newest First} ordering induces a negative review to persist as the most recent review longer than a positive review. This phenomenon, which we term the \emph{Cost of Newest First}, can make the long-term revenue unboundedly lower than a counterpart where reviews are exogenously drawn for each customer. 

We show that the impact of the \emph{Cost of Newest First} can be mitigated under dynamic pricing, which allows the price to depend on the set of displayed reviews. Under the optimal dynamic pricing policy, the revenue loss is at most a factor of 2. On the way, we identify a structural property for this optimal dynamic pricing: the prices should ensure that the probability of a purchase is always the same, regardless of the state of reviews. 
We also consider a setting where product quality evolves over time according to a Markov chain; we find that Newest First better tracks current quality but still leads to lower revenue, highlighting a trade-off between customer belief accuracy and revenue. Finally, numerical simulations confirm the robustness of the \emph{Cost of Newest First} across several modeling variants.
\end{abstract}

\newpage
\setcounter{page}{1}

\section{Introduction}
\label{sec:into}
The use of product reviews to inform customer purchase decisions has become ubiquitous in a variety of online platforms, ranging from electronic commerce to accommodation and recommendation platforms. 
While the online nature of such platforms may hinder the ability of customers to confidently evaluate the product compared to an in-person experience, reviews written by previous customers can shed light on the product's quality. 
It is well established that product reviews play a significant role on customer purchase decisions \cite{chevalier2006effect,zhu2010impact,luca2016reviews}. 

The process in which reviews impact product purchases can be seen as a problem of \emph{social learning}, which generically studies how agents update their beliefs for an unknown quantity of interest (e.g., product quality) based on observing actions of past agents (e.g., reading reviews written by past customers). The typical assumption in the literature of social learning with reviews is that, when deciding whether to purchase a product, customers consider either \textit{all} reviews provided by previous customers \cite{cims17,imsz19,ghkv23} or a summary statistic such as their average rating \cite{bs18,chen2021reviews,acemoglu2022learning}. The motivation for the latter assumption is that customers have limited time and computational power and thus rely on a summary statistic, often provided by the platform (see Section~\ref{sec:relatedwork} for a further discussion on these lines of work).

However, in practice, a common scenario may be somewhere ``in between'' the above two assumptions: customers read a small number of reviews in detail. Existing works have found that the textual content of a review contains important information that goes beyond its numeric score and such information can heavily influence purchase decisions \cite{ghose2010estimating,archak2011deriving,ludwig2013more,liu2019large, vana2021effect, lei2022swayed}. Therefore, customers look beyond the average review rating and read a small number of reviews in detail. In particular, \cite{kavanagh2021impact} finds that 76\% of customers read between 1 and 9 reviews before making a purchase. This motivates the main questions of our work:
\begin{center}\emph{
When customers read a limited number of reviews, how does this impact social learning?\\
Are there operational decisions that should be reconsidered due to this limited attention?}
 \end{center}
 
We study a model that builds on existing models of social learning, where we incorporate the behavior that customers read a limited number of reviews. This simple change has two key implications.
First, we show that the \textit{ordering} of those reviews plays a critical role in impacting social learning and revenue. 
This is in contrast to existing models, where customers read all reviews or rely solely on a summary statistic, in which the review ordering has no effect. 
Our work thus introduces a new operational lever and enables us to compare review ordering policies.
Second, unlike models in the literature where customer beliefs eventually converge, beliefs in our model do not stabilize but rather \textit{fluctuate indefinitely} because customers read only a small number of reviews. This provides a fundamentally different perspective on social learning when customers read a limited number of reviews.

Concretely, we study a model for a single product (formalized in Section~\ref{sec:model}), where a platform makes decisions regarding how reviews are \emph{ordered} and how the product is \emph{priced}.
Customers arrive sequentially and each customer takes only the first $c$ reviews into account to inform their purchase decision, where $c$ is a small constant. We assume that the $t$'th customer's valuation can be decomposed as the sum of a) an idiosyncratic valuation $\Theta_t$ that is known to them and b) a product quality $X_t$ that has a fixed mean $\mu$; the latter quantity is unknown to the customer and can only be inferred via the reviews. The randomness of $X_t$ reflects variability in the service quality, product defects, or exogenous factors affecting the customer experience \cite{cims17}.We assume that when a customer reads a review written by customer $s$, they observe $X_s$ (see Section~\ref{subsec:discussion_modeling_assumptions} for a discussion of this assumption).
Each customer uses $c$ reviews to update their belief about $\mu$, and makes a purchase if their estimate of their valuation is higher than the price. In the event of a purchase, they leave a review that future customers can read. 

\subsection{Our contribution}

A popular review ordering policy is to display reviews in 
reverse chronological order (newest to oldest);
we refer to this policy as $\newest$. This is the default option in platforms such as Airbnb, Tripadvisor and Macy's\footnote{This statement is based on access on May 14, 2025. Many other platforms such as Amazon and Yelp list \textsc{newest} as the second default and have their own ordering mechanism as the default option.} 
as it allows customers to get access to the most up-to-date reviews. In the context of our model, under the $\newest$ ordering, a customer considers the $c$ most recent reviews. The set of these $c$ reviews evolves as a \emph{stochastic process} over customer arrivals: when a new purchase happens and thus a new review is provided, this review replaces the $c$-th most recent review. 

\paragraph{Cost of Newest First.} 
By analyzing the steady state of the aforementioned stochastic process, we observe that the $\newest$ ordering policy induces an undesirable behavior where negative reviews are read more than positive reviews, leading to a significant loss in overall revenue (Section~\ref{sec:ordering}). 

To illustrate this phenomenon, consider a simple setting where customers only read the first review ($c=1$) and the probability of a purchase is higher when the review is positive.
When the $t$'th customer arrives, if the first review is positive, this customer is more likely to buy the product and subsequently leave a review; the new review from the $t$'th customer then becomes the ``first review'' for the $(t+1)$-st customer. 
On the other hand, if the first review is negative for the $t$'th customer, they are less likely to buy the product, and hence the same negative review remains as the ``first review'' for the $(t+1)$-st customer. 
Therefore, negative reviews persist under the newest first ordering: a review will stay longer as the first review if it is negative compared to if it is positive. 

We show that this arises due to the \emph{endogeneity} of the stochastic process that $\newest$ induces and results in a loss of long-term revenue. 
To formalize this notion, we compare $\newest$ to an \emph{exogenous process} where each arriving customer sees an independently drawn random set of reviews; we refer to this ordering policy as $\random$. We establish that the long-term revenue under $\newest$ is strictly smaller than that of $\random$ under any non-degenerate instance (Theorem~\ref{theorem:negative_bias}) and that the revenue under $\newest$ can be arbitrarily smaller, in a multiplicative sense, compared to $\random$ (Theorem~\ref{thm:conf_arbitrarily_bad}).
We refer to this phenomenon as the \textsc{Cost of Newest First} (\textsc{CoNF}). 

\paragraph{Dynamic pricing mitigates the impact of \textsc{CoNF}.} Seeking to mitigate this phenomenon, we consider the impact of optimizing the product's price (Section \ref{sec:pricing}). 
We show that even under the \emph{optimal} static price, the \textsc{CoNF} remains arbitrarily large (Theorem~\ref{thm:CoNF_opt_static_arbitrarily_bad}). However, if we allow for dynamic pricing, where the price can depend on the state of the reviews, we show that the \textsc{CoNF} is upper bounded by
a factor of $2$ when the idiosyncratic valuation distribution is non-negative and gracefully decays with its negative mass otherwise (Theorem~\ref{theorem:dynamic_pricing_CoNF_bound}). 

This improvement stems from the fact that dynamic pricing allows us to \emph{change the steady-state distribution} of the stochastic process.
Recall that, under $\newest$, 
the stochastic process spends more time on states with negative reviews than states with positive reviews. The optimal dynamic pricing policy sets prices so that the purchase probability is \emph{equal} across all states (\cref{theorem:newest_first_dynamic_optimal_policy}) --- this ensures that the steady-state distribution under $\newest$ is the same as that of $\random$. 

A broader implication of this result is that, when purchase decisions depend on the state of the first reviews, platforms that offer state-dependent prices can be arbitrarily better off than platforms that are unaware of this phenomenon and statically optimize prices (\cref{thm:dynamic_vs_static}).

\paragraph{Non-stationarity in product quality.}
One potential benefit of Newest First is that newer reviews are more representative of current product quality, when the quality changes over time. 
To formalize this, we define a model with non-stationarity where the product quality changes over time according to a Markov chain (\cref{sec:dynamic_quality}). 
We define a \textit{belief error} metric that represents the difference between the customer's posterior belief and the current product quality.
Indeed, the belief error is lower under Newest First compared to Random, where the gap is higher when the level of non-stationarity is low (i.e., quality switches are less frequent). 
However, we show that the CoNF still exists under non-stationarity; revenue is lower under Newest First compared to Random.
This introduces a trade-off between revenue and belief error. 
Revenue is always higher under Random while the belief error is better (lower) under Newest First, but the benefit of Newest First for belief error diminishes when quality switches are more frequent. 
Therefore, when the level of non-stationarity is high, the benefit of using Newest First diminishes.

\paragraph{Numerical evaluations.} 
We conduct five sets of numerical evaluations to support and extend our theoretical findings (\cref{sec:numerics}). In Section~\ref{subsec:numerics_limited_attention} we investigate the impact of the limited-attention parameter $c$ on the CoNF and we show that $c = 1$ need not be the worst case. In Section~\ref{subsec:numerics_time_varying_prior} we numerically investigate a setting where the customer's prior depends endogenously on the entire review history and show that the CoNF continues to hold under this model. In Section~\ref{subsec:numerics_nonstationarity} we numerically investigate a model where the product quality is non-stationary and increases with time, a setting which benefits Newest First the most. Despite this, we show that there exist settings where the CoNF exists. In Section~\ref{subsec:conf_with_self_selection_bias} we numerically investigate a setting in the presence of self-selection bias and show that the CoNF exists under this setting. Lastly, in Section~\ref{subsec:numerics_dynamic_vs_static_conf} we numerically investigate the impact of dynamic pricing on CoNF and on the revenues generated by Newest, as our theoretical results on the CoNF provide worst-case bounds. 

\subsection{Key Behavioral Assumption and Managerial Implications}
In this paper, we posit that customers have \emph{limited attention}, i.e., they read a limited number of the top-ranked reviews (see Section~\ref{sec:relatedwork} for a discussion on this assumption). Under this behavioral assumption, our paper sheds light to three novel managerial implications:

\begin{enumerate}
    \item \textbf{Impact of review ordering.} 
    Our paper identifies that if customers only read a limited number of the top-ranked reviews, then the \textit{ordering} of reviews becomes an important operational lever. 
    This lever was previously overlooked by the literature as past works did not model customers' limited attention. Under this assumption, we identify and characterize the \emph{Cost of Newest First} phenomenon, that presenting reviews in a Newest First order leads to the persistence of negative reviews, which negatively impacts customer beliefs and consequently reduces platform revenue.
    \item \textbf{Incorporating reviews in dynamic pricing algorithms.} 
    We find that dynamic pricing strategies mitigate the Cost of Newest First. Theoretically, the negative impact is bounded by a factor of two, and our numerical experiments show that this effect is often negligible.
    Practically speaking, platforms already widely employ sophisticated and highly dynamic pricing algorithms \cite{patterson2017highspeed}. 
    Our results imply that the \textit{current set of reviews} is an important feature that platforms should incorporate in their pricing algorithms, alongside other important factors such as competitors' prices, demand forecasts, seasonal trends, and others.
    \item \textbf{Non-stationarity: Randomizing over recent reviews.} One potential benefit of Newest First is that newer reviews are more representative 
    of the current product quality. Indeed, when product quality changes over time, we show that customer 
    beliefs are more accurate under Newest First than under Random, yet revenue remains 
    lower, revealing a trade-off between belief accuracy and revenue. This trade-off 
    shifts as the level of non-stationarity increases: the accuracy advantage of Newest 
    First diminishes as quality changes more frequently, while the CoNF persists. 
    Since Newest First and Random represent two extremes of this trade-off, we show 
    that a small amount of randomization, i.e., selecting reviews uniformly at random 
    from a larger window of the most recent reviews, can outperform both. This policy 
    captures the recency benefit of Newest First while mitigating the CoNF.
\end{enumerate}

\subsection{Related Work and Comparison of Key Modeling Assumptions} \label{sec:relatedwork}\label{subsec:discussion_modeling_assumptions}

\paragraph{Social learning and incentivized exploration.}
Classical models of social learning from \cite{b92} and \cite{bhw92} study a setting in which there is an unknown state of the world and each agent observes an independent, noisy signal about the state as well as the actions of past agents. The agent uses this information to update their beliefs and then takes an action. In this setting, undesirable ``herding'' behavior can arise: agents may converge to taking the wrong action. Conceptually closer to our work, \cite{sayedi2018pricing} shows that dynamic pricing can mitigate the aforementioned herding behavior.
Subsequent works study how social learning is affected by the agent's signal distribution \cite{smith2000pathological}, prior for the state \cite{chakraborty2022consumers}, heterogeneous preferences \cite{goeree2006social,lobel2016preferences}, as well as the structure of their observations \cite{acemoglu2022learning}.
From a different perspective, there is a stream of literature that aim to design mechanisms to \textit{help} the learning process, either by modifying the information structure  \cite{kremer2014implementing, mansour2020bayesian,bimpikis2018crowdsourcing} or by incentivizing exploration through payments  \cite{frazier2014incentivizing,kannan2017fairness}.

\paragraph{Social learning with reviews.}
Closer to our work, several papers focus on the setting where customers learn about a product's quality through reviews 
\cite{hu2017self,cims17,bs18,schmit2018human,imsz19,chen2021reviews,acemoglu2022learning,ghkv23,bondi2023alone,carnehl2024pricing}. 
This literature induces several modeling differences compared to classical social learning.
First, agents do not receive independent signals of the unknown state (the product quality).
Second, agents not only observe the binary purchase decision of previous agents, but also the reviews of previous agents who purchased the product. We highlight the key modeling assumptions of our work and how they relate to existing works.

\paragraph{No self-selection bias.}
In prior works of social learning with reviews, the main difficulty stems from the \textit{self-selection bias}, the idea that only customers who value the product highly will buy the product and hence these customers leave reviews with higher ratings. 
In the presence of self-selection bias, 
\cite{cims17}, \cite{bs18} and \cite{imsz19} study conditions in which customer beliefs eventually successfully learn the quality of a product, where customers update their beliefs based on the entire history of reviews.
\cite{bs18,schmit2018human, chen2021reviews,acemoglu2022learning} 
consider models in which customers only incorporate summary statistics of prior reviews (e.g., average rating) into their beliefs. 
\cite{bondi2023alone} analyzes how the magnitude of the self-selection bias depends on the product's quality and polarization.
\cite{carnehl2024pricing} consider a model where the platform's pricing decision affect the review ratings and characterizes the impact of the price on the average rating. This is also empirically supported by \cite{byers2012groupon} which shows that Groupon discounts lead to lower ratings. \cite{hu2017self} study a two-stage model which quantifies both self-selection bias and under-reporting bias (reviews are provided only by customers with extreme experiences); see references within for further related work. 

In contrast, our work studies a model where self-selection bias does not arise. 
Specifically, we assume that customer $t$'s valuation can be decomposed as $V_t = \Theta_t + X_t$
, where $\Theta_t$ is customer-specific and $X_t$ has a fixed mean~$\mu$ shared across customers. The quantity $\mu$ is the unknown quantity of interest for all customers. Our model assumes that a review reveals $X_t$. In contrast, prior works assume that a review reveals $\Theta_t + X_t$ and one cannot separate the contribution from each term. This means that, in our model, the customer-specific valuation and the pricing decision affect the purchase probability  but do not affect the review itself conditioning on a purchase.
Although our assumption makes it ``easier'' for customers to learn $\mu$, we study a new phenomenon that arises due to the fact that customers only read a small number of reviews.

The practical motivation for our modeling assumption of no selection bias is the following.
On most online platforms, a review is composed by both a numeric score (e.g., 4 out of 5) and a textual description that further explains the reviewer's thoughts.
Within our model, one interpretation is that the numeric score reveals $\Theta_t + X_t$, but one can use the textual content of the review to separate $\Theta_t$ from $X_t$.
Therefore, we assume that reading the text of the review reveals $X_t$, but we also assume that each customer only reads a small number of reviews since reading the text takes time. That said, in Section~\ref{subsec:conf_with_self_selection_bias}, we numerically demonstrate that \textsc{CoNF} persists even when the customers read the full review rating $\Theta_t + X_t$ and its negative bias can outweigh the positive self-selection bias.

\cite{ghkv23} also studies a setting with no self-selection bias (without limited attention) with dynamic pricing. Their model assumes that customers are partitioned into a finite number of types and only read reviews written by customers of the same type. This overcomes self-selection bias as customers of the same type can be thought of as having the same value of $\Theta_t$ in our model.

\paragraph{Individual reviews vs. average rating.}
Most online platforms show the average rating of all reviews for each product, as well as individual reviews.  In our model, the average rating is captured through the customers' \textit{prior belief} about the product quality $\mu$, which captures any information customers have before reading individual reviews (e.g., average rating, product features, etc.).
The question we study is: conditioning on all other available information, how does the review ordering policy affect purchasing decisions? For example, if a customer starts with a high prior due to a 4.8 out of 5 average rating but still reads a few reviews, the order in which those reviews are presented can influence their final belief and purchase decision. 
Our paper therefore speaks directly to the incremental effect of review ordering, holding other information constant.

Existing empirical studies support the assumption that customers read individual reviews, and that this can have a significant impact on purchasing behavior \citep{archak2011deriving,liu2019large,vana2021effect,lei2022swayed}. \cite{archak2011deriving} emphasize that product quality is multi-dimensional and consumers have heterogeneous preferences over the dimensions; this is often lost in average ratings and suggests that individual reviews matter. Moreover, \cite{lei2022swayed} suggest that individual reviews only matter when the average rating of the product is sufficiently high to put the product into the consumer's consideration set. More closely to our work, \cite{liu2019large,vana2021effect} show that individual reviews have highest impact on purchase decisions  when they resolve uncertainty about a product. Moreover, \cite{vana2021effect} claim that the \textit{top} reviews have a substantial effect on customer behavior, which highlights the importance of review ordering. Our work is aligned with this empirical evidence as we focus on a) the effect of the $c$ top reviews on customer behavior and b) settings where individual reviews resolve uncertainty about a product by refining the customers' prior. We do not explicitly model the multi-dimensional nature of reviews and the consideration set of consumers as our insights are present even in a simpler single-dimensional model where the product is always in the customers' consideration set. 

\paragraph{Belief convergence vs.\ stochastic process.}
In the existing literature, social learning is deemed ``successful'' if the customer's estimate of product quality converges to the true quality.
This convergence can either be that their belief distribution converges to a single point \cite{imsz19,acemoglu2022learning}, or that the customer's scalar estimate of the product quality (e.g., average rating) converges to the true quality \cite{cims17,bs18}.
In our setting, customers update their beliefs based on the first $c$ reviews and  these $c$ reviews evolve across customers as a stochastic process. 
Therefore, customer beliefs do not converge but rather oscillate based on the state of those $c$ reviews, even as the number of customers goes to infinity. 

Closer to our work, \cite{park2021fateful} study a similar model (without limited attention) and show that the initial review can have an effect on the proclivity of customer purchases and the number of reviews.  This bias introduced by initial reviews is also empirically observed by the work of \cite{le2018endogenous}. Unlike our model, this effect diminishes over time as the product acquires more reviews and the initial review becomes less salient. Our results on the \textsc{Cost of Newest First} can thus be viewed as a stronger version of the result in \cite{park2021fateful}: we show that, in the presence of limited attention, the effect of negative reviews persists even in the steady-state of the system. 

\paragraph{Fully Bayesian vs.\ non-Bayesian.}
Existing papers differ in whether customers incorporate information from reviews in a fully Bayesian or non-Bayesian manner.
For example, \cite{imsz19} and \cite{acemoglu2022learning} study a fully Bayesian setting where all distributions (prior on $\mu$, distribution of $\Theta_t$) and purchasing behaviors are common knowledge and each customer forms a posterior belief on $\mu$ using the information given to them.
In contrast, \cite{cims17} and  \cite{chen2021reviews} assume that customers use a simple non-Bayesian rule when making their purchase decision. Moreover,
\cite{bs18} study both Bayesian and non-Bayesian update rules and compare them. 

In our paper, customers use a Bayesian \emph{framework} but their update rule is not \emph{fully} Bayesian.
Specifically, customers start with a Beta prior on $\mu$. 
This prior need not be correct as we assume that $\mu$ is a fixed number.
We assume reviews are binary (0 or 1), and customers read $c$ of them and update their beliefs \textit{assuming} that these reviews are independent draws from $Bernoulli(\mu)$.
However, this is not necessarily the ``correct'' Bayesian update rule for the customers, due to the \textit{endogeneity} of the stochastic process of the first $c$ reviews.
That is, under the $\newest$ ordering rule, negative reviews are more likely to persist as the most recent review --- hence even if $\mu = 1/2$, the most recent review is more likely to be negative than positive.
Therefore, a fully Bayesian customer should take this phenomenon into account when updating their beliefs. We assume that customers do \emph{not} account for this (and hence are not \textit{fully} Bayesian), and we study the impact of how this endogeneity impacts the steady state of the process. Our model also implicitly assumes that customers do not use additional information about previous customers' non-purchase decisions (which are typically non-observable) and the platform's pricing policy (which is often opaque).

Finally, our model is flexible in that it allows customers to map their belief distribution to a scalar estimate of $\mu$ in an arbitrary manner, e.g.,  the mean of the belief distribution (which is considered in most prior work) as well as a pessimistic estimate thereof (as studied in \cite{ghkv23}).

\paragraph{Other work on social learning with reviews.} 
\cite{hamilton2024fresh} studies the design of rating systems motivated by the idea that older reviews become less relevant. In a setting where the product's quality changes, they show that a moving average rating system is optimal in reflecting the true quality.
Social learning with reviews has also been studied for ranking \cite{maglaras2023product}, dealing with non-stationary environments \cite{boursier2022social}, and has been applied to green technology adoption \cite{ren2023impact}.

\paragraph{Limited attention in operations.}
Our paper also connects to a growing literature in operations that models customer decision-making under limited attention in various operational contexts. For example, \cite{akcay2019consumer} and \cite{boyaci2021pricing} model consumers who face costs in acquiring information about products, leading to non-standard choice behavior and strategic pricing by firms. \cite{feldman2019assortment} and \cite{gallego2024random} model limited consideration sets, where consumers only attend to a subset of available products when making purchasing decisions. \cite{gao2021assortment} model impatient customers who sequentially view product recommendations and may stop searching once a product surpasses their threshold utility.
Our model complements this literature by applying similar behavioral assumptions of limited attention, but in the context of customers reading reviews rather than evaluating products directly. 

\paragraph{Rational inattention.}
Conceptually, our paper is also related to the rational inattention literature \cite{sims2003implications,matvejka2015rational,fosgerau2020discrete,turlo2025discrete,chen2025multivariate}, which studies how agents make decisions under information-processing constraints. In contrast, we do not model the customer’s information acquisition problem. Instead, we take the customer's limited review-reading capacity as \textit{exogenous}: the customer observes only the top $c$ reviews and updates their belief based on them. Our focus is therefore not on how an agent optimally allocates attention, but rather on how platform decisions, such as review ordering and pricing, interact with this exogenous attention constraint.

\paragraph{Negative bias due to quality variability.} 
\cite{decroix2021service} considers a firm that repeatedly prices and sells a product to a single customer and establishes a negative bias due to service quality variability. There are two similarities with our work. First, they identify a negative bias in average beliefs driven by quality variability in experiences and the customer weighing recent experiences more highly. Second, their optimal pricing policy has the same purchase probability in each state. 

That said, our setting and results are conceptually different as we focus on the effect of reviewer ordering policies and the benchmark we consider is the best ordering policy. On a technical level, we can show that our benchmark is stronger than the one considered in \cite{decroix2021service} and our negative bias result is not restricted to the logit model for purchasing behavior that \cite{decroix2021service} focuses on. 
In fact, the negative bias result does not hold using their benchmark under our more general purchasing model (see Proposition~\ref{prop:newest_arbitrarily_better_than_knowledgeable_static_price} in Section~\ref{subsec:negative_bias_with_respect_to_no_variability_benchmark_does_not_always_hold_in_our_model}).
Finally, due to our focus on the review ordering policy, we can study non-stationarity and self-selection bias, which is outside the scope of \cite{decroix2021service}. We provide a more extensive comparison in Appendix~\ref{appendix:comparison_negative_bias_due_to_quality_variability}.

\section{Model}
\label{sec:model}
We consider a platform that repeatedly offers a product to customers that arrive 
in consecutive rounds $t=1,2,\ldots$. 
The customer makes a purchase decision based on a finite number of reviews and the price; if a purchase occurs, they leave a new review for the product. 
We consider the platform's decisions regarding the \emph{ordering of reviews} as well as the \emph{price}.

\noindent\textbf{Customer valuation.}
The customer at round $t$ has a realized valuation $V_t = X_t+\Theta_t \in \mathbb{R}$ for the product, where $X_t$ and $\Theta_t$ represent the contribution from the product's unobservable and observable parts respectively.
Specifically, when customer $t$ makes a purchase, $X_t \in \{0, 1\}$ is drawn independently from $Bernoulli(\mu)$ where $\mu \in (0, 1)$ is the same across all customers and is unknown to them.  
Contrastingly, the quantity $\Theta_t$ is customer-specific and is known to customer $t$ before they purchase. We assume that, at every round $t$, $\Theta_t$ is drawn independently from a distribution $\mathcal{F}$ with bounded support. The platform knows the distribution $\mathcal{F}$ but not $\Theta_t$.

If the customer at round $t$ knew their exact valuation $V_t$, then they would purchase the product if and only if $V_t \geq p_t$, where $p_t$ is the price of the product at time $t$.
However, $X_t$ is unknown and hence so is $V_t$. We assume that the customers read reviews to learn about $\mu$, and their purchase decision depends on their belief about $\mu$ after reading the reviews.
Note that customers cannot aim to estimate $X_t$ beyond estimating $\mu$, since $X_t$ is drawn independently for each customer.

\noindent\textbf{Review generation.}
If the customer at round $t$ purchases the product, they write a review that future customers may read.
The review rating given by customer $t$ is equal to $X_t$ (see \cref{subsec:discussion_modeling_assumptions} for a discussion of this assumption). We often refer to a review with $X_t= 1$ as \emph{positive} and to a review with $X_t= 0$ as \emph{negative}.
We refer to $X_{t} \in \{0, 1, \perp\}$ as the \emph{rating of customer $t$'s review} where $X_t = \perp$ if customer $t$ did not purchase the product. 

\subsection{Customer Purchase Behavior}\label{subsec:model_customer_purchase_behavior}
We describe the customer purchase behavior at one round, taking the price and the review ordering as fixed.
Customers have a prior $\mathrm{Beta}(a,b)$ for the value of $\mu$, for some fixed $a, b > 0$. This prior need not be correct and could be based on
information about the features of the product or summary statistics of all reviews which are subject to self-selection bias (see discussion in Section~\ref{subsec:discussion_modeling_assumptions}).
Customers read the first $c$ reviews that are shown to them to update their prior. Formally, letting $\bm{Z}_t = (Z_{t,1}, \ldots, Z_{t,c}) \in \{0,1\}^c$ denote the ratings of the first $c$ reviews shown, the customer creates the following posterior for the unobservable quality $\mu$:
\begin{equation*} 
\Phi_t \coloneqq \mathrm{Beta}\left(a + \sum_{i=1}^{c} Z_{t,i}, b + c-\sum_{i=1}^{c} Z_{t,i}\right). 
\end{equation*}
This corresponds to the natural posterior update for $\mu$ if each $Z_{t,i}$ is an independent draw from $Bernoulli(\mu)$.\footnote{The reviews $\bm{Z}_t$ are not necessarily independent draws from Bernoulli(µ), hence the customers are not completely
Bayesian. See the second-to-last point in Section \ref{sec:relatedwork} for a detailed discussion.} Note that the customer places equal weight on the first $c$ reviews. Based on this posterior, the customer creates an estimate $\hat{V}_t$ for their valuation. We assume that there is a mapping $h(\Phi_t) \in (0, 1)$ from their posterior to a real number that represents an estimate of the fixed valuation~$\mu$.
For example, the mapping $h(\Phi_t) = \expect[\Phi_t]$ represents risk-neutral customers, while if $h(\Phi_t)$ corresponds to the $\phi$-quantile of $\Phi_t$ for $\phi < 0.5$, this represents pessimistic customers \cite{ghkv23}. The customer then forms their estimated valuation
$\hat{V}_t \coloneqq \Theta_t + h(\Phi_t)$ and buys the product at price $p_t$ if and only if $\hat{V}_t \geq p_t$.  Finally, the customer leaves a review $X_t \sim Bernoulli(\mu)$ if they bought the product, otherwise $X_t = \perp$.

To ease exposition, we often use $n$ to denote the number of positive ratings  (i.e., $n = \sum_{i=1}^{c} Z_{t,i}$), and we overload notation to denote by $h(n)$ to refer to $h(\mathrm{Beta}(a + n, b + c-n))$.
We make the natural assumption that higher number of positive ratings leads to a higher purchase probability.
\begin{assumption}\label{assumption:h_monotonicity_positive_reviews}
The estimate $h(n)$ is strictly increasing in the number of positive reviews $n$.
\end{assumption}
We also assume that the idiosyncratic valuation has positive mass on non-negative values.
\begin{assumption}\label{assumption:non-negative_mass_idiosyncratic}
        The distribution $\mathcal{F}$ has positive mass on non-negative values: $\prob_{\Theta \sim \mathcal{F}}[\Theta \geq 0] > 0$. 
\end{assumption}
We denote the above problem instance as $\mathcal{E}(\mu,\mathcal{F},a,b,c, h)$ for product quality $\mu$, prior parameters $a,b$, idiosyncratic distribution  $\mathcal{F}$, customers' attention budget $c$, and an estimate mapping~$h$. 

\subsection{Platform Decisions}\label{subsec: policy_decison}
We consider two platform decisions, review ordering and pricing.

\noindent\textbf{Review ordering.}
With respect to ordering,
since customers only take the first $c$ reviews into account, choosing an ordering is equivalent to selecting a set of $c$ reviews to show.
Let $I_t = \{ \tau < t : X_{\tau} \neq \perp \}$ be the set of previous rounds in which a review was submitted
and let $\mathcal{H}_{t} = \{I_t, \{X_{\tau}\}_{\tau < t}, \{p_{\tau}\}_{\tau< t},  \{\bm{Z}_{\tau}\}_{\tau< t}\}$ be the observed history before round $t$. At round $t$, the platform maps  (possibly in a randomized way) its observed history $\mathcal{H}_{t}$ to the set of review ratings $\sigma(\mathcal{H}_{t})$ corresponding to the $c$ reviews shown. 
We study the steady-state distribution of the system; to avoid initialization corner cases, we assume that at time $t = 1$, there is an infinite pool of reviews $\{X_{\tau}\}_{\tau=-\infty}^{-1}$ where $X_{\tau} \stackrel{\text{i.i.d.}}{\sim}  Bernoulli(\mu)$. 
We consider the following review ordering policies:
\begin{itemize}
    \item $\sigma^{\textsc{newest}}$ selects the $c$ newest reviews. This is formally defined as $\sigma^{\textsc{newest}}(\mathcal{H}_t) = (Z_{t,i})_{i=1}^c$ where $Z_{t,i}$ is the rating of the $i$-th most recent review. 
    \item $\sigma^{\textsc{random}}$ shows $c$ random reviews. 
    This corresponds to ratings being drawn independently from $Bernoulli(\mu)$ at each round $t$; i.e.,
    $\sigma^{\textsc{random}}(\mathcal{H}_t) = (Z_{t,i})_{i=1}^{c}$ where $Z_{t,i} \stackrel{\text{i.i.d.}}{\sim} Bernoulli(\mu)$. We use $\sigma^{\textsc{random}}$ as a benchmark because it maximizes revenue across all policies that only consider the relative arrival sequence and are agnostic to the exact time and rating of reviews (see Appendix~\ref{appendix:why_random_right_benchmark}). 
\end{itemize}

Note that, under $\sigma^{\textsc{random}}$, the customers' and platform's actions at the current round do not influence the reviews shown in future rounds (i.e. the reviews are exogenous). 
In contrast, under $\sigma^{\textsc{newest}}$, the customers' and platform's actions influence what reviews are shown in future rounds (i.e. the reviews are endogenous to the underlying stochastic process).  

\noindent\textbf{Pricing.}
The platform also decides on the pricing policy, where the price at each round can depend on the set of $c$ displayed reviews. 
We denote a pricing policy by a function $\rho: \{0,1\}^{c} \to \mathbb{R}$, which maps the set of displayed review ratings $\bm{z} \in \{0,1\}^{c}$ to a price.
We study two classes of pricing policies: static and dynamic.
We let $\Pi^{\textsc{static}}$ be the set of pricing policies $\rho$ that assign a fixed price $p$, i.e., $\rho(\bm{z}) = p$ for any $\bm{z}$. Similarly, $\Pi^{\textsc{dynamic}}$ includes the set of pricing policies $\rho$ where $\rho(\bm{z})$ can depend on the review ratings~$\bm{z}$. 

\subsection{Revenue and the Cost of Newest First}\label{subsec:rev_cost_of_newest_first_definitions}
For an ordering policy $\sigma$ and pricing policy $\rho$, we define the revenue as the steady-state revenue:~\footnote{The policies we consider have a stationary distribution so, in our analysis, we replace the $\liminf$ with a $\lim$.}
\begin{equation}\label{equation:reveneue_def}
\textsc{Rev}(\sigma, \rho) \coloneqq \liminf_{T \to \infty} \expect\Big[\frac{\sum_{t=1}^{T} \rho(\bm{Z}_t)\mathbf{1}_{\Theta_{t} + h(\Phi_t) \geq \rho(\bm{Z}_t)}}{T}\Big].
\end{equation}
Our main focus lies in understanding the effect of the ordering policy on the revenue. Specifically, we compare the revenues of the ordering policies $\sigma^{\textsc{newest}}$ and $\sigma^{\textsc{random}}$.
For a pricing policy $\rho$, we define the \textsc{Cost of Newest First} (\textsc{CoNF}) as (for all policies $\rho$ we consider, $\textsc{Rev}(\sigma^{\textsc{newest}}, \rho) > 0$): 
$$\chi(\rho) \coloneqq \frac{\textsc{Rev}(\sigma^{\textsc{random}}, \rho)}{{\textsc{Rev}(\sigma^{\textsc{newest}}, \rho)}}.$$ 
If $\rho\in\Pi^{\textsc{static}}$, i.e., $\rho(\bm{z})=p$ for any $\bm{z} \in \{0,1\}^c$ we use $\textsc{Rev}(\sigma, p)$ and $\chi(p)$ as shorthand for the corresponding steady-state revenue and \textsc{CoNF}.

In Section~\ref{sec:pricing}, we study the \textsc{CoNF} when the platform can optimize its pricing policy over a class of policies.
For a class of pricing policies $\Pi$, we define similarly the optimal revenue within-class with respect to an ordering policy $\sigma$ and the corresponding \textsc{CoNF} as:
$$\textsc{Rev}(\sigma, \Pi) \coloneqq \sup_{\rho \in \Pi} \textsc{Rev}(\sigma, \rho) \qquad \text{and} \qquad\chi(\Pi) \coloneqq \frac{\textsc{Rev}(\sigma^{\textsc{random}}, \Pi)}{\textsc{Rev}(\sigma^{\textsc{newest}}, \Pi)} \qquad \text{respectively}.$$
To ease exposition, we make the mild assumption that
$0 < \rev(\sigma, \Pi) < \infty$. \footnote{For the policies we consider, $\rev(\sigma, \Pi) > 0$ by Assumption~\ref{assumption:non-negative_mass_idiosyncratic} and the fact that $h(0)>0$; $\rev(\sigma, \Pi) < \infty$ is satisfied if the maximum revenue from a customer's idiosyncratic valuation is finite, i.e., $\sup_{p} p\prob_{\Theta \sim \mathcal{F}}(\Theta \geq p) < +\infty$. }

Finally, we note that, unlike classical revenue maximization works, our focus is \emph{not} on identifying policies that maximize revenue but rather in comparing the performance of different ordering policies ($\sigma^{\textsc{newest}}$ vs $\sigma^{\textsc{random}}$) and different pricing policies ($\Pi^{\textsc{static}}$ vs $\Pi^{\textsc{dynamic}}$).

\begin{remark}\label{remark_model_generalization}
Our assumption of Bernoulli reviews and Beta prior is made for ease of exposition. Our results extend to a more general model where reviews come from an arbitrary distribution with finite support (not only  $\{0,1\}$) and the estimator $h$ arbitrarily maps reviews to an estimate for the fixed valuation $\mu$; see Appendix \ref{appendix: model_generalization} for details.
\end{remark}

\section{Cost of Newest First (CoNF) with a Fixed Static Price}
\label{sec:ordering}
Throughout this section, we assume a static pricing policy where the price $p$ is fixed and given. 
We establish the main phenomenon of the \textsc{Cost of Newest First} by showing that $\chi(p) > 1$ under very mild conditions on the price $p$.
We then show that $\chi(p)$ can be arbitrarily large.

Recall that $h(n)$ refers to $h(\mathrm{Beta}(a + n, b + c-n))$, where $n \in \{0,\ldots, c\}$ is the number of positive reviews. We first introduce an assumption on two natural conditions that the price satisfies.

\begin{assumption}\label{assumption:non_abs_non_degen}
    For our results on a fixed price $p$, we assume that $p>0$ and that it is: 
    \begin{enumerate}
        \item \emph{Non-absorbing:} the purchase probability is positive for any displayed review ratings; i.e., for all $n \in \{0,1, \dots, c\}$, i.e., $\prob_{\Theta \sim \mathcal{F}}[\Theta + h(n) \geq p] > 0$.
        \item \emph{Non-degenerate:} the purchase probability given all negative review ratings is strictly smaller than given all positive review ratings, i.e.,
    $\prob_{\Theta \sim \mathcal{F}}[\Theta + h(0) \geq p] < \prob_{\Theta \sim \mathcal{F}}[\Theta + h(c) \geq p].$
\end{enumerate}
\end{assumption}
A non-absorbing price guarantees that $\sigma^{\textsc{newest}}$ does not get ``stuck'' in a zero-revenue state. A non-degenerate price implies that the review ratings \emph{matter}, since there exist distinct review ratings where the purchase probability differs.

\subsection{Existence of Cost of Newest First}\label{subsec: negative_recency_bias}
Our main result is that the revenue under $\sigma^{\textsc{random}}$ is strictly higher than that of $\sigma^{\textsc{newest}}$; that is, $\chi(p) > 1$.
As a building block towards this result, we first provide simple and interpretable closed form expressions for $\textsc{Rev}(\sigma^{\textsc{random}}, p)$ and $\textsc{Rev}(\sigma^{\textsc{newest}}, p)$ for a static price $p$, which are given in Propositions \ref{theorem:random_C_revenue} and \ref{theorem:most_recent_C_revenue} respectively. Let $\bern(\mu)$ denote the Bernoulli distribution with success probability $\mu$ and $\binomial(c, \mu)$ denote the Binomial distribution with $c$ i.i.d. $\bern(\mu)$ trials. 

\begin{proposition}[Revenue of $\sigma^{\textsc{random}}$]\label{theorem:random_C_revenue}
    For any fixed price $p$,
    $$\textsc{Rev}(\sigma^{\textsc{random}}, p)=p \expect_{N \sim \binomial(c, \mu)}\Big[\prob_{\Theta \sim \mathcal{F}}[\Theta + h(N) \geq p] \Big].$$
\end{proposition}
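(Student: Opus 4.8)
The plan is to exploit the exogeneity of $\sigma^{\textsc{random}}$: because the displayed reviews are drawn afresh each round and independently of the customer's idiosyncratic term, the per-round purchase indicators are i.i.d., and the time-average of their expectations collapses to a single expectation. First I would substitute the static price $\rho(\bm z)=p$ into the revenue definition \eqref{equation:reveneue_def}, pull the constant $p$ out of the sum, and use linearity of expectation to write
$$\textsc{Rev}(\sigma^{\textsc{random}}, p)=p\cdot\liminf_{T\to\infty}\frac1T\sum_{t=1}^{T}\expect\big[\mathbbm{1}_{\Theta_t+h(\Phi_t)\ge p}\big].$$
Then I would invoke the definition of $\sigma^{\textsc{random}}$ (stated in \cref{subsec: policy_decison} and made precise in \cref{appendix: defining_random}): at every round the shown ratings satisfy $Z_{t,i}\stackrel{\text{i.i.d.}}{\sim}\bern(\mu)$, independently of the history and of $\Theta_t$. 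Since the posterior $\Phi_t=\mathrm{Beta}(a+N_t,\,b+c-N_t)$ depends on $\bm Z_t$ only through the count $N_t=\sum_{i=1}^{c}Z_{t,i}$, and $N_t\sim\binomial(c,\mu)$, the quantity $h(\Phi_t)$ is exactly $h(N_t)$ in the overloaded notation.

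Next, fixing an arbitrary round $t$ and conditioning on $N_t$, I would integrate over the independent draw $\Theta_t\sim\mathcal{F}$ to obtain
$$\expect\big[\mathbbm{1}_{\Theta_t+h(\Phi_t)\ge p}\big]=\expect_{N\sim\binomial(c,\mu)}\big[\prob_{\Theta\sim\mathcal{F}}[\Theta+h(N)\ge p]\big],$$
which is the same value for every $t$. Because each summand is thus identical, the time-average $\frac1T\sum_{t=1}^T(\cdot)$ is constant in $T$, so the $\liminf$ (equivalently the $\lim$) equals this common value; multiplying by $p$ yields the claimed expression.

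There is no genuine obstacle here beyond bookkeeping, as the statement is essentially a consequence of the exogeneity of $\sigma^{\textsc{random}}$. The one point that deserves care is justifying that the displayed reviews are truly i.i.d. and independent of $\Theta_t$, since this is precisely what lets us bypass any stationary-distribution or mixing argument. Such a Markov-chain analysis will instead be required for $\sigma^{\textsc{newest}}$ in \cref{theorem:most_recent_C_revenue}; here, because the summands are i.i.d.\ rather than merely stationary, the time-average of expectations is trivially constant in $T$ and the limit is immediate.
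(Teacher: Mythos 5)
Your proposal is correct and follows essentially the same route as the paper's proof: both exploit that $\sigma^{\textsc{random}}$ shows $c$ i.i.d.\ $\bern(\mu)$ reviews each round, so the per-round expected revenue is the constant $p\,\expect_{N \sim \binomial(c,\mu)}\big[\prob_{\Theta \sim \mathcal{F}}[\Theta + h(N) \geq p]\big]$, and the $\liminf$ of the time-average of a constant sequence is that constant. Your write-up merely makes explicit the bookkeeping (conditioning on $N_t$, integrating over $\Theta_t$) that the paper compresses into one line.
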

\begin{proof}
 By definition, $\sigma^{\textsc{random}}$ displays $c$ i.i.d.\ $\bern(\mu)$ reviews at every round $t$. As a result, the number of positive reviews is 
 distributed as $\binomial(c, \mu)$, yielding expected revenue, at every round $t$, equal to the right hand side of the theorem. Given that this quantity does not depend on $t$, recalling Eq.\eqref{equation:reveneue_def}, it equals the steady-state revenue. 
\end{proof}
Unlike $\sigma^{\textsc{random}}$ which displays $c$ i.i.d. $\bern(\mu)$ reviews at every round, the reviews displayed by $\sigma^{\textsc{newest}}$ are an endogenous function of the history. The proof of the next result underlies the technical crux of this section and is presented in Section~\ref{subsec:proof_most_recent_revenue}. 
\begin{proposition}[Revenue of $\sigma^{\textsc{newest}}$ ]\label{theorem:most_recent_C_revenue}
    For any fixed  price $p$ satisfying Assumption~\ref{assumption:non_abs_non_degen}, 
    $$\textsc{Rev}(\sigma^{\textsc{newest}}, p) = \frac{p}{\expect_{N \sim \binomial(c, \mu)}\Big[\frac{1}{\prob_{\Theta \sim \mathcal{F}}[\Theta + h(N) \geq p]} \Big]}.$$
\end{proposition}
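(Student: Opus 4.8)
The plan is to track the system through the \emph{sequence of reviews that are ever generated} rather than the tuple-valued state, which is what makes the endogeneity tractable. For $n \in \{0,\dots,c\}$ write $q_n \coloneqq \prob_{\Theta \sim \mathcal{F}}[\Theta + h(n) \geq p]$ for the purchase probability when $n$ of the $c$ displayed reviews are positive; since $p$ is non-absorbing, $q_n > 0$ for every $n$. The central observation is that, under $\sigma^{\textsc{newest}}$, the displayed set of $c$ reviews is \emph{frozen between consecutive purchases}: a new review is appended only when a purchase occurs, so as long as nobody buys, the $c$ most recent reviews, and hence the purchase probability, do not change. Because of the infinite i.i.d.\ initialization pool, the ordered sequence of reviews that are ever generated, say $(R_k)_k$, is i.i.d.\ $\bern(\mu)$.

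First I would partition the horizon into \emph{cycles} indexed by purchases: cycle $k$ is the block of rounds strictly after the $(k-1)$-th purchase up to and including the $k$-th purchase. Throughout cycle $k$ the displayed tuple equals the $c$ most recently generated reviews $(R_{k-1}, \ldots, R_{k-c})$, whose number of positive entries is $N_k \coloneqq \sum_{j=1}^{c} R_{k-j}$; marginally $N_k \sim \binomial(c,\mu)$. Conditioned on $N_k = n$, each round of the cycle is an independent purchase attempt succeeding with probability $q_n$, so the cycle length $G_k$ is geometric with $\expect[G_k \mid N_k] = 1/q_{N_k}$ and thus $\expect[G_k] = \expect_{N \sim \binomial(c,\mu)}[1/q_N]$, which is finite by non-absorption.

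Next I would pass from cycles to the steady-state revenue via a renewal--reward / ergodic argument. Each cycle contributes exactly one purchase and hence reward $p$, while occupying $G_k$ rounds, so over the first $K$ cycles the total reward is $pK$ and the total number of elapsed rounds is $\sum_{k=1}^{K} G_k$. Inverting the relation between the number of purchases by round $T$ and the partial sums $\sum_k G_k$, the long-run revenue per round equals $p\big/\lim_{K\to\infty}\tfrac1K\sum_{k=1}^K G_k$. Since $(N_k)_k$ is a deterministic sliding-window factor of the i.i.d.\ sequence $(R_k)_k$ and the geometric draws can be generated from an auxiliary i.i.d.\ sequence, $(N_k,G_k)_k$ is stationary and ergodic; the ergodic theorem then gives $\tfrac1K\sum_{k=1}^K G_k \to \expect[G] = \expect_{N\sim\binomial(c,\mu)}[1/q_N]$ almost surely. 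Combining yields $\textsc{Rev}(\sigma^{\textsc{newest}},p) = p \big/ \expect_{N\sim\binomial(c,\mu)}[1/q_N]$, matching the claim.

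The main obstacle I anticipate is making the passage from the per-cycle averages to the per-round steady-state revenue of Eq.~\eqref{equation:reveneue_def} fully rigorous: the cycle lengths $G_k$ are \emph{not} independent, since consecutive windows $(R_{k-1},\dots,R_{k-c})$ overlap, so classical renewal theory with i.i.d.\ cycles does not apply directly and one must instead invoke stationarity and ergodicity of the underlying review sequence. One must also confirm that the $\liminf$ in \eqref{equation:reveneue_def} is a genuine limit and that expectation may be interchanged with the limit, both of which are justified by the uniform bound $q_n \ge \min_m q_m > 0$ and the finiteness of $\expect[G]$. An alternative route that avoids the ergodic machinery is to verify directly that the tuple-valued Markov chain on $\{0,1\}^c$ admits the stationary distribution $\pi(\bm{z}) \propto \mu^{n(\bm{z})}(1-\mu)^{c-n(\bm{z})}/q_{n(\bm{z})}$ (the jump-chain weight times the mean holding time) and then compute $p\sum_{\bm{z}}\pi(\bm{z})q_{n(\bm{z})}$; there the only delicate point is the extra self-loop mass at the all-zeros and all-ones states, which the proposed $\pi$ can be checked to respect.
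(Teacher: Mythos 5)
Your proof is correct, but it takes a genuinely different route from the paper's. The paper works with the tuple-valued Markov chain $\bm{Z}_t$ on $\{0,1\}^c$ and explicitly identifies its stationary distribution (Lemma~\ref{lemma:stationary_state_distribiton_newest_first}), deriving it from a general fact about Markov chains slowed down by state-dependent holding probabilities (Lemma~\ref{lemma: general_theorem_markov_chains_stationary}); the revenue formula then follows from the Markov-chain ergodic theorem, with the purchase probabilities cancelling against the denominators of $\pi$. You never solve for a stationary distribution: you decompose time into purchase-to-purchase cycles, observe that the sequence of generated reviews is i.i.d.\ $\bern(\mu)$ (because a purchaser's decision depends on $\Theta_t$ and the displayed reviews but not on the rating they will leave), so the window at the start of each cycle has $\binomial(c,\mu)$ positives marginally, and the conditionally geometric cycle lengths give $\expect[G]=\expect_{N\sim\binomial(c,\mu)}[1/q_N]$ in your notation; Birkhoff's ergodic theorem for the stationary ergodic sequence $(N_k,G_k)$ --- correctly invoked in place of i.i.d.\ renewal theory, since the overlapping windows make the $G_k$ dependent --- plus bounded convergence (using $K(T)/T\le 1$) then yields the claim. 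Your route is more pathwise and directly explains why a harmonic mean appears: revenue is price divided by the mean time between purchases, and that mean time is $\expect[1/q_N]$ precisely because the reviews entering the window are i.i.d. What the paper's route buys is the stationary distribution itself, which is reused downstream (e.g., in \cref{prop:comparison_dist_newest_random}, in the dynamic-pricing analysis via \cref{lemma_dynamic_stationary_distribution}, and as the basis for the data comparison in Section~\ref{sec:experiments}), together with a general lemma the authors flag as of independent interest; your approach recovers the revenue formula but not these structural byproducts. Your closing remark about the extra self-loop mass at the all-zeros and all-ones states is also accurate --- that is exactly the case Lemma~\ref{lemma: general_theorem_markov_chains_stationary} absorbs by allowing the jump chain $\mathcal{M}$ itself to have self-transitions $M(s,s)>0$.
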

Intuitively, when the newest reviews are positive, the customer is more likely to buy the product and leave a new review, which then updates the set of newest reviews.
On the other hand, when the newest reviews are negative, the customer is less likely to buy, and hence the set of newest reviews is less likely to be updated.
This implies that $\sigma^{\textsc{newest}}$ spends more time in a state with negative reviews (which yield lower revenue) compared to $\sigma^{\textsc{random}}$.
This phenomenon is the driver of our main result and we refer to it as  the \textsc{Cost of Newest First} (CoNF).
\begin{theorem}[\textsc{Cost of Newest First}]\label{theorem:negative_bias} 
For any fixed price $p$ satisfying Assumption~\ref{assumption:non_abs_non_degen}, the revenue of $\sigma^{\textsc{newest}}$ is strictly smaller than that of $\sigma^{\textsc{random}}$, i.e., $\textsc{Rev}(\sigma^{\textsc{random}}, p) > \textsc{Rev}(\sigma^{\textsc{newest}}, p)$. 
\end{theorem}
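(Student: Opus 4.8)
The plan is to use the closed-form revenue expressions from \cref{theorem:random_C_revenue,theorem:most_recent_C_revenue} to reduce the theorem to a one-line mean inequality. Write $q_n := \prob_{\Theta\sim\mathcal{F}}[\Theta + h(n) \ge p]$ for the purchase probability given $n$ positive reviews, and let $N \sim \binomial(c,\mu)$. Assuming first that $p$ is non-absorbing (so $q_n > 0$ for all $n$ and the expression in \cref{theorem:most_recent_C_revenue} is well defined), the two propositions give $\textsc{Rev}(\sigma^{\textsc{random}},p) = p\,\expect[q_N]$ and $\textsc{Rev}(\sigma^{\textsc{newest}},p) = p/\expect[1/q_N]$. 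Since $p>0$, the claim $\textsc{Rev}(\sigma^{\textsc{random}},p) > \textsc{Rev}(\sigma^{\textsc{newest}},p)$ is equivalent to $\expect[q_N]\,\expect[1/q_N] > 1$; that is, the arithmetic mean of $q_N$ must strictly exceed its harmonic mean.

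Next I would establish this inequality and, crucially, its strictness. The bound $\expect[q_N]\,\expect[1/q_N] \ge 1$ is immediate from the Cauchy--Schwarz inequality applied to $\sqrt{q_N}$ and $1/\sqrt{q_N}$ (equivalently, Jensen's inequality for the strictly convex map $x \mapsto 1/x$ on $(0,\infty)$). The equality case of Cauchy--Schwarz (resp.\ strict convexity) holds if and only if $q_N$ is almost surely constant, so the entire content of the strict inequality is to rule this out.

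To obtain strictness I would combine non-degeneracy with the full support of the binomial. By \cref{def:non_degen_price} there exist $n_1 \ne n_2$ with $q_{n_1} \ne q_{n_2}$. Since $\mu \in (0,1)$, we have $\prob[N = n] = \binom{c}{n}\mu^n(1-\mu)^{c-n} > 0$ for every $n \in \{0,\dots,c\}$, so $q_N$ attains both distinct values $q_{n_1}$ and $q_{n_2}$ with positive probability and is therefore not almost surely constant. Hence the inequality is strict, which settles the non-absorbing case. Finally, if $p$ is absorbing, \cref{prop:absorbing_price_newest} gives $\textsc{Rev}(\sigma^{\textsc{newest}},p) = 0$, while monotonicity of $h$ (\cref{assumption:h_monotonicity_positive_reviews}) makes $q_n$ nondecreasing in $n$; non-degeneracy then forces $q_c > 0$, and since $\prob[N=c] = \mu^c > 0$ we get $\textsc{Rev}(\sigma^{\textsc{random}},p) \ge p\,\mu^c q_c > 0$, so the strict comparison holds here as well.

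I expect the substantive technical work to already be absorbed into \cref{theorem:most_recent_C_revenue} (the endogenous steady-state computation giving the harmonic-mean form), so the only genuine subtlety left in this proof is the strictness argument: one must correctly invoke the equality condition of the AM--HM / Cauchy--Schwarz inequality and verify that non-degeneracy — a statement about two specific review counts — together with the positive mass the binomial places on every state, indeed prevents $q_N$ from being degenerate. The absorbing-price edge case is the other point to handle explicitly, since \cref{theorem:most_recent_C_revenue} does not apply there.
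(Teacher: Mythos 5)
Your proposal is correct and follows essentially the same route as the paper's proof: the same reduction via the closed-form expressions of \cref{theorem:random_C_revenue,theorem:most_recent_C_revenue}, the same AM--HM (Jensen/Cauchy--Schwarz) inequality with strictness from the equality condition plus non-degeneracy and the full support of $\binomial(c,\mu)$, and the same separate treatment of absorbing prices via \cref{prop:absorbing_price_newest}. The only cosmetic difference is that the paper deduces non-constancy of $q_N$ by noting $q_0 < q_c$ directly, whereas you argue from two arbitrary distinct values $q_{n_1} \neq q_{n_2}$; both are valid.
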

\begin{proof} We show that the expression of \cref{theorem:random_C_revenue} is higher than the one of \cref{theorem:most_recent_C_revenue}.
By Jensen's inequality,  $\expect[X] \geq \frac{1}{\expect[\frac{1}{X}]}$ for any non-negative random variable $X$ and equality is achieved if and only if $X$ is a constant. Letting $X(N) \coloneqq \prob_{\Theta \sim \mathcal{F}}[\Theta + h(N) \geq p]$ be the purchase probability in a state with $N$ positive reviews, we apply the inequality for $X = X(N)$,
$$\expect_{N \sim \binomial(c, \mu)}\big[\prob_{\Theta \sim \mathcal{F}}[\Theta + h(N) \geq p]\big] \geq \frac{1}{\expect_{N \sim \binomial(c, \mu)}\Big[\frac{1}{\prob_{\Theta \sim \mathcal{F}}[\Theta + h(N) \geq p]}\Big]}.$$
Multiplying with $p > 0$ on both sides we obtain that 
$\rev(\sigma^{\textsc{random}}, p) \geq \rev(\sigma^{\textsc{newest}},p)$. By Assumption \ref{assumption:non_abs_non_degen}, it holds that $\prob_{\Theta \sim \mathcal{F}}[\Theta + h(0) \geq p] < \prob_{\Theta \sim \mathcal{F}}[\Theta + h(c) \geq p]$ and thus $\prob_{\Theta \sim \mathcal{F}}[\Theta + h(N) \geq p]$ is not a constant random variable when $N \sim \binomial(c, \mu)$. Therefore, the inequality is strict. 
\end{proof}

We describe a simple example that provides intuition on the \textsc{CoNF} established in Theorem \ref{theorem:negative_bias}, which is also illustrated in Figure \ref{fig:example_conf}.
\begin{example}\label{rmk:example_transitions_neg_bias}
Suppose  $\mu = \frac{1}{2}$, $\mathcal{F} = \mathcal{U}[0,1]$, $c =a =b =1$, $h(\Phi_t) = \mathbb{E}[\Phi_t] =  \frac{N+1}{3}$, and $p = 1$. Under $\sigma^{\textsc{random}}$ the probability that the review shown is positive is~$0.5$ (see Figure \ref{subfig:random}).
Under $\newest$ the purchase probability is $\frac{2}{3}$ when the review is positive and $\frac{1}{3}$ when it is negative. Thus, under $\sigma^{\textsc{newest}}$, transitioning from a positive to a negative review is twice as likely as transitioning from a negative to a positive review (see Figure \ref{subfig:newest}). Hence, a negative rating is twice as likely as positive rating. This leads to lower revenue in steady state for $\sigma^{\textsc{newest}}$ compared to $\sigma^{\textsc{random}}$. 
\end{example}

\begin{figure*}[t!]
\centering
\begin{subfigure}[t]{0.35\textwidth}
    \centering
    \begin{tikzpicture}[scale=0.7] 
        \node[draw, rectangle, minimum width=1.3cm, minimum height=0.6cm] (rectangle1) at (2,0) {$(1)$ (positive)};
        \node[draw, rectangle, minimum width=1.3cm, minimum height=0.6cm] (rectangle2) at (6,0) {$(0)$ (negative)};
        \draw[->, black!60!green, ultra thick] (rectangle1) to[bend left] node[midway, above, sloped, text=black!60!green] {$\nicefrac{1}{2}$} (rectangle2);
        \draw[->, black!60!green, ultra thick] (rectangle2) to[bend left] node[midway, below, sloped, text=black!60!green] {$\nicefrac{1}{2}$} (rectangle1);  \draw[->, loop above, looseness=8, in=60, out=120, black!30!red, ultra thick] (rectangle1) edge node[above] {$\nicefrac{1}{2}$} (rectangle1);
        \draw[->, loop above, looseness=8, in=60, out=120, black!30!red, ultra thick] (rectangle2) edge node[above] {$\nicefrac{1}{2}$} (rectangle2);
    \end{tikzpicture}
    \caption{$\sigma^{\textsc{random}}$ transitions.} \label{subfig:random}
\end{subfigure}
\hspace{2cm} 
\begin{subfigure}[t]{0.35\textwidth}
    \centering
    \begin{tikzpicture}[scale=0.7] 
        \node[draw, rectangle, minimum width=1.3cm, minimum height=0.6cm] (rectangle1) at (0,0) {$(1)$ (positive)};
        \node[draw, rectangle, minimum width=1.3cm, minimum height=0.6cm] (rectangle2) at (4,0) {$(0)$ (negative)};
        \draw[->, black!60!green, ultra thick] (rectangle1) to[bend left] node[midway, above, sloped, text=black!60!green] {$\nicefrac{1}{3}$} (rectangle2);
        \draw[->, black!60!green, ultra thick] (rectangle2) to[bend left] node[midway, below, sloped, text=black!60!green] {$\nicefrac{1}{6}$} (rectangle1);
        
        \draw[->, loop above, looseness=8, in=60, out=120, black!30!red, ultra thick] (rectangle1) edge node[above] {$\nicefrac{2}{3}$} (rectangle1);
        \draw[->, loop above, looseness=8, in=60, out=120, black!30!red, ultra thick] (rectangle2) edge node[above] {$\nicefrac{5}{6}$} (rectangle2);
    \end{tikzpicture}    \caption{$\sigma^{\textsc{newest}}$ transitions.}
    \label{subfig:newest}
\end{subfigure}
\caption{The state transitions in the instance of the \cref{rmk:example_transitions_neg_bias} under $\sigma^{\textsc{random}}$ and $\sigma^{\textsc{newest}}$.}
\label{fig:example_conf}
\end{figure*}

Lastly, we generalize our result beyond revenue loss by analyzing the \textit{number of positive reviews} among the $c$ reviews. 
We show that the expected number of positive reviews is smaller under $\newest$ compared to $\random$ (proof in Appendix \ref{appendix:avg_newest_smaller_avg_random}). Let $\pi_{n}^{\textsc{newest}}$ (respectively $\pi_{n}^{\textsc{random}}$) denote the steady-state probability of observing $n$ positive reviews under $\newest$ (respectively $\random$).

\begin{proposition}
\label{thm_neg_bias_avg_review_rating}
     For any price $p$ satisfying Assumption~\ref{assumption:non_abs_non_degen}, the average number of positive reviews under $\newest$ is smaller than under $\random$. Formally, $\expect_{N \sim \pi^{\textsc{newest}}_n}[N] < \expect_{N \sim \pi^{\textsc{random}}_n}[N]$. 
\end{proposition}

\subsection{Characterization of Revenue under Newest (Proof of Proposition~\ref{theorem:most_recent_C_revenue})}\label{subsec:proof_most_recent_revenue}
We provide the proof of \cref{theorem:most_recent_C_revenue}, which contains the main technical crux of this section.
We first introduce some notation. Recalling that $Z_{t,i}$ denotes the rating of the $i$-th most recent review at round $t$, we refer to the $c$ most recent reviews by the vector $\bm{Z}_t = (Z_{t,1}, \ldots, Z_{t,c})$. We note that $\bm{Z}_t$ is a time-homogeneous Markov chain with a finite state space $\{0,1\}^{c}$. Given that we assume an infinite pool of initial reviews, $\bm{Z}_1=(X_{-1},\ldots, X_{-c})$ where $X_i \sim \bern(\mu)$ for $i \in \{-1, \ldots, -c\}$.

With respect to the transition dynamics of this Markov chain, for the state $\bm{Z}_t = (z_1, \ldots, z_{c}) \in \{0,1\}^{c}$,
the purchase probability is 
$\prob_{\Theta \sim \mathcal{F}}\Big[\Theta + h(\sum_{i=1}^{c} z_i) \geq p\Big]$.
If there is no purchase, no review is given and the state remains $\bm{Z}_{t+1} =(z_1, \ldots, z_{c})$. If there is a purchase, $\bm{Z}_{t}$ transitions to the state $\bm{Z}_{t+1} = (1, z_1, \ldots, z_{c-1})$ if the review is positive (with probability $\mu$) and to the state $\bm{Z}_{t+1} = (0,z_1, \ldots, z_{c-1})$ if the review is negative (with probability $1-\mu$).

Because the price $p$ is non-absorbing (Assumption~\ref{assumption:non_abs_non_degen}), for every state of reviews $(z_1, \ldots, z_{c}) \in \{0,1\}^{c}$,
the purchase probability is positive, and the probability that a new review is positive is strictly greater than zero (since $\mu \in (0, 1)$).  Then, $\bm{Z}_t$ can reach every state from every other state with positive probability (i.e. it is a single-recurrence-class Markov Chain with no transient states), and hence $\bm{Z}_t$ has a unique stationary distribution, which we denote by $\pi$. Our next lemma exactly characterizes the form of this stationary distribution.

\begin{lemma}\label{lemma:stationary_state_distribiton_newest_first}
    The stationary distribution $\pi$ of $\bm{Z}_t$ under any price $p$ satisfying Assumption \ref{assumption:non_abs_non_degen} is    
    \begin{equation*} \label{eq:stationary_distribution_most_recent}
         \pi_{(z_1, \ldots, z_{c})} = \kappa \cdot  \frac{\mu^{\sum_{i=1}^{c} z_i} (1-\mu)^{c-\sum_{i=1}^{c} z_i}}{\prob_{\Theta \sim \mathcal{F}}\Big[\Theta + h(\sum_{i=1}^{c} z_i) \geq p\Big] },
    \end{equation*}
    where $\kappa = 1/\expect_{N \sim \binomial(c, \mu)}\Big[\frac{1}{\prob_{\Theta \sim \mathcal{F}}[\Theta + h(N) \geq p]} \Big]$ is a normalizing constant.
\end{lemma}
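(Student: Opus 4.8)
The plan is to verify directly that the proposed $\pi$ satisfies the global balance equations $\pi = \pi P$ for the transition matrix $P$ of the chain $\bm{Z}_t$ described above, and then to appeal to the uniqueness of the stationary distribution (already established, since the chain is a single recurrence class with no transient states) to conclude that this $\pi$ is \emph{the} stationary distribution. Throughout I write $n(\bm{z}) = \sum_{i=1}^c z_i$ and let $q(\bm{z}) = \prob_{\Theta\sim\mathcal{F}}[\Theta + h(n(\bm{z})) \geq p]$ denote the purchase probability in state $\bm{z}$, which is positive by non-absorption and depends on $\bm z$ only through $n(\bm z)$; the claimed distribution then reads $\pi_{\bm{z}} = \kappa\, \mu^{n(\bm{z})}(1-\mu)^{c - n(\bm{z})}/q(\bm{z})$.

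First I would enumerate the predecessors of a fixed target state $\bm{z} = (z_1,\ldots,z_c)$. From the transition dynamics, the chain arrives at $\bm{z}$ in exactly two ways: (i) a non-purchase self-loop from $\bm{z}$ itself, occurring with probability $1 - q(\bm{z})$; or (ii) a purchase from a state of the form $\bm{z}^{(y)} := (z_2,\ldots,z_c,y)$ for $y \in \{0,1\}$, whose newest review after the purchase must equal $z_1$ (probability $\mu$ if $z_1 = 1$ and $1-\mu$ if $z_1=0$, which I write compactly as $\mu^{z_1}(1-\mu)^{1-z_1}$). Note that $\bm{z}^{(y)}$ may coincide with $\bm z$ when $\bm z$ is homogeneous, but the decomposition into the non-purchase event and the purchase events is disjoint, so no incoming flow is double counted. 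The balance equation at $\bm{z}$ thus becomes
$$\pi_{\bm{z}} = \pi_{\bm{z}}\,(1 - q(\bm{z})) + \mu^{z_1}(1-\mu)^{1-z_1} \sum_{y\in\{0,1\}} \pi_{\bm{z}^{(y)}}\, q(\bm{z}^{(y)}),$$
which after cancelling $\pi_{\bm z}(1-q(\bm z))$ reduces to the requirement $\pi_{\bm{z}}\,q(\bm{z}) = \mu^{z_1}(1-\mu)^{1-z_1}\sum_{y} \pi_{\bm{z}^{(y)}}\, q(\bm{z}^{(y)})$.

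The key step, and the reason the denominator $q(\bm{z})$ appears in the ansatz, is that substituting $\pi$ makes every purchase probability cancel: $\pi_{\bm{w}}\, q(\bm{w}) = \kappa\, \mu^{n(\bm{w})}(1-\mu)^{c-n(\bm{w})}$ for every state $\bm{w}$. The left side then equals $\kappa\,\mu^{n(\bm z)}(1-\mu)^{c-n(\bm z)}$. On the right, writing $m = \sum_{i=2}^c z_i$ so that $n(\bm{z}^{(0)}) = m$ and $n(\bm{z}^{(1)}) = m+1$, the sum telescopes through $\mu + (1-\mu) = 1$: $\sum_{y} \kappa\,\mu^{n(\bm{z}^{(y)})}(1-\mu)^{c - n(\bm{z}^{(y)})} = \kappa\,\mu^m(1-\mu)^{c-m-1}$, and multiplying by $\mu^{z_1}(1-\mu)^{1-z_1}$ recovers $\kappa\,\mu^{m+z_1}(1-\mu)^{c-m-z_1} = \kappa\,\mu^{n(\bm z)}(1-\mu)^{c-n(\bm z)}$ since $m + z_1 = n(\bm z)$. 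Hence both sides agree at every $\bm z$, so $\pi$ is stationary.

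Finally I would pin down $\kappa$ by normalization: grouping states by their number of positive reviews (there are $\binom{c}{n}$ states with $n$ positives, all sharing the same $q$), one gets $\sum_{\bm z}\mu^{n(\bm z)}(1-\mu)^{c-n(\bm z)}/q(\bm z) = \sum_{n=0}^c \binom{c}{n}\mu^n(1-\mu)^{c-n}/q(n) = \expect_{N\sim\binomial(c,\mu)}[1/q(N)]$, which yields exactly the stated $\kappa$. The only real subtlety, and thus the place I would be most careful, is the predecessor bookkeeping in the homogeneous-state case; the algebra afterwards is the clean cancellation above. A conceptually cleaner alternative worth mentioning is to view $\bm Z_t$ as the jump chain that prepends a fresh $\bern(\mu)$ review at every purchase (whose stationary law is the product measure $\mu^{n}(1-\mu)^{c-n}$) slowed down by state-dependent holding probabilities $1-q(\bm z)$; a standard reweighting argument then immediately gives $\pi_{\bm z}\propto \mu^{n(\bm z)}(1-\mu)^{c-n(\bm z)}/q(\bm z)$.
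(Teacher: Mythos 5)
Your proof is correct: the predecessor enumeration (states $\bm{z}^{(y)} = (z_2,\ldots,z_c,y)$ plus the non-purchase self-loop), the explicit handling of the homogeneous-state coincidence, the cancellation $\pi_{\bm{w}}\,q(\bm{w}) = \kappa\,\mu^{n(\bm{w})}(1-\mu)^{c-n(\bm{w})}$, the telescoping via $\mu + (1-\mu) = 1$, the normalization giving $\kappa$, and the appeal to uniqueness of the stationary distribution are all sound. The route differs from the paper's in its organization: the paper does not verify the balance equations of $\bm{Z}_t$ directly, but instead first proves a general reweighting result (\cref{lemma: general_theorem_markov_chains_stationary}) stating that if a chain $\mathcal{M}$ with stationary distribution $\pi$ is slowed down so that at state $s$ it moves according to $\mathcal{M}$ only with probability $f(s)$ and otherwise stays put, then the slowed chain has stationary distribution proportional to $\pi(s)/f(s)$; it then applies this with $\mathcal{M}$ being the shift chain that always prepends a fresh $\bern(\mu)$ review (whose stationary law is the product measure) and $f$ the purchase probability. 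This is precisely the ``conceptually cleaner alternative'' you sketch in your final sentence, so you identified both arguments and simply promoted the other one to the main line. Your direct verification inlines the two ingredients—the reweighting step and the product-measure stationarity of the shift chain—into a single balance-equation check, which is self-contained and makes the predecessor bookkeeping (including the homogeneous-state case) fully explicit. What the paper's extra abstraction buys is reuse: the same general lemma is invoked again to characterize the stationary distribution under dynamic pricing (\cref{lemma_dynamic_stationary_distribution}) and under the window-$w$ ordering policy (\cref{lemma:stationary_dist_window_W}), and the paper flags it in the conclusions as a result of independent interest; your version would require repeating the inline computation in each of those settings.
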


\begin{proof}[Proof sketch.]
If the reviews were drawn i.i.d. at each round, the probability of state $(z_1, \ldots, z_{c})$ would be exactly $\mu^{\sum_{i=1}^{c} z_i} (1-\mu)^{c-\sum_{i=1}^{c} z_i}$, which is the numerator.
However, the set of newest reviews is only updated when there is a purchase, which occurs with probability $\prob_{\Theta \sim \mathcal{F}}\Big[\Theta + h(\sum_{i=1}^{c} z_i) \geq p\Big]$.
Hence, we multiply the numerator by $1/\prob_{\Theta \sim \mathcal{F}}\Big[\Theta + h(\sum_{i=1}^{c} z_i) \geq p\Big]$, which is the expected number of rounds until there is a new review under state $(z_1, \dots, z_c)$; in fact, we show such a property holds for general Markov chains (\cref{lemma: general_theorem_markov_chains_stationary}). A formal proof is provided in Appendix~\ref{appendx:proof_lemma:stationary_state_distribiton_newest_first}.
\end{proof}

Equipped with Lemma \ref{lemma:stationary_state_distribiton_newest_first}, we now prove \cref{theorem:most_recent_C_revenue}.
\begin{proof}[Proof of \cref{theorem:most_recent_C_revenue}.]
Using Eq.~\eqref{equation:reveneue_def}, the revenue of $\sigma^{\textsc{newest}}$ can be written as
\begin{align*}
    \textsc{Rev}(\sigma^{\textsc{newest}}, p) 
    &= p \liminf_{T \to \infty} \frac{\expect\left[\sum_{t=1}^{T} \prob_{\Theta \sim \mathcal{F}}\left[\Theta + h(\sum_{i=1}^c Z_{t,i}) \geq p\right]\right]}{T} \\
    &= p\sum_{(z_1, \ldots, z_c) \in \{0,1\}^{c}} \pi_{(z_1, \ldots, z_{c})} \prob_{\Theta \sim \mathcal{F}}\left[\Theta + h(\sum_{i=1}^c z_i) \geq p \right],
\end{align*}
where the second step expresses the revenue of the stationary distribution via the Ergodic theorem. 
Expanding $\pi_{(z_1, \ldots, z_{c})}$ based on Lemma~\ref{lemma:stationary_state_distribiton_newest_first}, the $\prob_{\Theta \sim \mathcal{F}}[\Theta + h(\sum_{i=1}^c z_i) \geq p]$ term cancels out and: 
    \begin{align*}
    \textsc{Rev}(\sigma^{\textsc{newest}}, p) 
       &= p \cdot \kappa \cdot \Big(\sum_{(z_1, \ldots, z_c) \in \{0,1\}^{c}} \mu^{\sum_{i=1}^{c} z_i} (1-\mu)^{c-\sum_{i=1}^{c}z_i} \Big).
    \end{align*}
    Note that the term in the parenthesis equals 1, since it is a sum over all probabilities of $\binomial(c, \mu)$.
    This yields $\rev(\sigma^{\textsc{newest}}, p) = p \cdot \kappa$, which gives the expression in the theorem.
\end{proof}

\subsection{Cost of Newest First can be arbitrarily bad}\label{subsec:quantify_cost_of_newest_first}
\cref{theorem:negative_bias} implies that, for any price $p$ satisfying Assumption \ref{assumption:non_abs_non_degen}, the \textsc{CoNF} is strictly greater than $1$, i.e., $\chi(p)> 1$.\footnote{For any price $p$ satisfying Assumption \ref{assumption:non_abs_non_degen}, the denominator $\rev(\newest, p)$ of $\chi(p)$ is strictly positive.} We now show that it can be arbitrarily large. We first provide a closed-form expression for $\chi(p)$ by dividing the expressions in Propositions \ref{theorem:random_C_revenue} and \ref{theorem:most_recent_C_revenue} (see Appendix \ref{appendix_subsec:CoNF_ratio_closed_form_general_c} for proof details).

\begin{lemma}\label{lemma: CoNF_ratio_closed_form_general_c}
For any price $p$ satisfying Assumption \ref{assumption:non_abs_non_degen}, the CoNF is given by:
\begin{equation*}\chi(p) = \sum_{i,j \in \{0, \ldots,c\}} \mu^{i+j}(1-\mu)^{2c-i-j} \binom{c}{i} \binom{c}{j} \frac{ \prob_{\Theta \sim \mathcal{F}}[\Theta + h(i) \geq p]}{ \prob_{\Theta \sim \mathcal{F}}[\Theta + h(j) \geq p]}.
\end{equation*}
\end{lemma}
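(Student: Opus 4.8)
The plan is to obtain $\chi(p)$ by directly dividing the two closed-form revenue expressions already established in Propositions~\ref{theorem:random_C_revenue} and~\ref{theorem:most_recent_C_revenue}, and then to expand the resulting product of expectations into a double sum indexed by the number of positive reviews in two independent binomial draws.

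First I would recall the definition $\chi(p) = \textsc{Rev}(\sigma^{\textsc{random}}, p)/\textsc{Rev}(\sigma^{\textsc{newest}}, p)$. Since $p$ is non-absorbing, the purchase probabilities $\prob_{\Theta\sim\mathcal{F}}[\Theta + h(n)\geq p]$ are strictly positive for every $n\in\{0,\dots,c\}$, so the reciprocal appearing in the expression of Proposition~\ref{theorem:most_recent_C_revenue} is well-defined and $\textsc{Rev}(\sigma^{\textsc{newest}}, p) > 0$ whenever $p>0$. Substituting both formulas and cancelling the common factor $p$ (valid for $p>0$; the $p=0$ case is handled below) yields
\begin{equation*}
\chi(p) = \expect_{N \sim \binomial(c,\mu)}\Big[\prob_{\Theta\sim\mathcal{F}}[\Theta + h(N)\geq p]\Big] \cdot \expect_{N \sim \binomial(c,\mu)}\Big[\frac{1}{\prob_{\Theta\sim\mathcal{F}}[\Theta + h(N)\geq p]}\Big].
\end{equation*}

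Next I would expand each expectation using the binomial mass function $\prob[N=i]=\binom{c}{i}\mu^i(1-\mu)^{c-i}$, writing the first factor with summation index $i$ and the second (reciprocal) factor with a fresh index $j$. Multiplying the two single sums and grouping the product of the two mass functions produces the combined weight $\binom{c}{i}\binom{c}{j}\mu^{i+j}(1-\mu)^{2c-i-j}$, together with the ratio $\prob_{\Theta\sim\mathcal{F}}[\Theta+h(i)\geq p]/\prob_{\Theta\sim\mathcal{F}}[\Theta+h(j)\geq p]$. This is exactly the claimed double sum over $(i,j)\in\{0,\dots,c\}^2$, completing the derivation.

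The argument is essentially a mechanical algebraic manipulation, so there is no substantive obstacle; the only points requiring care are (i) invoking non-absorbingness to guarantee the denominators are nonzero, so that both the division of the two revenues and the reciprocal expectation are well-defined, and (ii) the bookkeeping of converting a product of two single sums into one double sum with the correct combined binomial weight. The boundary case $p=0$ deserves a separate remark, since there both revenues vanish and the literal ratio is $0/0$; in that case $\chi$ is understood through the product-of-expectations form above (equivalently, the factor $p$ cancels formally before specializing to $p=0$), which coincides with the stated closed form.
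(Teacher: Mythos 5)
Your proposal is correct and follows essentially the same route as the paper's own proof: divide the closed-form revenues of Propositions~\ref{theorem:random_C_revenue} and~\ref{theorem:most_recent_C_revenue}, cancel the factor $p$, and expand the product of the two binomial expectations into the double sum over $(i,j)$. Your extra remarks on non-absorbingness and the $p=0$ boundary case are sound and slightly more careful than the paper's one-line treatment, but they do not change the argument.
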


\begin{theorem}
\label{thm:conf_arbitrarily_bad}
For any continuous value distribution $\mathcal{F}$ with positive mass on a bounded support, and any $M> 0$, there exists a price $p$ satisfying Assumption \ref{assumption:non_abs_non_degen} such that $\chi(p) > M$. 
\end{theorem}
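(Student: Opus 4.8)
The plan is to exploit the closed-form expression for $\chi(p)$ from \cref{lemma: CoNF_ratio_closed_form_general_c} and show that a single term of the double sum already blows up as $p$ approaches the top of the support of $\mathcal{F}$. Every summand is a product of nonnegative binomial coefficients, powers of $\mu$ and $1-\mu$, and a ratio of strictly positive purchase probabilities (positivity is exactly non-absorption). Hence I can lower bound $\chi(p)$ by retaining only the term $(i,j)=(c,0)$,
\[
\chi(p) \;\geq\; \mu^{c}(1-\mu)^{c}\cdot\frac{\prob_{\Theta \sim \mathcal{F}}[\Theta + h(c) \geq p]}{\prob_{\Theta \sim \mathcal{F}}[\Theta + h(0) \geq p]}.
\]
This particular term is the right one to isolate because, by \cref{assumption:h_monotonicity_positive_reviews}, $h(c) > h(0)$, so among all purchase probabilities it pairs the largest numerator with the smallest denominator, and its coefficient $\mu^{c}(1-\mu)^{c}$ is a fixed positive constant independent of $p$.

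Next I would parametrize the price near the top of the support. Let $\bar{\theta} \coloneqq \sup(\mathrm{supp}(\mathcal{F})) < \infty$ and write $G(x) \coloneqq \prob_{\Theta \sim \mathcal{F}}[\Theta \geq x]$, so that the purchase probability with $n$ positive reviews is $G(p - h(n))$. Since $\mathcal{F}$ is continuous with positive mass on its bounded support, $G$ is continuous, strictly positive for $x < \bar{\theta}$, and satisfies $G(\bar{\theta}) = 0$. Setting $\Delta \coloneqq h(c) - h(0) > 0$, I would consider prices of the form $p = \bar{\theta} + h(0) - \epsilon$ for small $\epsilon > 0$. Then the denominator above equals $G(\bar{\theta} - \epsilon) \to 0$ as $\epsilon \downarrow 0$, while the numerator equals $G(\bar{\theta} - \epsilon - \Delta)$, which for every $\epsilon < \Delta/2$ is bounded below by the positive constant $G(\bar{\theta} - \Delta/2)$. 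Consequently the displayed ratio, and therefore $\chi(p)$, tends to $+\infty$; choosing $\epsilon$ small enough gives $\chi(p) > M$.

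It then remains to verify admissibility of the constructed price. The smallest purchase probability over all states is $G(p - h(0)) = G(\bar{\theta} - \epsilon) > 0$ because $\bar{\theta} - \epsilon < \bar{\theta}$, so $p$ is non-absorbing (\cref{def:non-absorbing price}). Furthermore, for $\epsilon < \Delta/2$ the purchase probabilities at $n=0$ and $n=c$ are distinct (one is arbitrarily close to $0$, the other bounded below by $G(\bar{\theta} - \Delta/2)>0$), so $p$ is non-degenerate (\cref{def:non_degen_price}), and both hypotheses of the statement are met.

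The only delicate point, and the step I expect to require the most care, is the behavior of $G$ near the top of the support: the argument hinges on the numerator $G(\bar{\theta} - \epsilon - \Delta)$ staying bounded away from zero while the denominator $G(\bar{\theta} - \epsilon)$ collapses to zero. This is precisely where ``continuous with positive mass on a bounded support'' is used — it ensures that $G(x)\to 0$ only as $x \uparrow \bar{\theta}$ (no atom at the top and no gap just below it) and that $G$ is strictly positive at the fixed interior offset $\bar{\theta} - \Delta/2$. Once this regularity is pinned down, the remainder is the elementary act of discarding all but one nonnegative term from the closed form of $\chi(p)$.
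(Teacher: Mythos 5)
Your proposal is correct and follows essentially the same route as the paper's own proof: both isolate the $(i,j)=(c,0)$ summand of the closed form in \cref{lemma: CoNF_ratio_closed_form_general_c} to get the lower bound $\chi(p) \geq \mu^{c}(1-\mu)^{c}\,\frac{\prob_{\Theta \sim \mathcal{F}}[\Theta + h(c) \geq p]}{\prob_{\Theta \sim \mathcal{F}}[\Theta + h(0) \geq p]}$, then drive $p \uparrow \overline{\theta}+h(0)$ so that the denominator vanishes by continuity while the numerator stays bounded away from zero, and finally verify non-absorption and non-degeneracy of the chosen price. The only difference is cosmetic: you work with the survival function $G$ and the supremum of the support rather than assuming an interval support $[\underline{\theta},\overline{\theta}]$ as the paper does (your parenthetical that the two purchase probabilities are distinct for \emph{every} $\epsilon<\Delta/2$ is slightly loose if the support has gaps, but your actual conclusion only needs $\epsilon$ small enough, which is sound).
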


\begin{proof}[Proof sketch.]
One summand in the right hand side of \cref{lemma: CoNF_ratio_closed_form_general_c} contains the ratio of the purchase probability of all reviews being positive compared to all reviews being negative:

\begin{equation}\label{eq:ratio_all_positive_all_negative}
    \beta(p) \coloneqq \frac{ \prob_{\Theta \sim \mathcal{F}}[\Theta + h(c) \geq p]}{ \prob_{\Theta \sim \mathcal{F}}[\Theta + h(0) \geq p]},
\end{equation}
which quantifies how much the reviews affect the purchase probability. Since all other terms are non-negative, the \textsc{CoNF} is lower bounded by this summand, i.e., $\chi(p) \geq \mu^c(1-\mu)^c \beta(p)$. The full proof (provided in Appendix \ref{appendix_subsec_conf_arb_bad_proof}) shows that, for any $M$ the latter quantity can become equal to $M$ by an appropriate selection of price that makes Newest to be mostly ``stuck'' (purchase probability when all reviews are negative is close to $0$ while the purchase probability when all reviews are positive is at least a constant).
\end{proof}

We complement \cref{thm:conf_arbitrarily_bad} by showing that if $\mathcal{F}$ has Monotone Hazard Rate (MHR), i.e., letting $F$ and $f$ be its cumulative and density functions, its hazard rate function $\frac{f(u)}{1-F(u)}$ is non-decreasing in $u$, then $\chi(p)$ is non-decreasing in $p$. The intuition behind this result is that the MHR property implies that each purchase probability ratio, $\frac{ \prob_{\Theta \sim \mathcal{F}}[\Theta + h(i) \geq p]}{ \prob_{\Theta \sim \mathcal{F}}[\Theta + h(j) \geq p]}$ for $i > j$, is non-decreasing in~$p$. Given that the CoNF is a linear combination of such ratios and their reciprocals (Lemma~\ref{lemma: CoNF_ratio_closed_form_general_c}), it is non-decreasing in~$p$. The formal proof is provided in Appendix \ref{appendix_subsec:conf_monotonicity_under_mhr}.

\begin{proposition}\label{thm:mhr_conf_increasing_in_price}
   Suppose that $\mathcal{F}$ is a continuous distribution with support $[\underline{\theta}, \overline{\theta}]$ and has MHR. Then $\chi(p)$ is non-decreasing for $p \in (\underline{\theta} + h(c), \overline{\theta} + h(0))$.
\end{proposition}

Lastly, we show in the following proposition (proof in Appendix~\ref{appendix:when_is_conf_small}) that CoNF is upper bounded by the ratio $\beta(p)$ defined in \eqref{eq:ratio_all_positive_all_negative}. As a result, when $\beta(p)$ is small, the \textsc{Cost of Newest First} is also small. This occurs when review ratings have small impact on purchases. For example, when $\mathcal{F} = \mathcal{U}[0,\overline{\theta}]$ and $\overline{\theta} \to \infty$, the idiosyncratic variability dominates the variability from estimating $\mu$ through reviews, yielding $\beta(p)\to 1$ and thus $\chi(p) \to 1$. 

\begin{proposition}\label{prop:CoNF_less_beta}
    For all prices $p$ satisfying Assumption \ref{assumption:non_abs_non_degen}, the CoNF is at most $\chi(p) \leq \beta(p)$. 
\end{proposition}

\subsection{Limited Attention: Main Driver for the Persistence of Negative Reviews }\label{subsec:driver_persistence_of_negative_reviews}

We further investigate the key drivers of the persistence of negative reviews. We show that this phenomenon is not an artifact of the specific belief update rule of our model; rather the limited attention of customers is the main driver behind our result. 

\noindent \textbf{The phenomenon holds for generic belief updating rules.}
Consider customers who read the newest review $(c=1)$ and any updating rule where at any round $t$, the purchase probability when the last review is positive is strictly greater than when it is negative.
Specifically, let $q(r,t)$ be the purchase probability of the customer at round $t \in \mathbb{N}$ when the newest review is $r \in \{0, 1\}$, and suppose $q(0,t) < q(1,t)$ for every $t$.
Suppose that any new review is equally likely to be positive or negative ($\mu = 1/2$).
We show that at any round, the newest review is more likely to be negative than positive (\cref{prop:conf_exists_general_update_rule},
Appendix~\ref{app:driver_generic_updating}).

\noindent\textbf{The phenomenon disappears without limited attention.} To isolate the effect of limited attention, we contrast to the setting where customers do not have limited attention and they can take the full history into account ($c\rightarrow \infty$). We show that there exist instances where, under any limited attention~$\tilde{c}$, the CoNF can be arbitrarily large, while it goes to $1$ when $c\rightarrow \infty$   (\cref{prop:for_any_c_M_instance_1_bad_conf_2_conf_to_one_without_bd_rationality}, Appendix \ref{appendix_subsec_conf_driver_bounded_rationality}); the latter result holds more broadly under mild assumptions on the purchase probability (\cref{prop:more_broadly_conf_to_one_without_bd_rationality}, Appendix \ref{appendix_subsec_conf_driver_bounded_rationality}).

\section{Dynamic Pricing Mitigates the Impact of CoNF}
\label{sec:pricing}
In this section, we allow the platform to optimize the pricing policy $\rho$, while the review ordering policy is either $\sigma^{\textsc{newest}}$ or $\sigma^{\textsc{random}}$. We assume that the platform knows the true underlying quality~$\mu$.\footnote{We assume $\mu$ is fixed over time and  the platform has access to enough reviews to estimate $\mu$ arbitrarily well. } 
Recall that $\Pi^{\textsc{static}}$ and  $\Pi^{\textsc{dynamic}}$ are the classes of static and dynamic pricing policies respectively, and that the revenue and CoNF for a class $\Pi$ are defined respectively as
$$\textsc{Rev}(\sigma, \Pi) \coloneqq \sup_{\rho \in \Pi} \textsc{Rev}(\sigma, \rho) \qquad \text{ and } \qquad \chi(\Pi)  \coloneqq \frac{ \rev(\sigma^{\textsc{random}}, \Pi)}{\rev(\sigma^{\textsc{newest}}, \Pi)}.$$

The main results of this section (\cref{subsec:recency_bias_ratio_static_optimal}) establish that the \textsc{CoNF} can be arbitrarily large for the optimal static pricing policy (\cref{thm:CoNF_opt_static_arbitrarily_bad}) but that it is bounded by a small constant for the optimal dynamic pricing policy (\cref{theorem:dynamic_pricing_CoNF_bound}). The main technical challenge of this section is in proving \cref{theorem:dynamic_pricing_CoNF_bound}.
To do this, we first \textit{characterize} the optimal dynamic pricing policies under both $\sigma^{\textsc{newest}}$ and $\sigma^{\textsc{random}}$ and derive exact expressions for their long-term revenue (\cref{subsec: optimize_dynamic_price}). 
In doing so, we derive a structural property of the optimal dynamic pricing policy under $\sigma^{\textsc{newest}}$: the prices ensure that the purchase probability is always equal regardless of the state of reviews.

\subsection{Cost of Newest First under Optimal Static and Dynamic Pricing}\label{subsec:recency_bias_ratio_static_optimal}
We first establish that when optimizing over static prices, the \textsc{CoNF} can be arbitrarily large for any number of reviews $c$.
Note that this is not implied by \cref{thm:conf_arbitrarily_bad}, since here we assume the platform always chooses the \emph{optimal} static price for a given instance.

\begin{theorem}\label{thm:CoNF_opt_static_arbitrarily_bad}
   For any instance where the support of $\mathcal{F}$ is $[0, \overline{\theta}]$, it holds that $\chi(\Pi^{\textsc{static}}) \geq \frac{\mu^ch(c)}{h(0)+\overline{\theta}}$. 
\end{theorem}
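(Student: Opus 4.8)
The plan is to bound the numerator and denominator of $\chi(\Pi^{\textsc{static}}) = \rev(\sigma^{\textsc{random}}, \Pi^{\textsc{static}}) / \rev(\sigma^{\textsc{newest}}, \Pi^{\textsc{static}})$ separately: lower bound the optimal revenue under $\sigma^{\textsc{random}}$, upper bound the optimal revenue under $\sigma^{\textsc{newest}}$, and then divide. Both bounds follow from the closed-form revenue expressions already established.

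For the numerator, I would evaluate the random revenue at the single test price $p = h(c)$ and invoke \cref{theorem:random_C_revenue}. Since $\mathcal{F}$ is supported on $[0,\overline{\theta}]$ we have $\Theta \geq 0$ almost surely, so when all $c$ displayed reviews are positive (i.e.\ $N = c$) the purchase probability is $\prob_{\Theta}[\Theta + h(c) \geq h(c)] = \prob_{\Theta}[\Theta \geq 0] = 1$. Keeping only the $N = c$ term in the expectation over $N \sim \binomial(c,\mu)$ (all remaining terms are non-negative) and using $\prob[N = c] = \mu^c$ gives $\rev(\sigma^{\textsc{random}}, h(c)) \geq \mu^c h(c)$, and therefore $\rev(\sigma^{\textsc{random}}, \Pi^{\textsc{static}}) \geq \mu^c h(c)$.

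For the denominator, I would combine two observations. First, \cref{theorem:most_recent_C_revenue} gives $\rev(\sigma^{\textsc{newest}}, p) = p \big/ \expect_{N}[1/\prob_{\Theta}[\Theta + h(N) \geq p]]$ for any non-absorbing $p$; since every purchase probability is at most $1$, each summand $1/\prob_{\Theta}[\Theta + h(N) \geq p] \geq 1$, so the expectation in the denominator is at least $1$ and hence $\rev(\sigma^{\textsc{newest}}, p) \leq p$. Second, by \cref{prop:absorbing_price_newest} every absorbing price yields zero revenue under $\sigma^{\textsc{newest}}$, so the supremum over $\Pi^{\textsc{static}}$ is governed by non-absorbing prices; for such a price, the non-absorbing condition at $n = 0$ (namely $\prob_{\Theta}[\Theta + h(0) \geq p] > 0$) together with $\Theta \leq \overline{\theta}$ forces $p \leq h(0) + \overline{\theta}$. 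Combining these yields $\rev(\sigma^{\textsc{newest}}, \Pi^{\textsc{static}}) \leq h(0) + \overline{\theta}$, and dividing the two bounds gives $\chi(\Pi^{\textsc{static}}) \geq \mu^c h(c) / (h(0) + \overline{\theta})$ as claimed.

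I expect the only genuine subtlety to lie in the denominator: one must first discard absorbing prices via \cref{prop:absorbing_price_newest} before the clean cap $p \leq h(0) + \overline{\theta}$ becomes available, since the bound $\rev(\sigma^{\textsc{newest}}, p) \leq p$ is only vacuously useful without also controlling how large an admissible $p$ can be. The boundary reasoning that $\prob_{\Theta}[\Theta \geq p - h(0)] > 0$ implies $p - h(0) \leq \overline{\theta}$ should be stated explicitly, relying on $\Theta$ being supported on $[0,\overline{\theta}]$. The numerator step is essentially immediate once the right test price $p = h(c)$ is identified.
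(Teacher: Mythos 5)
Your proposal is correct and follows essentially the same route as the paper's proof: the same test price $p = h(c)$ under $\sigma^{\textsc{random}}$ (exploiting $\prob_{\Theta \sim \mathcal{F}}[\Theta \geq 0] = 1$ to get the $\mu^c h(c)$ lower bound), and the same two-part bound on $\rev(\sigma^{\textsc{newest}}, \Pi^{\textsc{static}})$ combining \cref{prop:absorbing_price_newest} for absorbing prices with the cap $\rev(\sigma^{\textsc{newest}}, p) \leq p \leq h(0) + \overline{\theta}$ for non-absorbing ones. The only cosmetic difference is that you derive $\rev(\sigma^{\textsc{newest}}, p) \leq p$ from \cref{theorem:most_recent_C_revenue} while the paper states it directly, and you phrase the price cap as the contrapositive of the paper's observation that any $p > h(0) + \overline{\theta}$ is absorbing.
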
 
This implies that $\chi(\Pi^{\textsc{static}}) \to +\infty$ if $h(c)$ is held constant and $\overline{\theta} \to 0$ and $h(0) \to 0$. Intuitively, this means that when the variability in the customer's idiosyncratic valuation $\Theta_t$ is negligible compared to the variability in review-inferred quality estimates, 
$\sigma^{\textsc{newest}}$ spends a disproportionate time in the state with no positive reviews, which leads to unbounded \textsc{CoNF}.

\begin{proof}[Proof of \cref{thm:CoNF_opt_static_arbitrarily_bad}.]
The optimal revenue under $\newest$ is at most $h(0) + \overline{\theta}$. This is because any price $p > h(0) + \overline{\theta}$ induces a purchase probability of zero when all reviews are negative and thus $\newest$ gets ``stuck" in a zero revenue state. Thus, $\max_{p \in \mathbb{R}}\rev(\sigma^{\textsc{newest}},p) \leq h(0) + \overline{\theta}$.

Under $\sigma^{\textsc{random}}$, 
if all reviews are positive, the non-negativity of the value distribution implies that a price $p = h(c)$ induces a purchase with probability one.
The probability of this event is~$\mu^c$, which implies that $\max_{p \in \mathbb{R}}\rev(\sigma^{\textsc{random}},p) \geq \mu^c h(c) $. Combining the two inequalities we obtain  
\begin{equation*}\chi(\Pi^{\textsc{static}}) = \frac{\max_{p \in \mathbb{R}}\rev(\sigma^{\textsc{random}},p)}{\max_{p \in \mathbb{R}}\rev(\sigma^{\textsc{newest}},p)}  \geq \frac{\mu^ch(c)}{h(0)+\overline{\theta}}. \end{equation*}
 \end{proof}

Next, we show an upper bound under dynamic pricing, which is the main result of this section. 
\begin{theorem}\label{theorem:dynamic_pricing_CoNF_bound}
For any instance, it holds that $\chi(\Pi^{\textsc{dynamic}})  \leq \frac{2}{\prob_{\Theta \sim \mathcal{F}}[\Theta \geq 0]}$.
\end{theorem}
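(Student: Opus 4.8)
The plan is to prove the bound by separately upper-bounding the numerator $\rev(\sigma^{\textsc{random}}, \Pi^{\textsc{dynamic}})$ and lower-bounding the denominator $\rev(\sigma^{\textsc{newest}}, \Pi^{\textsc{dynamic}})$, and then forming their ratio. The key structural fact I would invoke (promised as \cref{theorem:newest_first_dynamic_optimal_policy} in the outline) is that the optimal dynamic pricing policy under $\sigma^{\textsc{newest}}$ equalizes the purchase probability across all review states, call it $q$; this makes the steady-state distribution of $\bm{Z}_t$ under $\newest$ coincide with that of $\random$ (namely $\binomial(c,\mu)$, by \cref{lemma:stationary_state_distribiton_newest_first} with a constant denominator), which is exactly why dynamic pricing kills the endogeneity that drives the \textsc{CoNF}.

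First I would lower-bound the denominator. The idea is to construct a single feasible dynamic policy under $\newest$ that already captures a constant fraction of the random-ordering optimum. Let $\rho^{\star}$ be a near-optimal dynamic policy under $\sigma^{\textsc{random}}$, achieving revenue close to $\rev(\sigma^{\textsc{random}}, \Pi^{\textsc{dynamic}})$, with per-state prices $\rho^{\star}(\bm z)$ and per-state purchase probabilities $q(\bm z) = \prob_{\Theta\sim\mathcal F}[\Theta + h(\sum_i z_i) \geq \rho^{\star}(\bm z)]$. Under $\random$ the state is distributed as $\binomial(c,\mu)$, so $\rev(\sigma^{\textsc{random}},\rho^\star) = \sum_{\bm z} \pi^{\textsc{random}}_{\bm z}\, \rho^\star(\bm z) q(\bm z)$. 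The obstacle is that plugging $\rho^\star$ into $\newest$ reweights the states by the endogenous stationary distribution of \cref{lemma:stationary_state_distribiton_newest_first}, which could concentrate on low-revenue states. To circumvent this I would instead use the equalizing structure: design a dynamic policy for $\newest$ that fixes a common purchase probability $q$ and, in each state, charges the highest price consistent with that probability. By \cref{theorem:most_recent_C_revenue}-type reasoning the resulting stationary distribution is $\binomial(c,\mu)$, and the revenue is $q \cdot \expect_{N\sim\binomial(c,\mu)}[p(N)]$ where $p(N)$ is the price achieving purchase probability $q$ in state $N$. I would then choose $q$ to relate this to the $\random$ optimum, using Assumption~\ref{assumption:non-negative_mass_idiosyncratic} (so that $\prob[\Theta\geq 0]>0$ guarantees the construction is non-degenerate) to extract the factor $\prob_{\Theta\sim\mathcal F}[\Theta\geq 0]$.

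For the numerator I would upper-bound $\rev(\sigma^{\textsc{random}},\Pi^{\textsc{dynamic}})$ by a clean benchmark. Since under $\random$ the states are i.i.d.\ $\binomial(c,\mu)$ and there is no endogeneity, the optimal dynamic policy just optimizes each state independently, giving $\rev(\sigma^{\textsc{random}},\Pi^{\textsc{dynamic}}) = \expect_{N\sim\binomial(c,\mu)}[\max_p p\,\prob_{\Theta\sim\mathcal F}[\Theta + h(N)\geq p]]$. The plan is to compare the per-state optimal revenue $\max_p p\,\prob[\Theta + h(N)\geq p]$ against the revenue the $\newest$-feasible construction obtains in the same state, and show the ratio in each state is controlled by $2/\prob[\Theta\geq 0]$. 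The factor of $2$ I expect to come from a standard single-state pricing argument: charging price equal to $h(N)$ guarantees purchase probability $\prob[\Theta\geq 0]$ and revenue $h(N)\prob[\Theta\geq 0]$, while splitting the optimal revenue into the "value from $h(N)$" part and the "value from positive $\Theta$" part and bounding each loses at most a factor of $2$.

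The main obstacle will be handling the coupling between pricing and the endogenous steady state under $\newest$ in the lower bound: I cannot simply reuse the per-state $\random$-optimal prices, because under $\newest$ they induce a different, revenue-unfavorable stationary distribution. The crux is therefore to exploit the equalized-purchase-probability policy so that the $\newest$ stationary distribution is forced back to $\binomial(c,\mu)$, decoupling the two orderings and reducing the whole comparison to a state-by-state pricing inequality. Once the stationary distributions are aligned, the remaining work is the per-state argument that yields the constant $2/\prob_{\Theta\sim\mathcal F}[\Theta\geq 0]$, where the graceful dependence on the negative mass of $\mathcal{F}$ enters exactly through the $\prob[\Theta\geq 0]$ denominator guaranteed positive by Assumption~\ref{assumption:non-negative_mass_idiosyncratic}.
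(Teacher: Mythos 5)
Your overall architecture is viable and genuinely lighter than the paper's proof: you lower-bound $\rev(\newest,\Pi^{\textsc{dynamic}})$ by the mere \emph{feasibility} of equal-purchase-probability (review-offsetting) policies, upper-bound $\rev(\random,\Pi^{\textsc{dynamic}})$ by per-state optimization (this matches \cref{theorem_char_dynamic_random}), and your claim that an equalizing policy forces the $\newest$ stationary distribution back to $\binomial(c,\mu)$ with revenue $q\cdot\expect_N[p(N)]$ is correct (it is \cref{prop:revenue_recent_dynamic_formula_lemma_improving_review_offsetting_main_body}/\cref{lemma:review_offsetting_to_single_expected_value}). However, the step where you ``show the ratio in each state is controlled by $2/\prob_{\Theta\sim\mathcal{F}}[\Theta\geq0]$'' against a \emph{single} equalizing policy with one fixed $q$ is false. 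Writing your policy in offset form, $q=\prob_{\Theta\sim\mathcal{F}}[\Theta\geq a]$ with state-$n$ price $h(n)+a$, you would need one $a\geq0$ such that $r^{\star}(\Theta+h(n))\leq \frac{2}{\prob[\Theta\geq0]}\prob[\Theta\geq a]\,(h(n)+a)$ for \emph{every} $n$. Take $\Theta$ two-point with $\prob[\Theta=0]=0.999$ and $\prob[\Theta=100]=0.001$, so $r^{\star}(\Theta+x)=\max\big(x,\;0.1+0.001x\big)$ and $\prob[\Theta\geq0]=1$. With $a=0$ the state-$n$ revenue of your policy is $h(n)$, and at $h(n)=0.01$ the ratio is $\approx 10>2$; with any $a\in(0,100]$ the state-$n$ revenue is at most $0.001(h(n)+100)\approx 0.1$, and at $h(n)=0.9$ the ratio is $\approx 9>2$; larger $a$ gives zero revenue. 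Different states require different offsets, so no single-$q$ per-state comparison can close the argument. (Tellingly, the paper's own proof uses a comparison price $\Tilde{p}_n$ that \emph{varies with} $n$, bounds the demand ratio pointwise, but bounds the price ratio only \emph{in expectation} over $N$, \cref{lem:price_ratio} --- a pointwise factor-$2$ bound is simply not available.)

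The gap is repairable inside your framework by moving the comparison to the aggregate level, exactly in the spirit of your ``value from $h(N)$ / value from positive $\Theta$'' split. For each $n$, writing $b_n=p^{\star}(\Theta+h(n))-h(n)$,
$$r^{\star}(\Theta+h(n))=(h(n)+b_n)\,\prob_{\Theta\sim\mathcal{F}}[\Theta\geq b_n]\;\leq\; h(n)+r_{+}^{\star},\qquad r_{+}^{\star}\coloneqq\sup_{a\geq0}\,a\,\prob_{\Theta\sim\mathcal{F}}[\Theta\geq a],$$
so by \cref{theorem_char_dynamic_random} the numerator is at most $\overline{h}+r_{+}^{\star}$, where $\overline{h}=\expect_{N\sim\binomial(c,\mu)}[h(N)]$. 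Your equalizing policies, whose revenue is $(a+\overline{h})\prob_{\Theta\sim\mathcal{F}}[\Theta\geq a]$ whenever $\prob_{\Theta\sim\mathcal{F}}[\Theta\geq a]>0$, give the denominator lower bound $\rev(\newest,\Pi^{\textsc{dynamic}})\geq\max\big(\overline{h}\,\prob_{\Theta\sim\mathcal{F}}[\Theta\geq0],\;r_{+}^{\star}\big)$: the offset $a=0$ is non-absorbing by \cref{assumption:non-negative_mass_idiosyncratic}, and near-maximizers of $a\prob[\Theta\geq a]$ handle the second term. Hence
$$\chi(\Pi^{\textsc{dynamic}})\;\leq\;\frac{\overline{h}+r_{+}^{\star}}{\max\big(\overline{h}\,\prob_{\Theta\sim\mathcal{F}}[\Theta\geq0],\;r_{+}^{\star}\big)}\;\leq\;\frac{\overline{h}}{\overline{h}\,\prob_{\Theta\sim\mathcal{F}}[\Theta\geq0]}+\frac{r_{+}^{\star}}{r_{+}^{\star}}\;=\;\frac{1}{\prob_{\Theta\sim\mathcal{F}}[\Theta\geq0]}+1\;\leq\;\frac{2}{\prob_{\Theta\sim\mathcal{F}}[\Theta\geq0]},$$
where the $r_{+}^{\star}$ term is dropped when $r_{+}^{\star}=0$. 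With this repair, your route needs only the revenue formula for review-offsetting policies and not the full optimality characterizations (\cref{theorem:newest_first_dynamic_optimal_policy}, \cref{thm:most_recent_dynamic_opt_rev}) that the paper routes through, which is a genuine simplification; but as written, the per-state claim is the step that fails.
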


In contrast to static pricing where the \textsc{CoNF} can be arbitrarily bad, \cref{theorem:dynamic_pricing_CoNF_bound} shows that its negative impact is uniformly bounded under dynamic pricing.
If the idiosyncratic valuation $\Theta_t$ is always non-negative, the upper bound on $\chi(\Pi^{\textsc{dynamic}})$ is 2.
Theorem \ref{theorem:dynamic_pricing_CoNF_bound} applies even when $\Theta_t$ can be negative: for example if it is non-negative with probability 1/2, then $\chi(\Pi^{\textsc{dynamic}}) \leq 4$. 

 The proof of \cref{theorem:dynamic_pricing_CoNF_bound} (\cref{subsec:proof_conf_bound_dynamic_pricing}) relies on characterizing the optimal dynamic pricing policies  under both $\sigma^{\textsc{newest}}$ and $\sigma^{\textsc{random}}$ and their corresponding revenues (\cref{subsec: optimize_dynamic_price}). Recall that with static pricing, $\newest$ spends a disproportionate amount of time in a negative review state compared to $\random$. In contrast, the optimal dynamic pricing sets prices so that the purchase probability is equal across all review states, leading to $\newest$ and $\random$ spending the same amount of time in each review state (Section~\ref{subsec: optimize_dynamic_price}). This allows us to bound the ratio of demands under $\newest$ and $\random$ by a factor of $\prob_{\Theta \sim \mathcal{F}}[\Theta \geq 0]$. Finally, we bound the ratio of the optimal prices under $\newest$ and $\random$, which we decompose in two terms stemming from the customer's belief about $\mu$ and customer specific valuation; each term is bounded by 1 (Section~\ref{subsec:proof_conf_bound_dynamic_pricing}). 

We complement this result by a lower bound (proof in Appendix \ref{appendix_proof_CoNF_example_at_least_alpha}) which shows that the \textsc{Cost of Newest First} still exists even under optimal dynamic pricing.
\begin{proposition}\label{prop:conf_still_exits_opt_dynamic_pricing}
    For any $\alpha<4/3$, there exists an instance such that $\chi(\Pi^{\textsc{dynamic}}) > \alpha$.  
\end{proposition}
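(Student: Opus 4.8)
The plan is to exhibit an explicit one-review family ($c=1$) and read off the two optimal-dynamic revenues in closed form. Define the single-state optimal monopoly revenue $\psi(s) := \max_{p\ge 0} p\,\prob_{\Theta\sim\mathcal F}[\Theta + s \ge p]$, which is a nondecreasing, convex function of $s$ (it is a maximum of affine functions $q(s+\tau(q))$, where $\tau$ is the quantile function of $\Theta$). Invoking the optimal-policy characterizations of \cref{subsec: optimize_dynamic_price}: under $\sigma^{\textsc{random}}$ the price is chosen per displayed state, so $\rev(\sigma^{\textsc{random}},\Pi^{\textsc{dynamic}}) = (1-\mu)\psi(h(0)) + \mu\,\psi(h(1))$; under $\sigma^{\textsc{newest}}$ the optimal policy equalizes the purchase probability across states (\cref{theorem:newest_first_dynamic_optimal_policy}), so its stationary law coincides with that of $\sigma^{\textsc{random}}$ and $\rev(\sigma^{\textsc{newest}},\Pi^{\textsc{dynamic}}) = \psi(\bar h)$ with $\bar h := (1-\mu)h(0)+\mu h(1)$. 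Thus $\chi(\Pi^{\textsc{dynamic}}) = \big[(1-\mu)\psi(h(0))+\mu\psi(h(1))\big]/\psi(\bar h)$ is a Jensen-type ratio for the convex $\psi$, and the whole problem reduces to making $\psi$ convex enough between $h(0)$ and $h(1)$ to force this ratio above $4/3$.

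For the idiosyncratic distribution I take the two-point law $\Theta\in\{0,D\}$ with $\prob_{\Theta\sim\mathcal F}[\Theta=D]=\rho$ (so $\prob_{\Theta\sim\mathcal F}[\Theta\ge0]=1$ and \cref{assumption:non-negative_mass_idiosyncratic} holds). Here $\psi(s)=\max\{\rho(s+D),\,s\}$ is piecewise linear with a single kink at $s^{\star}=\rho D/(1-\rho)$: slope $\rho$ for $s<s^\star$ and slope $1$ for $s>s^\star$. A kinked convex function has the largest Jensen gap when the two evaluation points straddle the kink and their weighted mean sits on it, so I place $h(0)<s^\star<h(1)$ and tune $\mu$ so that $\bar h=s^\star$. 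A clean choice is $\mu=\rho=\tfrac13$ and $D=\tfrac23$, giving $s^\star=\tfrac13$; sending $h(0)\to0^{+}$ and $h(1)\to1^{-}$ yields $\psi(h(0))\to\rho D=\tfrac29$, $\psi(h(1))\to1$, and $\psi(\bar h)=s^\star=\tfrac13$, hence $\chi(\Pi^{\textsc{dynamic}})\to\big[\tfrac23\cdot\tfrac29+\tfrac13\cdot1\big]/\tfrac13=\tfrac{13}{9}>\tfrac43$. Because the claim only requires $\chi(\Pi^{\textsc{dynamic}})>\alpha$ for a fixed $\alpha<4/3$, taking $h(0)$ small enough and $h(1)$ close enough to $1$ (both strictly inside $(0,1)$) proves it with room to spare; in fact fixing $D$ and letting $\rho\to0$ along $\mu=s^\star$ drives the same ratio toward $2$.

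Two routine checks remain. First, the estimates must be realizable: for risk-neutral $h$ with $c=1$ one has $h(0)=\tfrac{a}{a+b+1}$ and $h(1)=\tfrac{a+1}{a+b+1}$, which realize any $0<h(0)<h(1)<1$ and satisfy $h(0)\to0,\ h(1)\to1$ as $a,b\to0^{+}$; alternatively one may invoke the general model of \cref{remark_model_generalization}, where $h$ is an arbitrary increasing estimate. Second, every price used must be non-absorbing so that the chain is ergodic and the revenue expressions apply; the equalizing prices induce purchase probabilities in $\{\rho,1\}$, both strictly positive, so this holds.

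The crux is the upper bound $\rev(\sigma^{\textsc{newest}},\Pi^{\textsc{dynamic}})=\psi(\bar h)$, i.e.\ that no dynamic policy beats the equalizing one under $\sigma^{\textsc{newest}}$; this is the step I expect to carry the weight. The clean route is to cite \cref{theorem:newest_first_dynamic_optimal_policy}. A self-contained alternative uses the discreteness of the two-point $\Theta$: in each state only the two threshold prices $h(n)$ and $h(n)+D$ (with purchase probabilities $1$ and $\rho$) are undominated, so the supremum over dynamic policies is the maximum over the four resulting price pairs. Solving the two-state chain from the detailed-balance relation $\pi_0 q_0\mu=\pi_1 q_1(1-\mu)$ and comparing the four stationary revenues shows the maximum equals $\max\{\bar h,\ \rho(\bar h+D)\}=\psi(\bar h)$; verifying that the two non-equalizing pairs never strictly exceed this is the one genuinely computational point, and it is where I would spend the most care.
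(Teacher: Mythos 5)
Your proof is correct, and its skeleton is the same as the paper's: both arguments invoke \cref{theorem_char_dynamic_random} and \cref{thm:most_recent_dynamic_opt_rev} to reduce the claim to the Jensen-gap ratio $\chi(\Pi^{\textsc{dynamic}}) = \expect_{N}[r^{\star}(\Theta+h(N))]/r^{\star}(\Theta+\overline{h})$ (the paper's Eq.~\eqref{eq:move_inside_expectation}; your $\psi$ is $r^{\star}$), so the weight-bearing step you worried about is legitimately discharged by citation, exactly as in the paper. Where you genuinely differ is the instance: the paper takes $\mathcal{F}=\mathcal{U}[0,2\overline{h}]$ with $h(0)=\mu^2$, $h(1)=1-\mu^2$, grinds through the uniform-distribution closed forms, and obtains $\chi = \frac{8}{9}+\frac{1-\mu}{9}\big(\frac{\mu}{1+\mu-2\mu^2}-2\big)^2 \to 4/3$ as $\mu \to 0$, which is precisely why the statement stops at $4/3$; your two-point $\Theta\in\{0,D\}$ makes $r^{\star}(\Theta+s)=\max\{s,\rho(s+D)\}$ piecewise linear, so the Jensen gap is maximized transparently by straddling the kink, and with $\mu=\rho=\tfrac13$, $D=\tfrac23$, $h(0)=\epsilon$, $h(1)=1-\epsilon$ the ratio evaluates to $(13-3\epsilon)/(9+9\epsilon)$, which already exceeds $4/3$ for every fixed $\epsilon<1/15$ (no limiting instance is even needed) and tends to $13/9$. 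Your further observation that sending $\rho\to 0$ along $\mu=\rho D/(1-\rho)$ drives the ratio to $2$ is also correct, and it buys something the paper's family cannot: it shows the upper bound of \cref{theorem:dynamic_pricing_CoNF_bound} is asymptotically tight when $\Theta\geq 0$, strictly strengthening \cref{prop:conf_still_exits_opt_dynamic_pricing}. The hypotheses all check out for the discrete $\mathcal{F}$: the cited characterization theorems are stated for arbitrary instances and their maximizing prices are attained for a two-point distribution, the equalizing prices induce purchase probabilities in $\{\rho,1\}$ and are hence non-absorbing, $\mathcal{F}$ has bounded support with $\prob_{\Theta\sim\mathcal{F}}[\Theta\geq 0]=1$, and the estimates $h(0)<h(1)$ in $(0,1)$ are admissible since $h$ is an arbitrary monotone estimator in the instance description (the paper's own proof uses the equally non-Bayesian choice $h(0)=\mu^2$, $h(1)=1-\mu^2$), or alternatively via the risk-neutral mean with $a,b\to 0^{+}$ as you note.
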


\begin{remark}\label{rmk:CoNF_dynamic_at_least_one}
Even under optimal dynamic pricing, it is still the case that $\random$ induces no smaller revenue than $\newest$, i.e., $\chi(\Pi^{\textsc{dynamic}}) \geq 1$ (see Appendix \ref{appendix_subsec:dynamic_pricing_CoNF_geq_1_proof}).
\end{remark}

\begin{remark}\label{rmk_result_imporvement_CoNF_dynamic_pricing} If we have the additional knowledge of $h(n) \leq u$ for some $u \geq 0$, then we can improve the result of \cref{theorem:dynamic_pricing_CoNF_bound}  to $\chi(\Pi^{\textsc{dynamic}})  \leq \frac{2\prob_{\Theta \sim \mathcal{F}}[\Theta \geq -u]}{\prob_{\Theta \sim \mathcal{F}}[\Theta \geq 0]}$ (see Appendix \ref{appendix:result_imporvement_CoNF_dynamic_pricing}).
\end{remark}

\subsection{Characterization of Optimal Dynamic Pricing under Newest First}\label{subsec: optimize_dynamic_price}
Our key technical contribution characterizes the optimal dynamic pricing policy under $\sigma^{\textsc{newest}}$. We show that it satisfies a structural property: the purchase probability is equal regardless of the review state. We define the policies that satisfy this property as \textit{review-offsetting} policies. For a state of reviews $\bm{z} = (z_1, \ldots, z_c) \in \{0,1\}^c$, we denote by $N_{\bm{z}} = \sum_{i=1}^c z_i$ the number of positive review ratings. 

\begin{definition}
    A dynamic pricing policy $\rho$ is \textit{review-offsetting} if there exists an \textit{offset} $a \in \mathbb{R}$ such that $\rho(\bm{z}) = h(N_{\bm{z}}) +a$ for all  $\bm{z} \in \{0,1 \}^c$. 
\end{definition}

Note that for a review-offsetting policy $\rho$, the purchase probability at any state $\bm{z}\in \{0,1 \}^c$ is $\prob_{\Theta \sim \mathcal{F}}\big[\Theta + h(N_{\bm{z}}) \geq h(N_{\bm{z}}) + a\big] = \prob_{\Theta \sim \mathcal{F}}[\Theta \geq a]$, where the last term does not depend on $\bm{z}$.
Hence, review-offsetting policies induce equal purchase probability regardless of the state of reviews.

The main result of this section establishes that under $\sigma^{\textsc{newest}}$, there is a review-offsetting dynamic pricing policy that maximizes revenue and characterizes the corresponding offset. We note that this characterization is the only place where we require the platform to know the true quality $\mu$. 

\begin{theorem}\label{theorem:newest_first_dynamic_optimal_policy}
Let $p^{\star} \in \argmax_{p \in \mathbb{R}}p \prob_{\Theta \sim \mathcal{F}}\big[\Theta + \expect_{N \sim \binomial(c, \mu)}[h(N)] \geq p\big]$. 
Under $\newest$, 
the review-offsetting policy with offset $a^{\star}=p^{\star} -\expect_{N \sim \binomial(c, \mu)}[h(N)]$ is an optimal dynamic pricing policy.
\end{theorem}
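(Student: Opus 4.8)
The plan is to show that the review-offsetting policy with the specified offset is optimal under $\sigma^{\textsc{newest}}$ by a two-step argument: first, compute the revenue of \emph{any} review-offsetting policy in closed form and optimize over the offset; second, argue that no dynamic pricing policy can beat this value. The key structural insight that makes this tractable is that review-offsetting policies induce a purchase probability that is constant across review states (as noted after the definition). By \cref{theorem:most_recent_C_revenue}, when the purchase probability equals a constant $q = \prob_{\Theta \sim \mathcal{F}}[\Theta \geq a]$ across all states, the harmonic-mean expression collapses: $\rev(\sigma^{\textsc{newest}}, \rho) = p \cdot q$ becomes a simple product because the $\binomial(c,\mu)$ expectation of $1/q$ is just $1/q$. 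More precisely, since the price in state $\bm z$ is $h(N_{\bm z}) + a$, the expected per-round revenue is $\expect_N[(h(N)+a)\,q]$, where $N \sim \binomial(c,\mu)$; crucially, because the purchase probability is state-independent, the stationary distribution under $\sigma^{\textsc{newest}}$ coincides with that of $\sigma^{\textsc{random}}$, namely $\binomial(c,\mu)$ (by \cref{lemma:stationary_state_distribiton_newest_first}, the endogenous reweighting factor $1/\prob[\cdots]$ is constant and absorbed into $\kappa$).

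The next step is to optimize the revenue over the offset. Writing $a = p - \expect_N[h(N)]$ and substituting, the per-round revenue of the review-offsetting policy with offset $a$ equals $\expect_N[h(N)+a] \cdot \prob_{\Theta\sim\mathcal F}[\Theta \geq a]$. Setting $\bar h \coloneqq \expect_{N \sim \binomial(c,\mu)}[h(N)]$, this is $(\bar h + a)\,\prob[\Theta \geq a]$. Reparametrizing by $p = \bar h + a$, this is exactly $p\,\prob_{\Theta\sim\mathcal F}[\Theta + \bar h \geq p]$, so the best review-offsetting policy achieves $\max_p p\,\prob[\Theta + \bar h \geq p] = r^\star(\Theta + \bar h)$, attained at the stated $p^\star$ with offset $p^\star - \bar h$. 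This identifies the candidate optimum and the claimed offset.

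The main obstacle is the \textbf{upper bound}: showing that \emph{no} dynamic pricing policy (including non-review-offsetting ones) can exceed $r^\star(\Theta + \bar h)$ under $\sigma^{\textsc{newest}}$. The difficulty is the endogeneity — an arbitrary policy induces a state-dependent purchase probability $q_{\bm z}$, and by \cref{lemma:stationary_state_distribiton_newest_first} the stationary distribution is reweighted by $1/q_{\bm z}$, so the revenue is a ratio $\sum_{\bm z} \pi_{\bm z} \rho(\bm z) q_{\bm z}$ with $\pi_{\bm z} \propto \mu^{N_{\bm z}}(1-\mu)^{c - N_{\bm z}}/q_{\bm z}$. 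The $q_{\bm z}$ in the numerator cancels against the one in $\pi_{\bm z}$, so the revenue simplifies to $\kappa \sum_{\bm z} \mu^{N_{\bm z}}(1-\mu)^{c-N_{\bm z}} \rho(\bm z)$ with the normalizer $\kappa = 1/\expect_N[1/q_N]$ still depending on the policy through the $q_{\bm z}$'s. I would control this by bounding the revenue in terms of $r^\star$: in each state $\bm z$ the contribution $\rho(\bm z)\,q_{\bm z} = \rho(\bm z)\prob[\Theta + h(N_{\bm z}) \geq \rho(\bm z)] \leq r^\star(\Theta + h(N_{\bm z}))$, but naively taking the expectation over $\binomial(c,\mu)$ would give the \emph{larger} quantity $\expect_N[r^\star(\Theta + h(N))]$ (the $\sigma^{\textsc{random}}$ value), not $r^\star(\Theta + \bar h)$. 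The resolution is that the harmonic-type weighting penalizes low-purchase states, and I expect the tight argument to proceed via a convexity/Jensen inequality on the map $p \mapsto r^\star(\Theta + \cdot)$ or, more likely, by directly verifying the stationary-distribution revenue formula and applying Jensen's inequality to $\expect_N[1/q_N] \geq 1/\expect_N[q_N]$ in the right direction so that equalizing $q_{\bm z}$ across states is revenue-optimal. Establishing this extremal characterization — that constant purchase probability is forced at the optimum — is the crux, and I would structure it as a separate lemma before concluding the theorem.
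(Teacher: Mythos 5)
Your first step is sound and matches the paper's \cref{lemma:review_offsetting_to_single_expected_value}: for a review-offsetting policy with offset $a$ the purchase probability $\prob_{\Theta\sim\mathcal F}[\Theta\geq a]$ is state-independent, the stationary distribution collapses to $\binomial(c,\mu)$, and the revenue is $(\overline h + a)\prob_{\Theta\sim\mathcal F}[\Theta\geq a]$, maximized over offsets at $a = p^{\star}-\overline h$ with value $r^{\star}(\Theta+\overline h)$. But the theorem is not proved, because the upper bound --- that \emph{no} dynamic policy beats $r^{\star}(\Theta+\overline h)$ under $\newest$ --- is exactly what you defer to ``a separate lemma'' without proving, and neither of the two tools you propose for it works. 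Convexity of $x\mapsto r^{\star}(\Theta+x)$ (the paper's \cref{lemma: opt_revenue_convex}) goes the wrong way: it gives $r^{\star}(\Theta+\overline h)\leq \expect_{N}[r^{\star}(\Theta+h(N))]$, i.e., it relates the $\newest$ value to the \emph{larger} $\random$ value, which is useless as an upper bound here. And the AM--HM step $\expect_N[1/q_N]\geq 1/\expect_N[q_N]$ only yields $\rev(\newest,\rho)\leq \expect[\rho(\bm Y)]\cdot\expect[q_{\bm Y}]$, which decouples the numerator from the denominator and can strictly exceed the target: writing $a_{\bm z}=\rho(\bm z)-h(N_{\bm z})$ and $\overline F(a)=\prob_{\Theta\sim\mathcal F}[\Theta\geq a]$, take two equally weighted states with $a_1=0$, $\overline F(a_1)=1$, $a_2=1$, $\overline F(a_2)=0.1$, and $\overline h$ small; then $(\overline h+\expect[a_{\bm Y}])\expect[\overline F(a_{\bm Y})]\approx 0.275$ while $\max_a(\overline h+a)\overline F(a)\approx 0.1$, so this intermediate bound overshoots $r^{\star}(\Theta+\overline h)$ and cannot certify that equalizing purchase probabilities is optimal.

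The paper closes this gap with a \emph{coupled} inequality rather than plain Jensen: the weighted mediant bound (\cref{lemma_ratio_average_less_max_main_body}), $\frac{\sum_{\bm z}\alpha_{\bm z}A_{\bm z}}{\sum_{\bm z}\alpha_{\bm z}B_{\bm z}}\leq\max_{\bm z}\frac{A_{\bm z}}{B_{\bm z}}$ for positive $\alpha_{\bm z},B_{\bm z}$. Applied to the dynamic-pricing revenue formula (\cref{prop:revenue_recent_dynamic_formula_lemma_improving_review_offsetting_main_body}, the dynamic analogue of the static-price proposition you cite) with $\alpha_{\bm z}=\mu^{N_{\bm z}}(1-\mu)^{c-N_{\bm z}}$, $A_{\bm z}=\overline h+\rho(\bm z)-h(N_{\bm z})$, and $B_{\bm z}=1/q_{\bm z}$, it shows that $\rev(\newest,\rho)$ is at most the best revenue among the comparison policies $\Tilde{\rho}_{\bm z}$, the review-offsetting policies with per-state offsets $\rho(\bm z)-h(N_{\bm z})$ (\cref{lemma:review_offsetting_for_optimality}); each of these is in turn at most $r^{\star}(\Theta+\overline h)$ by exactly your step 1. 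This per-state comparison (or an equivalent rearrangement argument) is the missing lemma in your proposal; without it, or with only the Jensen-type bounds you sketch, the argument does not go through.
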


To prove \cref{theorem:newest_first_dynamic_optimal_policy}, we need a characterization of the revenue for a dynamic pricing policy $\rho$ (similar to \cref{theorem:most_recent_C_revenue}). This is established in the following lemma (proof in Appendix~\ref{appendix:dynamic_pricing_revenue_formula}).

\begin{lemma}\label{lemma:revenue_recent_dynamic_formula_lemma_improving_review_offsetting_main_body}
For any dynamic pricing policy $\rho$ with positive purchase probabilities in all states,\footnote{In this lemma and the following proofs we again use the notation $\bm{Y} = (Y_1, \ldots, Y_c)$ and $N_{\bm{Y}} = \sum_{i=1}^c Y_i$.}
\begin{equation*}\label{:revenue_recent_dynamic_formula_lemma_improving_review_offsetting_main_body}
 \textsc{Rev}(\sigma^{\textsc{newest}}, \rho) = \frac{\expect_{Y_1, \ldots, Y_c \sim_{i.i.d.} \bern(\mu)}[\rho(\bm{Y})]}{\expect_{Y_1, \ldots, Y_c \sim_{i.i.d.} \bern(\mu)}\Big[\frac{1}{\prob_{\Theta \sim \mathcal{F}} [\Theta + h(N_{\bm{Y}}) \geq \rho(\bm{Y}) ]}\Big]}.
 \end{equation*}
\end{lemma}

\begin{proof}[Proof of \cref{theorem:newest_first_dynamic_optimal_policy}.]
The high-level idea of the proof is that any dynamic pricing policy $\rho$ can be improved by a particular review-offsetting policy. Specifically,
for any dynamic pricing policy $\rho$ and any review state $\bm{z} \in \{0,1\}^c$,  
we define a policy $\Tilde{\rho}_{\bm{z}}$ to be the review-offsetting policy with offset $a_{\bf{z}} \coloneqq \rho(\bm{z}) -h(N_{\bm{z}})$ and show that the revenue of one of $\{\Tilde{\rho}_{\bm{z}}\}_{ \bm{z} \in \{0,1\}^c}$ is at least the revenue of $\rho$.

To establish this we express the revenue of any review-offsetting policy $\tilde{\rho}$ with offset $a$, which satisfies that $\Tilde{\rho}(\bm{z}) = a+h(N_{\bm{z}})$ for all states $\bm{z}$. As a result, (a) its expected price is $\expect[\Tilde{\rho}(\bm{Y})] = a + \overline{h}$ where $\overline{h} = \expect_{Y_1, \ldots, Y_c \sim_{i.i.d.} \bern(\mu)}[h(N_{\bf{Y}})]$ and (b) its purchase probability is always equal to $\prob_{\Theta \sim \mathcal{F}}\big[\Theta + h(N_{\bm{Y}})\geq \Tilde{\rho}(\bm{Y})\big] = 
\prob_{\Theta \sim \mathcal{F}}\big[\Theta + \overline{h} \geq a + \overline{h}\big]$. By \cref{lemma:revenue_recent_dynamic_formula_lemma_improving_review_offsetting_main_body}, its revenue is equal to
\begin{equation}\label{eq:revenue_review_offsetting}
    \rev(\sigma^{\textsc{newest}},\Tilde{\rho}) =\big(a + \overline{h}\big) \prob_{\Theta \sim \mathcal{F}}\big[\Theta + \overline{h} \geq a + \overline{h}\big].
\end{equation}
Note that, across all review-offsetting policies, the offset that maximizes the above revenue is $a^{\star}=p^{\star}-\overline{h}$; we refer to the corresponding review-offsetting policy as $\rho^{\star}$. We next show that the revenue of any dynamic pricing policy is upper bounded by the revenue of $\rho^{\star}$, concluding the proof.
Given that the policy $\Tilde{\rho}_{\bm{z}}$ is review-offsetting with offset $a_{\bf{z}}$, the above implies 
\begin{equation*}
    \rev(\sigma^{\textsc{newest}}, \Tilde{\rho}_{\bm{z}}) = \underbrace{\Big(\rho(\bm{z})-h(N_{\bm{z}})+ \overline{h} \Big)}_{A_{\bm{z}}} \cdot \underbrace{\prob_{\Theta \sim \mathcal{F}}\big[\Theta + h(N_{\bm{z}}) \geq \rho(\bm{z}) \big]}_{(B_{\bm{z}})^{-1}}.
\end{equation*}

Let $\alpha_{\bm{z}} = \mu^{N_z}(1-\mu)^{c-N_{z}}$ be the probability that $c$ i.i.d. $\bern(\mu)$ trials result in $\bm{z}$. Hence $\expect_{Y_1, \ldots, Y_c \sim_{i.i.d.} \bern(\mu)}[\rho(\bm{Y})] = \expect_{Y_1, \ldots, Y_c \sim_{i.i.d.} \bern(\mu)}[\rho(\bm{Y})-h(N_{\bm{Y}}) + \overline{h}] = \sum_{\bm{z} \in \{0,1\}^c} \alpha_{\bm{z}} A_{\bm{z}}$. Applying \cref{lemma:revenue_recent_dynamic_formula_lemma_improving_review_offsetting_main_body} we can thus express the revenue of $\rho$ as
\begin{equation*}
\textsc{Rev}(\sigma^{\textsc{newest}}, \rho) =\frac{\sum_{\bm{z} \in \{0,1\}^c} \alpha_{\bm{z}} A_{\bm{z}}}{\sum_{\bm{z} \in \{0,1\}^c}\alpha_{\bm{z}} B_{\bm{z}}} \leq \max_{\bm{z} \in \mathcal{S}} \frac{A_{\bm{z}}}{B_{\bm{z}}}=\max_{z\in \{0,1\}^c}  \rev(\sigma^{\textsc{newest}}, \Tilde{\rho}_{\bm{z}})\leq \rev(\newest,\rho^{\star})
\end{equation*}
where the first inequality follows from the following natural convexity property (the proof can be found in Appendix \ref{appendix_proof_lemma_ratio_average_less_max}): for any $\{\tilde{\alpha}_{i},\tilde{A}_i,\tilde{B}_i\}_{i \in \mathcal{S}}$ with $\tilde{\alpha}_i,\tilde{B}_i >0 $ for all $i$,  $\frac{\sum_{i\in \mathcal{S}} \tilde{\alpha}_i \tilde{A}_i}{\sum_{i \in \mathcal{S}} \tilde{\alpha}_i \tilde{B}_i } \leq \max\limits_{\substack{i \in \mathcal{S}}}  \frac{\tilde{A}_i}{\tilde{B}_i}$.  
\end{proof}

\paragraph{Intuition behind the optimal dynamic pricing.}

Intuitively, the term $p^{\star}$ in \cref{theorem:newest_first_dynamic_optimal_policy} is the optimal price when facing a \textit{single} customer with a random valuation $\Theta + \expect_{N \sim \binomial(c, \mu)}[h(N)]$ for $\Theta \sim \mathcal{F}$. The selected offset makes the purchase probability equal to the purchase probability under the ``single customer'' setting with the optimal price~$p^{\star}$. This intuition enables us to characterize the optimal revenue of dynamic policies. 
To do this, a useful quantity in our characterization results is the optimal revenue for a given valuation distribution represented by a random variable $V$, i.e., $r^{\star}(V) \coloneqq \max_{p \in \mathbb{R}}p\prob_{V}[V \geq p]$. For the distributions we consider, $V$ has bounded support, and hence $r^{\star}(V) < \infty$. The proof of the following corollary can be found in Appendix~\ref{appendix:characterization_optimal_dynamic_pricing_rev_newest}.

\begin{corollary}\label{thm:most_recent_dynamic_opt_rev} The revenue of the optimal dynamic pricing policy under $\sigma^{\textsc{newest}}$ equals the optimal revenue from selling to a single customer with valuation $\Theta + \expect_{N \sim \binomial(c, \mu)}[h(N)]$. That is,
    $$\max_{\rho \in \Pi^{\textsc{dynamic}}} \rev(\sigma^{\textsc{newest}},\rho) = r^{\star}\big(\Theta + \expect_{N \sim \binomial(c, \mu)}[h(N)]\big).$$
\end{corollary}

\begin{remark}\label{rmk_opt_dynamic_generalization}
In Appendix~\ref{appendix_generalization_newest_first_all_optimal_dynamic_pricing_policies}, we also characterize the \textit{complete} set of optimal policies for $\sigma^{\textsc{newest}}$; the stated policy is the \emph{unique} optimal policy under mild regularity conditions (Appendix~\ref{appendix: unique_optimal_dynamic_pricing}). 
\end{remark}

\paragraph{Comparison between optimal dynamic pricing for Newest and Random.}
The following proposition (proof in Appendix \ref{app:proof_random_optimal_pricing}) derives the optimal dynamic pricing policy for $\sigma^{\textsc{random}}$. 
Specifically, for every state of reviews, the optimal price is the revenue-maximizing price for that state.

\begin{proposition}\label{theorem_char_dynamic_random}
   For every review state $\bm{z} \in \{0,1\}^c$, any optimal dynamic pricing policy under $\sigma^{\textsc{random}}$ sets $\rho^{\textsc{random}}(\bm{z}) \in \argmax_{p}p \prob_{\Theta \sim \mathcal{F}}\big[\Theta + h(N_{\bm{z}}) \geq p\big] $. This implies that $$\rev(\sigma^{\textsc{random}}, \Pi^{\textsc{dynamic}})  = \expect_{N \sim \binomial(c,\mu)}\Big[r^{\star}(\Theta + h(N)) \Big].$$
\end{proposition}

We next compare the optimal dynamic pricing policies under $\newest$ and $\random$. The following proposition (proof in \cref{appendix:comparison_opt_dynamic_newest_random}) shows that under mild regularity conditions, states $\bm{z}$ with $h(N_{\bm{z}}) > \overline{h}$ result in higher price under $\newest$, while states $\bm{z}$ which $h(N_{\bm{z}}) < \overline{h}$ result in lower price under $\newest$. Intuitively, $\rho^{\textsc{newest}}$ charges higher prices in review states $\bm{z}$ with ``high" ratings and lower prices in review states $\bm{z}$ with ``low" ratings compared to $\rho^{\textsc{random}}$ in order to induce the same purchase probability in every review state. 

\begin{proposition}\label{prop: comparison_prices_dynamic_newest_random_main_body} Let $\overline{h} =  \expect_{N \sim \binomial(c, \mu)}[h(N)]$. Under mild regularity conditions (see Appendix \ref{appendix:comparison_opt_dynamic_newest_random}), the unique dynamic pricing policies $\rho^{\textsc{newest}}$ and $\rho^{\textsc{random}}$ satisfy:
    \begin{itemize}
        \item $\rho^{\textsc{newest}}(\bm{z}) \geq \rho^{\textsc{random}}(\bm{z})$ for review states $\bm{z} \in \{0,1\}^c$ with  $h(N_{\bm{z}}) > \overline{h}$,
        \item $\rho^{\textsc{newest}}(\bm{z}) \leq \rho^{\textsc{random}}(\bm{z})$ for review states $\bm{z} \in \{0,1\}^c$ with $h(N_{\bm{z}}) < \overline{h}$,
        \item $\rho^{\textsc{newest}}(\bm{z}) = \rho^{\textsc{random}}(\bm{z})$ review states $\bm{z} \in \{0,1\}^c$ with $h(N_{\bm{z}}) = \overline{h}$.
    \end{itemize}
\end{proposition}

\subsection{Cost of Newest First is Bounded under Dynamic Pricing (Theorem \ref{theorem:dynamic_pricing_CoNF_bound})}\label{subsec:proof_conf_bound_dynamic_pricing}

We now prove Theorem \ref{theorem:dynamic_pricing_CoNF_bound}, leveraging the results of Section \ref{subsec: optimize_dynamic_price} that characterize the optimal dynamic pricing policies.
For convenience, we denote $\expect_{N \sim \binomial(c,\mu)}$ by $\expect_{N}$ and $\overline{h} = \expect_{N \sim \binomial(c, \mu)}[h(N)]$. By  \cref{theorem_char_dynamic_random} and \cref{thm:most_recent_dynamic_opt_rev}, we can express the \textsc{Cost of Newest First} as:
\begin{equation}\label{eq:move_inside_expectation}
\chi(\Pi^{\textsc{dynamic}}) = \frac{\expect_{N \sim \binomial(c,\mu)}[r^{\star}\big(\Theta +h(N)\big)]}{r^{\star}(\Theta + \overline{h})} = \expect_{N \sim \binomial(c, \mu)}\Big[\frac{r^{\star}(\Theta + h(N))}{r^{\star}(\Theta + \overline{h})} \Big],
\end{equation}
where the denominator does not depend on $N$ and can thus move inside the expectation. We now focus on the quantity inside the expectation for a particular realization of $N$.
For any price $p_n > 0$,
\begin{align}
\frac{r^{\star}(\Theta + h(n))}{r^{\star}(\Theta + \overline{h})} &= \frac{p^{\star}(\Theta + h(n)) \cdot \prob_{\Theta \sim \mathcal{F}}
[\Theta + h(n) \geq p^{\star}(\Theta + h(n))]}{p^{\star}(\Theta + \overline{h}) \cdot \prob_{\Theta \sim \mathcal{F}}
[\Theta + \overline{h} \geq p^{\star}(\Theta + \overline{h})]} \nonumber \\
&\leq \underbrace{\frac{p^{\star}(\Theta + h(n))}{{p}_n}}_{\textsc{Price Ratio}(p_{n},n)} \cdot \underbrace{\frac{\prob_{\Theta \sim \mathcal{F}}
[\Theta + h(n) \geq p^{\star}(\Theta + h(n))]}{\prob_{\Theta \sim \mathcal{F}}
[\Theta + \overline{h} \geq {p}_n]}}_{\textsc{Demand Ratio}(p_n,n)}. \label{ineq: price_ratio_times_demand_ratio_upp_bd}
\end{align}
The inequality replaces the revenue-maximizing price $p^{\star}(\Theta+\overline{h})$ by another price $p_n$, which can only increase the ratio. Given that we operate with dynamic prices, we are allowed to select a different price for any number of positive reviews $N$. 

In particular, a price of $\overline{h} +p^{\star}(\Theta + h(n))-h(n)$ ensures that the demand ratio is one. However, if $p^{\star}(\Theta + h(n))< h(n)$, the denominator in the price ratio can be unboundedly small. To simultaneously bound the expected price and demand ratios, we select
$\Tilde{p}_n = \overline{h} + \max(p^{\star}(\Theta + h(n))-h(n), 0)$.

\begin{lemma}\label{lem:price_ratio}
The expected price ratio is at most $\expect\Big[\textsc{Price Ratio}(\Tilde{p}_{N},N)\Big] \leq 2$.
\end{lemma}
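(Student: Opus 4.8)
The plan is to bound the expected price ratio by splitting the states $n \in \{0, \ldots, c\}$ according to the sign of $p^{\star}(\Theta + h(n)) - h(n)$, since the definition $\Tilde{p}_n = \overline{h} + \max(p^{\star}(\Theta + h(n)) - h(n), 0)$ behaves differently in the two cases.

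First I would treat the ``high-price'' states where $p^{\star}(\Theta + h(n)) \geq h(n)$. Here $\Tilde{p}_n = \overline{h} + p^{\star}(\Theta + h(n)) - h(n)$, so $\textsc{Price Ratio}(\Tilde{p}_n, n) = p^{\star}(\Theta + h(n)) / (\overline{h} + p^{\star}(\Theta + h(n)) - h(n))$. The key observation is that $\overline{h} - h(n)$ is the amount subtracted, and I want to argue this ratio is at most $2$ (or controlled on average). A natural route is to use the elementary inequality $x/(x + \overline{h} - h(n)) \leq 1 + (h(n) - \overline{h})/x$ and then exploit that $\overline{h} = \expect_N[h(N)]$ so that $\expect_N[h(N) - \overline{h}] = 0$; the cross terms should telescope in expectation. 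Alternatively, since $p^{\star}(\Theta + h(n)) \leq h(n) + \overline{\theta}$ where $\overline{\theta}$ bounds the support of $\mathcal{F}$ from above, one can bound the numerator linearly. I would aim to show the contribution of these states to $\expect[\textsc{Price Ratio}]$ is at most $1$ plus a term that cancels against the other regime.

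Second I would treat the ``low-price'' states where $p^{\star}(\Theta + h(n)) < h(n)$. Here $\Tilde{p}_n = \overline{h}$ exactly, so $\textsc{Price Ratio}(\Tilde{p}_n, n) = p^{\star}(\Theta + h(n)) / \overline{h} < h(n)/\overline{h}$. Summing $\binom{c}{n}\mu^n(1-\mu)^{c-n} \cdot h(n)/\overline{h}$ over these states is at most $\expect_N[h(N)]/\overline{h} = 1$ by the very definition of $\overline{h}$. This regime therefore contributes at most $1$ to the expectation essentially for free.

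The main obstacle will be the high-price regime, where the subtracted quantity $\overline{h} - h(n)$ can be positive (shrinking the denominator and inflating the ratio) precisely when $h(n) < \overline{h}$, i.e.\ for states with few positive reviews. The careful step is to pair the two regimes so that the ``excess'' from small-$n$ high-price states is absorbed: I expect to use the concavity/monotonicity structure of $r^{\star}$ together with $\expect_N[h(N) - \overline{h}] = 0$ to show that the total expected price ratio telescopes to at most $2$. Concretely, I would write $\expect[\textsc{Price Ratio}] \leq \expect_N\big[ p^{\star}(\Theta + h(N)) / \Tilde{p}_N \big]$, bound $p^{\star}(\Theta + h(N)) \leq \Tilde{p}_N + |\,\overline{h} - h(N)\,|$ uniformly across both cases, and conclude $\expect[\textsc{Price Ratio}] \leq 1 + \expect_N[\,|\overline{h} - h(N)|\,]/\min_n \Tilde{p}_n$, then verify the second term is at most $1$ using $\Tilde{p}_n \geq \overline{h}$ and a bound on the mean absolute deviation of $h(N)$ by $\overline{h}$.
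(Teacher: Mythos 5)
Your regime split and the low-price analysis are fine (when $p^{\star}(\Theta+h(n)) < h(n)$ you have $\Tilde{p}_n = \overline{h}$, and that regime contributes at most $\expect_N[h(N)]/\overline{h} = 1$), and your uniform bound $p^{\star}(\Theta + h(N)) \leq \Tilde{p}_N + |\overline{h} - h(N)|$ is indeed valid in both regimes. But both steps you rely on to close the argument fail. First, the ``elementary inequality'' $x/(x + \overline{h} - h(n)) \leq 1 + (h(n) - \overline{h})/x$ is reversed: writing $d = \overline{h} - h(n)$, it asserts $x/(x+d) \leq (x-d)/x$, i.e., $x^2 \leq x^2 - d^2$, which is false whenever $d \neq 0$. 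Second, and this is the real gap, your concrete closing chain $\expect\big[\textsc{Price Ratio}(\Tilde{p}_N,N)\big] \leq 1 + \expect_N[|\overline{h}-h(N)|]/\overline{h}$ needs the mean absolute deviation of $h(N)$ to be at most its mean $\overline{h}$. That is not a theorem for nonnegative random variables: the mean absolute deviation can be as large as twice the mean. It can fail within this model, e.g.\ $c=1$, $\mu = 0.1$, $h(0) = \epsilon \approx 0$, $h(1) = 1-\epsilon$, where $\overline{h} \approx 0.1$ but $\expect_N[|h(N)-\overline{h}|] \approx 0.18$. So your route only certifies a bound of $3$, not $2$.

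The fix is exactly to drop the absolute value, which is what the paper does. Decompose the numerator additively, $p^{\star}(\Theta+h(n)) = h(n) + \big(p^{\star}(\Theta+h(n)) - h(n)\big)$, and bound the two resulting terms of the expected price ratio separately: $\expect_N\big[h(N)/\Tilde{p}_N\big] \leq \expect_N[h(N)]/\overline{h} = 1$ because $\Tilde{p}_N \geq \overline{h}$ and $h(N) \geq 0$; and $\big(p^{\star}(\Theta+h(n)) - h(n)\big)/\Tilde{p}_n \leq 1$ \emph{pointwise}, since the numerator is nonpositive when $p^{\star}(\Theta+h(n)) < h(n)$, while when $p^{\star}(\Theta+h(n)) \geq h(n)$ the denominator $\overline{h} + p^{\star}(\Theta+h(n)) - h(n)$ strictly exceeds the numerator. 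Summing gives $2$. The point your absolute-value bound misses is that the second (signed) term is allowed to be negative on low-price states; replacing it by $|\overline{h}-h(N)|$ discards precisely the cancellation that brings the constant down from $3$ to $2$.
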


\begin{lemma}\label{lem:demand_ratio}
For any $n \in \{0,1, \ldots, c\}$, the demand ratio is $\textsc{Demand Ratio}(\Tilde{p}_n,n) \leq \frac{1}{\prob_{\Theta \sim \mathcal{F}}[\Theta \geq 0]}$.
\end{lemma}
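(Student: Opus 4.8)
The plan is to unpack both probabilities in $\textsc{Demand Ratio}(\Tilde{p}_n, n)$ by isolating $\Theta$, and then to split into two cases according to the sign of $p^{\star}(\Theta + h(n)) - h(n)$, which is precisely the quantity that the truncation $\max(\cdot, 0)$ in the definition of $\Tilde{p}_n = \overline{h} + \max(p^{\star}(\Theta + h(n)) - h(n), 0)$ is designed to handle. Writing $p^{\star}_n \coloneqq p^{\star}(\Theta + h(n))$ for brevity, the numerator is $\prob_{\Theta \sim \mathcal{F}}[\Theta + h(n) \geq p^{\star}_n] = \prob_{\Theta \sim \mathcal{F}}[\Theta \geq p^{\star}_n - h(n)]$, while the denominator is $\prob_{\Theta \sim \mathcal{F}}[\Theta + \overline{h} \geq \Tilde{p}_n] = \prob_{\Theta \sim \mathcal{F}}[\Theta \geq \max(p^{\star}_n - h(n), 0)]$, since $\Tilde{p}_n - \overline{h} = \max(p^{\star}_n - h(n), 0)$ by definition.

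First I would treat the case $p^{\star}_n \geq h(n)$. Here the $\max$ equals $p^{\star}_n - h(n)$, so the denominator coincides \emph{exactly} with the numerator and the demand ratio is equal to $1$; since $\prob_{\Theta \sim \mathcal{F}}[\Theta \geq 0] \leq 1$, the claimed bound $1/\prob_{\Theta \sim \mathcal{F}}[\Theta \geq 0]$ holds trivially in this case.

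Next I would treat the complementary case $p^{\star}_n < h(n)$. Now the $\max$ equals $0$, so the denominator is exactly $\prob_{\Theta \sim \mathcal{F}}[\Theta \geq 0]$, which is strictly positive by Assumption~\ref{assumption:non-negative_mass_idiosyncratic}. The numerator is a probability and hence at most $1$, so the ratio is bounded by $1/\prob_{\Theta \sim \mathcal{F}}[\Theta \geq 0]$, as desired. Combining the two cases yields the lemma for every $n \in \{0,1,\ldots,c\}$.

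The argument is bookkeeping rather than deep, so I do not anticipate a genuine obstacle; the one point that requires care is recognizing \emph{why} the truncation at $0$ in $\Tilde{p}_n$ is exactly what guarantees a uniform lower bound of $\prob_{\Theta \sim \mathcal{F}}[\Theta \geq 0]$ on the denominator. Without it, when $p^{\star}_n$ falls well below $h(n)$ the shifted threshold $p^{\star}_n - h(n)$ could be very negative, and it is the fact that the effective demand threshold is capped at $0$ (rather than tracking the raw revenue-maximizing price) that prevents the denominator from collapsing and thus delivers the clean constant bound. This truncation is also what makes $\Tilde{p}_n$ compatible with the companion bound on the price ratio in Lemma~\ref{lem:price_ratio}.
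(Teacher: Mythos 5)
Your proof is correct and matches the paper's argument essentially verbatim: the same case split on the sign of $p^{\star}(\Theta + h(n)) - h(n)$, with the ratio equal to $1$ in the first case and bounded by $1/\prob_{\Theta \sim \mathcal{F}}[\Theta \geq 0]$ in the second because the truncation forces the denominator to be $\prob_{\Theta \sim \mathcal{F}}[\Theta \geq 0]$ while the numerator is at most $1$. Your closing remark on why the truncation at $0$ is what prevents the denominator from collapsing is a fair reading of the paper's intent, though the paper leaves it implicit.
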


\begin{proof}[Proof of Theorem~\ref{theorem:dynamic_pricing_CoNF_bound}.]
The proof directly combines (\ref{eq:move_inside_expectation}), (\ref{ineq: price_ratio_times_demand_ratio_upp_bd}), and Lemmas \ref{lem:price_ratio} and \ref{lem:demand_ratio}.
\end{proof}

What is left is to prove the lemmas that bound the expected price ratio and the demand ratio. 

\begin{proof}[Proof of \cref{lem:price_ratio}.]
Given that customer valuations are additive with a belief and an idiosyncratic component, the optimal price (numerator of price ratio) can be similarly decomposed as:
$$p^{\star}(\Theta + h(n)) = \underbrace{h(n)}_{\text{Belief about $\mu$}} + \underbrace{p^{\star}(\Theta + h(n))-h(n)}_{\text{Idiosyncratic valuation}}$$
The expected price ratio $\expect\Big[\textsc{Price Ratio}(\Tilde{p}_{N},N)\Big]$ for $\Tilde{p}_n = \overline{h} + \max(p^{\star}(\Theta + h(n))-h(n), 0)$ is:
\begin{align*}
\expect_{N}\Bigg[\frac{h(N)}{\overline{h} +  \max(p^{\star}(\Theta + h(N))-h(N),0) } \Bigg] + \expect_{N}\Bigg[\frac{p^{\star}(\Theta + h(N))-h(N)}{\overline{h} +  \max(p^{\star}(\Theta + h(N))-h(N),0) } \Bigg].
\end{align*}
Given that the denominators in both terms are positive and $\overline{h} = \expect_{N}[h(N)]$, each of those terms can be upper bounded by $1$, concluding the proof.
\end{proof}

\begin{proof}[Proof of \cref{lem:demand_ratio}.]
For every number of positive reviews $n$, we distinguish two cases based on where the maximum in $\tilde{p}_n$ lies. If $p^{\star}(\Theta + h(n)) \geq h(n)$ the demand ratio is equal to $1$. Otherwise, 
$$\textsc{Demand Ratio}(\Tilde{p}_n,n) = \frac{\prob_{\Theta \sim \mathcal{F}}[\Theta \geq p^{\star}(\Theta + h(n))-h(n) ]}{ \prob_{\Theta \sim \mathcal{F}}[\Theta \geq 0] } \leq \frac{1}{\prob_{\Theta \sim \mathcal{F}}[\Theta \geq 0]}. $$
 \end{proof}

\subsection{Broader Implication: Cost of Ignoring State-dependent Customer Behavior}

Suppose a platform uses $\sigma^{\textsc{newest}}$, and they are \emph{not aware} of the phenomenon that customer's purchase decisions depend on the state of 
reviews --- they instead assume the purchase behavior is constant (i.e., purchase probability does not depend on the state of the newest $c$ reviews). Then, if the platform uses a standard data-driven approach to optimize prices (e.g., do price experimentation and estimate demand from data), the optimal revenue 
is $\rev(\sigma^{\textsc{newest}},\Pi^{\textsc{static}})$. In contrast, a platform can 
estimate separate demands for each state of reviews and employ a dynamic pricing policy to earn $\rev(\sigma^{\textsc{newest}},\Pi^{\textsc{dynamic}})$. By comparing $\rev(\sigma^{\textsc{newest}},\Pi^{\textsc{dynamic}})$ with $\rev(\sigma^{\textsc{newest}},\Pi^{\textsc{static}})$, we show that the revenue loss from not accounting for this state-dependent behavior can be arbitrarily large. The next result (proof in Appendix~\ref{appendix:what_if_platforms_are_not_aware}) follows from Theorems \ref{thm:CoNF_opt_static_arbitrarily_bad} and  \ref{theorem:dynamic_pricing_CoNF_bound}.

\begin{corollary}
 \label{thm:dynamic_vs_static}
For any $M > 0$, there exists an instance such that $\frac{\rev(\sigma^{\textsc{newest}},\Pi^{\textsc{dynamic}})}{\rev(\sigma^{\textsc{newest}},\Pi^{\textsc{static}})} > M$.
\end{corollary} 

\begin{remark}
    Our work assumes that customers are myopic and not strategic (e.g., they do not write negative reviews to drive the price down). This assumption is most appropriate in settings where customers purchase the product once or very infrequently, so that they are unlikely to learn or exploit the platform's pricing rule over repeated interactions. Examples include vacation rentals or hotels for occasional travelers, one-time local services, durable goods such as appliances or furniture, and products purchased for a particular event or need.    The myopic-customer assumption is  canonical in the dynamic pricing literature. We note that several papers tackle non-myopic customers \cite{levin2010optimal, besbes2015intertemporal,papanastasiou2017dynamic, chen2018robust,haghtalab2022learning} but we view this challenge as orthogonal to the main focus of our work and thus we do not investigate further how to handle non-myopic agents.
\end{remark}

\section{Cost of Newest First in the Presence of Non-Stationarity}
\label{sec:dynamic_quality}
As discussed in the introduction, customers prefer to read more recent reviews; this explains the practical popularity of $\newest$. 
One potential reason for the popularity of Newest First is that customers prefer to read newer reviews because of their belief that newer reviews are more representative of the current product quality compared to older reviews. This might lead one to believe that Newest First is better than Random when the product quality $\mu$ changes over time. To study this phenomenon, we focus on a simple model where just one review is displayed $(c=1)$ and the product quality $\mu^{(t)}$ evolves according to a Markov Chain with a high value $\mu_H$ and a low value $\mu_L$. The transition dynamics of the product quality Markov Chain $\mu^{(t)}$ are such that from any state with probability $\xi \in (0,1]$ the Markov Chain $\mu^{(t)}$ transitions to a new state which is equally likely to be $\mu_H$ or $\mu_L$ and otherwise it remains at the current state. The customer's purchase behavior given reviews is the same as in Section \ref{subsec:model_customer_purchase_behavior}. Let $q_0 = \prob_{\Theta \sim \mathcal{F}}[\Theta + h(0) \geq p]$ and $q_1 = \prob_{\Theta \sim \mathcal{F}}[\Theta + h(1) \geq p]$ be the purchase probabilities for a negative and a positive review respectively. 

Our first result in this setting shows that, for any qualities $\mu_L$ and $\mu_H$, the Cost of Newest First continues to arise, i.e., the revenue under the Newest First ordering policy is strictly worse than the revenue under a random ordering policy $\random$. To define the latter, notice that, given that the switching probability $\xi$ is independent of the product quality, the steady-state distribution for the quality is $\mu_H$ with probability $1/2$ and $\mu_L$ with probability $1/2$. As a result, we define $\random$ as the review from a random time period, which is thus drawn from the distribution $Bernoulli(\frac{\mu_H + \mu_L}{2})$.  

\begin{proposition}\label{prop: random_better_than_newest_nonstationarity}
    For any probability of change $\xi \in (0,1]$, product qualities $\mu_L$ and $\mu_H$ with $\mu_L < \mu_H$, and any price $p$ satisfying Assumption \ref{assumption:non_abs_non_degen}, the revenue of $\newest$ is strictly smaller than the revenue of $\random$, i.e, $\rev(\random, p) > \rev(\newest, p)$. 
\end{proposition}

\begin{figure}[H]
    \centering
    \captionsetup{font=small}
    \begin{subfigure}{0.45\textwidth}
        \centering
        \includegraphics[width=\linewidth]{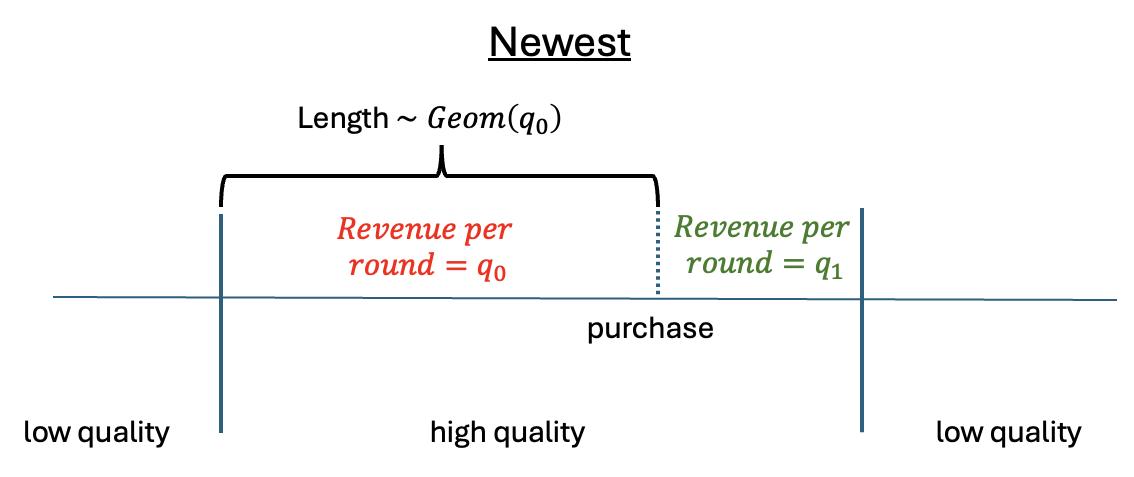}
        \caption{Newest review is negative and high quality}
        \label{fig:high_quality_epoch}
    \end{subfigure}
    \hfill
    \begin{subfigure}{0.45\textwidth}
        \centering
        \includegraphics[width=\linewidth]{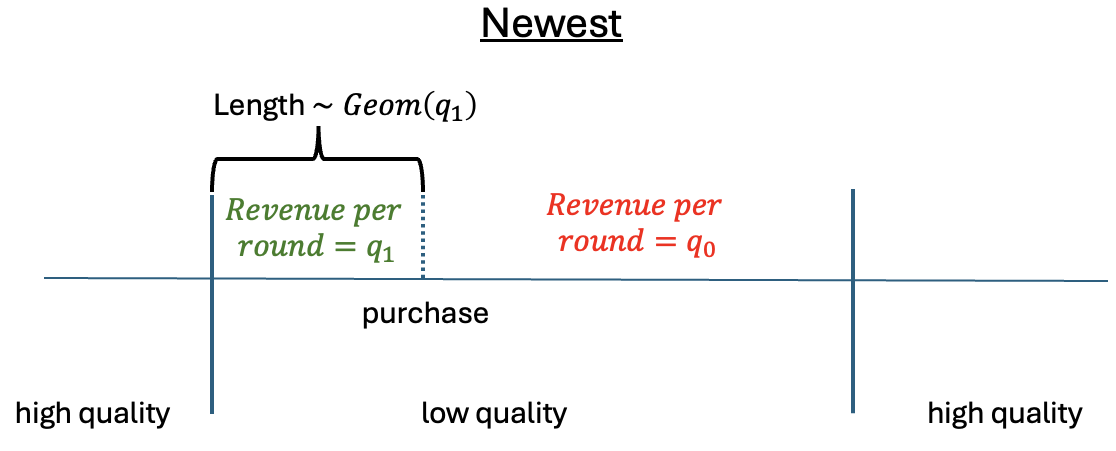}
        \caption{Newest review is positive and low quality}
        \label{fig:low_quality_epoch}
    \end{subfigure}
    \vskip\baselineskip
    \caption{When the newest review is negative and the state transitions from low to high, it takes $\textit{Geometric}(q_0)$ rounds to obtain a new review (Figure \ref{fig:non-stationarity-revenue-intuition} (a)). On the other hand, when the newest review is positive and the state transitions from high to low it takes $\textit{Geometric}(q_1)$ rounds to obtain a new review (Figure \ref{fig:non-stationarity-revenue-intuition} (b)).}  
    \label{fig:non-stationarity-revenue-intuition}
\end{figure}

\begin{proof}[Proof sketch.] For simplicity, we here assume that $\mu_L=0$ and $\mu_H=1$; the full proof in Appendix~\ref{appendix_subsec:proof_prop_random_better_than_newest_nonstationarity} proves the theorem for general product qualities. The crux of the proof is that, when transitioning from a low to high (high to low) state, $\newest$ needs more (less) time than $\random$ to obtain a purchase and thus update the newest review; see Figure~\ref{fig:non-stationarity-revenue-intuition} for an illustration. For periods in which the newest review has the same quality as the environment, $\newest$ and $\random$ operate identically if one couples the qualities for the two processes.
\end{proof}

We next formalize the idea that Newest First is better than Random in representing the product quality. To do this, for any review ordering policy $\sigma$, pricing policy $\rho$, and probability of change $\xi \in (0,1]$, we define the belief error as a measure of difference between the customer's posterior and the product quality at the current round in steady state.
$$\textsc{BeliefError}(\sigma, \rho; \xi) \coloneqq \liminf_{T \to \infty} \expect \Big[\frac{\sum_{t=1}^{T} (\expect[\Phi_t]-\mu_t)^2}{T} \Big].$$

The following proposition establishes that, when the posterior after a positive (respectively, negative) review matches $\mu_H$ (respectively, $\mu_L$), $\newest$ induces lower belief error than $\random$ for any probability of change $\xi\in(0,1)$. This explains the intuitive appeal of Newest First.

\begin{proposition}\label{prop: nonstationarity-belieferror-newest-better-than-random}
Let $\mu_L = \expect[\mathrm{Beta}(a,b+1)]$ and $\mu_H = \expect[\mathrm{Beta}(a+1,b)]$ and price $p$ satisfying Assumption \ref{assumption:non_abs_non_degen}. For any $\xi \in (0,1)$, $\textsc{BeliefError}(\newest,p; \xi) < \textsc{BeliefError}(\random,p; \xi)$.
Moreover, $\textsc{BeliefError}(\sigma, p; \xi)$ increases with $\xi$ for $\sigma=\newest$ and is constant in~$\xi$ for $\sigma=\random$. 
\end{proposition}

Intuitively, more frequent quality switches (larger $\xi$) implies that the newest review comes from the previous product quality (similar to Figure \ref{fig:non-stationarity-revenue-intuition}) with higher probability in steady state. The formal proof is provided in Appendix \ref{appendix_subsec:non-stationarity-belief-error}.

\section{Robustness of Cost of Newest First via Numerical Simulations} \label{sec:numerics}
We conduct numerical simulations on synthetic data to validate and extend our main results under variants of our theoretical model. Specifically, we examine the sensitivity of the CoNF to the limited-attention parameter~$c$ (Section~\ref{subsec:numerics_limited_attention}), its robustness under a time-varying prior (Section~\ref{subsec:numerics_time_varying_prior}), its persistence under increasing product quality (Section~\ref{subsec:numerics_nonstationarity}) and self-selection bias (Section~\ref{subsec:conf_with_self_selection_bias}), and the effectiveness of dynamic pricing in mitigating it (Section~\ref{subsec:numerics_dynamic_vs_static_conf}).

\subsection{Impact of the limited-attention parameter }\label{subsec:numerics_limited_attention}
We evaluate how the CoNF varies with the limited attention parameter $c$. We consider instances with $c \in \{1, 2, \ldots, 50\}$, $\mu \in \{0.1, 0.5\}$, $\mathcal{F} = \mathcal{U}[-1,1]$, prior $(a,b) = (\mu, 1-\mu)$, $h$ is the posterior mean, and $p = 1$. Figure~\ref{fig:limited_attention_parameter_vs_revenue} shows the revenue under $\newest$ and $\random$ as a function of $c$.

\begin{figure}[!htbp]
    \centering
    \includegraphics[width=0.9\textwidth]{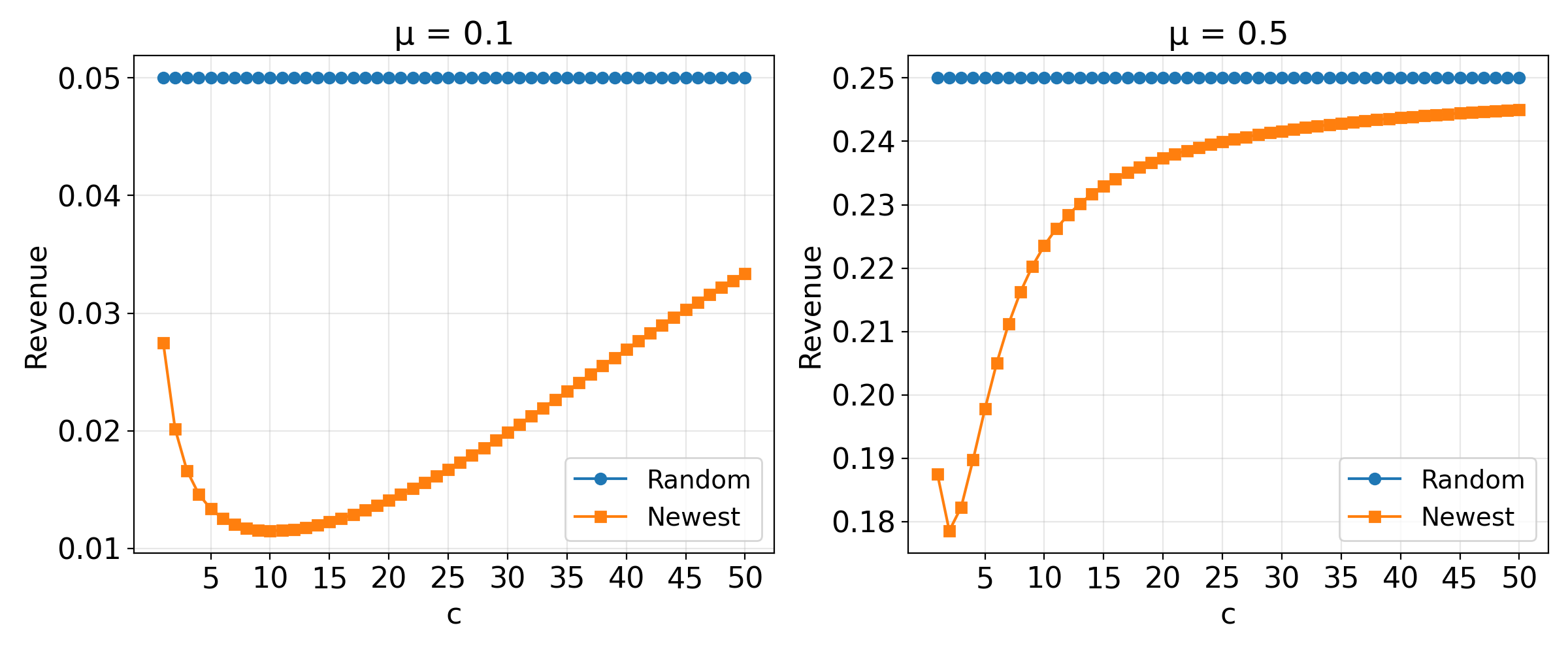}
    \caption{Revenue under $\newest$ and $\random$ as a function of the limited attention parameter $c$, for $\mu \in \{0.1, 0.5\}$. Revenue under $\random$ is constant in $c$, while revenue under $\newest$ is non-monotone: it first decreases and then increases, converging to $\random$ as $c \to \infty$.
    }
    \label{fig:limited_attention_parameter_vs_revenue}
\end{figure}

The revenue under $\random$ is constant in $c$, because by Proposition~\ref{theorem:random_C_revenue}, it equals $$p \expect_{N \sim \binomial(c, \mu)}\left[\prob_{\Theta \sim \mathcal{F}}\left[\Theta + \frac{\mu + N}{1 + c} \geq p \right] \right] = \expect_{N \sim \binomial(c, \mu)}\left[ \frac{\mu + N}{2(1 + c)}\right] = \frac{\mu (1+ c)}{2(1 + c)} = \frac{\mu}{2}$$
where the first equality uses that $\mathcal{F} = \mathcal{U}[-1,1]$, $p = 1$ and the second equality uses linearity of expectation.
The revenue under $\newest$ is non-monotone: it first decreases and then increases with $c$, converging to $\random$ as $c \to \infty$.
The convergence as $c \to \infty$ follows from the theory (\cref{prop:more_broadly_conf_to_one_without_bd_rationality},
Appendix~\ref{app:driver_generic_updating}): with sufficiently many reviews, the posterior concentrates near $\mu$ regardless of the ordering policy.
The initial decrease is more subtle.
When $c = 1$, the worst-case state is a single negative review, from which a single purchase is enough to recover from.
When $c = 10$, the worst-case state is ten consecutive negative reviews: in this state, the purchase probability becomes very small, and multiple purchases are needed to escape, so the system can remain stuck in this bad state for much longer.
This is why the CoNF can worsen as $c$ grows from small values.
For very large $c$, however, concentration ensures that an all-negative state is essentially unreachable, and the CoNF diminishes.

\subsection{Time-varying prior with cold start} \label{subsec:numerics_time_varying_prior}
Recall that our baseline model assumes a fixed prior $\mathrm{Beta}(a, b)$, which reflects aggregate information such as the average rating (e.g., 4.5/5). This assumption is appropriate when the average rating remains stable over time, as might be the case when there are many reviews. However, in reality, the number of available reviews is finite, and the average rating displayed to customers evolves endogenously as new reviews are added. To assess how robust our main results are to this consideration, we simulate an alternative model in which the prior is updated dynamically based on the evolving pool of reviews, beginning from a cold start with no reviews.

We maintain a review pool initialized with a $c$ reviews drawn from $\bern(\mu)$ at $t = 1$. At time $t$, let $P_t$ and $N_t$ denote the total positive and negative reviews in the pool.
The prior is set to $\mathrm{Beta}(a + \gamma \cdot P_t,\; b + \gamma \cdot N_t)$, where $\gamma \geq 0$ controls how strongly the aggregate history enters the prior; $\gamma = 0$ recovers the original model.
In this formulation, the influence of the prior grows as more reviews accumulate, which implies that a large history dominates any signal from a small number of displayed reviews.  Note that each of the $c$ reviews that are read effectively carries a weight of $1 + \gamma$ in the customer's posterior. We consider small values of $\gamma$ to reflect the fact that a review in the pool carries a small weight compared to a review that is read.

We consider three policies: $\newest$, $\randomfinite$ (draws $c$ reviews at random from the pool), and $\random$ (draws $c$ reviews i.i.d.\ from $\bern(\mu)$). We run these policies for a set of instances where $\mu \in \{0.1, 0.5\}$, $c = 1$, prior $\mathrm{Beta}(\mu, 1-\mu)$, $\gamma \in \{0, 0.01, 0.1\}$, $\mathcal{F} = \mathcal{U}[0,1]$, $h$ is the posterior mean, and $p = 1$, over 10,000 rounds averaged over 100,000 runs (Figure~\ref{fig:endogeneous_prior}). 

\begin{figure}[!htbp]
    \centering
    \includegraphics[width=0.9\textwidth]{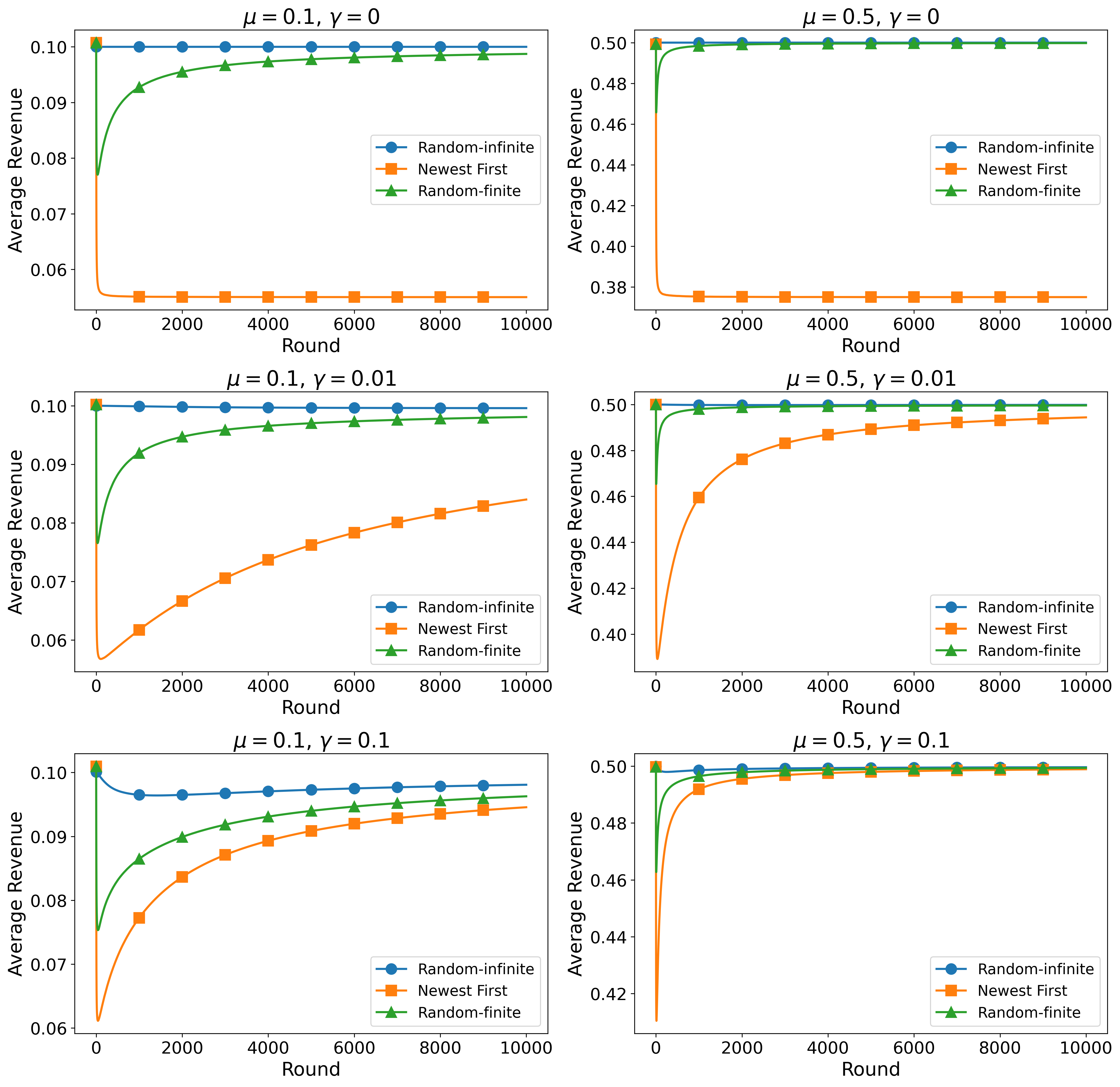}
    \caption{Average revenue per round for $\newest$, $\randomfinite$, and $\random$ under an endogenous time-varying prior $\mathrm{Beta}(a + \gamma \cdot P_t, b + \gamma \cdot N_t)$, for $\mu \in \{0.1, 0.5\}$ and $\gamma \in \{0, 0.01, 0.1\}$.
    }
    \label{fig:endogeneous_prior}
\end{figure}

Figure~\ref{fig:endogeneous_prior} confirms the robustness of our main insight.
Throughout, there is a clear and substantial gap between both random policies and $\newest$, consistent with the CoNF established in Section~\ref{sec:ordering}.
As $t$ grows and the prior strengthens, the impact of the $c$ reviews decreases, and all policies converge to the same revenue level.
A larger $\gamma$ accelerates this convergence since the prior strength is higher
and a smaller $\mu$ converges more slowly due to the slower accumulation of reviews.

We note that this setting is somewhat extreme: as $t \to \infty$, the prior $\mathrm{Beta}(a + \gamma \cdot P_t, b + \gamma \cdot N_t)$ concentrates to a point mass at $\mu$, so the product quality becomes exactly known and all policies trivially converge. In practice, however, customers continue to read individual reviews even when a large aggregate rating is available, suggesting that the informational content of the prior may be naturally bounded.  We show that even under this setting, the CoNF emerges when the prior has not fully converged. 

Next, we see that in all cases, the revenue of $\randomfinite$ quickly converges to the revenue of $\random$: once the pool is large enough, random sampling from it approximates i.i.d.\ draws. In early rounds, $\randomfinite$ falls below $\random$ because a small pool may be dominated by negative reviews, echoing the same self-reinforcing mechanism as CoNF; this effect is more pronounced for $\mu = 0.1$, where fewer purchases slow pool growth.

In summary, CoNF is not an artifact of the fixed-prior assumption: it emerges precisely when the prior is weak and individual reviews are most influential.

\subsection{Non-stationarity with increasing product quality}\label{subsec:numerics_nonstationarity}
Section~\ref{sec:dynamic_quality} establishes that CoNF persists when product quality changes according to a Markov Chain. 
Here, we consider a more extreme form of non-stationarity where product quality increases monotonically over time.
This setting is the most favorable possible for $\newest$, since newer reviews are systematically drawn from a higher-quality distribution and carry strictly more information about the current product than older reviews.
Despite this, we show that CoNF can still arise.

For a horizon of $T$ rounds, the product quality increases linearly from $\mu^L$ to $\mu^H > \mu^L$: specifically, $\mu_t = \mu^L + \frac{t-1}{T-1}(\mu^H - \mu^L)$.
If a customer purchases at round $t$, their review is drawn from $\bern(\mu_t)$.
Because newer reviews have higher mean quality, the optimal ordering policy should prioritize recent reviews, but may still benefit from some degree of randomization among them due to the effect of CoNF.
To explore this trade-off, we simulate a family of \emph{window-random} policies $\sigma^{\textsc{random}(w)}$, which selects $c$ reviews uniformly at random from the $w$ most recent reviews in the pool.
If $w$ exceeds the total number of reviews collected so far, then the window is capped by this number. 
When $w = c$, $\sigma^{\textsc{random}(w)}$ reduces to $\newest$; as $w \to \infty$, it approaches $\random$.
Note that, unlike the stationary setting where $\random$ is optimal among rating-agnostic policies (Appendix~\ref{appendix:why_random_right_benchmark}), this property no longer holds here: since newer reviews reflect a higher product quality, an intermediate window $w > c$ may outperform both $\newest$ and $\random$. We simulate the following instance: $c = 2$, $\mathcal{F} = \mathcal{U}[0,1]$, prior $(a,b) = (0.1, 0.9)$, $h$ is the posterior mean, $\mu^L = 0.1$, $\mu^H = 0.9$, and $T = 1000$. Note that
$w = 2$ represents $\newest$, while $w = 1000$ represents $\randomfinite$ (as it draws $c$ reviews at random from the current pool of all reviews).

Figure~\ref{fig:window_vs_revenue_price_0.75_1} shows revenue as a function of the window size $w$ for two representative prices $p \in \{0.75, 1\}$, where the leftmost point ($w = c = 2$) corresponds to $\newest$ and the rightmost ($w = 1000$) represents $\randomfinite$.
At $p = 0.75$, $\newest$ outperforms $\randomfinite$, since the most recent reviews carry higher-quality signals and the low price makes purchases easy to trigger.
At $p = 1$, however, $\randomfinite$ outperforms $\newest$ and CoNF re-emerges: at high prices, the effect of CoNF when the quality is low in the earlier rounds is more pronounced, as Newest gets stuck with no purchases for a longer duration.
In both cases, the optimal window is strictly intermediate, and even a modest increase from $w = 2$ to $w = 5$ yields a substantial revenue gain, showing that a small amount of randomization among recent reviews is sufficient to mitigate CoNF.

These results reflect two competing forces, illustrated in Figure~\ref{fig:round_vs_average_rating_shown_price_0.75_1}, which plots the average rating of the displayed reviews over time.
Non-stationarity favors $\newest$: in later rounds, $\newest$ displays higher-rated reviews than $\randomfinite$, since it tracks the improving product quality more closely.
CoNF works in the opposite direction and favors $\randomfinite$: in early rounds, when quality is low, $\newest$ gets stuck showing negative reviews, suppressing purchases and delaying the accumulation of newer, better reviews.
At $p = 0.75$, the non-stationarity effect dominates and $\newest$ yields higher revenue; at $p = 1$, the high purchase threshold amplifies CoNF and $\randomfinite$ is superior.

\begin{figure}[!htbp]
    \centering
    \includegraphics[width=0.9\textwidth]{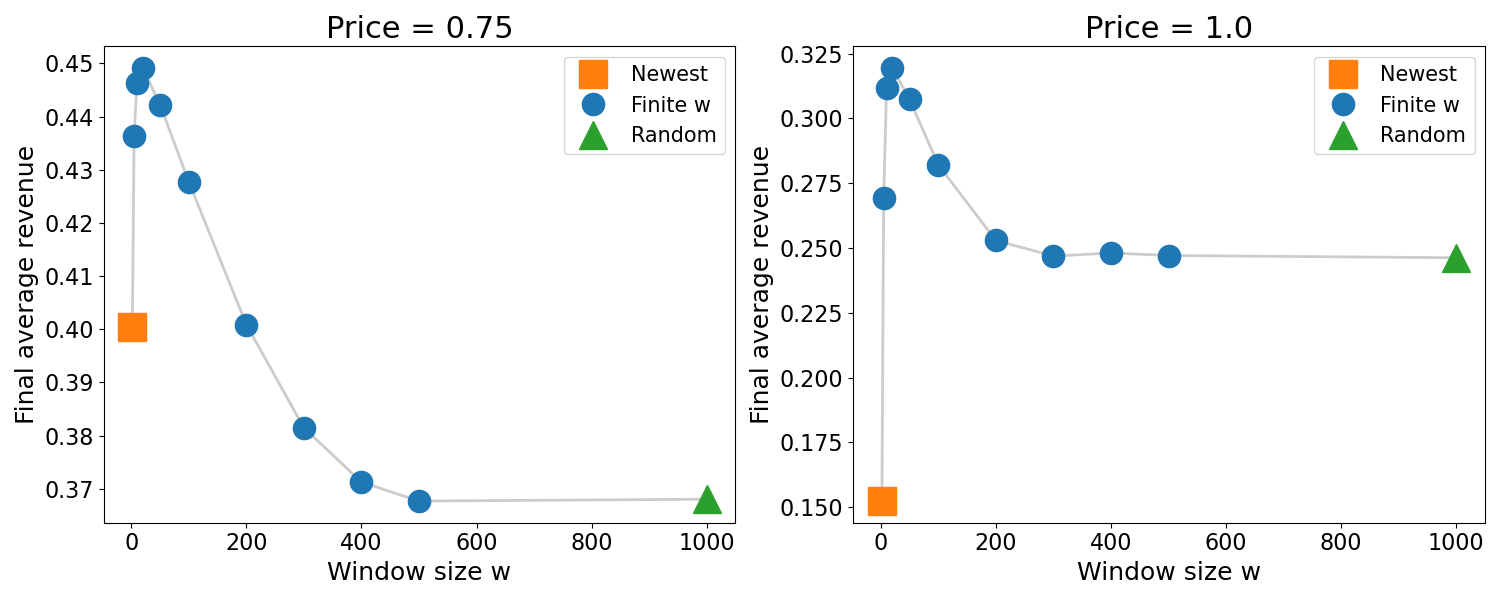}
    \caption{Average revenue at round $T=1000$ as a function of window size $w \in \{2, 5, 10, 20, 50, 100, 200,300,400, 500, 1000\}$, for $p \in \{0.75, 1\}$. In both cases, the optimal window is intermediate: neither $\newest$ ($w = 2$) nor $\randomfinite$ ($w =1000$) achieves the highest revenue.
    }
    \label{fig:window_vs_revenue_price_0.75_1}
\end{figure}

\begin{figure}[!htbp]
    \centering
    \includegraphics[width=0.9\textwidth]{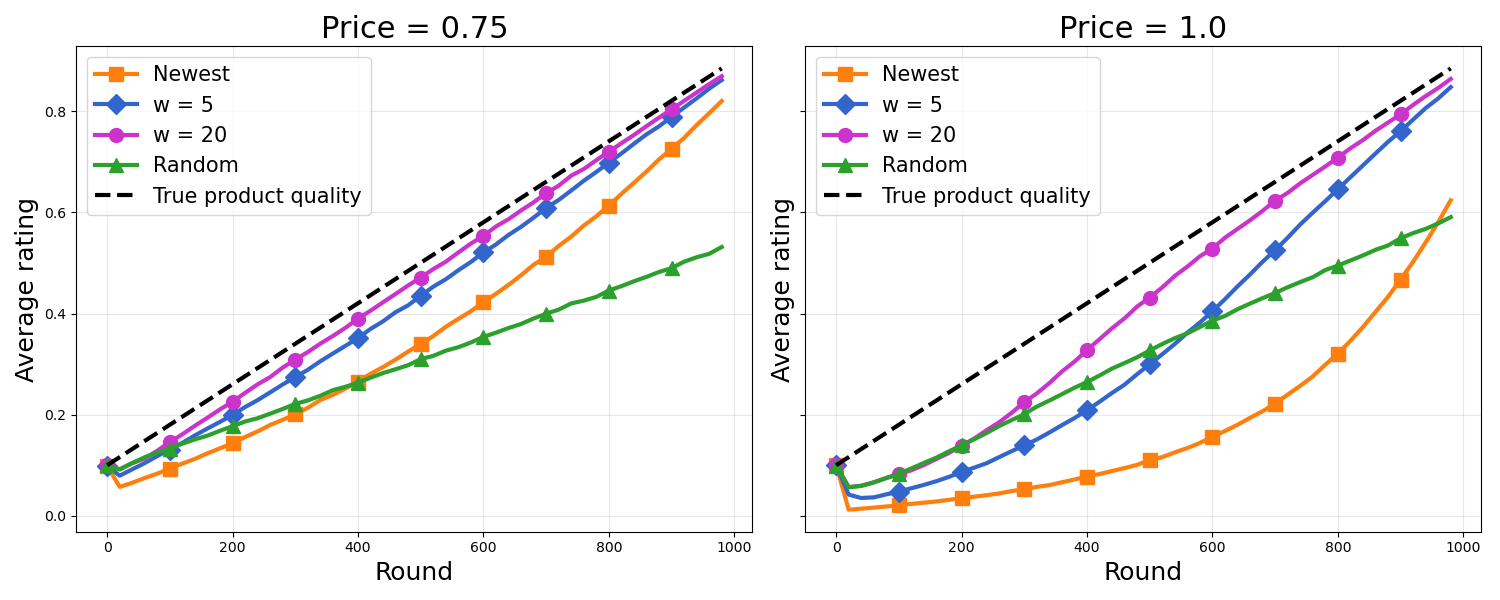}
    \caption{Average rating of the $c = 2$ displayed reviews as a function of round, for $p \in \{0.75, 1\}$. $\newest$ displays higher-rated reviews in later rounds due to increasing product quality, yet this does not prevent CoNF at $p = 1$.
    }
    \label{fig:round_vs_average_rating_shown_price_0.75_1}
\end{figure}

\subsection{CoNF in the presence of self-selection bias}\label{subsec:conf_with_self_selection_bias}
Recall that customer $t$'s realized valuation is $X_t+\Theta_t$, where $X_t$ and $\Theta_t$ represent the contribution from the product's unobservable and observable parts respectively. 
In our original model, we assume that the review reveals $X_t$, which is realized independently after a customer purchase. 
Here, we consider what happens when the review reveals $X_t+\Theta_t$. 
Note that in this case, since $\Theta_t$ is used in the purchase decision, there will be an upward ``self-selection'' bias in the reviews (as those with higher values of $\Theta_t$ are more likely to purchase and write a review). 

We assume that customers read one review ($c=1$) and when customer $t$ with idiosyncratic valuation $\Theta_t$ reads a review $R_s = \Theta_s + X_s$, they make a purchase if $\Theta_t + R_s > p$. The customer's idiosyncratic valuation is drawn from $\mathcal{F} = \mathcal{N}(0,1)$, the product quality is $\mu = 0.5$, and each $X_s$ is drawn from $\bern(\mu)$. Unlike our original model in Section~\ref{sec:model}, where customers can observe $X_s$ by reading reviews in detail, customers in this model are only able to use the entire rating when forming their belief for the product quality. As a result, we refer to this model as the \textit{coarse-ratings model}. \cite{bs18} show that under an analogous model, self-selection bias exists.  We compare this model to a baseline model where customer $t$ reading $R_s$ can observe $X_s$ and purchases if $\Theta_t + X_s \geq p$; this corresponds to our model (when $h$ is the mean and $a,b \to 0$). Figure~\ref{fig:self-selection-bias} shows the revenues of Random and Newest under the coarse-ratings model and the baseline model for prices $p \in [0,4]$. The revenue of Newest is smaller than the revenue of Random for any price under the coarse-ratings model. This constitutes evidence that CoNF holds even when customers are only able to observe $\Theta_s + X_s$. 

The gap between Newest and Random is larger under the coarse-ratings model than under the baseline model. Intuitively, this holds due to the higher downside in ratings under the coarse-ratings model as the review features an additional idiosyncratic component. The revenue of Random under the coarse-ratings model is larger than under the baseline due to effect of self-selection bias. The revenue of Newest under the coarse-ratings model is smaller than the revenue of Random under the coarse-rating model due to CoNF. 

Interestignly, for prices $p \in (0,2.3)$ the revenue of Newest under the coarse-ratings model is smaller than the revenue of Random under the baseline. As a result, the revenue loss due to CoNF outweighs the revenue gain due to self-selection bias. Intuitively, the lowest-rating states under the coarse-ratings model yield lower revenue than the lowest-rating states under the baseline due to additional downside of the idiosyncratic component. For low prices, entering these states is highly likely under the coarse-rating model and leads to lower revenue compared to the baseline. 

\begin{remark}
    Newest yields smaller revenue under the coarse-ratings model than under the baseline for prices $p \in (0,1)$ and larger revenue under the coarse-ratings model for prices $p \in (1, 4)$. The lowest-rating states under the coarse-ratings model yield lower revenue than the lowest-rating states under the baseline due to the additional downside of the idiosyncratic component. For low prices, entering these states is highly likely under the coarse-ratings model and leads to lower revenue than the baseline. For high prices, entering these states is highly unlikely because a purchase under the same low idiosyncratic valuation is needed; this leads to higher revenue than the baseline. 
\end{remark}

\begin{figure}[!htbp]
\centering
\includegraphics[width=0.8\textwidth]{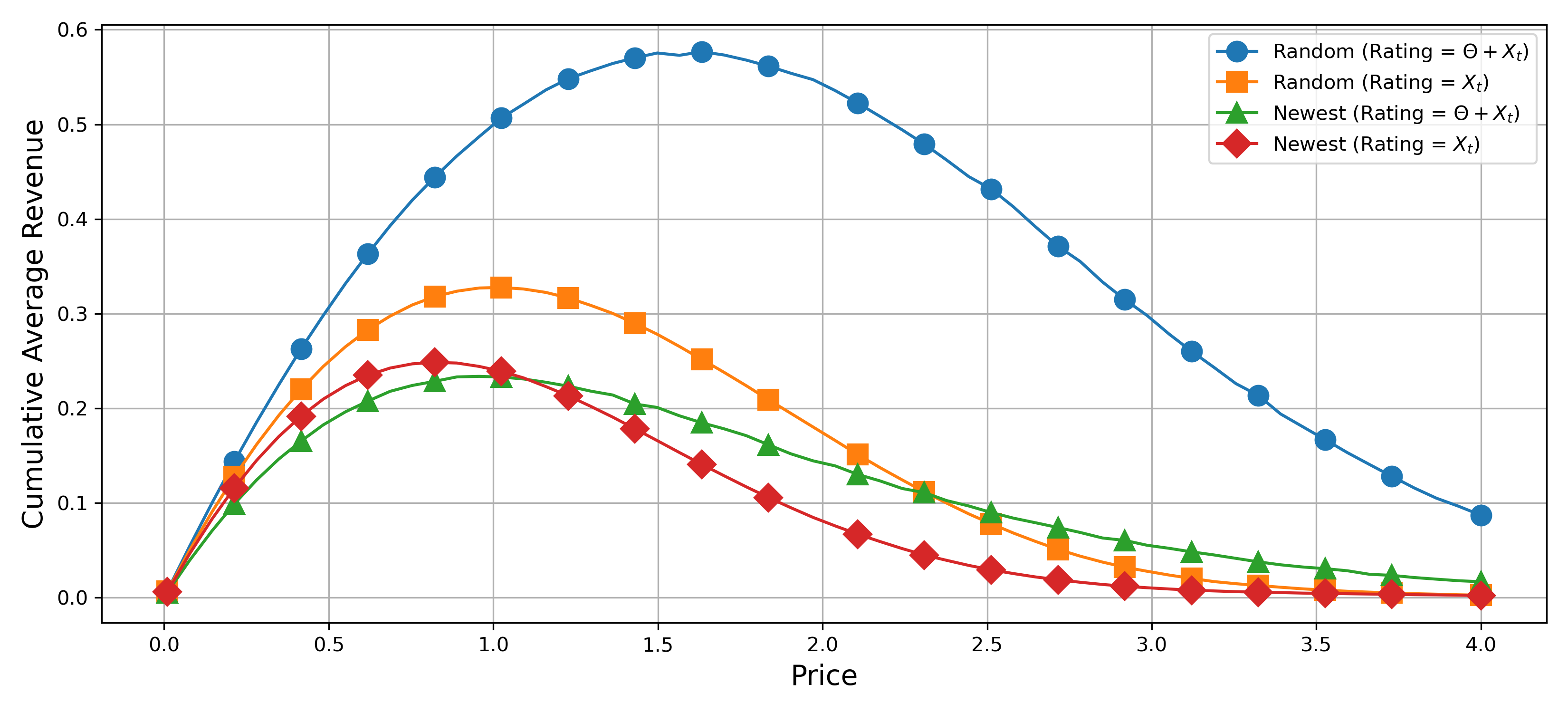}

\caption{Cost of Newest First in the persence of self-selection bias.} 
\label{fig:self-selection-bias}
\end{figure}

\subsection{Empirical validation that dynamic pricing mitigates the impact of CoNF}
\label{subsec:numerics_dynamic_vs_static_conf}
Recall that our main result of Section~\ref{sec:pricing} was that dynamic pricing mitigates the impact of CoNF. Under dynamic pricing, the worst case value for the CoNF, $\chi(\Pi^{\textsc{dynamic}})$, is at most 2 (Theorem~\ref{theorem:dynamic_pricing_CoNF_bound}), while $\chi(\Pi^{\textsc{static}})$ can be unbounded (Theorem \ref{thm:CoNF_opt_static_arbitrarily_bad}). 
Since these results are worst-case bounds, in this section, we compare the values of $\chi(\Pi^{\textsc{static}})$ and $\chi(\Pi^{\textsc{dynamic}})$ on a class of instances.

We consider instances where the customer-specific valuation is distributed as $\mathcal{F} = \mathcal{U}[-\epsilon, \epsilon]$, $c = 1$, $\mu = 0.5$,  the prior belief is $\text{Beta}(a, a)$, and $h(\cdot)$ is the mean of the posterior.
We vary $\epsilon$, which corresponds to the variability in the customer-specific valuation, and we also vary the magnitude of $a$, which represents the strength of the prior belief. 
A low prior strength, corresponding to lower values for $a$, implies that a review has a large impact on a customer's belief, whereas a higher prior strength implies that the customer's belief is minimally impacted by reviews.
We vary $\epsilon$ from 0 to 3, while we select $a \in \{0.05, 0.5, 5\}$. We plot the values of $\chi(\Pi^{\textsc{static}})$ and $\chi(\Pi^{\textsc{dynamic}})$ in Figure~\ref{fig:chi_pi_dynamic_chi_pi_static_comparison}. We plot the values of $\rev(\newest, \Pi^{\textsc{static}})$ and $\rev(\newest, \Pi^{\textsc{dynamic}})$ in Figure~\ref{fig:revenue_newest_static_dynamic}.

\begin{figure}[!htbp]
\centering
\includegraphics[width=1\textwidth]{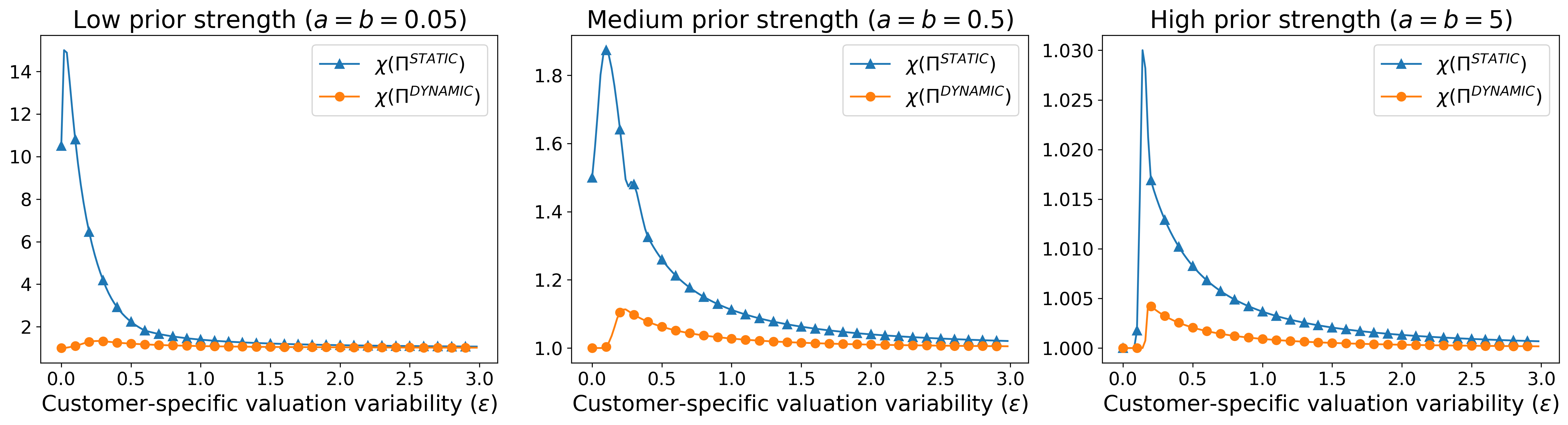}

\caption{Comparison of $\chi(\Pi^{\textsc{static}})$ and $\chi(\Pi^{\textsc{dynamic}})$.} 
\label{fig:chi_pi_dynamic_chi_pi_static_comparison}
\end{figure}

\begin{figure}[!htbp]
\centering
\includegraphics[width=1\textwidth]{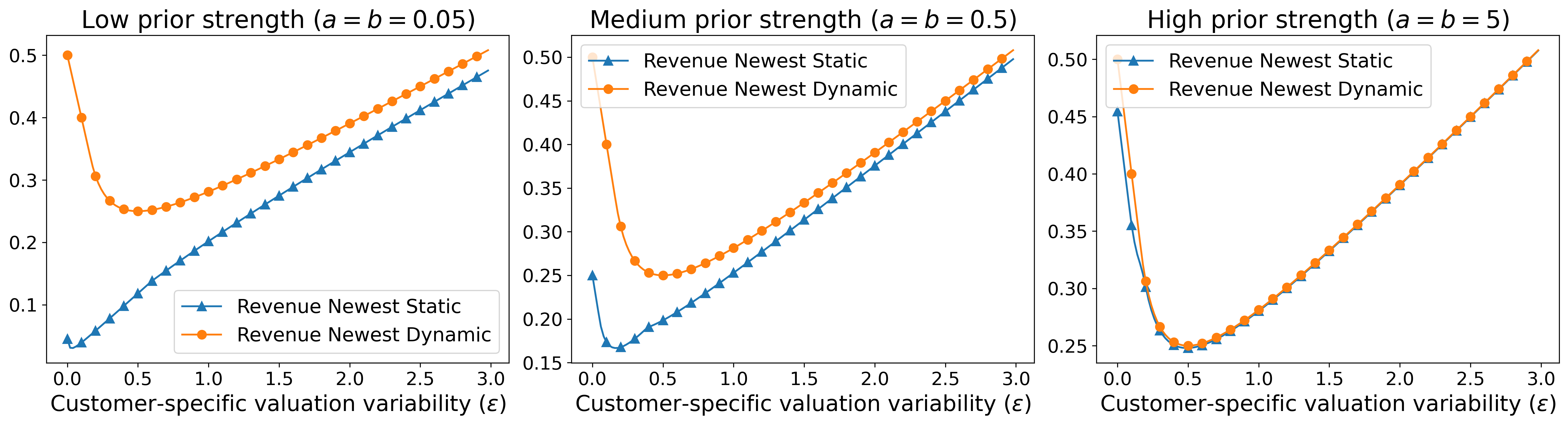}

\caption{Comparison of $\rev(\newest, \Pi^{\textsc{static}})$ and $\rev(\newest, \Pi^{\textsc{dynamic}})$.} 
\label{fig:revenue_newest_static_dynamic}
\end{figure}

We observe that the CoNF is small (less than $1.1$)  under dynamic pricing under all instances.
For static pricing, the CoNF can be large, especially when the prior strength is low and $\epsilon$ is small. 
This behavior is expected from the theory, as the instances that were constructed to show that the CoNF can be arbitrarily bad are instances where the prior strength is low (and thus the variability in review-inferred quality estimates is high) while the variability in customer-specific valuation is also low (\cref{thm:CoNF_opt_static_arbitrarily_bad}).
Importantly, we observe that $\chi(\Pi^{\textsc{dynamic}})$ is strictly smaller than $\chi(\Pi^{\textsc{static}})$ for all instances that were tested. In Figure~\ref{fig:revenue_newest_static_dynamic}, we see that the revenue of Newest under dynamic pricing can be significantly larger than the revenue of Newest under static pricing, especially when the prior strength is low and $\epsilon$ is small. The difference between Newest's revenue under dynamic and static pricing decreases as $\epsilon$ increases. These results corroborate the message that dynamic pricing mitigates the impact of the Cost of Newest First. 

\begin{remark}\label{rem:exponential}
    We note that the insights from this section are not tailored to the uniform distribution. We can recreate similar plots as in Figure~\ref{fig:chi_pi_dynamic_chi_pi_static_comparison} and Figure~\ref{fig:revenue_newest_static_dynamic} with exponentially distributed idiosyncratic valuations (see Appendix~\ref{appendix:numerics_dynamic_pricing_exponential_idiosyncratic}).
\end{remark}

\section{Conclusions}
\label{sec:conclusions}
In this paper, we model the idea that customers read only a small number of reviews before making purchase decisions. This model gives rise to the \textsc{Cost of Newest First}, the idea that, when reviews are ordered by Newest First, negative reviews will persist as the newest review longer than positive reviews.
This phenomenon does not arise in models from the existing literature, since prior works assume that customers incorporate either all reviews or a summary statistic of all reviews into their beliefs. 
We show that incorporating randomness into the review ordering or using dynamic pricing can alleviate the negative impact of the \textsc{Cost of Newest First}.

Our work opens up a number of intriguing avenues for future research. First, existing literature on social learning studies the self-selection bias (which we do not consider in our model but numerically study in Section~\ref{subsec:conf_with_self_selection_bias}) -- how does this self-selection bias broadly interact with the \textsc{Cost of Newest First}? 
Second, in terms of operational decisions, a platform contains multiple products --- should it take the state of reviews into consideration when making display or ranking decisions?
Third, given this limited attention behavior, are there alternative methods of disseminating relevant information from reviews? For example, one could succinctly summarize information from all reviews (via, e.g, generative AI) to be the most helpful for each customer. Lastly, on the theoretical side, our analysis fully characterizes the steady state of a stochastic process whose state remains unchanged with some state-dependent probability (\cref{lemma: general_theorem_markov_chains_stationary}  which is the crux in the analysis of \cref{lemma:stationary_state_distribiton_newest_first}). It would be interesting to apply this result to other settings that exhibit a similar structure.

\subsection*{Acknowledgements.} We thank the anonymous reviewers from the 25th ACM Conference on Economics and Computation (EC 2024) for their thorough feedback that greatly improved the presentation of the paper. We are also grateful to the Simons Institute for the Theory of Computing as this work started during the Fall'22 semester-long program on \emph{Data Driven Decision Processes}.


%
%
%




\bibliographystyle{alpha} 
\bibliography{References} 





  



\newpage

\appendix



\section{Generalizing beyond Beta-Bernoulli distributions (Remark~\ref{remark_model_generalization})}\label{appendix: model_generalization}
We consider a generalization of the customer behavior model in \cref{sec:model}. We only point to the modeling assumptions that change; everything else remains as in Section \ref{sec:model}:
\begin{enumerate}
    \item With respect to the customer valuation, the product's unobservable part $\mu_t$ is drawn from an arbitrary distribution $\mathcal{D}$ with mean $\mu$ and finite support $R = \{r_1, \ldots, r_s\} \in \mathbb{R}^{s}$ where $r_1 < \ldots < r_s$ and $s \geq 2$. That is, $\mathcal{D}$ is no longer restricted to  be Bernoulli with $S=\{0,1\}$ and $\mu$ in $(0,1)$. This can capture a system where $\mu_t$ is the number of stars ($S=\{1,\ldots,5\}$) which is common in online platforms such as Tripadvisor, Airbnb, and Amazon.
    \item With respect to the customer purchase behavior, when presented with a vector of $c$ reviews $\bm{Z}_t = (Z_{t,1}, \ldots, Z_{t,c}) \in S^{c}$, the customer maps them to an estimated valuation $\hat{V}_t = \Theta_t+\hat{h}(\bm{Z}_t)$, where $\hat{h}: S^{c} \to \mathbb{R}$ is an arbitrary fixed mapping. The model of \cref{subsec:model_customer_purchase_behavior} is a special case where the estimate $\hat{h}$ is created via a two-stage process: the customer initially creates a posterior belief $\Phi_t = \mathrm{Beta}\big(a + \sum_{i=1}^{c} Z_{t,i}, b + c-\sum_{i=1}^{c} Z_{t,i}\big)$ and maps this posterior to an estimate $\hat{h}(\bm{Z}_t) = h(\Phi_t)$ via a mapping $h$. Unlike this special case (where the estimate can only depend on the number of positive reviews in the $c$ displayed reviews), our generalization here allows an arbitrary mapping from $\bm{Z}_t$ that can also take the order of reviews into consideration. 
    \item With respect to the customer review generation, the review is $X_t$ in the event of a purchase and $X_t=\perp$ otherwise. The difference to Section \ref{sec:model} is that $\mu_t$ is not restricted to be Bernoulli.
    \item We assume that the estimator $\hat{h}(\bm{z})$ is strictly increasing in each coordinate of the review ratings $\bm{z}$. This extends Assumption \ref{assumption:h_monotonicity_positive_reviews} in a way that can capture the order of the reviews. 
    \item We assume that the customer's smallest estimated valuation $\underline{\hat{V}} = \Theta+\underline{h}$ where $\underline{h} = \hat{h}(r_1, \ldots, r_1)$ has nonzero mass on non-negative numbers, i.e,  $\prob_{\Theta \sim \mathcal{F}}[\underline{\hat{V}}\geq 0] > 0$. This extends Assumption \ref{assumption:non-negative_mass_idiosyncratic} and ensures that there exist non-negative prices inducing positive purchase probability. 
\end{enumerate}
The main driver behind all results in Sections \ref{sec:ordering} and \ref{sec:pricing} is the characterization of the stationary distribution of the Markov chain $\bm{Z}_t = (Z_{t,1}, \ldots, Z_{t,c})$ of the newest $c$ reviews.
\begin{itemize}
\item Estimator $\hat{h}$ generalizes the customer's purchase behavior. The purchase probability at a state with review ratings $\bm{z} \in R^c$ is $\prob_{\Theta \sim \mathcal{F}}[\Theta + h(\bm{z}) \geq \rho(\bm{z})]$ where $\rho$ is the pricing policy.  
\item The review generation process changes the transition dynamics of $\bm{Z}_t$ upon a purchase, i.e., a new review takes one of the values $\{r_1, \ldots, r_s\}$ (as opposed to $\{0,1\}$ in the original model). 
\item Letting $g_{\mathcal{D}}$ be the probability mass function of $\mathcal{D}$, the stationary distribution of $\bm{Z}_t $ is $$\pi_{\bm{z}} = \kappa \cdot \frac{\prod_{i=1}^c g_{\mathcal{D}}(r_i)}{\prob_{\Theta \sim \mathcal{F}}[\Theta + \hat{h}(\bm{z}) \geq \rho(\bm{z}) ]} \quad \text{where } \kappa = \frac{1}{\expect_{Z_1, \ldots, Z_c \sim_{i.i.d.} \mathcal{D}}\Big[\frac{1}{\prob_{\Theta \sim \mathcal{F}}[\Theta + \hat{h}(Z_1, \ldots, Z_c) \geq \rho(\bm{z}) ]}\Big]}.$$
\item This is analogous to \cref{lemma:stationary_state_distribiton_newest_first} and \cref{lemma_dynamic_stationary_distribution} for static and dynamic pricing in the the Beta-Bernoulli model. In particular, \cref{lemma_dynamic_stationary_distribution} replaces $g_{\mathcal{D}}(r_i)$ and $\hat{h}(\bm{z})$ by $\mu^{z_i}(1-\mu)^{1-z_i}$ and $h(N_{\bm{z}})$ in the expression for $\pi$, and the expectation $\expect_{Z_1, \ldots, Z_c \sim_{i.i.d.} \mathcal{D}}$ and $\hat{h}(Z_1, \ldots, Z_c)$ with $\expect_{Y_1, \ldots, Y_c \sim \bern(\mu)}$ and $h(\sum_{i=1}^c Y_i)$ in the expression for $\kappa$. The proof is completely analogous.
\end{itemize}
Using this stationary distribution and similar steps as in the proofs of  \cref{theorem:most_recent_C_revenue} and \cref{lemma:revenue_recent_dynamic_formula_lemma_improving_review_offsetting_main_body} the revenue of $\newest$ in the generalized model becomes:
$$ \textsc{Rev}(\sigma^{\textsc{newest}}, \rho) = \frac{\expect_{Z_1, \ldots, z_c \sim_{i.i.d.} \mathcal{D}}[\rho(\bm{Z})]}{\expect_{Z_1, \ldots, Z_c \sim_{i.i.d.} \mathcal{D}}\Big[\frac{1}{\prob_{\Theta \sim \mathcal{F}} [\Theta + \hat{h}(\bm{Z}) \geq \rho(\bm{Z}) ]}\Big]}.$$
The revenue of $\random$ (which shows $c$ i.i.d. reviews from $\mathcal{D}$) directly follows by adapting the purchase probability in Proposition \ref{theorem:random_C_revenue} and \cref{theorem_char_dynamic_random}:
$$\rev(\random, \rho) =  \expect_{Z_1, \ldots, Z_c \sim_{i.i.d.} \mathcal{D}}\Big[\rho(\bm{Z}) \cdot \prob_{\Theta \sim \mathcal{F}}[\Theta + \hat{h}(Z_1, \ldots, Z_c) \geq \rho(\bm{Z})] \Big].$$
Theorem \ref{theorem:negative_bias} and \ref{theorem:dynamic_pricing_CoNF_bound} then extend to our generalized model by analogously adapting their proofs. With respect to the negative results (\cref{thm:conf_arbitrarily_bad} and \cref{thm:CoNF_opt_static_arbitrarily_bad}), given that the model in the main body is a special case of the generalized model, they also directly extend.

\begin{remark}
The generalized model allows for customers who are fully Bayesian in estimating the fixed valuation given the ordering policy $\sigma$ and use an estimator mapping $\hat{h} = \hat{h}_{\textsc{FullyBayesian}}$. Such customers could be reactive to $\sigma$ and account for the effect of \textsc{CoNF}; the extension of the \textsc{CoNF} result may thus seem surprising. The reason why this occurs is that \textsc{CoNF} evaluates the customers assuming that they follow the same behavioral model under $\random$ and under $\newest$ (and does not consider the setting where customers are reactive to the ordering policy).
\end{remark}

\section{Supplementary material for Section~\ref{sec:ordering}}\label{app:ordering}
\subsection{Average rating with Newest is smaller than with Random (Proposition~\ref{thm_neg_bias_avg_review_rating})}\label{appendix:avg_newest_smaller_avg_random}
\cref{thm_neg_bias_avg_review_rating} states that the average review rating of the $c$ reviews displayed by $\newest$ is strictly smaller than the average rating of the $c$ reviews displayed by $\random$. To prove the proposition we analyze the stationary distribution of the number of positive reviews under each of $\newest$ and $\random$. \cref{lemma:stationary_state_distribiton_newest_first} implies that the stationary distribution of seeing $n$ positive reviews under $\newest$ is \begin{align*}\pi_{n}^{\textsc{newest}} = \kappa \binom{c}{n} \frac{\mu^{n}(1-\mu)^{c-n}}{\prob_{\Theta \sim \mathcal{F}}[\Theta + h(n) \geq p]}  \qquad\text{where} \qquad \kappa = 1/\expect_{N \sim \binomial(c, \mu)}\Big[\frac{1}{\prob_{\Theta \sim \mathcal{F}}[\Theta + h(N) \geq p]} \Big].\end{align*} By \cref{theorem:most_recent_C_revenue}, $\rev(\newest,p) = \kappa \cdot p$, so $\kappa$ can be interpreted as the rate at which customers purchase under $\newest$. In contrast,  the corresponding stationary distribution under $\random$ is \begin{align*}\pi_{n}^{\textsc{random}}= \binom{c}{n} \mu^n (1-\mu)^{c-n}.\end{align*} 

To prove the proposition, we first compare the behavior of $\newest$ and $\random$ based on $n^{\star} = \max\{ n| \prob_{\Theta \sim \mathcal{F}}[\Theta + h(n) \geq p] \leq \kappa \}$ (i.e., the largest number of positive review ratings where the purchase probability is at most the average purchase rate $\kappa = 1/\expect_{N \sim \binomial(c, \mu)}\Big[\frac{1}{\prob_{\Theta \sim \mathcal{F}}[\Theta + h(N) \geq p]} \Big]$). 

\begin{lemma}\label{lemma:comparison_dist_newest_random_general}
For any price $p$ satisfying Assumption \ref{assumption:non_abs_non_degen}, $\pi^{\textsc{newest}}_{n} \leq \pi^{\textsc{random}}_{n}$ if $n >n^{\star}$ and $\pi^{\textsc{newest}}_{n} \geq \pi^{\textsc{random}}_{n}$ if $n \leq n^{\star}$.
\end{lemma}
\begin{proof}[Proof of \cref{lemma:comparison_dist_newest_random_general}.]
By \cref{lemma:stationary_state_distribiton_newest_first}, it holds that $\pi^{\textsc{random}}_{n}  = \pi^{\textsc{newest}}_{n} \cdot \frac{1}{\kappa} \cdot \prob_{\Theta \sim \mathcal{F}}[\Theta + h(n) \geq p] $. By the definition of $n^{\star}$ and the monotonicity of $h(n)$, if $n \leq n^{\star}$, $\prob_{\Theta \sim \mathcal{F}}[\Theta + h(n) \geq p] \leq \kappa$ and thus $\pi^{\textsc{newest}}_{n} \geq \pi^{\textsc{random}}_{n}$. The other case is analogous. 
\end{proof}

\begin{proof}[Proof of \cref{thm_neg_bias_avg_review_rating}.]
 To show that $\newest$  has smaller average rating compared to
 $\random$, 
    \begin{align*}
        \expect_{N \sim \pi^{\textsc{random}}_n}[N]- \expect_{N \sim \pi^{\textsc{newest}}_n}[N]&=\sum_{m=0}^c m( \pi^{\textsc{random}}_m-\pi^{\textsc{newest}}_m)\\
        &= \underbrace{\sum_{m =n^{\star} + 1 }^{c}m( \pi^{\textsc{random}}_m-\pi^{\textsc{newest}}_m)}_{(1)} -\underbrace{\sum_{m=0}^{n^{\star}}m( \pi^{\textsc{newest}}_m-\pi^{\textsc{random}}_m)}_{(2)}
    \end{align*}
    \cref{lemma:comparison_dist_newest_random_general} yields $( \pi^{\textsc{random}}_m-\pi^{\textsc{newest}}_m) \geq 0$ for $m \geq n^{\star} +1$ and thus
    \begin{align*}
     (1) \geq (n^{\star}+1) \sum_{m =n^{\star} + 1 }^{c}( \pi^{\textsc{random}}_m-\pi^{\textsc{newest}}_m) &= (n^{\star}+1) \big(\prob_{N \sim \pi_{n}^{\textsc{random}}}[N > n^{\star}]-\prob_{N \sim \pi_{n}^{\textsc{newest}}}[N > n^{\star}]\big)\\
     &= (n^{\star}+1) \big(\prob_{N \sim \pi_{n}^{\textsc{newest}}}[N \leq n^{\star}]-\prob_{N \sim \pi_{n}^{\textsc{random}}}[N \leq n^{\star}]\big)
    \end{align*}
    where in the last equality we used that $\prob[X > n^{\star}] = 1-\prob[X \leq n^{\star}]$ for any random variable $X$. Using \cref{lemma:comparison_dist_newest_random_general} again yields  $( \pi^{\textsc{newest}}_m-\pi^{\textsc{random}}_m) \geq 0$ for $m \leq n^{\star}$ and thus
    $$(2) \leq n^{\star} \sum_{m =0 }^{n^{\star}}( \pi^{\textsc{newest}}_m-\pi^{\textsc{random}}_m) = n^{\star} \big(\prob_{N \sim \pi_{n}^{\textsc{newest}}}[N \leq n^{\star}]-\prob_{N \sim \pi_{n}^{\textsc{random}}}[N \leq n^{\star}]\big).$$
    Therefore,
    $$(1)-(2) \geq \prob_{N \sim \pi_{n}^{\textsc{newest}}}[N \leq n^{\star}]-\prob_{N \sim \pi_{n}^{\textsc{random}}}[N \leq n^{\star}].$$

To conclude the proof it remains to show that $\prob_{N \sim \pi_{n}^{\textsc{newest}}}[N \leq n^{\star}] > \prob_{N \sim \pi_{n}^{\textsc{random}}}[N \leq n^{\star}]$. By \cref{lemma:comparison_dist_newest_random_general}, $\pi^{\textsc{newest}}_{n} \geq \pi^{\textsc{random}}_{n}$ for $n \leq n^{\star}$. To show the claim it is enough to show that there is some $n \leq n^{\star}$ such that $\pi^{\textsc{newest}}_{n} > \pi^{\textsc{random}}_{n}$. We show that the purchase probability when all reviews are negative is strictly greater under $\newest$ than under $\random$ i.e. $\pi^{\textsc{newest}}_{0} > \pi^{\textsc{random}}_{0}$. By the monotonicity of $h$,  $\prob_{\Theta \sim \mathcal{F}}[\Theta + h(0) \geq p] \leq \prob_{\Theta \sim \mathcal{F}}[\Theta + h(n) \geq p]$ for all $n \in \{0,1 \ldots, c\}$. Since $p$ satisfies Assumption \ref{assumption:non_abs_non_degen} (and is thus non-degenerate), the inequality is strict when $n = c$. Taking the reciprocal and expectation of the last inequality yields the result.  
\end{proof}

\subsection{Stationary Distribution under Newest First (Lemma \ref{lemma:stationary_state_distribiton_newest_first})}\label{appendx:proof_lemma:stationary_state_distribiton_newest_first}
We first state a general property characterizing how the stationary distribution of a Markov chain changes if we modify it so that in every state, the process remains there with some probability. 

\begin{lemma}\label{lemma: general_theorem_markov_chains_stationary}
    Let $\mathcal{M}$ be a Markov chain on a finite state space $\mathcal{S}$ ($|\mathcal{S}| = m$) with a transition probability matrix $M \in \mathbb{R}^{m \times m}$ and a stationary distribution $\pi \in \mathbb{R}^m$. For any function $f:\mathcal{S} \to (0,1]$, we define a new Markov chain $\mathcal{M}_f$ on the same state space $\mathcal{S}$. At state $s \in \mathcal{S}$, $\mathcal{M}_f$ transitions according to the matrix $M$ with probability $f(s)$ and remains in $s$ with probability $1-f(s)$.
    Then $\pi_{f}(s) = \kappa \cdot\frac{\pi(s)}{f(s)}$ is a stationary distribution of $\mathcal{M}_f$ where $\kappa = 1/\sum_{s \in \mathcal{S}}\frac{\pi(s)}{f(s)}$ is a normalizing constant
\end{lemma}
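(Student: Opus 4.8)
The plan is to verify directly that the claimed distribution $\pi_f(s) = \kappa \cdot \pi(s)/f(s)$ satisfies the stationarity equation $\pi_f = \pi_f M_f$, where $M_f$ is the transition matrix of $\mathcal{M}_f$. First I would write down $M_f$ explicitly: for $s \neq s'$, the transition probability is $(M_f)_{s,s'} = f(s) \, M_{s,s'}$, since $\mathcal{M}_f$ first decides to move (with probability $f(s)$) and then transitions according to $M$; and the self-loop probability is $(M_f)_{s,s} = 1 - f(s) + f(s) M_{s,s} = 1 - f(s)(1 - M_{s,s})$. I would then check that $\pi_f$ is a valid probability distribution, which is immediate: each entry is non-negative because $f(s) \in (0,1]$ and $\pi(s) \geq 0$, and the normalizing constant $\kappa = 1/\sum_{s}\pi(s)/f(s)$ ensures the entries sum to one (this sum is finite and positive since $\mathcal{S}$ is finite and $f(s) > 0$).

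The core computation is to show $\sum_{s} \pi_f(s)\,(M_f)_{s,s'} = \pi_f(s')$ for every target state $s'$. Expanding the left-hand side and separating the diagonal term from the off-diagonal terms, I would write
\begin{equation*}
\sum_{s} \pi_f(s)\,(M_f)_{s,s'} = \sum_{s \neq s'} \kappa \frac{\pi(s)}{f(s)} f(s) M_{s,s'} + \kappa \frac{\pi(s')}{f(s')}\bigl(1 - f(s') + f(s') M_{s',s'}\bigr).
\end{equation*}
In the off-diagonal sum the factor $1/f(s)$ cancels against $f(s)$, leaving $\kappa \sum_{s \neq s'} \pi(s) M_{s,s'}$. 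I would then add and subtract the $s = s'$ term to recombine it into a full sum $\kappa \sum_{s} \pi(s) M_{s,s'} = \kappa\, (\pi M)_{s'} = \kappa\, \pi(s')$, using the stationarity of $\pi$ for $\mathcal{M}$. The remaining diagonal contribution, after the $f(s')M_{s',s'}$ piece is pulled out to complete that sum, is $\kappa \frac{\pi(s')}{f(s')}(1 - f(s'))$. Collecting terms, the total becomes $\kappa \pi(s') - \kappa \pi(s') M_{s',s'} \cdot 0 + \dots$; the bookkeeping is designed so that the $M_{s',s'}$ terms and the $\kappa \pi(s')$ term combine to yield exactly $\kappa \pi(s')/f(s') = \pi_f(s')$.

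Concretely, the cancellation works as follows: the full recombined sum gives $\kappa \pi(s')$, but I must subtract the $\kappa \pi(s') M_{s',s'}$ term I artificially added, then add back the genuine diagonal contribution $\kappa \frac{\pi(s')}{f(s')}(1 - f(s')) + \kappa \pi(s') M_{s',s'}$ from the self-loop. The two $\kappa \pi(s') M_{s',s'}$ terms cancel, leaving $\kappa \pi(s') + \kappa \frac{\pi(s')}{f(s')}(1 - f(s')) = \kappa \pi(s')\bigl(1 + \frac{1 - f(s')}{f(s')}\bigr) = \kappa \frac{\pi(s')}{f(s')} = \pi_f(s')$, as desired.

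The only genuine subtlety — and the step I would treat most carefully — is the bookkeeping around the self-loop of the \emph{original} chain. Because $\mathcal{M}$ may already have $M_{s,s} > 0$, the modified self-loop is $1 - f(s)(1 - M_{s,s})$ rather than $1 - f(s)$, and one must make sure the $f(s) M_{s,s}$ piece is correctly folded into the ``transition'' part so that the recombination into $\pi M$ is exact. Everything else is routine algebra relying only on $\pi M = \pi$, the finiteness of $\mathcal{S}$, and $f(s) \in (0,1]$. I would also remark that uniqueness of the stationary distribution (needed when this lemma is applied to the review chain) follows from irreducibility of $\mathcal{M}_f$, which is inherited from that of $\mathcal{M}$ since adding self-loops does not disconnect states; this justifies calling $\pi_f$ \emph{the} stationary distribution in the applications.
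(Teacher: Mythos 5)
Your proposal is correct and takes essentially the same route as the paper's own proof: both write the modified transition matrix as $(M_f)_{s,s'} = f(s)M_{s,s'}$ off the diagonal and $(M_f)_{s,s} = 1 - f(s) + f(s)M_{s,s}$ on it, then verify the balance equation $\sum_s \pi_f(s)(M_f)_{s,s'} = \pi_f(s')$ directly by cancelling the $1/f(s)$ factors against $f(s)$, recombining into the full sum $\kappa\sum_s \pi(s)M_{s,s'} = \kappa\,\pi(s')$ via stationarity of $\pi$, and collecting the leftover self-loop term $\kappa\,\frac{\pi(s')}{f(s')}(1-f(s'))$ to obtain $\pi_f(s')$. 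Your closing remark on uniqueness via irreducibility of $\mathcal{M}_f$ is likewise exactly how the paper justifies uniqueness when it applies this lemma to the review chain.
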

\begin{proof}[Proof of \cref{lemma: general_theorem_markov_chains_stationary}.]
For any state $s \in \mathcal{S}$, the probability of a self-transition under $\mathcal{M}_f$ is $M_f(s,s) = f(s) M(s,s) + 1-f(s)$ since there are two ways to transition from $s$ back to itself: (1) the $f(s)$ transition followed by a transition back to $s$ via $\mathcal{M}$ and (2) the $1-f(s)$ transition that does not alter the current state. 
For all states $s \neq s'$, $M_f(s,s') = f(s) M(s,s')$ since the only way for $\mathcal{M}_f$ to transition from $s$ to $s'$ is to take a $f(s)$ transition at $s$ and follow the transitions of $\mathcal{M}$ to get to $s'$. 

The distribution $\Big\{\pi_{f}(s) = \kappa \cdot \frac{\pi(s)}{f(s)}\Big\}_{s\in\mathcal{S}}$ is a probability distribution by the definition of the normalizing constant $\kappa = 1/\sum_{s \in \mathcal{S}}\frac{\pi(s)}{f(s)}$. Using the transitions of $\mathcal{M}_f$, it holds that:
\begin{align*}
    \sum_{s' \in \mathcal{S}}\pi_{f}(s') M_f(s',s) &= \pi_f(s) M_{f}(s,s) + \sum_{s' \neq s}\pi_{f}(s') M_f(s',s) \\
    &=\pi_f(s) \cdot \big(1-f(s) + f(s) M(s,s) \big) +  \sum_{s' \neq s}\pi_f(s') \cdot\big( f(s') M(s',s) \big) \\
    &=\kappa \cdot \frac{\pi(s)}{f(s)} \cdot \big(1-f(s) + f(s) M(s,s) \big) +  \sum_{s' \neq s}\kappa \cdot \frac{\pi(s')}{f(s')} \cdot\big( f(s') M(s',s) \big) \\
    &= \kappa \cdot  \frac{\pi(s)}{f(s)} - \kappa \cdot \pi(s) + \underbrace{\kappa \cdot \pi(s) M(s,s) + \sum_{s' \neq s} \kappa \cdot \pi(s')M(s',s)}_{= \kappa \cdot \pi(s) \text{ (as $\pi$ is stationary for $\mathcal{M}$)}} = \pi_f(s).
\end{align*}
Hence, $\pi_f$ is a stationary distribution of $M_f$ as $\pi_f(s) = \sum_{s' \in \mathcal{S}}\pi_{f}(s') M_f(s',s)$ for all states $s$. 
\end{proof}

\begin{proof}[Proof of Lemma \ref{lemma:stationary_state_distribiton_newest_first}.]
We will show that the stationary distribution of the newest $c$ reviews $\bm{Z}_t$ is 
$$\pi_{(z_1, \ldots, z_{c})} = \kappa \cdot  \frac{\mu^{\sum_{i=1}^{c} z_i} (1-\mu)^{c-\sum_{i=1}^{c} z_i}}{\prob_{\Theta \sim \mathcal{F}}\Big[\Theta + h(\sum_{i=1}^{c} z_i) \geq p\Big] }$$
where $\kappa = 1/\expect_{N \sim \binomial(c, \mu)}\big[\frac{1}{\prob_{\Theta \sim \mathcal{F}}[\Theta + h(N) \geq p]} \big]$ is the normalizing constant. In the language of \cref{lemma: general_theorem_markov_chains_stationary}, $\bm{Z}_t$ corresponds to $\mathcal{M}_f$, the state space $\mathcal{S}$ to $\{0,1\}^c$, and $f$ is a function that expresses the purchase 
probability at a given state, i.e., $f(z_1, \ldots, z_c) = \prob_{\Theta \sim \mathcal{F}}\Big[\Theta + h(\sum_{i=1}^{c} z_i) \geq p\Big]$. Note that with probability $1-f(z_1, \ldots, z_c)$, $\bm{Z}_t$ remains at the same state (as there is no purchase). 

To apply \cref{lemma: general_theorem_markov_chains_stationary}, we need to show that whenever there is  purchase, $\bm{Z}_t$ transitions according to a Markov chain with stationary distribution $\mu^{\sum_{i=1}^{c} z_i} (1-\mu)^{c-\sum_{i=1}^{c} z_i}$. Consider the Markov chain $\mathcal{M}$ which always replaces the $c$-th last review with a new $\bern(\mu)$ review. This process has stationary distribution equal to the above numerator and $\bm{Z}_t$ transitions according to $\mathcal{M}$ upon a purchase, i.e., with probability $f(z_1, \ldots, z_c)$. As a result, by \cref{lemma: general_theorem_markov_chains_stationary}, $\pi$ is a stationary distribution for $\bm{Z}_t$.  As $\bm{Z}_t$ is irreducible and aperiodic, this is the unique stationary distribution. 
\end{proof}

\subsection{Closed-form expression for Cost of Newest First (Lemma~\ref{lemma: CoNF_ratio_closed_form_general_c})}\label{appendix_subsec:CoNF_ratio_closed_form_general_c}
\begin{proof}[Proof of \cref{lemma: CoNF_ratio_closed_form_general_c}.]
Dividing the expressions given in Propositions \ref{theorem:random_C_revenue} and \ref{theorem:most_recent_C_revenue}, the price term $p$ cancels out and the CoNF can be expressed as:
    \begin{align*}
        \chi(p) 
        &= \expect_{N \sim \binomial(c,\mu)}\big[\prob_{\Theta \sim \mathcal{F}}[\Theta + h(N) \geq p]\big] \cdot \expect_{N \sim \binomial(c,\mu)}\Big[\frac{1}{\prob_{\Theta \sim \mathcal{F}}[\Theta + h(N) \geq p]} \Big] \\
        &= \Big(\sum_{i=0}^{c} \mu^{i} (1-\mu)^{c-i}\binom{c}{i} \prob_{\Theta \sim \mathcal{F}}[\Theta + h(i) \geq p] \Big) \cdot \Big( \sum_{j=0}^{c} \mu^{j} (1-\mu)^{c-j}\binom{c}{j} \frac{1}{\prob_{\Theta \sim \mathcal{F}}[\Theta + h(j) \geq p]} \Big)\\
        &= \sum_{i,j \in \{0, \ldots,c\}} \mu^{i+j}(1-\mu)^{2c-i-j} \binom{c}{i} \binom{c}{j} \frac{ \prob_{\Theta \sim \mathcal{F}}[\Theta + h(i) \geq p]}{ \prob_{\Theta \sim \mathcal{F}}[\Theta + h(j) \geq p]}. 
    \end{align*}
 \end{proof}

\subsection{CoNF can be arbitrarily large under static pricing (Theorem~\ref{thm:conf_arbitrarily_bad})}\label{appendix_subsec_conf_arb_bad_proof}

\begin{proof}[Proof of \cref{thm:conf_arbitrarily_bad}]
One summand in the right hand side of \cref{lemma: CoNF_ratio_closed_form_general_c} contains the ratio of the purchase probability of all reviews being positive compared to all reviews being negative:
$$\beta(p) \coloneqq \frac{ \prob_{\Theta \sim \mathcal{F}}[\Theta + h(c) \geq p]}{ \prob_{\Theta \sim \mathcal{F}}[\Theta + h(0) \geq p]},$$
which quantifies how much the reviews affect the purchase probability. Since all other terms are non-negative, the \textsc{CoNF} is lower bounded by this summand, i.e., $\chi(p) \geq \mu^c(1-\mu)^c \beta(p)$. 

Since $\mathcal{F}$ is bounded, suppose that its support is $[\underline{\theta}, \overline{\theta}]$. 
When all reviews are negative, selecting a price of $h(0) + \overline{\theta}$ results in a purchase probability of $0$. 
Combined with the continuity of $\mathcal{F}$, this implies that, when $p \to h(0) + \overline{\theta}$, the purchase probability goes to 0.
If, on the other hand, all reviews were positive, then using $h(c) > h(0)$ and that $\mathcal{F}$ is continuous and has positive mass on its support, the purchase probability is positive; i.e., $\lim_{p  \to h(0) + \overline{\theta}} \prob_{\Theta \sim \mathcal{F}}[\Theta + h(c) \geq p] > 0$.
Therefore $\lim_{p \to h(0) + \overline{\theta}} \beta(p) = +\infty$, which implies that $\lim_{p \to h(0) + \overline{\theta}} \chi(p) = +\infty$ since $\chi(p) \geq \mu^c(1-\mu)^c \beta(p)$.

Lastly, for any price $p \in (\underline{\theta} + h(0), \overline{\theta} + h(0))$ it holds
$0 < \prob_{\Theta \sim \mathcal{F}}[\Theta + h(0) \geq p] < \prob_{\Theta \sim \mathcal{F}}[\Theta + h(c) \geq p]$ and thus it is non-degenerate and non-absorbing. Combing this with the fact that $\overline{\theta} > 0$ (by Assumption \ref{assumption:non-negative_mass_idiosyncratic}), for every $M > 0$ there exists $\epsilon(M) > 0$ such that $p = \overline{\theta} + h(0)-\epsilon(M) $ satisfies Assumption \ref{assumption:non_abs_non_degen} and has CoNF $\chi(p) > M$. \end{proof}

\subsection{Monotonicity of CoNF under monotone hazard rate (Proposition~\ref{thm:mhr_conf_increasing_in_price})}\label{appendix_subsec:conf_monotonicity_under_mhr} 

\begin{proof}[Proof of \cref{thm:mhr_conf_increasing_in_price}. ]
    By \cref{lemma: CoNF_ratio_closed_form_general_c}, the CoNF is given by
    \begin{align*}
        \chi(p) &= \sum_{i,j \in \{0,\ldots, c\}} \mu^{i+j}(1-\mu)^{2c-i-j} \binom{c}{i} \binom{c}{j} \frac{\prob_{\Theta \sim \mathcal{F}} [\Theta + h(i) \geq p]}{\prob_{\Theta \sim \mathcal{F}} [\Theta + h(j) \geq p]}\\
        &= \sum_{i=0}^{c} \mu^{2i}(1-\mu)^{2(c-i)} \binom{c}{i}^2 \\
        &\quad+ \sum_{i < j} \mu^{i+j}(1-\mu)^{2c-i-j} \binom{c}{i} \binom{c}{j}\Bigg( \frac{\prob_{\Theta \sim \mathcal{F}} [\Theta + h(i) \geq p]}{\prob_{\Theta \sim \mathcal{F}} [\Theta + h(j) \geq p]} +  \frac{\prob_{\Theta \sim \mathcal{F}} [\Theta + h(j) \geq p]}{\prob_{\Theta \sim \mathcal{F}} [\Theta + h(i) \geq p]} \Bigg).
    \end{align*}
Letting $F$ be the cumulative density function of $\mathcal{F}$ and $\overline{F}(u) \coloneqq 1-F(u)$ be the survival function:
$$\chi(p) = \sum_{i=0}^{c} \mu^{2i}(1-\mu)^{2(c-i)} \binom{c}{i}^2 +  \sum_{i < j} \mu^{i+j}(1-\mu)^{2c-i-j} \binom{c}{i} \binom{c}{j} \Bigg( \frac{\overline{F}(p-h(i))}{\overline{F}(p-h(j))} +  \frac{\overline{F}(p-h(j))}{\overline{F}(p-h(i))} \Bigg).$$
We denote the ratio of the purchase probability with $j$ and $i$ positive reviews by $u_{i,j}(p) \coloneqq \frac{\overline{F}(p-h(j))}{\overline{F}(p-h(i))}$. For $i < j$, this ratio is $u_{i,j}(p) \geq 1$ because $p-h(i) > p-h(j)$ due to the monotonicity of $h$ (Assumption \ref{assumption:h_monotonicity_positive_reviews}). Given that $\mathcal{F}$ is continuous, $\overline{F}(x)$ is differentiable for any $x$ such that $\overline{F}(x) \in (0,1)$. Furthermore for $p \in (\underline{\theta} + h(c), \overline{\theta} + h(0))$, $\overline{F}(p-h(i)) \in (0,1)$ for all $i \in \{0,1, \ldots, c\}$, and thus $u_{i,j}(p)$ is differentiable at $p$.

We now show that $u_{i,j}(p)$ is non-decreasing in $p$. Letting $f$ be the probability density function of $\mathcal{F}$ and taking the derivative of $u_{i,j}(p)$ with respect to $p$:
\begin{align*}
\frac{d}{dp} u_{i,j}(p) &= \frac{f(p-h(i)) \overline{F}(p-h(j)) - f(p-h(j)) \overline{F}(p-h(i))}{\overline{F}(p-h(j))^2} \\
&=\frac{\Big(\frac{f(p-h(i))}{\overline{F}(p-h(i))} - \frac{f(p-h(j))}{\overline{F}(p-h(j))} \Big) \overline{F}(p-h(j)) \overline{F}(p-h(i))}{\overline{F}(p-h(j))^2}.
\end{align*}
Observe that $p-h(i) > p-h(j)$ since $i < j$ and the strict monotonicity of $h$. By the MHR property of $\mathcal{F}$: $\frac{f(p-h(i))}{\overline{F}(p-h(i))} \geq  \frac{f(p-h(j))}{\overline{F}(p-h(j))}$, implying that $\frac{d}{dp} u_{i,j}(p) \geq 0$, and thus $u_{i,j}(p)$ is non-decreasing in $p$. 

To finish the proof of the theorem, we rewrite $\chi(p)$ as a function of $\{u_{i,j}(p)\}_{i <j }$ as
$$\chi(p) = \sum_{i=0}^{c} \mu^{2i}(1-\mu)^{2(c-i)} \binom{c}{i}^2 +  \sum_{i < j} \mu^{i+j}(1-\mu)^{2c-i-j} \binom{c}{i} \binom{c}{j} \Bigg( u_{i,j}(p) +  \frac{1}{u_{i,j}(p)} \Bigg)$$
Note that the function $u + \frac{1}{u}$ is non-decreasing for $u \geq 1$. Since $u_{i,j}(p) \geq 1$ is non-decreasing in $p$, then $u_{i,j}(p) +  \frac{1}{u_{i,j}(p)}$ is monotonically increasing in $p$ for every $i < j$. Thus, $\chi(p)$ is non-decreasing in $p$. 
\end{proof}
\begin{remark}
    The same proof also extends to cases when $\underline{\theta} = -\infty$ and/or $\overline{\theta} = +\infty$. 
\end{remark}

\subsection{Bounding CoNF by the sensitivity of purchases in reviews (Proposition~\ref{prop:CoNF_less_beta})}\label{appendix:when_is_conf_small}

\begin{proof}[Proof of Proposition~\ref{prop:CoNF_less_beta}.]
    The monotonicity of $h$, implies that for all $n \in \{0,1, \ldots, c\}$:
    $$\prob_{\Theta \sim \mathcal{F}}[\Theta + h(c) \geq p] \geq \prob_{\Theta \sim \mathcal{F}}[\Theta + h(n) \geq p] \geq \prob_{\Theta \sim \mathcal{F}}[\Theta + h(0) \geq p].$$
    Taking expectation over the number of positive reviews $N \sim \binomial(c, \mu)$, the first inequality implies
\begin{equation}\label{ineq:rev_random_ineq_prob}
        \expect_{N \sim \binomial(c, \mu)}\big[\prob_{\Theta \sim \mathcal{F}}[\Theta + h(N) \geq p]\big] \leq \prob_{\Theta \sim \mathcal{F}}[\Theta + h(c) \geq p].
    \end{equation}
    Similarly, taking expectation of the the reciprocal of the second inequality (which is well-defined as $p$ satisfies Assumption \ref{assumption:non_abs_non_degen} and is thus non-absorbing) implies that
\begin{equation}\label{ineq:rev_newest_ineq_prob}
        \expect_{N \sim \binomial(c, \mu)}\Big[\frac{1}{\prob_{\Theta \sim \mathcal{F}}[\Theta + h(N) \geq p]}\Big] \leq \frac{1}{\prob_{\Theta \sim \mathcal{F}}[\Theta + h(0) \geq p]}.
    \end{equation}
    Expressing $\chi(\pi)$ as the ratio of the expressions in Propositions \ref{theorem:random_C_revenue} and \ref{theorem:most_recent_C_revenue} and using (\ref{ineq:rev_random_ineq_prob}) and (\ref{ineq:rev_newest_ineq_prob}):
    \begin{align*}
    \chi(p) &= \expect_{N \sim \binomial(c, \mu)}[\prob_{\Theta \sim \mathcal{F}}[\Theta + h(N) \geq p]] \cdot \expect_{N \sim \binomial(c, \mu)}\Big[\frac{1}{\prob_{\Theta \sim \mathcal{F}}[\Theta + h(N) \geq p]}\Big] \\
    &\leq \frac{\prob_{\Theta \sim \mathcal{F}}[\Theta + h(c) \geq p]}{\prob_{\Theta \sim \mathcal{F}}[\Theta + h(0) \geq p]} = \beta(p)
    \end{align*}
    which concludes the proof. 
\end{proof}

\subsection{Phenomenon holds under a generic update rule (Section \ref{subsec:driver_persistence_of_negative_reviews})}\label{app:driver_generic_updating}

Consider an item with product quality $\mu = 1/2$ and customers who read a single review ($c = 1$).
Let $q(r, t)$ be the probability that the customer at round $t$ purchases the item when the newest review is $r \in \{0, 1\}$.
We consider any customer purchase model that satisfies the intuitive condition that the purchase probability given a negative review is strictly smaller than the purchase probability given a positive review. We also require that the purchase probability is always strictly smaller than $1$.

\begin{assumption}\label{assumption:negative_reviews_lower_purchase}
For every round $t \geq 1$, $q(0, t) < q(1, t) < 1$.
\end{assumption}
We assume that the review at time $t = 0$ is randomly initialized and is positive with probability~$1/2$.

\begin{proposition}\label{prop:conf_exists_general_update_rule}
For any round $t \geq 2$, the newest review is negative with probability strictly greater than $1/2$.
\end{proposition}

\begin{proof}
Fix any round $t$. Suppose the newest review $r_t$ came from the previous round $t-1$. Then, the review is negative with probability exactly equal to $1/2$.
Next, suppose the newest review came from round $t-k$, where $k \geq 2$.
Then, the probability that the review is negative is
\begin{align*}
   \Pr(r_t=0\;|\; \text{review $r_t$ came from time $t-k$}) 
   &= \frac{\Pr(r_t=0, \text{review $r_t$ came from time $t-k$})}{\Pr(\text{review $r_t$ came from time $t-k$})} \\
   =&\; \frac{\frac{1}{2} \prod_{\ell=1}^{k-1} (1-q(0, t-\ell))}
   {\frac{1}{2} \prod_{\ell=1}^{k-1} (1-q(0, t-\ell)) + \frac{1}{2} \prod_{\ell=1}^{k-1} (1-q(1, t-\ell))} 
   > \frac{1}{2}.
\end{align*}
The last inequality holds since $(1-q(1, t-\ell)) < (1-q(0, t-\ell))$ for all $\ell \in [1, k-1]$ by Assumption~\ref{assumption:negative_reviews_lower_purchase}.
Therefore, for any $k \geq 2$, if the review came from time $t-k$, then the review is negative with probability strictly greater than $1/2$. 
Since the purchase probability at any round is strictly less than 1 (by  Assumption~\ref{assumption:negative_reviews_lower_purchase}), there is a non-zero probability that the review came from round $t-2$ or before, which implies that the  newest review is negative with probability strictly less than $1/2$.
\end{proof}

\subsection{CoNF exists due to limited attention (Section \ref{subsec:driver_persistence_of_negative_reviews})}\label{appendix_subsec_conf_driver_bounded_rationality}

Suppose that the customer's product quality estimator $h$ is the mean, $h[\Phi] = \expect[\Phi]$ for all distributions $\Phi$. We refer to an instance as tuple $(\mathcal{F},\mu, a,b,p)$ where $\mathcal{F}$ is the customer's idiosyncratic value distribution, $\mu$ is the product quality, $a$ and $b$ are the prior parameters, and $p$ is the price the platform sets. For any limited attention window $c$, we define the CoNF as $\chi_c(p) = \frac{\rev_c(\random,p)}{\rev_c(\newest,p)}$ where  $\rev_c(\sigma,\rho)$ is the revenue under review ordering policy $\sigma$ and pricing policy $\rho$ when customers read $c$ reviews. Proposition \ref{prop:for_any_c_M_instance_1_bad_conf_2_conf_to_one_without_bd_rationality} establishes that the existence of CoNF is indeed due to the limited attention of the customers. In particular, for any limited attention window $\tilde{c} $ and any $M> 0$ there exists an instance so that the $\chi_{\tilde{c} }(p) \geq M$, but $\lim_{c  \to \infty} \chi_{c}(p) = 1$. Proposition \ref{prop:more_broadly_conf_to_one_without_bd_rationality} then establishes that the latter result holds broadly under mild assumptions. As a result, CoNF disappears when customers observe full history.

\begin{proposition}\label{prop:for_any_c_M_instance_1_bad_conf_2_conf_to_one_without_bd_rationality}
    For any $\tilde{c} \geq 1$ and any $M > 0$ there exists an instance $\mathcal{E}_{\tilde{c},M}=(\mathcal{F},a,b, p)$ such that for the instance $\mathcal{E}_{\tilde{c},M}$, it holds that a)  $\chi_{\tilde{c} }(p) \geq M$ and b)
  $\lim_{c \to \infty} \chi_{c}(p) = 1$.
\end{proposition}
\begin{proof}For any $\tilde{c} \geq 1$ and $M > 0$ we define the instance $\mathcal{E}_{\tilde{c},M}$ to have a customer value distribution of $\mathcal{F} = Exponential(\lambda_{\tilde{c}, M})$ where $\lambda_{\tilde{c}, M} = 3 \ln(2^{2 \tilde{c}} M)$, a product quality $\mu = \frac{1}{2}$, prior parameters $a = b= 1$, and a price $p = 1-\frac{1}{2+\tilde{c} }$.

To prove (a), note that, given $\mu = \frac{1}{2}$ we can express $\chi_{\tilde{c}}(p)$ using Lemma \ref{lemma: CoNF_ratio_closed_form_general_c} as
\begin{equation}\label{inequality:conf_lower_bound_specific_instance_in_conf_due_to_bd_rationality_app}
\chi_{\tilde{c}}(p) = \sum_{i,j \in \{0,1, \ldots,\tilde{c}\} } \frac{1}{2^{2 \tilde{c}}} \binom{\tilde{c}}{i}\binom{\tilde{c}}{j} \frac{\prob_{\Theta \sim \mathcal{F}}[\Theta + h(i) \geq p]}{\prob_{\Theta \sim \mathcal{F}}[\Theta + h(j) \geq p]} \geq \frac{1}{2^{2 \tilde{c}}} \frac{\prob_{\Theta \sim \mathcal{F}}[\Theta + h(\tilde{c}) \geq p]}{\prob_{\Theta \sim \mathcal{F}}[\Theta + h(0) \geq p]}.\end{equation}
The inequality holds as one of the terms is $\frac{1}{2^{2\tilde{c}}} \frac{\prob_{\Theta \sim \mathcal{F}}[\Theta + h(\tilde{c}) \geq p]}{\prob_{\Theta \sim \mathcal{F}}[\Theta + h(0) \geq p]}$ and all other terms are non-negative.

Given that $h$ is the mean and $a=b= 1$ then $h(\tilde{c}) = h(\mathrm{Beta}(1+\tilde{c}, 1)) = \frac{1+\tilde{c}}{2+\tilde{c}}$ and $h(0)= h(\mathrm{Beta}(1, 1+\tilde{c}))= \frac{1}{2+\tilde{c}}$. Combining this with the fact that $p = 1-\frac{1}{2+\tilde{c}}$, the purchase probability with all positive reviews equals
\begin{equation}\label{eq:probability_all_positive_reviews}
    \prob_{\Theta \sim \mathcal{F}}\Big[\Theta + \frac{\tilde{c}+1}{\tilde{c}+2} \geq 1-\frac{1}{\tilde{c}+2} \Big] = \prob_{\Theta \sim \mathcal{F}}[\Theta \geq 0]=1
\end{equation}
as the exponential distribution is always non-negative. The purchase probability for all negative reviews equals
\begin{equation}\label{eq:probability_all_negative_reviews}
    \prob_{\Theta \sim \mathcal{F}} \Big[\Theta + \frac{1}{\tilde{c}+1} \geq 1-\frac{1}{\tilde{c}+2} \Big] = \prob_{\Theta \sim \bern(\frac{1}{2^{2\tilde{c}}M})}\Big[\Theta \geq 1- \frac{2}{2+\tilde{c}} \Big] \leq \prob_{\Theta \sim \mathcal{F}}\Big[\Theta \geq \frac{1}{3} \Big] = \exp(-\frac{\lambda_{\tilde{c},M}}{3})
\end{equation}
where in the first equality we used that $\tilde{c} \geq 1$ and in the last equality we used that $\prob_{\Theta \sim \mathcal{F}}[\Theta \geq x] = \exp(-\lambda_{\tilde{c}, M} x)$. Using that $\lambda_{\tilde{c}, M} = 3 \ln(2^{2 \tilde{c}} M)$, \eqref{eq:probability_all_negative_reviews} yields  $ \prob_{\Theta \sim \mathcal{F}} \Big[\Theta + \frac{1}{\tilde{c}+1} \geq 1-\frac{1}{\tilde{c}+2} \Big] \leq \frac{1}{2^{2 \tilde{c}} M}$.
Combining \eqref{eq:probability_all_positive_reviews} and \eqref{eq:probability_all_negative_reviews} with \eqref{inequality:conf_lower_bound_specific_instance_in_conf_due_to_bd_rationality_app} yields 
$\chi_{c}(p) \geq \frac{1}{2^{2\tilde{c}}} 2^{2\tilde{c}}M = M$
as desired.

To prove (b), the formulas for the revenue of $\random$ and $\newest$ given by Propositions \ref{theorem:random_C_revenue} and \ref{theorem:most_recent_C_revenue} and the fact that $\mu = \frac{1}{2}$ and that $h$ is the mean imply that $\chi_{c}(p)$ can be expressed as 
\begin{equation*}
    \frac{\rev_{c}(\random,p)}{\rev_{c}( \newest,p)} = \expect_{N \sim \binomial(c, \frac{1}{2})}\Big[\prob_{\Theta \sim \mathcal{F}}[\Theta + \frac{1+N}{2+c} \geq p] \Big] \expect_{N \sim \binomial(c, \frac{1}{2})} \Bigg[ \frac{1}{\prob_{\Theta \sim \mathcal{F}}[\Theta + \frac{1+N}{2+c} \geq p]} \Bigg].
\end{equation*}
For any number of positive reviews $N \in \{0,1, \ldots, c\}$, the purchase probability equals
\begin{equation}\label{equation:formula_purchase_probability_instance_E_M_tilde_c}
    \prob_{\Theta \sim \mathcal{F}}\Big[\Theta + \frac{1+N}{2+c} \geq p \Big] = \prob_{\Theta \sim \mathcal{F}}\Big[\Theta \geq 1-\frac{1}{2+\tilde{c}}- \frac{1+N}{2+c}  \Big] = \exp(-\lambda_{\tilde{c},M}(1-\frac{1}{2+\tilde{c}}- \frac{1+N}{2+c} ))
\end{equation}

Using \eqref{equation:formula_purchase_probability_instance_E_M_tilde_c} in the expression for $\chi_{c}(p)$, the term $\exp(-\lambda_{\tilde{c},M}(1-\frac{1}{2+\tilde{c}}))$ cancels out yielding
\begin{align}\label{eq:expression_for_chi_p_MGF_binomail}
    \chi_{c}(p) &=\expect_{N \sim \binomial(c, \frac{1}{2})}\Big[\exp(-\lambda_{\tilde{c},M}(1-\frac{1}{2+\tilde{c}}- \frac{1+N}{2+c} )) \Big] \expect_{N \sim \binomial(c, \frac{1}{2})} \Bigg[ \exp(\lambda_{\tilde{c},M}(1-\frac{1}{2+\tilde{c}}- \frac{1+N}{2+c} )) \Bigg] \nonumber\\
    &=\expect_{N \sim \binomial(c, \frac{1}{2})}\Big[\exp(\lambda_{\tilde{c},M}\frac{1+N}{2+c} ) \Big] \expect_{N \sim \binomial(c, \frac{1}{2})} \Bigg[ \exp(-\lambda_{\tilde{c},M} \frac{1+N}{2+c} ) \Bigg].
\end{align}
Let $\mathcal{E}^{\textsc{Conc}} $ be the event that $\{N \in [\frac{1}{2} c-c^{\frac{2}{3}}, \frac{1}{2}c + c^{\frac{2}{3}}]\}$. The Chernoff bound implies that $\prob[\mathcal{E}^{\textsc{Conc}}] \geq 1-2e^{-\frac{c^{\frac{1}{3}}}{3}}$. The law of total expectation applied to the first expectation in \eqref{eq:expression_for_chi_p_MGF_binomail} yields 
\begin{align*}
    \expect_{N \sim \binomial(c, \frac{1}{2})}\Big[\exp(\lambda_{\tilde{c},M}\frac{1+N}{2+c} ) \Big]  &= \underbrace{\expect_{N \sim \binomial(c, \frac{1}{2})}\Big[\exp(\lambda_{\tilde{c},M}\frac{1+N}{2+c} )|\mathcal{E}^{\textsc{Conc}} \Big] \prob[\mathcal{E}^{\textsc{Conc}} ]}_{=(1)} \\
    &+ \underbrace{\expect_{N \sim \binomial(c, \frac{1}{2})}\Big[\exp(\lambda_{\tilde{c},M}\frac{1+N}{2+c} ) |(\mathcal{E}^{\textsc{Conc}} )^{\perp}\Big] \prob[(\mathcal{E}^{\textsc{Conc}})^{\perp}]}_{=(2)}
\end{align*}
Notice that $\exp(\lambda_{\tilde{c},M}(\frac{1+N}{2+c} )) \in [1, \exp(\lambda_{\tilde{c},M})]$ for all $N \in \{0,\ldots, c\}$. Combining this with $\lim_{c \to \infty}\prob[(\mathcal{E}^{\textsc{Conc}})^{\perp}] = 0$ yields that the term $(2)$ goes to $0$ as $ c \to \infty$. Using the definition of $\mathcal{E}^{\textsc{Conc}}$, we can upper and lower bound the term $(1)$ as 
$$\exp \Big(\lambda_{\tilde{c}, M} \frac{1+\frac{1}{2} c -c^{\frac{2}{3}}}{2+c} \Big) \prob[\mathcal{E}^{\textsc{Conc}}] \leq (1) \leq \exp\Big(\lambda_{\tilde{c}, M} \frac{1+\frac{1}{2} c +c^{\frac{2}{3}}}{2+c} \Big) \prob[\mathcal{E}^{\textsc{Conc}}].$$
As the $\lim_{c \to \infty} \prob[\mathcal{E}^{\textsc{Conc}}] = 1$, both right-hand side and left-hand side converge to $\exp\Big(\frac{\lambda_{\tilde{c}, M}}{2} \Big)$. Thus, $\lim_{c \to \infty} \expect_{N \sim \binomial(c, \frac{1}{2})}\Big[\exp(\lambda_{\tilde{c},M}\frac{1+N}{2+c} ) \Big]  = \exp\Big(\frac{\lambda_{\tilde{c}, M}}{2} \Big)$. Similarly, the law of total expectation applied to the second expectation in \eqref{eq:expression_for_chi_p_MGF_binomail} yields 
\begin{align*}
    \expect_{N \sim \binomial(c, \frac{1}{2})}\Big[\exp(-\lambda_{\tilde{c},M}\frac{1+N}{2+c} ) \Big]  &= \underbrace{\expect_{N \sim \binomial(c, \frac{1}{2})}\Big[\exp(-\lambda_{\tilde{c},M}\frac{1+N}{2+c} )|\mathcal{E}^{\textsc{Conc}} \Big] \prob[\mathcal{E}^{\textsc{Conc}} ]}_{=(3)} \\
    &+ \underbrace{\expect_{N \sim \binomial(c, \frac{1}{2})}\Big[\exp(-\lambda_{\tilde{c},M}\frac{1+N}{2+c} ) |(\mathcal{E}^{\textsc{Conc}} )^{\perp}\Big] \prob[(\mathcal{E}^{\textsc{Conc}})^{\perp}]}_{=(4)}
\end{align*}
Notice that $\exp(-\lambda_{\tilde{c},M}(\frac{1+N}{2+c} )) \in [\exp(-\lambda_{\tilde{c},M}),1]$ for all $N \in \{0,\ldots, c\}$. Combining this with $\lim_{c \to \infty}\prob[(\mathcal{E}^{\textsc{Conc}})^{\perp}] = 0$ yields that the term $(4)$ goes to $0$ as $ c \to \infty$. Using the definition of $\mathcal{E}^{\textsc{Conc}}$, we can upper and lower bound the term $(3)$ as 
$$\exp \Big(-\lambda_{\tilde{c}, M} \frac{1+\frac{1}{2} c +c^{\frac{2}{3}}}{2+c} \Big) \prob[\mathcal{E}^{\textsc{Conc}}] \leq (3) \leq \exp\Big(-\lambda_{\tilde{c}, M} \frac{1+\frac{1}{2} c -c^{\frac{2}{3}}}{2+c} \Big) \prob[\mathcal{E}^{\textsc{Conc}}].$$
As the $\lim_{c \to \infty} \prob[\mathcal{E}^{\textsc{Conc}}] = 1$, both right-hand side and left-hand side converge to $\exp\Big(-\frac{\lambda_{\tilde{c}, M}}{2} \Big)$. Thus, $\lim_{c \to \infty} \expect_{N \sim \binomial(c, \frac{1}{2})}\Big[\exp(-\lambda_{\tilde{c},M}\frac{1+N}{2+c} ) \Big]  = \exp\Big(-\frac{\lambda_{\tilde{c}, M}}{2} \Big)$. Thus, $\lim_{c \to \infty} \chi_{c}(p) = 1$.
\end{proof}

\begin{proposition}\label{prop:more_broadly_conf_to_one_without_bd_rationality}
    Fix an instance $(\mathcal{F}, \mu, a, b, p)$ such that $\mathcal{F}$ is continuous, and suppose that $c$ varies. 
    If the purchase probability is lower bounded by $\delta > 0$ for all review states and for all $c$, then $\lim_{c\rightarrow\infty}\chi_{c}(p)=1$.
\end{proposition}

The instance $\mathcal{E}_{\tilde{c},M}$ of Proposition~\ref{prop:for_any_c_M_instance_1_bad_conf_2_conf_to_one_without_bd_rationality} has a lower bound $\delta=\exp(-3\ln(2^{2 \tilde{c} } M)) = \frac{1}{2^{6\tilde{c}}M^{3}}$. This implies that the existence of a lower bound $\delta > 0$ does not imply a bounded CoNF with limited attention. We note that assuming a lower bound on the purchase probability is common in the literature on social learning with reviews \citep{cims17,bs18,imsz19}.

\begin{proof}[Proof of \cref{prop:more_broadly_conf_to_one_without_bd_rationality}]
Let $\mathcal{E} = (\mathcal{F}, \mu, a,b, p)$ be an instance such that the purchase probability is lower bounded by $\delta > 0$ for any limited attention window $c \geq 1$ and number of positive reviews $N \in \{0,1, \ldots, c\}$, i.e., 
\begin{equation}\label{ineq:lower_bound_purch_prob_delta_all_reviews_c}
\prob_{\Theta \sim \mathcal{F}}\Big[\Theta + \frac{N+a}{a+b+c} \geq p \Big] \geq \delta \text{ for all $c \geq 1, N \in \{0, \ldots, c\}$}.
\end{equation}

Using the formulas for the revenue of $\random$ and $\newest$ given by Propositions \ref{theorem:random_C_revenue} and \ref{theorem:most_recent_C_revenue}, the CoNF can be expressed as 
$$\frac{\rev_c(\random,p)}{\rev_c(\newest,p)} = \expect_{N \sim \binomial(c, \mu)}\Big[\prob_{\Theta \sim \mathcal{F}}\big[\Theta + \frac{N +a}{a+b+c} \geq p \big] \Big] \expect_{N \sim \binomial(c, \mu)}\Bigg[\frac{1}{\prob_{\Theta \sim \mathcal{F}}\big[\Theta + \frac{N +a}{a+b+c} \geq p\big]} \Bigg].$$
Let $\mathcal{E}^{\textsc{Conc}}$ be the event that $\{ N \in [\mu c - c^{\frac{2}{3}}, \mu c+ c^{\frac{2}{3}}]\}$. The Chernoff bound implies that $\prob[\mathcal{E}^{\textsc{Conc}}] \geq 1 - 2e^{-\frac{c^{\frac{1}{3}}}{3}}$. The law of total expectation applied to the first expectation yields
\begin{align*}
     \expect_{N \sim \binomial(c, \mu)}\Big[\prob_{\Theta \sim \mathcal{F}}\big[\Theta &+ \frac{N +a}{a+b+c} \geq p \big] \Big] = \underbrace{\expect_{N \sim \binomial(c, \mu)}\Big[\prob_{\Theta \sim \mathcal{F}}\big[\Theta + \frac{N +a}{a+b+c} \geq p \big] | \mathcal{E}^{\textsc{Conc}} \Big] \prob[\mathcal{E}^{\textsc{Conc}}]}_{=(1)}\\
     &+\underbrace{\expect_{N \sim \binomial(c, \mu)}\Big[\prob_{\Theta \sim \mathcal{F}}\big[\Theta + \frac{N +a}{a+b+c} \geq p \big] | (\mathcal{E}^{\textsc{Conc}})^{\perp} \Big] \prob[(\mathcal{E}^{\textsc{Conc}})^{\perp}]}_{=(2)}
\end{align*}
Notice that $\prob_{\Theta \sim \mathcal{F}}\big[\Theta + \frac{N +a}{a+b+c} \geq p \big] \in [0,1]$ for any $N \in \{0, \ldots, c\}$. Combining this with $\prob[(\mathcal{E}^{\textsc{Conc}})^{\perp}] \to_{c \to \infty} 0$, yields that $(2)$ converges to zero as $c \to \infty$. Using the definition of $\mathcal{E}^{\textsc{Conc}}$ we can lower and upper bound the term (1) as 
$$\prob_{\Theta \sim \mathcal{F}}\big[\Theta + \frac{\mu c +c^{\frac{2}{3}}+a}{a+b+c} \geq p \big] \prob[\mathcal{E}^{\textsc{Conc}}] \geq (1) \geq \prob_{\Theta \sim \mathcal{F}}\big[\Theta + \frac{\mu c -c^{\frac{2}{3}}+a}{a+b+c} \geq p \big]\prob[\mathcal{E}^{\textsc{Conc}}].
$$
Given that $\frac{\mu c +c^{\frac{2}{3}}+a}{a+b+c} , \frac{\mu c -c^{\frac{2}{3}}+a}{a+b+c} \to_{c \to \infty} \mu$ and $\mathcal{F}$ is continuous $\prob_{\Theta \sim \mathcal{F}}\big[\Theta + \frac{\mu c +c^{\frac{2}{3}}+a}{a+b+c} \geq p \big] , \prob_{\Theta \sim \mathcal{F}}\big[\Theta + \frac{\mu c -c^{\frac{2}{3}}+a}{a+b+c} \geq p \big] \to_{c \to \infty} \prob_{\Theta \sim \mathcal{F}}[\Theta +\mu \geq p]$. Combining this with $\lim_{c \to \infty}\prob[\mathcal{E}^{\textsc{Conc}}] = 1$ yields that $(1)$ converges to $\prob[\Theta + \mu \geq p]$ as $c \to \infty$. Thus, $\lim_{c \to \infty}  \expect_{N \sim \binomial(c, \mu)}\Big[\prob_{\Theta \sim \mathcal{F}}\big[\Theta + \frac{N +a}{a+b+c} \geq p \big] \Big] = \prob[\Theta + \mu \geq p]$. The law of total expectation applied to the second expectation yields
\begin{align*}
     \expect_{N \sim \binomial(c, \mu)}\Big[&\frac{1}{\prob_{\Theta \sim \mathcal{F}}\big[\Theta + \frac{N +a}{a+b+c} \geq p \big] }\Big] = \underbrace{\expect_{N \sim \binomial(c, \mu)}\Big[\frac{1}{\prob_{\Theta \sim \mathcal{F}}\big[\Theta + \frac{N +a}{a+b+c} \geq p \big]} | \mathcal{E}^{\textsc{Conc}} \Big] \prob[\mathcal{E}^{\textsc{Conc}}]}_{=(3)}\\
     &+\underbrace{\expect_{N \sim \binomial(c, \mu)}\Big[\frac{1}{\prob_{\Theta \sim \mathcal{F}}\big[\Theta + \frac{N +a}{a+b+c} \geq p \big]} | (\mathcal{E}^{\textsc{Conc}})^{\perp} \Big] \prob[(\mathcal{E}^{\textsc{Conc}})^{\perp}]}_{=(4)}
\end{align*}
Notice that $\frac{1}{\prob_{\Theta \sim \mathcal{F}}\big[\Theta + \frac{N +a}{a+b+c} \geq p \big]} \in [0,\frac{1}{\delta}]$ for any $N \in \{0, \ldots, c\}$ due to \eqref{ineq:lower_bound_purch_prob_delta_all_reviews_c}. Combining this with $\prob[(\mathcal{E}^{\textsc{Conc}})^{\perp}] \to_{c \to \infty} 0$, yields that $(4)$ converges to zero as $c \to \infty$. Using the definition of $\mathcal{E}^{\textsc{Conc}}$ we can lower and upper bound the term $(3)$ as 
$$\frac{1}{\prob_{\Theta \sim \mathcal{F}}\big[\Theta + \frac{\mu c -c^{\frac{2}{3}}+a}{a+b+c} \geq p \big]} \prob[\mathcal{E}^{\textsc{Conc}}] \geq (3) \geq \frac{1}{\prob_{\Theta \sim \mathcal{F}}\big[\Theta + \frac{\mu c +c^{\frac{2}{3}}+a}{a+b+c} \geq p \big]}\prob[\mathcal{E}^{\textsc{Conc}}].
$$
Given that $\frac{\mu c +c^{\frac{2}{3}}+a}{a+b+c} , \frac{\mu c -c^{\frac{2}{3}}+a}{a+b+c} \to_{c \to \infty} \mu$ and $\mathcal{F}$ is continuous $\prob_{\Theta \sim \mathcal{F}}\big[\Theta + \frac{\mu c +c^{\frac{2}{3}}+a}{a+b+c} \geq p \big] , \prob_{\Theta \sim \mathcal{F}}\big[\Theta + \frac{\mu c -c^{\frac{2}{3}}+a}{a+b+c} \geq p \big] \to_{c \to \infty} \prob_{\Theta \sim \mathcal{F}}[\Theta +\mu \geq p]$. Combining this with $\lim_{c \to \infty}\prob[\mathcal{E}^{\textsc{Conc}}] = 1$ yields that $(3)$ converges to $\frac{1}{\prob[\Theta + \mu \geq p]}$ as $c \to \infty$. Thus, $\lim_{c \to \infty}  \expect_{N \sim \binomial(c, \mu)}\Big[\prob_{\Theta \sim \mathcal{F}}\big[\Theta + \frac{N +a}{a+b+c} \geq p \big] \Big] = \frac{1}{\prob[\Theta + \mu \geq p]}$, concluding the proof.\end{proof}

\section{Supplementary material for Section~\ref{sec:pricing}}\label{app:pricing}
\subsection{\textsc{CoNF} still exists under optimal dynamic pricing (Proposition~\ref{prop:conf_still_exits_opt_dynamic_pricing}) }\label{appendix_proof_CoNF_example_at_least_alpha}
\begin{proof}[Proof of \cref{prop:conf_still_exits_opt_dynamic_pricing}.]
For any $\alpha<4/3$, we provide an instance with $\chi(\Pi^{\textsc{dynamic}})  = \alpha$. The instance consists of a single review $(c =1 )$, true quality $\mu < \frac{1}{3}$, estimate mappings $h(0) = \mu^2$, $h(1) = 1-\mu^2$, and customer-specific valuation $\mathcal{F} = \mathcal{U}[0,2 \overline{h}]$ where $\overline{h} \coloneqq \expect_{N}[h(N)]$. 

First, for any  $A > 0$ and any price $p$ the revenue of selling to a customer with valuation $V = Y + x$ where $Y \sim \mathcal{U}[0,A]$ is given by $p\cdot  \prob[V \geq p] = \frac{p(A+x-p)}{A}$ if $p \in [x, A+x]$. Thus, the revenue maximizing price is $p^{\star} = \max(\frac{A+x}{2},x) = \frac{\max(x,A) + x}{2}$ yielding an optimal revenue of 
\begin{equation}\label{eq:optimal_rev_uniform_plus_x}
\max_{p} \{p\cdot  \prob[V \geq p]\} = \frac{\max(x,A) + x}{2} \cdot \frac{2A+x-\max(A,x)}{2A} =\frac{(\max(A,x) + x)^2}{4 \max(A,x)}.
\end{equation}
Theorems \ref{theorem_char_dynamic_random} and \ref{thm:most_recent_dynamic_opt_rev} imply that
$\chi(\Pi^{\textsc{dynamic}}) = \frac{\expect_{N}[r^{\star}(\Theta + h(N))]}{r^{\star}(\Theta + \overline{h})}$. Using (\ref{eq:optimal_rev_uniform_plus_x}) with $A = 2\overline{h}$, we obtain
\begin{equation}\label{eq:revenue_uniform_plus_x_cases}
r^{\star}(\Theta + x) = \frac{(\max(2\overline{h},x) + x)^2}{4 \max(2\overline{h},x)} =
\begin{cases}
    \frac{9}{8} \overline{h} & \text{if } x = \overline{h} \\
    h(1)  & \text{if } x = h(1) \\
    \frac{(h(0) + 2\overline{h})^2}{8 \overline{h}}  & \text{if } x = h(0) 
\end{cases}.
\end{equation}
where $\max(2 \overline{h}, h(1)) = h(1)$ 
because $2\mu(1-\mu)(1+2\mu) = 2\mu(1+\mu-2\mu^2) < 2\mu(1+\mu) < \frac{8}{9} < 1-\mu^2$ for $\mu < \frac{1}{3}$ and thus 
$$2\overline{h} = 2(\mu h(1) + (1-\mu)h(0)) = 2\big(\mu(1-\mu^2) + (1-\mu)\mu^2\big) = 2\mu(1-\mu)(1+2\mu) < 1-\mu^2 = h(1).$$

Using (\ref{eq:revenue_uniform_plus_x_cases}) , the CoNF can be expressed as:
    \begin{align*}
        \chi(\Pi^{\textsc{dynamic}}) &= \frac{\expect_{N}[r^{\star}(\Theta + h(N))]}{r^{\star}(\Theta + \overline{h})} = \frac{\mu r^{\star}(\Theta + h(1)) + (1-\mu)r^{\star}(\Theta + h(0))}{r^{\star}(\Theta + \overline{h})}  \\
        &\overset{(1)}{=} \frac{ \mu h(1) + (1-\mu) \frac{(h(0) + 2\overline{h})^2}{8 \overline{h}}}{\frac{9}{8}\overline{h}}= \frac{8}{9} \cdot \frac{\mu  h(1)}{\overline{h}} + \frac{1-\mu}{9} \Big[
        \Big(\frac{h(0)}{\overline{h}}\Big)^2 + 4\frac{h(0)}{\overline{h}} + 4 \Big]\\
        &\overset{(2)}{=} \frac{8}{9} -\frac{8}{9}\cdot \frac{(1-\mu)h(0)}{\overline{h}}  + \frac{1-\mu}{9} \cdot 
        \Big(\frac{h(0)}{\overline{h}}\Big)^2 + \frac{1-\mu}{9} \cdot 4\frac{h(0)}{\overline{h}} + \frac{1-\mu}{9} \cdot 4 \\
        &\overset{(3)}{=} \frac{8}{9}  + \frac{1-\mu}{9} \Big[
        \Big(\frac{h(0)}{\overline{h}}\Big)^2 - 4\frac{h(0)}{\overline{h}} + 4 \Big]= \frac{8}{9}  + \frac{1-\mu}{9} \Big(
        \frac{h(0)}{\overline{h}} - 2 \Big)^2 \\
        &\overset{(4)}{=}  \frac{8}{9}+ \frac{1-\mu}{9}(\frac{\mu}{1+\mu-2\mu^2} -2)^2 
    \end{align*}
    where (1) follows by the expression for the optimal revenue (\ref{eq:revenue_uniform_plus_x_cases}), (2) follows by $\frac{\mu  h(1)}{\overline{h}} = 1-\frac{(1-\mu ) h(0)}{\overline{h}}$ which follows by the definition of $\overline{h}$, (3) follows by $-\frac{8}{9}\cdot \frac{(1-\mu)h(0)}{\overline{h}} + \frac{1-\mu}{9} \cdot 4\frac{h(0)}{\overline{h}} = -\frac{1-\mu}{9} \cdot 4\frac{h(0)}{\overline{h}}$, and (4) follows by $\frac{h(0)}{\overline{h}} = \frac{\mu^2}{\mu(1-\mu)(1+2\mu)} = \frac{\mu}{1+\mu-2\mu^2}$ because $h(0) = \mu^2$ and $ \overline{h} = \mu (1-\mu)(1+2\mu)$. 

    Thus, $\chi(\Pi^{\textsc{dynamic}}) = \frac{8}{9}+ \frac{1-\mu}{9}(\frac{\mu}{1+\mu-2\mu^2} -2)^2 $. The second term is $\frac{1-\mu}{9}(\frac{\mu}{1+\mu-2\mu^2} -2)^2 \to \frac{4}{9}$ as $\mu \to 0$, and thus $\chi(\Pi^{\textsc{dynamic}})\to \frac{8}{9} + \frac{4}{9} = \frac{4}{3}$ as $\mu \to 0$. Therefore, for any $\alpha < 4/3$ there is some sufficiently small $\mu > 0$ such that $\chi(\Pi^{\textsc{dynamic}})  > \alpha$. 
\end{proof}

\subsection{Random is no worse than Newest under dynamic pricing (Remark~\ref{rmk:CoNF_dynamic_at_least_one})}\label{appendix_subsec:dynamic_pricing_CoNF_geq_1_proof}
 \cref{theorem:negative_bias} established that $\chi(p) > 1$ for any price $p$ satisfying Assumption \ref{assumption:non_abs_non_degen}. Here we show that if the platform optimizes over dynamic prices we have $\chi(\Pi^{\textsc{dynamic}}) \geq 1$ i.e. $\random$ has no smaller revenue that $\newest$ under optimal dynamic prices. Recall that $\chi(\Pi^{\textsc{dynamic}}) = \frac{\max_{\rho \in \Pi^{\textsc{dynamic}}}\textsc{Rev}(\sigma^{\textsc{random}},\rho)}{\max_{\rho \in \Pi^{\textsc{dynamic}}}\textsc{Rev}(\sigma^{\textsc{newest}},\rho)}$. We show the following result.

\begin{proposition}\label{lemma:CoNF_dynamic_pricing} For any problem instance, $\chi(\Pi^{\textsc{dynamic}}) \geq 1$. 
\end{proposition}

\begin{proof}
By definition of $\chi(\Pi^{\textsc{dynamic}})$, it is sufficient to show that \begin{equation}\label{eq:conf_dynamic_bound}
\max_{\rho \in \Pi^{\textsc{dynamic}}}\textsc{Rev}(\sigma^{\textsc{newest}},\rho) \leq \max_{\rho \in \Pi^{\textsc{dynamic}}}\textsc{Rev}(\sigma^{\textsc{random}},\rho).
\end{equation}
Letting $\overline{h} = \expect_{N \sim \binomial(c, \mu)}[h(N)]$ and 
$p^{\star} \in \argmax_{p \in \mathbb{R}}p  \prob_{\Theta \sim \mathcal{F}}[\Theta + \overline{h} \geq p]$, Theorem \ref{theorem:newest_first_dynamic_optimal_policy} yields that the pricing policy given by $\rho^{\textsc{newest}}(\bm{z}) = h(N_{\bm{z}}) + p^{\star}-\overline{h}$ is optimal under $\sigma^{\textsc{newest}}$ and is thus the maximizer of the left-hand side of (\ref{eq:conf_dynamic_bound}). To prove (\ref{eq:conf_dynamic_bound}) it is sufficient to show that offering prices $\rho^{\textsc{newest}}(\bm{z})$ under $\sigma^{\textsc{random}}$ yields the same revenue as offering prices $\rho^{\textsc{newest}}(\bm{z})$ under $\sigma^{\textsc{newest}}$ i.e. $\textsc{Rev}(\sigma^{\textsc{random}},\rho^{\textsc{newest}}) = \textsc{Rev}(\sigma^{\textsc{newest}},\rho^{\textsc{newest}})$. 

As $\sigma^{\textsc{random}}$ shows $c$ i.i.d. $\bern(\mu)$ reviews, the revenue of any pricing policy $\rho$ is given by \begin{equation}\label{eq:revenue_random_any_pricing_policy}
\rev(\random, \rho) = \expect_{Z_1, \ldots, Z_c \sim_{i.i.d} \bern(\mu)}\Big[\rho(Z_1, \ldots, Z_c) \prob_{\Theta \sim \mathcal{F}}[\Theta + h(\sum_{i=1}^c Z_i) \geq  \rho(Z_1, \ldots, Z_c)]\Big].
\end{equation}
For any reviews $(Z_1, \ldots, Z_c)$ the purchase probability under $\rho^{\textsc{newest}}(Z_1, \ldots, Z_c)$ equals
\begin{equation}\label{eq:expected_prob_purchase_random_newest_price}
\prob_{\Theta \sim \mathcal{F}}\Big[\Theta + h(\sum_{i=1}^c Z_i) \geq  h(\sum_{i=1}^c Z_i) + p^{\star}-\overline{h}\Big] = \prob_{\Theta \sim \mathcal{F}}[\Theta +\overline{h}\geq   p^{\star}] 
\end{equation}
and is independent of $(Z_1, \ldots, Z_c)$. As $Z_1, \ldots, Z_c \sim_{i.i.d.} \bern(\mu)$, the number of positive review ratings $\sum_{i=1}^c Z_i$ is distributed as $N \sim \binomial(c,\mu)$; the expected price of $\rho^{\textsc{newest}}(Z_1, \ldots, Z_c)$ is thus
\begin{equation}\label{eq:expected_price_random_newest_price}
\expect_{Z_1, \ldots, Z_c \sim_{i.i.d} \bern(\mu)}\Big[\rho^{\textsc{newest}}(Z_1, \ldots, Z_c)\Big] = \expect_{N \sim \binomial(c, \mu)}[h(N) + p^{\star}-\overline{h}] =p^{\star}
\end{equation}
where the last equality uses the definition of $\overline{h} = \expect_{N \sim \binomial(c, \mu)}[h(N)]$. Combining (\ref{eq:revenue_random_any_pricing_policy}), (\ref{eq:expected_prob_purchase_random_newest_price}),and (\ref{eq:expected_price_random_newest_price}), we obtain $\rev(\random, \rho) = p^{\star}\prob_{\Theta \sim \mathcal{F}}[\Theta +\overline{h}\geq p^{\star}] = r^{\star}(\Theta + \overline{h})$. The last expression is exactly equal to $\max_{\rho \in \Pi^{\textsc{dynamic}}}\textsc{Rev}(\sigma^{\textsc{newest}},\rho)$ by \cref{thm:most_recent_dynamic_opt_rev} which concludes the proof. 
\end{proof}

\subsection{Stronger CoNF upper bound under estimator upper bound (Remark~\ref{rmk_result_imporvement_CoNF_dynamic_pricing})}\label{appendix:result_imporvement_CoNF_dynamic_pricing}

\begin{proposition}
    For any instance with $h(n) \leq u, \forall n \in \{0,1, \ldots, c\}$: $\chi(\Pi^{\textsc{dynamic}}) \leq \frac{2 \prob_{\Theta \sim \mathcal{F}}[\Theta \geq -u]}{\prob_{\Theta \sim \mathcal{F}}[\Theta \geq 0]}$.
\end{proposition}

\begin{proof}
Similar to the proof of \cref{theorem:dynamic_pricing_CoNF_bound}, we define 
$\Tilde{p}_n = \overline{h} + \max(p^{\star}(\Theta + h(n))-h(n),0)$ and let $p(n) = p^{\star}(\Theta + h(n))$ for convenience.  We refine the analysis of \cref{lem:demand_ratio}, to show that for any $n \in \{0,1, \ldots, c\}$, $\textsc{Demand Ratio}(\Tilde{p}_n,n) \leq \frac{\prob_{\Theta \sim \mathcal{F}}[\Theta \geq -u]}{\prob_{\Theta \sim \mathcal{F}}[\Theta \geq 0]}$. We consider two cases:
    \begin{itemize}
        \item If $p(n) \geq h(n)$, then  $\max(p(n)-h(n),0) = p(n)-h(n)$ and thus $$ \textsc{Demand Ratio}(\Tilde{p}_n,n)=\frac{\prob_{\Theta \sim \mathcal{F}}\big[\Theta \geq p(n)-h(n) \big]}{ \prob_{\Theta \sim \mathcal{F}}\big[\Theta \geq \max(p(n)-h(n),0)\big] } = 1 \leq \frac{\prob_{\Theta \sim \mathcal{F}}[\Theta \geq -u]}{\prob_{\Theta \sim \mathcal{F}}[\Theta \geq 0]}.$$
    since $\prob_{\Theta \sim \mathcal{F}}[\Theta \geq -u] \geq \prob_{\Theta \sim \mathcal{F}}[\Theta \geq 0]$ as $u > 0$. If $p(n) < h(n)$, then 
    \item If $p(n) < h(n)$, then $\max(p(n)-h(n),0) = 0$
    \begin{align*}
     \textsc{Demand Ratio}(\Tilde{p}_n,n) &= \frac{\prob_{\Theta \sim \mathcal{F}}[\Theta \geq p(n)-h(n) ]}{ \prob_{\Theta \sim \mathcal{F}}[\Theta \geq \max(p(n)-h(n),0)] } \\
     &=\frac{\prob_{\Theta \sim \mathcal{F}}[\Theta \geq p(n)-h(n) ]}{ \prob_{\Theta \sim \mathcal{F}}[\Theta \geq 0] }  \leq  \frac{\prob_{\Theta \sim \mathcal{F}}[\Theta \geq -h(n)]}{\prob_{\Theta \sim \mathcal{F}}[\Theta \geq 0]} \leq \frac{\prob_{\Theta \sim \mathcal{F}}[\Theta \geq -a]}{\prob_{\Theta \sim \mathcal{F}}[\Theta \geq 0]}.
    \end{align*}
The first inequality uses $\prob_{\Theta \sim \mathcal{F}}[\Theta \geq p(n)-h(n) ] \leq \prob_{\Theta \sim \mathcal{F}}[\Theta \geq -h(n) ]$ as the optimal price $p(n) \geq 0$. The second inequality uses $\prob_{\Theta \sim \mathcal{F}}[\Theta \geq -h(n) ] \leq \prob_{\Theta \sim \mathcal{F}}[\Theta \geq -u ]$ as $h(n) \leq u$.
    \end{itemize}
By Lemma \ref{lem:price_ratio}, the expected price ratio is upper bounded as $\expect\Big[\textsc{Price Ratio}(\Tilde{p}_{N},N)\Big] \leq 2$. Combining this with the aforementioned bound on the demand ratio the proof follows. 
\end{proof}

\subsection{Dynamic pricing revenue of Newest First (Lemma~\ref{lemma:revenue_recent_dynamic_formula_lemma_improving_review_offsetting_main_body})}\label{appendix:dynamic_pricing_revenue_formula}
As an analogue of \cref{lemma:stationary_state_distribiton_newest_first}, let $\bm{Z}_t = (Z_{t,1}, \ldots, Z_{t,c}) \in \{0,1\}^c$ denote the process of the newest $c$ reviews. We note that $\bm{Z}_t$ is a time-homogenous Markov chain on a finite state space $\{0,1\}^c$. 

If $\bm{Z}_t$ is at state $\bm{z}=(z_1, \ldots, z_c)$ it stays at that state if there is no purchase (with probability $1-\prob_{\Theta \sim \mathcal{F}}[\Theta +h(N_{\bm{z}}) \geq \rho(\bm{z})]$). If there is a purchase with a positive review, it transitions to $(1, z_1, \ldots, z_{c-1})$ (with probability $\mu\prob_{\Theta \sim \mathcal{F}}[\Theta +h(N_{\bm{z}}) \geq \rho(\bm{z})]$).  If there is a purchase with a negative review, it transitions to $(0, z_1, \ldots, z_{c-1})$ (with probability $(1-\mu)\prob_{\Theta \sim \mathcal{F}}[\Theta +h(N_{\bm{z}}) \geq \rho(\bm{z})]$). If $\mu \in (0,1)$ and $\rho$ induces positive purchase probabilities in all states, $\bm{Z}_t$ is a single-recurrence-class Markov chain with no transient states. Thus, $\bm{Z}_t$ admits a unique stationary distribution characterized in the following lemma. Recall that for a state of $c$ reviews $\bm{z} \in \{0,1\}^c$, $N_{\bm{z}} = \sum_{i=1}^c z_i$ denotes the number of positive review ratings. 

\begin{lemma}\label{lemma_dynamic_stationary_distribution}
The stationary distribution of $\bm{Z}_t \in \{0,1\}^c$ under any dynamic pricing policy $\rho$ with positive purchase probabilities in all states is 
$$\pi_{\bm{z}} = \kappa \cdot \frac{\mu^{N_{\bm{z}}}(1-\mu)^{c-N_{\bm{z}}}}{\prob_{\Theta \sim \mathcal{F}}\Big[\Theta + h(N_{\bm{z}}) \geq \rho(\bm{z}) \Big]}  \text{ for } \bm{z} \in \{0,1\}^c, $$
where $\kappa = 1/\expect_{Y_1, \ldots, Y_c \sim_{i.i.d.} \bern(\mu)}\Bigg[\frac{1}{\prob_{\Theta \sim \mathcal{F}} \big[\Theta + h(\sum_{i=1}^c Y_i) \geq \rho(Y_1, \ldots, Y_c) \big]}\Bigg]$ is a normalizing constant.
\end{lemma}

\begin{proof}[Proof of \cref{lemma_dynamic_stationary_distribution}.]
Similar to the proof of \cref{lemma:stationary_state_distribiton_newest_first}, we invoke 
 \cref{lemma: general_theorem_markov_chains_stationary}. In the language of \cref{lemma: general_theorem_markov_chains_stationary}, $\bm{Z}_t$ corresponds to $\mathcal{M}_f$, the state space $\mathcal{S}$ to $\{0,1\}^c$, and $f$ is a function that expresses the purchase probability at a given state, i.e., $f(z_1, \ldots, z_c) = \prob_{\Theta \sim \mathcal{F}}\Big[\Theta + h(N_{\bm{z}}) \geq \rho(\bm{z})\Big]$. Note that with probability $1-f(z_1, \ldots, z_c)$, $\bm{Z}_t$ remains at the same state (as there is no purchase). 

To apply \cref{lemma: general_theorem_markov_chains_stationary}, we need to show that whenever there is  purchase, $\bm{Z}_t$ transitions according to a Markov chain with stationary distribution $\mu^{N_{\bm{z}}} (1-\mu)^{c-N_{\bm{z}}}$. Consider the Markov chain $\mathcal{M}$ which always replaces the $c$-th last review with a new $\bern(\mu)$ review. This process has stationary distribution equal to the above numerator and $\bm{Z}_t$ transitions according to $\mathcal{M}$ upon a purchase, i.e., with probability $f(z_1, \ldots, z_c)$. As a result, by \cref{lemma: general_theorem_markov_chains_stationary}, $\pi$ is a stationary distribution for $\bm{Z}_t$.  As $\bm{Z}_t$ is irreducible and aperiodic, this is the unique stationary distribution. 
\end{proof}

\begin{proof}[Proof of \cref{lemma:revenue_recent_dynamic_formula_lemma_improving_review_offsetting_main_body}.]
    By Eq.~\eqref{equation:reveneue_def} and the Ergodic theorem, we can express the revenue as 
    \begin{align*}
        \textsc{Rev}(\sigma^{\textsc{newest}}, \rho) &=\liminf_{T \to \infty} \frac{\expect\left[\sum_{t=1}^{T} \rho(Z_{t,1},  \ldots,Z_{t,c}) \prob_{\Theta \sim \mathcal{F}}\left[\Theta + h(\sum_{i=1}^c Z_{t,i}) \geq \rho(Z_{t,1},  \ldots,Z_{t,c})\right]\right]}{T} \\
        &=\sum_{(z_1, \ldots, z_c) \in \{0,1\}^c} \pi_{(z_1, \ldots, z_c)} \rho(z_1, \ldots, z_c) \prob_{\Theta \sim \mathcal{F}} \big[\Theta + h(N_{\bm{z}}) \geq \rho(z_1, \ldots, z_c) \big]  \\
        &= \kappa \cdot \sum_{(z_1, \ldots, z_c) \in \{0,1\}^c} \mu^{\sum_{i=1}^c z_i}(1-\mu)^{c-\sum_{i=1}^c z_i} \rho(z_1, \ldots, z_c)\\
        &= \kappa \cdot \expect_{Y_1, \ldots, Y_c \sim_{i.i.d.} \bern(\mu)}[\rho(Y_1, \ldots, Y_c)]\\
        &= \frac{\expect_{Y_1, \ldots, Y_c \sim_{i.i.d.} \bern(\mu)}[\rho(Y_1, \ldots, Y_c)]}{\expect_{Y_1, \ldots, Y_c \sim_{i.i.d.} \bern(\mu)}\Big[\frac{1}{\prob_{\Theta \sim \mathcal{F}} \big[\Theta + h(\sum_{i=1}^c Y_i) \geq \rho(Y_1, \ldots, Y_c) \big]}\Big]}.
    \end{align*}
The third equality applies Lemma~\ref{lemma_dynamic_stationary_distribution} and cancels 
the term $\prob_{\Theta \sim \mathcal{F}} \big[\Theta + h(N_{\bm{z}}) \geq \rho(z_1, \ldots, z_c) \big]$.
\end{proof}

\subsection{Ratio of averages is bounded by maximum ratio (Proof of Theorem~\ref{theorem:newest_first_dynamic_optimal_policy})}\label{appendix_proof_lemma_ratio_average_less_max}

\begin{lemma}\label{lemma_ratio_average_less_max_main_body}
    Let $\{\tilde{\alpha}_{i},\tilde{A}_i,\tilde{B}_i\}_{i \in \mathcal{S}}$ be such that $\tilde{\alpha}_i,\tilde{B}_i >0 $ for all $i$. Then 
    $\frac{\sum_{i\in \mathcal{S}} \tilde{\alpha}_i \tilde{A}_i}{\sum_{i \in \mathcal{S}} \tilde{\alpha}_i \tilde{B}_i } \leq \max\limits_{\substack{i \in \mathcal{S}}}  \frac{\tilde{A}_i}{\tilde{B}_i}$. Equality is achieved if and only if $\frac{\tilde{A}_i}{\tilde{B}_i} = \frac{\tilde{A}_j}{\tilde{B}_j}$ for all $i, j \in \mathcal{S}$.
\end{lemma}

\begin{proof}[Proof of \cref{lemma_ratio_average_less_max_main_body}.]
We show the inequality by contradiction and assume that $\frac{\sum_{i\in \mathcal{S}} \tilde{\alpha}_i \tilde{A}_i}{\sum_{i\in \mathcal{S}} \tilde{\alpha}_i \tilde{B}_i } >  \frac{\tilde{A}_j}{\tilde{B}_j}$
for all $j \in \mathcal{S}$. Given that the denominators are positive, this implies that for any $j \in \mathcal{S}$
\begin{equation*}
\big(\sum_{i\in \mathcal{S}} \tilde{\alpha}_i \tilde{A}_i \big) \big(\tilde{\alpha}_j \tilde{B}_j \big) > \big(\tilde{\alpha}_j \tilde{A}_j \big) \big(\sum_{i\in \mathcal{S}} \tilde{\alpha}_i \tilde{B}_i \big).
\end{equation*}
 Summing over $j \in \mathcal{S}$ we obtain the following which is a contradiction:
$$\big(\sum_{i\in \mathcal{S}} \tilde{\alpha}_i \tilde{A}_i \big) \big(\sum_{j\in \mathcal{S}} \tilde{\alpha}_j \tilde{B}_j \big) > \big(\sum_{j\in \mathcal{S}} \tilde{\alpha}_j \tilde{A}_j \big) \big(\sum_{i\in \mathcal{S}} \tilde{\alpha}_i \tilde{B}_i \big)$$
As a result $\frac{\sum_{i\in \mathcal{S}} \tilde{\alpha}_i \tilde{A}_i}{\sum_{i\in \mathcal{S}} \tilde{\alpha}_i \tilde{B}_i } \leq \max_{i\in \mathcal{S}} \frac{\tilde{A}_i}{\tilde{B}_i}$. With respect to equality, let $j \in \argmax_{k} \frac{\tilde{A}_k}{\tilde{B}_k}$. Thus, 
\begin{equation}\label{eq:max_equality}
\frac{\sum_{i\in \mathcal{S}} \tilde{\alpha}_i \tilde{A}_i}{\sum_{i\in \mathcal{S}} \tilde{\alpha}_i \tilde{B}_i } = \frac{\tilde{A}_j}{\tilde{B}_j}. 
\end{equation}
Multiplying by the denominators and rearranging the above can be rewritten as
\begin{equation}\label{eq:max_eq_rewritten}
\sum_{i\in \mathcal{S}} \tilde{\alpha}_i \tilde{B}_j \tilde{B}_i \big(\frac{\tilde{A}_i}{\tilde{B}_i} - \frac{\tilde{A}_j}{\tilde{B}_j} \big) = 0.
\end{equation}
Since $\tilde{\alpha}_i > 0$, $\tilde{B}_i > 0$ for all $i \in \mathcal{S}$, and $\frac{\tilde{A}_i}{\tilde{B}_i} \leq \frac{\tilde{A}_j}{\tilde{B}_j}$ for all $i \in \mathcal{S}$ , \eqref{eq:max_eq_rewritten} holds only if $\frac{\tilde{A}_i}{\tilde{B}_i} = \frac{\tilde{A}_j}{\tilde{B}_j}$ for all $i \in \mathcal{S}$. For the ``if" direction, suppose that $\frac{\tilde{A}_i}{\tilde{B}_i} = \frac{\tilde{A}_j}{\tilde{B}_j}$ for all $i, j$. In particular this holds, when $j \in \argmax_{k} \frac{\tilde{A}_k}{\tilde{B}_k}$ is a maximizing index and $i \in \mathcal{S}$ is an arbitrary index. This implies that  \eqref{eq:max_eq_rewritten} holds for any $j \in \argmax_{k} \frac{\tilde{A}_k}{\tilde{B}_k}$ and therefore so does \eqref{eq:max_equality}, which concludes the proof.
\end{proof}

\subsection{Characterization of optimal revenue under Newest First (Corollary~\ref{thm:most_recent_dynamic_opt_rev})}\label{appendix:characterization_optimal_dynamic_pricing_rev_newest}
\begin{proof}[Proof of \cref{thm:most_recent_dynamic_opt_rev}.]
Letting $\overline{h} = \expect_{N \sim \binomial(c, \mu)}[h(N)]$, \cref{theorem:newest_first_dynamic_optimal_policy} establishes that the optimal dynamic pricing policy is review-offsetting with offset $p^{\star}-\overline{h}$. Using the expression for the revenue of review-offsetting policies with offset $a^{\star} = p^{\star} -\overline{h}$ given by \eqref{eq:revenue_review_offsetting}:
$$\max_{\rho \in \Pi^{\textsc{dynamic}}} \rev(\newest, \rho) = (a^{\star} + \overline{h}) \prob_{\Theta \sim \mathcal{F}}[\Theta + \overline{h} \geq a^{\star}+ \overline{h}] = p^{\star} \prob_{\Theta \sim \mathcal{F}}[\Theta + \overline{h} \geq p^{\star}] = r^{\star}(\Theta + \overline{h}). 
$$
\end{proof}

\subsection{All optimal dynamic policies under Newest First (Remark~\ref{rmk_opt_dynamic_generalization})}\label{appendix_generalization_newest_first_all_optimal_dynamic_pricing_policies}

\begin{proposition}\label{theorem:all_optimal_dynamic_pricing_policies_newest}
    Suppose that the platform uses $\sigma^{\textsc{newest}}$ as the review ordering policy. Let $p_{\bm{z}} \in \argmax_{p} p \prob_{\Theta \sim \mathcal{F}}\big[\Theta +\expect_{N \sim \binomial(c, \mu)}[h(N)] \geq p\big]$ be any revenue-maximizing price for $\bm{z} \in \{0,1\}^c$. A dynamic pricing policy $\rho^{\textsc{newest}}$ is optimal if and only if it has the form
    $$\rho^{\textsc{newest}}(\bm{z}) = h(N_{\bm{z}}) + p_{\bm{z}} - \expect_{N \sim \binomial(c, \mu)}[h(N)].$$
\end{proposition}

\noindent For an arbitrary dynamic pricing policy $\rho$, letting $\overline{h} = \expect_{N \sim \binomial(c, \mu)}[h(N)]$, the optimality condition of the theorem can be rewritten as\footnote{If the customer-specific distribution $\mathcal{F}$ is strictly regular, there is a unique optimal dynamic pricing policy.}
\begin{equation}\label{eq:optimality_condition}
\rho(\bm{z}) -h(N_{\bm{z}}) + \overline{h} \in \argmax_{p} p \prob_{\Theta \sim \mathcal{F}}\big[\Theta +\overline{h} \geq p\big] \quad \text{for all $\bm{z} \in \{0,1\}^c$}.
\end{equation}  

For any policy $\rho$ and any state $\bm{z} \in \{0,1\}^c$, we define a policy $\tilde{\rho}_{\bm{z}}$ to be the review-offsetting policy with offset $a_{\bm{z}} = \rho(\bm{z}) - h(N_{\bm{z}})$. The next lemma establishes that the revenue of $\rho$ is upper-bounded by the revenue of one of the policies $\Tilde{\rho_{\bm{z}}}$ and characterizes that equality is achieved if and only if each of $\Tilde{\rho}_{\bm{z}}$ have the same revenue. 

\begin{lemma}\label{lemma:review_offsetting_for_optimality} 
    The revenue of $\rho$ is at most the highest revenue of $\Tilde{\rho}_{\bm{z}}$ over $\bm{z} \in \{0,1\}^c$, i.e.,
    $$\rev(\newest,\rho) \leq \max_{\bm{z} \in \{0,1\}^c} \rev( \newest,\Tilde{\rho}_{\bm{z}}).$$
    Equality holds if and only if $\rev( \newest,\Tilde{\rho}_{\bm{z}}) = \rev( \newest,\Tilde{\rho}_{\bm{z}'})$ for all $\bm{z}, \bm{z}' \in \{0,1\}^c$. 
\end{lemma}

To characterize all optimal dynamic pricing policies, the next lemma shows that
$\Tilde{\rho}_{\bm{z}}$ is optimal if and only if $\rho(\bm{z})-h(N_{\bm{z}}) + \overline{h}$ is a revenue maximizing price when selling to a customer with valuation $\Theta + \overline{h}$. 

\begin{lemma}\label{lemma: optimality_condition_rho_tilda}The policy $\Tilde{\rho}_{\bm{z}}$ is optimal if and only if $\rho(\bm{z})-h(N_{\bm{z}}) + \overline{h}$ is a revenue maximizing price when selling to a customer with valuation $\Theta + \overline{h}$ i.e. $\rho(\bm{z})-h(N_{\bm{z}}) + \overline{h} \in \argmax_{p} p \prob_{\Theta \sim \mathcal{F}}[\Theta + \overline{h} \geq p]$. 
\end{lemma}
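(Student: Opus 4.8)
The plan is to derive the claim as an almost immediate consequence of Lemma~\ref{lemma:review_offsetting_to_single_expected_value} together with the revenue characterization in Theorem~\ref{thm:most_recent_dynamic_opt_rev}. Recall that $\Tilde{\rho}_{\bm{z}}$ is, by construction, the review-offsetting policy with offset $a_{\bm{z}} = \rho(\bm{z}) - h(N_{\bm{z}})$. Writing $\overline{h} = \expect_{N \sim \binomial(c,\mu)}[h(N)]$, the first step is simply to instantiate Lemma~\ref{lemma:review_offsetting_to_single_expected_value} at this offset, which gives
\begin{equation*}
\rev(\sigma^{\textsc{newest}}, \Tilde{\rho}_{\bm{z}}) = \big(\rho(\bm{z}) - h(N_{\bm{z}}) + \overline{h}\big)\,\prob_{\Theta \sim \mathcal{F}}\big[\Theta + \overline{h} \geq \rho(\bm{z}) - h(N_{\bm{z}}) + \overline{h}\big].
\end{equation*}

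The second step is to reinterpret the right-hand side. Setting $p_{\bm{z}} \coloneqq \rho(\bm{z}) - h(N_{\bm{z}}) + \overline{h}$, the expression above is exactly $p_{\bm{z}}\,\prob_{\Theta \sim \mathcal{F}}[\Theta + \overline{h} \geq p_{\bm{z}}]$, i.e., the revenue obtained by charging the single price $p_{\bm{z}}$ to a customer whose valuation is $\Theta + \overline{h}$. Thus $\rev(\sigma^{\textsc{newest}}, \Tilde{\rho}_{\bm{z}})$ equals the single-customer revenue at price $p_{\bm{z}}$, and this revenue is at most $r^{\star}(\Theta + \overline{h}) = \max_{p} p\,\prob_{\Theta \sim \mathcal{F}}[\Theta + \overline{h} \geq p]$ by definition of $r^\star$, with equality precisely when $p_{\bm{z}} \in \argmax_{p} p\,\prob_{\Theta \sim \mathcal{F}}[\Theta + \overline{h} \geq p]$.

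The final step links this to optimality among all dynamic policies. By Theorem~\ref{thm:most_recent_dynamic_opt_rev}, $\max_{\rho \in \Pi^{\textsc{dynamic}}} \rev(\sigma^{\textsc{newest}}, \rho) = r^{\star}(\Theta + \overline{h})$, so $\Tilde{\rho}_{\bm{z}}$ is optimal if and only if $\rev(\sigma^{\textsc{newest}}, \Tilde{\rho}_{\bm{z}}) = r^{\star}(\Theta + \overline{h})$. Combining this with the equality-condition from the previous step yields that $\Tilde{\rho}_{\bm{z}}$ is optimal exactly when $\rho(\bm{z}) - h(N_{\bm{z}}) + \overline{h} \in \argmax_{p} p\,\prob_{\Theta \sim \mathcal{F}}[\Theta + \overline{h} \geq p]$, which is the stated biconditional.

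I do not anticipate a genuine obstacle here: the argument is a sequence of substitutions rather than a new inequality, and the only thing to be careful about is the bookkeeping of the offset, namely that the offset $\rho(\bm{z}) - h(N_{\bm{z}})$ shifts to a price $p_{\bm{z}} = \rho(\bm{z}) - h(N_{\bm{z}}) + \overline{h}$ when passing to the single-customer valuation $\Theta + \overline{h}$. The one conceptual point worth flagging explicitly is that ``optimal'' means attaining the supremum over \emph{all} of $\Pi^{\textsc{dynamic}}$, and it is Theorem~\ref{thm:most_recent_dynamic_opt_rev} that pins this supremum down to $r^{\star}(\Theta + \overline{h})$, allowing the local, state-by-state optimality condition to be read off directly.
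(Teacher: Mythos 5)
Your proof is correct and follows essentially the same route as the paper's: instantiate Lemma~\ref{lemma:review_offsetting_to_single_expected_value} at offset $\rho(\bm{z})-h(N_{\bm{z}})$, read the resulting revenue as the single-customer revenue at price $p_{\bm{z}}=\rho(\bm{z})-h(N_{\bm{z}})+\overline{h}$, and conclude via the $\argmax$ condition. The only difference is that you explicitly cite Theorem~\ref{thm:most_recent_dynamic_opt_rev} to identify the optimal dynamic-pricing value with $r^{\star}(\Theta+\overline{h})$, a step the paper leaves implicit; this is legitimate (no circularity, since that theorem does not rely on this lemma) and arguably makes the argument cleaner.
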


\begin{proof}[Proof of \cref{theorem:all_optimal_dynamic_pricing_policies_newest}.]
    For an optimal policy $\rho$, \cref{lemma:review_offsetting_for_optimality} implies that $$\rev(\newest,\rho) \leq \max_{\bm{z} \in \{0,1\}^c} \rev(\newest,\Tilde{\rho}_{\bm{z}})$$ where equality holds if and only if $\rev( \newest,\Tilde{\rho}_{\bm{z}}) = \rev(\newest,\Tilde{\rho}_{\bm{z}'})$ for all $\bm{z}, \bm{z}' \in \{0,1\}^c$. The optimality of $\rho$ thus implies that $\Tilde{\rho}_{\bm{z}}$ must be optimal for all $\bm{z} \in \{0,1\}^c$. By \cref{lemma: optimality_condition_rho_tilda}, $\rho(\bm{z})-h(N_{\bm{z}}) + \overline{h} \in \argmax_{p} p \prob_{\Theta \sim \mathcal{F}}[\Theta + \overline{h} \geq p]$ for all $\bm{z} \in \{0,1\}^c$ i.e. (\ref{eq:optimality_condition}) holds. 

    Conversely, suppose that $\rho$ satisfies (\ref{eq:optimality_condition}). \cref{lemma: optimality_condition_rho_tilda} implies that each of $\Tilde{\rho}_{\bm{z}}$ is optimal and thus yields the same revenue, i.e., 
    $\rev(\newest, \Tilde{\rho}_{\bm{z}}) = r^{\star}(\Theta + \overline{h})$
    for all $\bm{z}$. By the equality condition of \cref{lemma:review_offsetting_for_optimality}, $\rev(\newest,\rho) = \max_{\bm{z} \in \{0,1\}^c} \rev( \newest,\Tilde{\rho}_{\bm{z}}) = r^{\star}(\Theta + \overline{h})$ and thus $\rho$ is optimal. 
\end{proof}

To prove \cref{lemma:review_offsetting_for_optimality} and \cref{lemma: optimality_condition_rho_tilda}, we show a simplified expression for the revenue of any review offsetting policy.

\begin{lemma}\label{lemma:review_offsetting_to_single_expected_value}
Let $\overline{h} = \expect_{N \sim \binomial(c, \mu)}[h(N)]$. For a review-offsetting policy $\tilde{\rho}$
with offset $a \in \mathbb{R}$:
$$\rev(\sigma^{\textsc{newest}},\Tilde{\rho}) =\big(a + \overline{h}\big) \prob_{\Theta \sim \mathcal{F}}\big[\Theta + \overline{h} \geq a + \overline{h}\big].$$
\end{lemma}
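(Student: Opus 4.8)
The plan is to reduce the steady-state revenue to a stationary expectation, exploiting the fact that a review-offsetting policy \emph{flattens} the purchase probability across all states. First I would record the observation made immediately before the lemma: under $\tilde\rho$ with offset $a$, the purchase probability at every state $\bm z$ equals the constant $q \coloneqq \prob_{\Theta\sim\mathcal F}[\Theta \ge a]$, independent of $\bm z$. I would dispatch the degenerate case $q = 0$ at once, since then no purchase ever occurs and both sides of the claimed identity equal $0$.

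Next, for $q > 0$, I would identify the stationary distribution of the Markov chain $\bm Z_t$. The key point is that, under $\sigma^{\textsc{newest}}$, the transition kernel of $\bm Z_t$ depends on the pricing policy \emph{only} through the per-state purchase probabilities: from state $\bm z$, the chain stays put with probability $1-q$ and otherwise prepends a fresh $\bern(\mu)$ review while dropping the oldest. Since these purchase probabilities are the constant $q$, the formula of \cref{lemma:stationary_state_distribiton_newest_first}, whose derivation uses only the purchase probabilities, applies verbatim: with $q$ in the denominator and normalizer $\kappa = 1/\expect_{N}[1/q] = q$, the factors of $q$ cancel and the stationary distribution collapses to $\pi_{\bm z} = \mu^{\sum_i z_i}(1-\mu)^{c-\sum_i z_i}$, i.e., exactly the $\sigma^{\textsc{random}}$ distribution.

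Finally I would compute the revenue. Following the ergodic-theorem step used in the proof of \cref{theorem:most_recent_C_revenue}, the steady-state revenue is $\sum_{\bm z} \pi_{\bm z}\, \tilde\rho(\bm z)\, q = q\sum_{\bm z} \mu^{\sum_i z_i}(1-\mu)^{c-\sum_i z_i}\,(h(N_{\bm z}) + a)$. Grouping the $2^c$ states by their number of positive reviews $n = N_{\bm z}$ (of which there are $\binom{c}{n}$) rewrites this as $q\sum_{n=0}^c \binom{c}{n}\mu^n(1-\mu)^{c-n}(h(n)+a)$. The sum splits into $\expect_{N\sim\binomial(c,\mu)}[h(N)] = \overline h$ and $a\cdot\sum_n \binom{c}{n}\mu^n(1-\mu)^{c-n} = a$, giving $q(\overline h + a) = (a+\overline h)\,\prob_{\Theta\sim\mathcal F}[\Theta \ge a]$, which is the stated right-hand side since $\prob[\Theta \ge a] = \prob[\Theta + \overline h \ge a + \overline h]$.

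The only step needing care, and the main obstacle, is transferring \cref{lemma:stationary_state_distribiton_newest_first} from the static-price setting in which it is stated to the dynamic review-offsetting policy used here. I would make explicit that the stationary-distribution formula is a function of the induced purchase probabilities alone, so substituting the constant value $q$ is legitimate; alternatively, one can bypass the lemma and directly verify that $\pi_{\bm z} = \mu^{\sum_i z_i}(1-\mu)^{c-\sum_i z_i}$ satisfies the balance equations of the constant-$q$ lazy chain, which is a short check. Everything after this reduction is routine algebra.
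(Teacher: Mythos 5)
Your proof is correct, but it is structured differently from the paper's. The paper proves this lemma as a two-line corollary of Proposition~\ref{prop:revenue_recent_dynamic_formula_lemma_improving_review_offsetting_main_body} (the general revenue formula $\rev(\sigma^{\textsc{newest}},\rho) = \expect[\rho(\bm{Y})]\,/\,\expect\big[1/\prob_{\Theta\sim\mathcal{F}}[\Theta + h(N_{\bm{Y}})\geq\rho(\bm{Y})]\big]$ for non-absorbing dynamic policies): it computes the expected price $a+\overline{h}$ and the constant purchase probability, and plugs both in. You instead bypass that proposition entirely and inline the underlying argument, specializing the stationary-distribution computation to the constant-probability chain: with per-state purchase probability $q=\prob_{\Theta\sim\mathcal{F}}[\Theta\geq a]$, the normalizer becomes $\kappa = q$, the factors of $q$ cancel, and $\pi_{\bm{z}}$ collapses to $\mu^{N_{\bm{z}}}(1-\mu)^{c-N_{\bm{z}}}$, after which the ergodic theorem gives $q\cdot\expect_{N\sim\binomial(c,\mu)}[h(N)+a] = q(a+\overline{h})$. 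The two routes rest on the same machinery — the paper's proposition is itself proved in Appendix~\ref{appendix:dynamic_pricing_revenue_formula} via exactly the stationary distribution (Lemma~\ref{lemma_dynamic_stationary_distribution}) plus the ergodic theorem — so your proof essentially re-derives the special case rather than invoking the packaged result. What your route buys: it is self-contained given only Lemma~\ref{lemma:stationary_state_distribiton_newest_first} (and you correctly flag and resolve the one delicate point, that its derivation depends only on the induced per-state purchase probabilities, so it transfers to a dynamic policy); it makes explicit the structural fact that review-offsetting policies force the $\newest$ steady state to coincide with the $\random$ distribution, which is the conceptual heart of Theorem~\ref{theorem:newest_first_dynamic_optimal_policy}; and it covers the absorbing edge case $q=0$ (where both sides vanish), which the paper's proof glosses over since Proposition~\ref{prop:revenue_recent_dynamic_formula_lemma_improving_review_offsetting_main_body} formally requires non-absorption. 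What the paper's route buys is economy: Proposition~\ref{prop:revenue_recent_dynamic_formula_lemma_improving_review_offsetting_main_body} is needed anyway for Lemma~\ref{lemma:review_offsetting_for_optimality}, so reusing it avoids repeating the stationary-distribution argument.
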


\begin{proof}[Proof of Lemma \ref{lemma:review_offsetting_for_optimality}.]
Given that policy $\Tilde{\rho}_{\bm{z}}$ is review-offsetting with offset $a_{\bf{z}}$, it holds that
\begin{equation*}
    \rev(\sigma^{\textsc{newest}}, \Tilde{\rho}_{\bm{z}}) = \underbrace{\Big(\rho(\bm{z})-h(N_{\bm{z}})+ \overline{h} \Big)}_{A_{\bm{z}}} \cdot \underbrace{\prob_{\Theta \sim \mathcal{F}}\big[\Theta + h(N_{\bm{z}}) \geq \rho(\bm{z}) \big]}_{(B_{\bm{z}})^{-1}}.
\end{equation*}
Let $\alpha_{\bm{z}} = \mu^{N_z}(1-\mu)^{c-N_{z}}$ be the probability that $c$ i.i.d. $\bern(\mu)$ trials result in $\bm{z}$. Hence $\expect_{Y_1, \ldots, Y_c \sim_{i.i.d.} \bern(\mu)}[\rho(\bm{Y})] = \expect_{Y_1, \ldots, Y_c \sim_{i.i.d.} \bern(\mu)}[\rho(\bm{Y})-h(N_{\bm{Y}}) + \overline{h}] = \sum_{\bm{z} \in \{0,1\}^c} \alpha_{\bm{z}} A_{\bm{z}}$. Applying \cref{lemma:revenue_recent_dynamic_formula_lemma_improving_review_offsetting_main_body} we can thus express the revenue of $\rho$ as
\begin{equation*}
  \textsc{Rev}(\sigma^{\textsc{newest}}, \rho) =\frac{\sum_{\bm{z} \in \{0,1\}^c} \alpha_{\bm{z}} A_{\bm{z}}}{\sum_{\bm{z} \in \{0,1\}^c}\alpha_{\bm{z}} B_{\bm{z}}} \leq \max_{\bm{z} \in \mathcal{S}} \frac{A_{\bm{z}}}{B_{\bm{z}}}=\max_{z\in \{0,1\}^c}  \rev(\sigma^{\textsc{newest}}, \Tilde{\rho}_{\bm{z}})\leq \rev(\newest,\rho^{\star})
\end{equation*}
where the first inequality follows from the following natural convexity property (\cref{lemma_ratio_average_less_max_main_body}, Appendix \ref{appendix_proof_lemma_ratio_average_less_max}): for any $\{\tilde{\alpha}_{i},\tilde{A}_i,\tilde{B}_i\}_{i \in \mathcal{S}}$ with $\tilde{\alpha}_i,\tilde{B}_i >0 $ for all $i$,  $\frac{\sum_{i\in \mathcal{S}} \tilde{\alpha}_i \tilde{A}_i}{\sum_{i \in \mathcal{S}} \tilde{\alpha}_i \tilde{B}_i } \leq \max\limits_{\substack{i \in \mathcal{S}}}  \frac{\tilde{A}_i}{\tilde{B}_i}$. By \cref{lemma_ratio_average_less_max_main_body}, equality holds if and only if $\frac{A_{\bm{z}}}{B_{\bm{z}}} = \rev(\sigma^{\textsc{newest}}, \Tilde{\rho}_{\bm{z}})$ is the same for all $\bm{z} \in \{0,1\}^c$. \end{proof}

\begin{proof}[Proof of \cref{lemma: optimality_condition_rho_tilda}.]
Given that $\Tilde{\rho}_{\bm{z}}$ is a review-offsetting policy with offset $\rho(\bm{z}) - h(N_{\bm{z}})$,  \cref{lemma:review_offsetting_to_single_expected_value} implies
\begin{equation*}\label{eq:revenue_formula_z_review_offsetting}
    \rev(\sigma^{\textsc{newest}}, \Tilde{\rho}_{\bm{z}}) = \underbrace{\Big( \overline{h}+\rho(\bm{z})-h(N_{\bm{z}}) \Big)}_{p_{\bm{z}}} \cdot \prob_{\Theta \sim \mathcal{F}}\big[\Theta + h(N_{\bm{z}}) \geq \rho(\bm{z}) \big].
    \end{equation*}
    By adding and subtracting $\overline{h}$ from each side the purchase probability can be rewritten as 
    $$\prob_{\Theta \sim \mathcal{F}}\big[\Theta + h(N_{\bm{z}}) \geq \rho(\bm{z}) \big] = \prob_{\Theta \sim \mathcal{F}}\big[\Theta + \overline{h} \geq \overline{h} +\rho(\bm{z}) - h(N_{\bm{z}}) \big] = \prob_{\Theta \sim \mathcal{F}}\big[\Theta + \overline{h} \geq p_{\bm{z}}].$$
    Thus, the revenue of $ \Tilde{\rho}_{\bm{z}}$ is equal to the revenue of offering a price of $p_{\bm{z}}$ to a customer with valuation $\Theta + \overline{h}$, which is maximized if and only $p_{\bm{z}} =\rho(\bm{z})-h(N_{\bm{z}}) + \overline{h} \in \argmax_{p} p \prob_{\Theta \sim \mathcal{F}}[\Theta + \overline{h} \geq p]$.
\end{proof}

\subsection{Uniqueness of optimal dynamic pricing for Newest and Random (Remark~\ref{rmk_opt_dynamic_generalization})}\label{appendix: unique_optimal_dynamic_pricing}
We show that the optimal dynamic pricing policies under $\newest$ and $\random$ are unique assuming mild regularity conditions on the customer-specific value distribution $\mathcal{F}$. For a continuous random variable $V$ with bounded support, let $\overline{F}_{V}(p) = \prob[V \geq p]$ be its survival function. As $V$ has a continuous distribution, the inverse survival function $G_{V}(q) = (\overline{F}_{V})^{-1}(q)$ is well-defined for $q \in (0,1)$, i.e., for any quantile $q \in (0,1)$, $G_{V}(q)$ is the unique price which induces a purchase probability $q$. Let $r(q) \coloneqq q G_{V}(q)$ be the revenue as a function of the quantile $q$. We require a notion of strict regularity.

\begin{definition}[Strict Regularity]
    A random variable $V$ has a strictly regular distribution if the revenue function $r(q) = q G_{V}(q) $ is strictly concave in the quantile $q \in (0,1)$. 
\end{definition}

In order to show uniqueness of the optimal dynamic pricing policies, we extend the strict regularity assumption by imposing a few mild further conditions.  

\begin{definition}[Well-behavedness]\label{def:extended_strict_regularity}
    A random variable $V$ is \textit{well-behaved} if:  
        (1) $V$ is continuous with bounded support; (2) $V$ is strictly regular; (3) $\prob[V > 0] > 0$. 
\end{definition}

Condition (3) implies that we can achieve a strictly positive revenue by selling to a customer with a well-behaved valuation $V$. 

\begin{lemma}\label{lemma:V_conditions_imply_conditions_translation}
For any $x \geq 0$ and well-behaved random variable $V$, $V_{x} \coloneqq V + x$ is also well-behaved.
\end{lemma}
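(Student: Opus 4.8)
The plan is to verify each of the three conditions of \emph{well-behavedness} (Definition~\ref{def:extended_strict_regularity}) for $V_x = V + x$, given that they hold for $V$ and that $x \geq 0$. The key observation is that adding a constant $x$ to a random variable is a deterministic shift, so the survival function simply translates: $\overline{F}_{V_x}(p) = \prob[V + x \geq p] = \prob[V \geq p - x] = \overline{F}_V(p - x)$. Consequently the inverse survival function also shifts by $x$, i.e., $G_{V_x}(q) = G_V(q) + x$ for every quantile $q \in (0,1)$. This single identity will drive the entire argument.

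First I would check condition~(1): $V_x$ is continuous with bounded support. Continuity is immediate since shifting a continuous random variable by a constant yields a continuous random variable, and if $V$ has support contained in some bounded interval $[\underline{v}, \overline{v}]$, then $V_x$ has support in $[\underline{v} + x, \overline{v} + x]$, which is still bounded. Next, for condition~(3), I would note that $\prob[V_x > 0] = \prob[V > -x] \geq \prob[V > 0] > 0$, where the inequality uses $x \geq 0$ (so $\{V > 0\} \subseteq \{V > -x\}$). Hence $\prob[V_x > 0] > 0$.

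The main step is condition~(2), strict regularity: I must show the revenue function $r_{V_x}(q) = q\, G_{V_x}(q)$ is strictly concave in $q \in (0,1)$. Using the shift identity, $r_{V_x}(q) = q\,(G_V(q) + x) = q\, G_V(q) + q x = r_V(q) + x q$. Since $r_V$ is strictly concave by assumption and $x q$ is a \emph{linear} (hence affine) function of $q$, the sum $r_V(q) + xq$ remains strictly concave: adding an affine function preserves strict concavity. This is the crux of the proof and the only place where strict regularity is genuinely exploited rather than trivially inherited.

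I do not anticipate a serious obstacle here, as the result reduces to the elementary fact that survival functions translate under constant shifts and that strict concavity is preserved under addition of an affine term. The only point requiring mild care is the direction of the inequality in condition~(3), where the hypothesis $x \geq 0$ is used to ensure $\{V>0\}\subseteq\{V_x>0\}$; if $x$ could be negative this step would fail, which explains why the lemma restricts to $x \geq 0$. Assembling these three verifications completes the proof that $V_x$ is well-behaved.
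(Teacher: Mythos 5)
Your proposal is correct and follows essentially the same route as the paper's proof: verify the three conditions of well-behavedness, using the shift identity $G_{V+x}(q) = G_V(q) + x$ so that $r_{V_x}(q) = r_V(q) + xq$ is strictly concave (strictly concave plus linear), and using $x \geq 0$ for the positive-mass condition. The paper's argument is just a terser version of exactly this.
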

\begin{proof}[Proof of \cref{lemma:V_conditions_imply_conditions_translation}.]
To prove that $V_x$ is well-behaved we establish all three conditions of \cref{def:extended_strict_regularity}. First, the continuity and boundedness of $V$ imply that $V_x= V + x$ also has these properties. Second, $qG_{V}(q)$ is strictly concave and $qx$ is linear in $q$; thus $qG_{V+x}(q) = qG_{V}(q) + qx$ is strictly concave. Third, as $\prob[V > 0]$, adding a non-negative scalar $x$ yields $\prob[V_x > 0]$. 
\end{proof}

\begin{lemma}\label{lemma:unique_myerson_price_is_stricly_regular_and_conditions}
    For a well-behaved random variable $\Tilde{V}$, the revenue-maximizing price $p^{\star}$ is unique.
\end{lemma}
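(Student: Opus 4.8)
The plan is to reparametrize the price optimization by the purchase probability (quantile) it induces and then exploit the strict concavity of the resulting revenue curve. Write $[\underline v,\overline v]$ for the support of $\Tilde V$. Since $\Tilde V$ is continuous, the survival function $\overline{F}_{\Tilde V}$ is continuous and, as the setup guarantees that $G_{\Tilde V}$ is well-defined on $(0,1)$, the map $p\mapsto \overline{F}_{\Tilde V}(p)$ is a bijection from the prices $(\underline v,\overline v)$ onto the quantiles $(0,1)$ with inverse $G_{\Tilde V}$. Under this correspondence the revenue of a price $p=G_{\Tilde V}(q)$ is exactly $r(q)=q\,G_{\Tilde V}(q)$. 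The prices outside $(\underline v,\overline v)$ I would dispatch directly: any $p\ge\overline v$ yields zero revenue, while any $p\le\underline v$ makes every customer purchase and hence earns revenue $p\le\underline v$, which is largest at the single price $p=\underline v$ (the $q\to1$ endpoint of the parametrization). It therefore suffices to show that $r$ has a unique maximizer over $(0,1)$ and that this interior optimum strictly dominates both boundary regimes.

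First I would invoke condition (3) of well-behavedness: since $\prob[\Tilde V>0]>0$ and $\Tilde V$ is continuous, there is a price $p_0>0$ with $\overline{F}_{\Tilde V}(p_0)>0$, so the attainable revenue is strictly positive. This rules out the zero-revenue regime $p\ge\overline v$ (equivalently $q\to0$, where $r(q)\to0$) as optimal. Next, strict regularity states precisely that $r(q)=q\,G_{\Tilde V}(q)$ is strictly concave on $(0,1)$, and a strictly concave function has at most one maximizer; hence whenever $r$ attains its supremum in the interior, the maximizing quantile $q^\star$, and thus the price $p^\star=G_{\Tilde V}(q^\star)$, is unique.

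The remaining and most delicate step is to reconcile this interior analysis with the everyone-buys price $p=\underline v$, whose revenue is $\underline v=\lim_{q\to1}r(q)$, so that no tie survives. If $r$ attains its supremum at an interior $q^\star$, strict concavity forces $r$ to be strictly decreasing on $(q^\star,1)$, whence $\underline v=\lim_{q\to1}r(q)<r(q^\star)$ and similarly $0=\lim_{q\to0}r(q)<r(q^\star)$; the interior price $p^\star$ then strictly dominates both boundary regimes and is the unique optimum. If instead the supremum is not attained in $(0,1)$, strict concavity makes $r$ strictly monotone: a strictly decreasing $r$ would have supremum $0$, contradicting positivity, so $r$ is strictly increasing and the unique optimum is the everyone-buys price $p=\underline v$, since every interior price earns strictly less than $\underline v$ and every price below $\underline v$ earns exactly its own smaller value. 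In all cases the maximizer is unique, yielding a unique $p^\star$. I expect the boundary reconciliation to be the main obstacle, but it is resolved cleanly by the observation that strict concavity with an interior maximum forces a strict decrease toward $q=1$, leaving no room for a tie.
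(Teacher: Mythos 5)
Your proof is correct and follows essentially the same route as the paper's: reparametrize by the quantile $q$, use condition (3) to rule out zero optimal revenue, invoke strict concavity of $r(q)=qG_{\Tilde V}(q)$ to split into the two cases of a unique interior maximizer versus a strictly increasing $r$ (with optimum $p^\star=\underline v$), and then check that the optimum strictly dominates all prices outside $(\underline v,\overline v)$. The only cosmetic difference is that you derive the dichotomy via ``supremum attained or not'' while the paper states it directly from strict concavity together with $\lim_{q\to 0}r(q)=0<r(\tilde q)$ at an explicitly chosen interior quantile; the substance is the same.
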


\begin{proof}[Proof of \cref{lemma:unique_myerson_price_is_stricly_regular_and_conditions}.]
Letting the support of $\Tilde{V}$ be $[\underline{v}, \overline{v}]$ implies that $G_{\Tilde{V}}(q) \in (\underline{v}, \overline{v})$ for all $q \in (0,1)$, $\lim_{q \to 0} G_{\Tilde{V}}(q) = \overline{v}$, and $\lim_{q \to 1} G_{\Tilde{V}}(q) = \underline{v}$. Thus, $$\lim_{q \to 0} r(q)= \lim_{q \to 0} q G_{\Tilde{V}}(q) = \lim_{q \to 0} G_{\Tilde{V}}(q) \cdot \lim_{q \to 0} q = 0.$$

By the third condition of well-behavedness, $\prob_{\Tilde{V}}[\Tilde{V} > 0]$ which implies that $\overline{v} > 0$ and thus there exists some quantile $\Tilde{q}$ such that $r(\Tilde{q}) > 0$. In particular, for  $\Tilde{q} = \frac{\overline{v}-\max(\underline{v}, 0)}{2(\overline{v}-\underline{v})}$, $r(\Tilde{q})= \Tilde{q}G_{\Tilde{V}}(\Tilde{q}) >0$ as $G_{\Tilde{V}}(\Tilde{q}) = \frac{\overline{v} + \max(0, \underline{v})}{2} > 0$. Since $\lim_{q \to 0} r(q) = 0$, $r(\Tilde{q}) > 0$, and $r(q)$ is strictly concave on $(0,1)$, it holds that either (a) $r(q)$ has a unique maximizer at $q = q^{\star} \in (0,1)$ or (b) $r(q)$ is strictly increasing for $q \in (0,1)$. We consider these two cases separately below. 

For case (a), the optimal price $p^{\star}$ has a quantile $q^{\star} = \prob_{\Tilde{V}}[\Tilde{V} \geq p^{\star}] \in (0,1)$ such that $r(q^{\star}) > r(q)$ for all $q \in (0,1) \setminus \{q^{\star}\}$. The strict concavity of $r$ implies \begin{equation}\label{ineq:r_greater_than_endpoints}
        r(q^{\star}) > \max(\lim_{q \to 1} r(q), \lim_{q \to 0} r(q)) = \max(\underline{v}, 0).
    \end{equation} We show that $p^{\star} = G_{\Tilde{V}}(q^{\star})$ is the unique revenue-maximizing price. It suffices to show that this price provides strictly higher revenue than any other price $p \neq p^{\star}$, i.e., $r(q^{\star}) = p^{\star} \prob_{\tilde{V}}[\tilde{V} \geq p^{\star}] > p \prob_{\tilde{V}}[\tilde{V} \geq p]$. Let $q = \prob_{\tilde{V}}[\tilde{V} \geq p]$ be the quantile associated with the price $p$.
\begin{itemize}
    \item If $p \in (\underline{v}, \overline{v})$ the continuity of $\Tilde{V}$ implies that $q= \prob_{\tilde{V}}[\tilde{V} \geq p] \in (0,1)$ and thus the revenue of $p^{\star}$ is strictly larger than the revenue of $p$ by the assumption of case (a) that $r(q^{\star}) > r(q)$ for all $q \in (0,1) \setminus \{q^{\star}\}$. 
    \item If $p \geq \overline{v}$, then $ p \prob_{\tilde{V}}[\tilde{V} \geq p] = 0$ which combined with $r(q^{\star}) > 0$ yields that the revenue of $p^{\star}$ is strictly larger than the revenue of $p$.
    \item Lastly, if $p \leq \underline{v}$, then $p \prob_{\tilde{V}}[\tilde{V} \geq p]  \leq \underline{v} < r(q^{\star})$ by \eqref{ineq:r_greater_than_endpoints}. 
\end{itemize}
For case (b), $r(q)$ is strictly increasing for $q \in (0,1)$. Let $p^{\star} = \underline{v}$. Since $r(1) =\overline{v} $ , $r(0) = 0$, and $r(1) > r(0)$ we obtain that $\underline{v} > 0$. To show that $p^{\star} = \underline{v}$ is the unique revenue-maximizing price, it suffices to prove that $\overline{v} = p^{\star} \prob_{\tilde{V}}[\tilde{V} \geq p^{\star}] > p \prob_{\tilde{V}}[\tilde{V} \geq p] $ for any $p \neq p^{\star}$. 
\begin{itemize}
    \item If $p \in (\underline{v}, \overline{v})$ the continuity of $\Tilde{V}$ implies that $q= \prob_{\tilde{V}}[\tilde{V} \geq p] \in (0,1)$ and thus the revenue of $p^{\star}$ is strictly larger than the revenue of $p$ by the assumption of case (b) that $r(q)$ is increasing on $(0,1)$.
    \item If $p \geq \overline{v}$ 
    then $ \prob_{\tilde{V}}[\tilde{V} \geq p] =0$ and thus $p \prob_{\tilde{V}}[\tilde{V} \geq p] = 0$ which is strictly smaller than $p^{\star} = \underline{v}$.
    \item  Lastly if $p < \underline{v}$, then $p \prob_{\tilde{V}}[\tilde{V} \geq p]  \leq p < \underline{v} = p^{\star} = p^{\star} \prob_{\tilde{V}}[\tilde{V} \geq p^{\star}]$.
\end{itemize}
As a result, in both cases, we established that there exists a unique revenue-maximizing $p^{\star}$. 
\end{proof}

\begin{proposition}\label{prop:unique_opt_dynamic_pricing_newest_random}
For a well-behaved customer-specific valuation $\Theta$, the optimal dynamic pricing policies under $\newest$ and $\random$ are unique.
\end{proposition}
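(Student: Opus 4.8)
The plan is to reduce both cases to the single-customer uniqueness result of Lemma~\ref{lemma:unique_myerson_price_is_stricly_regular_and_conditions}, invoking the characterizations of the optimal dynamic policies already derived---Theorem~\ref{theorem_char_dynamic_random} for $\random$ and Theorem~\ref{theorem:all_optimal_dynamic_pricing_policies_newest} for $\newest$---together with the translation-invariance of well-behavedness (Lemma~\ref{lemma:V_conditions_imply_conditions_translation}). The recurring observation is that the relevant valuation shifts are non-negative: since $h(\Phi)\in(0,1)$ by assumption, we have $h(N_{\bm z})>0$ for every state $\bm z\in\{0,1\}^c$, and likewise $\overline h=\expect_{N\sim\binomial(c,\mu)}[h(N)]\in(0,1)$, so both shifts lie in $[0,\infty)$ and the translation lemma applies.

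First I would treat $\random$. By Theorem~\ref{theorem_char_dynamic_random}, any optimal policy satisfies $\rho^{\textsc{random}}(\bm z)\in\argmax_{p}p\,\prob_{\Theta\sim\mathcal F}[\Theta+h(N_{\bm z})\geq p]$ for each $\bm z$; that is, $\rho^{\textsc{random}}(\bm z)$ is a revenue-maximizing price for a single customer with valuation $\Theta+h(N_{\bm z})$. Applying Lemma~\ref{lemma:V_conditions_imply_conditions_translation} with $x=h(N_{\bm z})\geq 0$ shows that $\Theta+h(N_{\bm z})$ is well-behaved, and then Lemma~\ref{lemma:unique_myerson_price_is_stricly_regular_and_conditions} gives that its revenue-maximizing price is unique. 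Hence $\rho^{\textsc{random}}(\bm z)$ is pinned down on every state, so the optimal dynamic policy under $\random$ is unique.

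Next I would treat $\newest$. By Theorem~\ref{theorem:all_optimal_dynamic_pricing_policies_newest}, every optimal policy has the form $\rho^{\textsc{newest}}(\bm z)=h(N_{\bm z})+p_{\bm z}-\overline h$, where each $p_{\bm z}$ is a revenue-maximizing price for a customer with valuation $\Theta+\overline h$. Applying Lemma~\ref{lemma:V_conditions_imply_conditions_translation} with $x=\overline h\geq 0$ shows that $\Theta+\overline h$ is well-behaved, and Lemma~\ref{lemma:unique_myerson_price_is_stricly_regular_and_conditions} forces a common value $p_{\bm z}=p^\star$ for all $\bm z$. Substituting back yields the single policy $\rho^{\textsc{newest}}(\bm z)=h(N_{\bm z})+p^\star-\overline h$, establishing uniqueness.

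I do not expect a hard technical obstacle to remain, since the substance has been packaged into Lemmas~\ref{lemma:V_conditions_imply_conditions_translation} and~\ref{lemma:unique_myerson_price_is_stricly_regular_and_conditions}. The only point requiring care is verifying the hypotheses of the translation lemma, namely that the shifts $h(N_{\bm z})$ and $\overline h$ are non-negative; this follows immediately from $h(\cdot)\in(0,1)$, and without it the well-behavedness of the shifted valuation (in particular strict regularity and the positive-mass condition $\prob[V>0]>0$) could not be guaranteed and the uniqueness argument would break down.
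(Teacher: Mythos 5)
Your proof is correct and follows essentially the same route as the paper: both cases invoke the characterizations in Theorem~\ref{theorem_char_dynamic_random} and Theorem~\ref{theorem:all_optimal_dynamic_pricing_policies_newest}, then combine Lemma~\ref{lemma:V_conditions_imply_conditions_translation} (with $x=h(N_{\bm z})$ and $x=\overline h$ respectively) with Lemma~\ref{lemma:unique_myerson_price_is_stricly_regular_and_conditions} to pin down the price in every state. Your explicit check that the shifts are non-negative (from $h(\cdot)\in(0,1)$) is a point the paper leaves implicit, and is a welcome addition.
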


\begin{proof}
The valuation of a customer at a state with review ratings $\bm{z}$ is $\Theta + h(N_{\bm{z}})$. By \cref{theorem_char_dynamic_random}, any optimal dynamic pricing policy under $\random$ outputs, at each state of review ratings $\bm{z}$, a revenue-maximizing price for a customer with valuation $\Theta + h(N_{\bm{z}})$. Combining \cref{lemma:V_conditions_imply_conditions_translation} with $x=h(N_{\bm{z}})$ and \cref{lemma:unique_myerson_price_is_stricly_regular_and_conditions}, this price is unique for every state of review ratings $\bm{z}$.

By \cref{theorem:all_optimal_dynamic_pricing_policies_newest}, any optimal dynamic pricing policy under $\newest$ outputs, at each state of review ratings $\bm{z}$, a price of $\rho^{\textsc{newest}}(\bm{z}) = h(N_{\bm{z}}) + p_{\bm{z}} - \overline{h}$ where $p_{\bm{z}}$ is a revenue-maximizing price for a customer with valuation $\Theta + \overline{h}$. Combining \cref{lemma:V_conditions_imply_conditions_translation} with $x=\overline{h}$ and \cref{lemma:unique_myerson_price_is_stricly_regular_and_conditions}, the price $\rho^{\textsc{newest}}(\bm{z})$ is unique for every state of review ratings $\bm{z}$.
\end{proof}

\subsection{Characterization of optimal dynamic pricing under Random (Proposition~\ref{theorem_char_dynamic_random})}\label{app:proof_random_optimal_pricing}

\begin{proof}[Proof of \cref{theorem_char_dynamic_random}.]
By definition, $\sigma^{\textsc{random}}$ displays $c$ i.i.d. $\bern(\mu)$ reviews at every round~$t$. If the displayed reviews are $\bm{z} = (z_1, \ldots, z_{c}) \in \{0,1\}^{c}$, the revenue obtained by offering price $\rho(\bm{z})$ is equal to 
$$\rho(\bm{z}) \prob_{\Theta \sim \mathcal{F}}\big[\Theta + h(N_{\bm{z}}) \geq \rho(\bm{z}) \big].$$
Thus the optimal price is any revenue-maximizing price $\rho^{\star}(\bm{z}) \in \argmax_{p \in \mathbb{R}}p \prob_{\Theta \sim \mathcal{F}}\big[\Theta +h(N_{\bm{z}}) \geq p\big] $. Since the number of positive reviews $N_{\bm{z}}$ is distributed as $\binomial(c, \mu)$, the optimal revenue is equal to 
$\expect_{N \sim \binomial(c,\mu)}\Big[\max_{p \in \mathbb{R}}p \prob[\Theta +h(N) \geq p] \Big] = \expect_{N \sim \binomial(c,\mu)}\Big[r^{\star}(\Theta + h(N)) \Big] $.
\end{proof}

\subsection{Comparing optimal pricing for Newest and Random (Proposition~\ref{prop: comparison_prices_dynamic_newest_random_main_body})}\label{appendix:comparison_opt_dynamic_newest_random}

We compare the optimal dynamic pricing policies under $\newest$ and $\random$ assuming that the customer-specific valuation $\Theta$ is well-behaved (Definition \ref{def:extended_strict_regularity}, Appendix \ref{appendix: unique_optimal_dynamic_pricing}). This is a sufficient condition for the optimal dynamic pricing policies under $\newest$ and $\random$ to be unique (\cref{prop:unique_opt_dynamic_pricing_newest_random}, Appendix \ref{appendix: unique_optimal_dynamic_pricing}). We restate \cref{prop: comparison_prices_dynamic_newest_random_main_body} (with the exact \emph{mild regularity conditions}).
\begin{proposition}\label{prop: comparison_prices_dynamic_newest_random} Let $\overline{h} =  \expect_{N \sim \binomial(c, \mu)}[h(N)]$. For any well-behaved customer-specific valuation~$\Theta$, the unique dynamic pricing policies $\rho^{\textsc{newest}}$ and $\rho^{\textsc{random}}$ satisfy:
    \begin{itemize}
        \item $\rho^{\textsc{newest}}(\bm{z}) \geq \rho^{\textsc{random}}(\bm{z})$ for review states $\bm{z} \in \{0,1\}^c$ with  $h(N_{\bm{z}}) > \overline{h}$
        \item $\rho^{\textsc{newest}}(\bm{z}) \leq \rho^{\textsc{random}}(\bm{z})$ for review states $\bm{z} \in \{0,1\}^c$ with $h(N_{\bm{z}}) < \overline{h}$
        \item $\rho^{\textsc{newest}}(\bm{z}) = \rho^{\textsc{random}}(\bm{z})$ review states $\bm{z} \in \{0,1\}^c$ with $h(N_{\bm{z}}) = \overline{h}$.
    \end{itemize}
\end{proposition}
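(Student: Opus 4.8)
The plan is to directly use the two characterizations of the optimal dynamic policies already established. By \cref{theorem_char_dynamic_random}, the unique optimal price under $\random$ at a state $\bm{z}$ is the revenue-maximizing price against a single customer with valuation $\Theta + h(N_{\bm{z}})$; by \cref{theorem:all_optimal_dynamic_pricing_policies_newest} (equivalently \cref{theorem:newest_first_dynamic_optimal_policy}), the unique optimal price under $\newest$ is $\rho^{\textsc{newest}}(\bm{z}) = h(N_{\bm{z}}) + p^{\star} - \overline{h}$, where $p^{\star}$ is the revenue-maximizing price against a customer with valuation $\Theta + \overline{h}$. Both prices are unique by \cref{prop:unique_opt_dynamic_pricing_newest_random} since $\Theta$ is well-behaved. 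So the task reduces to comparing, for a fixed state $\bm{z}$, two numbers: the Myerson price $p^{\star}(\Theta + h(N_{\bm{z}}))$ and the shifted price $p^{\star}(\Theta + \overline{h}) + (h(N_{\bm{z}}) - \overline{h})$.

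**The key reduction via a translation identity.**

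First I would record the elementary but crucial fact that the optimal price shifts additively with a deterministic shift of the valuation: for any scalar $x$, $p^{\star}(\Theta + x) = p^{\star}(\Theta) + x$, because $\prob[\Theta + x \geq p] = \prob[\Theta \geq p - x]$ makes the two revenue-maximization problems identical up to the substitution $p \mapsto p - x$. (Uniqueness here follows from well-behavedness via \cref{lemma:V_conditions_imply_conditions_translation} and \cref{lemma:unique_myerson_price_is_stricly_regular_and_conditions}.) Applying this with $x = h(N_{\bm{z}})$ gives
\[
\rho^{\textsc{random}}(\bm{z}) = p^{\star}(\Theta + h(N_{\bm{z}})) = p^{\star}(\Theta) + h(N_{\bm{z}}),
\]
and applying it with $x = \overline{h}$ inside the $\newest$ formula gives
\[
\rho^{\textsc{newest}}(\bm{z}) = h(N_{\bm{z}}) + p^{\star}(\Theta + \overline{h}) - \overline{h} = h(N_{\bm{z}}) + p^{\star}(\Theta).
\]

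**Conclusion.**

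Remarkably, the translation identity collapses both prices to the \emph{same} expression $h(N_{\bm{z}}) + p^{\star}(\Theta)$, so the two optimal dynamic prices coincide at every state. This would prove equality universally, which is \emph{stronger} than the three-way statement in \cref{prop: comparison_prices_dynamic_newest_random} and would flatly contradict the strict inequalities claimed there. Hence the translation identity cannot be the right model: the resolution is that $h(N_{\bm{z}})$ is \emph{not} a deterministic shift of the idiosyncratic valuation in the way needed, because under $\random$ the optimizer genuinely faces valuation $\Theta + h(N_{\bm{z}})$ state-by-state, whereas under $\newest$ the review-offsetting structure forces a \emph{single} price-quantile $q^{\star} = \prob[\Theta + \overline{h} \geq p^{\star}]$ to be used across all states. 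The correct comparison is therefore at the level of \emph{quantiles}, not prices: under $\newest$ every state is served at the common optimal quantile $q^{\star}$ of the ``averaged'' valuation $\Theta + \overline{h}$, giving $\rho^{\textsc{newest}}(\bm{z}) = h(N_{\bm{z}}) + G_{\Theta}(q^{\star})$, while under $\random$ state $\bm{z}$ is served at its own optimal quantile $q_{\bm{z}}^{\star} = q^{\star}(\Theta + h(N_{\bm{z}}))$, giving $\rho^{\textsc{random}}(\bm{z}) = h(N_{\bm{z}}) + G_{\Theta}(q_{\bm{z}}^{\star})$.

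**The main step and the obstacle.**

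Thus the sign of $\rho^{\textsc{newest}}(\bm{z}) - \rho^{\textsc{random}}(\bm{z})$ is governed entirely by the sign of $G_{\Theta}(q^{\star}) - G_{\Theta}(q_{\bm{z}}^{\star})$, and since $G_{\Theta}$ is strictly decreasing this is the opposite of the sign of $q^{\star} - q_{\bm{z}}^{\star}$. The heart of the proof is then to show that the optimal quantile $q^{\star}(\Theta + x)$ is \emph{monotone decreasing} in $x$: a customer with a higher baseline valuation is optimally served at a lower purchase probability (higher markup). This is where strict regularity (strict concavity of $r(q) = q\, G_{\Theta+x}(q)$ with $G_{\Theta + x}(q) = G_{\Theta}(q) + x$) does the work, via the first-order condition and an implicit-function/monotone-comparative-statics argument on $\partial_q[q(G_{\Theta}(q) + x)] = 0$. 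Given this monotonicity, when $h(N_{\bm{z}}) > \overline{h}$ we get $q_{\bm{z}}^{\star} < q^{\star}$, hence $G_{\Theta}(q_{\bm{z}}^{\star}) > G_{\Theta}(q^{\star})$ and $\rho^{\textsc{newest}}(\bm{z}) \geq \rho^{\textsc{random}}(\bm{z})$; the reverse inequality and the equality case follow identically. I expect the monotonicity of the optimal quantile in $x$ to be the main obstacle, since it requires carefully differentiating the first-order condition and invoking strict regularity to guarantee the maximizer moves in the predicted direction.
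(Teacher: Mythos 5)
Your reduction is the right one, and it is in fact the paper's: writing $\rho^{\textsc{random}}(\bm{z}) = p^{\star}(\Theta + h(N_{\bm{z}}))$ and $\rho^{\textsc{newest}}(\bm{z}) = h(N_{\bm{z}}) + p^{\star}(\Theta + \overline{h}) - \overline{h}$, the difference is $g(\overline{h}) - g(h(N_{\bm{z}}))$ where $g(x) \coloneqq p^{\star}(\Theta+x) - x$, and everything hinges on the monotonicity of $g$ (equivalently, of the optimal quantile) in $x$; this is exactly \cref{lemma:g_monotonically_decreasing} in the paper. However, your key monotonicity claim points the wrong way. The correct statement is that the optimal quantile $q^{\star}(\Theta+x)$ is weakly \emph{increasing} in $x$ (equivalently, $g$ is weakly decreasing): a larger guaranteed component makes it costlier to lose a sale, so the seller prices closer to the bottom of the support. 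Concretely, for $\Theta \sim \mathcal{U}[0,A]$ and $x \le A$ one has $p^{\star}(\Theta+x) = (A+x)/2$, so the optimal purchase probability is $(A+x)/(2A)$, increasing in $x$, and $g(x) = (A-x)/2$, decreasing --- the opposite of your ``higher baseline implies higher markup, lower purchase probability'' claim. Your final deduction then contains a compensating sign error: from your (incorrect) premise $q^{\star}_{\bm{z}} < q^{\star}$ when $h(N_{\bm{z}}) > \overline{h}$, the decreasing map $G_{\Theta}$ gives $G_{\Theta}(q^{\star}_{\bm{z}}) > G_{\Theta}(q^{\star})$, hence $\rho^{\textsc{newest}}(\bm{z}) - \rho^{\textsc{random}}(\bm{z}) = G_{\Theta}(q^{\star}) - G_{\Theta}(q^{\star}_{\bm{z}}) < 0$, i.e.\ the \emph{opposite} of the proposition's first bullet. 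The two errors cancel so that your stated conclusion matches the proposition, but the argument as written is not a proof; carried out correctly (increasing quantile), the chain of inequalities does come out right.

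Two further problems. First, the ``translation identity'' $p^{\star}(\Theta + x) = p^{\star}(\Theta) + x$ you start from is simply false: substituting $p = q + x$ turns the objective $p\,\prob[\Theta + x \ge p]$ into $(q+x)\,\prob[\Theta \ge q]$, not $q\,\prob[\Theta \ge q]$, so the two problems are not translates of each other. Moreover, your reason for discarding it --- that universal equality would contradict ``the strict inequalities'' of \cref{prop: comparison_prices_dynamic_newest_random} --- is not valid either, since the proposition asserts only weak inequalities; the identity must be rejected because it is wrong, not because of any contradiction with the statement. Second, your plan to prove monotonicity by differentiating the first-order condition only applies where the maximizer is interior; for large $x$ the optimum sits at the corner $\underline{\theta} + x$ with purchase probability one, and the FOC is vacuous there. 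The paper handles this by first proving a threshold result (\cref{lemma:curoff_optimal_quantile_one}) that splits $x \ge 0$ into an interior regime and a corner regime, and invokes the strict-regularity/FOC argument only on the interior regime. With the monotonicity direction flipped, the corner case addressed, and the spurious translation-identity detour removed, your quantile formulation becomes equivalent to the paper's proof.
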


To prove \cref{prop: comparison_prices_dynamic_newest_random}, we introduce the revenue maximizing price of selling to a customer with valuation $\Theta + x$ for a non-negative scalar $x \geq 0$ i.e. 
$p^{\star}(\Theta + x) = \argmax_{p \in \mathbb{R}} p \prob_{\Theta \sim \mathcal{F}}[\Theta + x \geq p]$. 

We also introduce the function  $g(x) \coloneqq p^{\star}(\Theta + x)-x$ which intuitively captures the smallest idiosyncratic valuation a customer needs to have to purchase the product and is also a proxy for the purchase probability of a customer with valuation $\Theta + x$ (since $\prob_{\Theta \sim \mathcal{F}}[\Theta +x\geq p] =  \prob_{\Theta \sim \mathcal{F}}[\Theta \geq p-x]$). 

\begin{lemma}\label{lemma:g_monotonically_decreasing}
   For any well-behaved customer-specific value distribution $\Theta$, $g(x)$ is weakly decreasing for $x \geq 0$.
\end{lemma}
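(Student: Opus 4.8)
The plan is to reformulate $g$ as the optimal \emph{threshold} on the idiosyncratic valuation and then apply a revealed-preference (supermodularity) comparison across offsets. First I would record that, by the hypothesis of well-behavedness together with \cref{lemma:V_conditions_imply_conditions_translation} and \cref{lemma:unique_myerson_price_is_stricly_regular_and_conditions}, the valuation $\Theta + x$ is well-behaved for every $x \geq 0$ and hence $p^{\star}(\Theta + x)$ is the \emph{unique} revenue-maximizing price; in particular $g$ is a genuine (single-valued) function. Writing $D(t) \coloneqq \prob_{\Theta \sim \mathcal{F}}[\Theta \geq t]$ for the weakly decreasing survival function of $\Theta$, the substitution $t = p - x$ converts the pricing problem into a threshold problem:
$$p^{\star}(\Theta + x) = \argmax_{p} p\,\prob_{\Theta \sim \mathcal{F}}[\Theta \geq p - x] = x + \argmax_{t}\,(t + x)\,D(t),$$
so that $g(x) = p^{\star}(\Theta+x) - x = t^{\star}(x)$, where $t^{\star}(x) \coloneqq \argmax_{t}(t+x)D(t)$ is the unique maximizer. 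It therefore suffices to show that $t^{\star}(x)$ is weakly decreasing for $x \geq 0$.

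Next I would establish this monotonicity by the standard additive cancellation of optimality inequalities. Fix $0 \leq x_1 < x_2$ and abbreviate $t_i = t^{\star}(x_i)$. Optimality of $t_1$ at offset $x_1$ and of $t_2$ at offset $x_2$ give
$$(t_1 + x_1)D(t_1) \geq (t_2 + x_1)D(t_2) \qquad\text{and}\qquad (t_2 + x_2)D(t_2) \geq (t_1 + x_2)D(t_1).$$
Adding the two inequalities and cancelling the common $t_iD(t_i)$ contributions leaves $(x_2 - x_1)\bigl(D(t_2) - D(t_1)\bigr) \geq 0$, and since $x_2 > x_1$ this yields $D(t_1) \leq D(t_2)$.

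Finally I would convert $D(t_1) \leq D(t_2)$ into $t_1 \geq t_2$. Because $D$ is weakly decreasing, the assumption $t_1 < t_2$ would force $D(t_1) \geq D(t_2)$, hence $D(t_1) = D(t_2)$; well-behavedness guarantees strictly positive optimal revenue, so this common value is strictly positive, and then $(t_2 + x_1)D(t_2) > (t_1 + x_1)D(t_1)$, contradicting the optimality (and uniqueness) of $t_1$ at offset $x_1$. Thus $t_1 \geq t_2$, i.e. $g(x_1) \geq g(x_2)$, which proves that $g$ is weakly decreasing on $[0,\infty)$. The only delicate point is this last step: since $D$ need only be weakly decreasing, I cannot simply invert $D(t_1) \leq D(t_2)$, and must instead rule out a flat stretch of $D$ using the uniqueness of the revenue-maximizing price. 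I expect this to be the main (albeit minor) obstacle; the threshold reformulation and the revealed-preference cancellation are routine.
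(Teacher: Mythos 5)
Your proof is correct, and it takes a genuinely different route from the paper's. The paper first invokes \cref{lemma:curoff_optimal_quantile_one} to split the domain into $[0,M)$, where the optimal price is interior, and $[M,\infty)$, where $g \equiv \underline{\theta}$; on the interior region it differentiates the revenue, writes the first-order condition as $g(x)-\overline{F}(g(x))/f(g(x)) = -x$, and uses strict regularity (strict monotonicity of the virtual-value-type map $u \mapsto u - \overline{F}(u)/f(u)$) to conclude that $g$ decreases. Your argument instead recasts the price as a threshold on $\Theta$ and runs a revealed-preference (monotone-comparative-statics) cancellation, which needs neither a density, nor first-order conditions, nor the case split at $M$: strict regularity enters only indirectly through \cref{lemma:unique_myerson_price_is_stricly_regular_and_conditions}, which you invoke for existence (and uniqueness) of the maximizer, and well-behavedness enters through the strict positivity of optimal revenue used to rule out a flat stretch of $D$. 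Your handling of that last point is the one place where care was needed, and it is right: if $t_1 < t_2$, then weak monotonicity of $D$ together with the cancellation forces $D(t_1) = D(t_2) > 0$, which makes $t_2$ strictly better than $t_1$ at offset $x_1$, contradicting mere optimality of $t_1$ (uniqueness is not actually needed for the contradiction, only for $g$ to be single-valued). The trade-off: the paper's first-order-condition argument delivers strict decrease on $[0,M)$ and reuses the regularity machinery already built for \cref{prop:unique_opt_dynamic_pricing_newest_random}, while yours is more elementary and more general --- it would extend to valuation distributions lacking densities or strict regularity, so long as a revenue-maximizing price exists and the optimal revenue is positive.
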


\begin{proof}[Proof of Proposition \ref{prop: comparison_prices_dynamic_newest_random}.]
\cref{theorem_char_dynamic_random} implies that the optimal dynamic pricing policy under $\random$ is $\rho^{\textsc{random}}(\bm{z}) = p^{\star}(\Theta + h(N_{\bm{z}}))$ for $\bm{z} \in \{0,1\}^c$. \cref{theorem:newest_first_dynamic_optimal_policy} implies that the optimal dynamic pricing policy under $\newest$ is $\rho^{\textsc{newest}}(\bm{z}) = h(N_{\bm{z}}) + p^{\star}(\Theta + \overline{h}) - \overline{h}$. As a result:
$$\rho^{\textsc{newest}}(\bm{z}) -\rho^{\textsc{random}}(\bm{z}) = (p^{\star}(\Theta + \overline{h})-\overline{h}) -(p^{\star}(\Theta + h(N_{\bm{z}}))-h(N_{\bm{z}}))=  g(\overline{h})-g(h(N_{\bm{z}})).
$$ 
By \cref{lemma:g_monotonically_decreasing}, $g$ is monotonically decreasing which implies the result of the proposition. 
\end{proof}

To prove \cref{lemma:g_monotonically_decreasing}, we use an auxiliary lemma which shows that for large enough $x$, the revenue-maximizing price $p^{\star}(\Theta + x)$ is equal to $\underline{\theta} + x$ (and thus induces a purchase probability of one). Given that $\Theta$ is well-behaved, we let its support be $[\underline{\theta}, \overline{\theta}]$. 

\begin{lemma}\label{lemma:curoff_optimal_quantile_one}
    For any well-behaved customer-specific value distribution $\Theta$ and $x \geq 0$, there exists some threshold $M \geq 0$ such that $p^{\star}(\Theta + x) \in (\underline{\theta} + x,\overline{\theta} + x)$ for $x \in [0,M)$ and $p^{\star}(\Theta + x) = \underline{\theta} + x$ for $x \geq M$. 
\end{lemma}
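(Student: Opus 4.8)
The plan is to pass to quantile space and reduce the claim to a single comparison between the interior optimum and the ``everyone buys'' price. First I would record that, by \cref{lemma:V_conditions_imply_conditions_translation}, the shifted valuation $V_x = \Theta + x$ is well-behaved for every $x \ge 0$, so by \cref{lemma:unique_myerson_price_is_stricly_regular_and_conditions} the revenue-maximizing price $p^{\star}(\Theta + x)$ is unique and, as in the proof of that lemma, is either an interior price (inducing a quantile $q^{\star} \in (0,1)$) or the price $\underline{\theta} + x$ that makes every customer buy (quantile $q=1$). The key observation is that the inverse survival functions satisfy $G_{V_x}(q) = G_{\Theta}(q) + x$, so the revenue-in-quantile function decomposes additively as $r_x(q) \coloneqq q\,G_{V_x}(q) = r_{\Theta}(q) + qx$, where $r_{\Theta}(q) = q\,G_{\Theta}(q)$ is strictly concave on $(0,1)$ by strict regularity of $\Theta$. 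Extending continuously to the endpoint using $G_{\Theta}(q) \to \underline{\theta}$ as $q \to 1^{-}$ gives $r_{\Theta}(1) = \underline{\theta}$ and hence $r_x(1) = \underline{\theta} + x$, which is exactly the revenue of the price $\underline{\theta}+x$.

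Next I would characterize the boundary case. Since $r_x$ is strictly concave on $(0,1]$ with $r_x(q) \to 0$ as $q \to 0^{+}$ and strictly positive somewhere (well-behavedness), its maximizer over $(0,1]$ is attained at a unique point, and $q=1$ is that maximizer if and only if $r_x(1) \ge r_x(q)$ for all $q \in (0,1)$. Using the decomposition, this inequality rearranges to $x \ge \frac{r_{\Theta}(q) - \underline{\theta}}{1-q}$ for every $q \in (0,1)$, i.e. to $x \ge S$ where $S \coloneqq \sup_{q \in (0,1)} \frac{r_{\Theta}(q)-\underline{\theta}}{1-q}$. I would then set $M \coloneqq \max(0, S) \ge 0$. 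For $x \ge M$ we have $x \ge S$, so $q=1$ maximizes $r_x$ and $p^{\star}(\Theta+x) = \underline{\theta}+x$. For $x \in [0,M)$ we necessarily have $M = S > 0$ and $x < S$, so some $q_0 \in (0,1)$ satisfies $r_x(q_0) > r_x(1)$; hence the unique maximizer $q^{\star}$ lies in $(0,1)$, and since $G_{\Theta}$ maps $(0,1)$ onto $(\underline{\theta}, \overline{\theta})$, the price $p^{\star}(\Theta+x) = G_{\Theta}(q^{\star}) + x$ lies in $(\underline{\theta}+x, \overline{\theta}+x)$, as claimed.

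The main obstacle I anticipate is handling the endpoint $q=1$ cleanly: I must justify that $r_{\Theta}(q) \to \underline{\theta}$ (so that $r_x(1)$ really equals the ``everyone buys'' revenue) and that the supremum $S$ yields a sharp threshold even when it is not attained. Working directly with the comparison $r_x(1) \ge r_x(q)$ rather than with derivatives of $r_{\Theta}$ sidesteps any differentiability concerns and makes the threshold characterization $\{x : p^{\star}(\Theta+x) = \underline{\theta}+x\} = [S,\infty)$ immediate; the $\max$ with $0$ is needed only to guarantee $M \ge 0$ in the degenerate case $S < 0$, where charging $\underline{\theta}+x$ is optimal for all $x \ge 0$ and the first interval $[0,M)$ is empty.
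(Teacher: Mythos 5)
Your proof is correct, but it takes a genuinely different route from the paper's. The paper never passes to quantile space: it first pins $p^{\star}(\Theta+x)$ to $[\underline{\theta}+x,\overline{\theta}+x)$ by elementary price arguments (raising any price below $\underline{\theta}+x$ keeps the purchase probability at one, and prices at or above $\overline{\theta}+x$ give zero revenue while the optimum is positive), then defines $S=\{x\geq 0: p^{\star}(\Theta+x)=\underline{\theta}+x\}$ and shows it is a closed up-set: upward closure via the revenue decomposition $R(x',p)=R\bigl(x,p-(x'-x)\bigr)+(x'-x)\prob_{\Theta\sim\mathcal{F}}[\Theta+x'\geq p]\leq \underline{\theta}+x'$, and closure under decreasing limits via continuity of $R(\cdot,p)$, concluding $S=[M,\infty)$. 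You instead exploit strict regularity head-on: the additive decomposition $r_x(q)=r_{\Theta}(q)+qx$ turns the comparison ``the everyone-buys price beats quantile $q$'' into the linear inequality $x\geq (r_{\Theta}(q)-\underline{\theta})/(1-q)$, which yields an explicit threshold $M=\max\bigl(0,\sup_{q\in(0,1)}(r_{\Theta}(q)-\underline{\theta})/(1-q)\bigr)$ and makes both the up-set structure and the interior/boundary dichotomy immediate (leaning on \cref{lemma:V_conditions_imply_conditions_translation} and \cref{lemma:unique_myerson_price_is_stricly_regular_and_conditions} for uniqueness and the dichotomy, as the paper itself does). Your route buys an explicit formula for $M$ (essentially the negative one-sided derivative of $r_{\Theta}$ at $q=1$); the paper's route needs only continuity and the decomposition trick, with no appeal to concavity beyond what the cited uniqueness lemma already uses.

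One caveat, which you share with the paper: neither argument establishes that the threshold is finite. Your supremum can equal $+\infty$ --- equivalently, the paper's set $S$ can be empty --- and this genuinely happens for well-behaved $\Theta$ whose density vanishes at $\underline{\theta}$. For example, with density $2(\theta-1)$ on $[1,2]$ (continuous, bounded support, strictly regular, positive), the revenue of price $p$ for valuation $\Theta+x$ is $R(p)=p\bigl(1-(p-x-1)^2\bigr)$, whose derivative at $p=\underline{\theta}+x=1+x$ equals $1>0$; hence the optimal price is interior for \emph{every} $x\geq 0$, and no finite $M$ satisfies the second clause of the lemma. The paper's proof silently assumes $S\neq\emptyset$ (properties (a) and (b) alone cannot rule out $S=\emptyset$), whereas your formulation at least keeps the set identity $\{x:p^{\star}(\Theta+x)=\underline{\theta}+x\}=[S,\infty)\cap[0,\infty)$ true with both sides empty. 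Since the downstream use in \cref{lemma:g_monotonically_decreasing} only needs $g$ to be monotone on $[0,M)$, this degenerate case is harmless there, but if you want the lemma as literally stated you would need to add a finiteness argument (or an assumption such as a density bounded away from zero at $\underline{\theta}$), and the same repair is needed in the paper.
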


\begin{proof}[Proof of \cref{lemma:curoff_optimal_quantile_one}.]
First observe that $p^{\star}(\Theta +x) \geq \underline{\theta} + x$ since setting a price below $\underline{\theta}+x$ is never optimal. Indeed, prices $p < \underline{\theta}+x$ induce a purchase probability of one. Setting a price of $p + \epsilon$ for a small enough $\epsilon$ still induces a purchase probability of one and achieves a strictly higher revenue.

Second, observe that any price $p \geq \overline{\theta} + x$ induces a purchase probability of zero and thus zero revenue. As $\Theta$ is well-behaved, then so is $\Theta + x$ (\cref{lemma:V_conditions_imply_conditions_translation}). This implies that $\prob_{\Theta \sim \mathcal{F}}[\Theta + x > 0] > 0$ and thus the optimal revenue is strictly positive (as one can find a price $p> 0$ such that  $\prob_{\Theta \sim \mathcal{F}}[\Theta + x \geq p]> 0$). Thus, $p^{\star}(\Theta + x) < \overline{\theta} + x$ for all $x \geq 0$. 

Finally, let $S = \{ x \geq 0| p^{\star}(\Theta +x) = \underline{\theta} + x\}$ be the set of increments $x \geq 0$ such that the revenue-maximizing price of selling to a customer with valuation $\Theta +x$ is equal to $\underline{\theta} + x$. Observe that setting a price of $ \underline{\theta}+x$ induces a purchase probability of one and thus a revenue of $\underline{\theta}+x$. Thus, $x \in S$ if and only if the optimal revenue of selling to a customer with valuation $\Theta + x$ is $\underline{\theta} + x$. Given that $p^{\star}(\Theta + x) \in [\underline{\theta} + x, \overline{\theta} + x)$, to conclude the proof it is sufficient to show that there exists some $M \geq 0$ such that $S = [M, +\infty)$. To prove this, it suffices to show that $S$ satisfies two properties: (a) If $x \in S$ then $x' \in S$ for all $x' \geq x$; (b) If $\{x_i\}_{i=1}^{\infty}$ is a decreasing sequence with $x_i \in S$ and $\lim_{i \to \infty} x_i = x_{\infty}$, then $x_{\infty} \in S$.

For an increment $x$ and price $p$, let $R(x,p) = p \prob_{\Theta \sim \mathcal{F}}[\Theta + x \geq p]$ be the revenue of offering a price $p$ to a customer with valuation $\Theta + x$. For property (a), let $x \in S$ and $x' \geq x$. To show that $x' \in S$, it is sufficient to show $R(x',p) \leq \underline{\theta} + x'$ for all prices $p$. Expanding we have
 \begin{align*}
        R(x',p) &= p \prob_{\Theta \sim \mathcal{F}}[\Theta + x' \geq p] = \big(p-(x'-x) + (x'-x)\big) \prob_{\Theta \sim \mathcal{F}}[\Theta + x + (x'-x) \geq p]\\
        &= \big(p-(x'-x)\big)\prob_{\Theta \sim \mathcal{F}}[\Theta + x \geq p-(x'-x)] + (x'-x) \prob_{\Theta \sim \mathcal{F}}[\Theta + x' \geq p]\\
        &= R(x, p-(x'-x)) + (x'-x) \prob_{\Theta \sim \mathcal{F}}[\Theta + x' \geq p]\\
        &\leq x + \underline{\theta} + (x'-x) =  \underline{\theta} + x' \Rightarrow x' \in S
\end{align*}
where the last inequality uses $R(x, p-(x'-x)) \leq  \underline{\theta}+x$ since 
 $x \in S$ and $\prob_{\Theta \sim \mathcal{F}}[\Theta + x' \geq p] \leq 1$.

For property (b), let $\{x_i\}_{i=1}^{\infty}$ be a decreasing sequence with $x_i \in S$ and such that $\lim_{i \to \infty} x_i = x_{\infty}$. We show that $x_{\infty} \in S$. The continuity of $\Theta$ implies that revenue function $R(x,p)$ is continuous in $x$ for any fixed price $p$. Combining this with the fact that $R(x_i,p) \leq \underline{\theta}+x_i$ for all $i$, we obtain 
 $$R(x_{\infty}, p) = \lim_{i \to \infty} R(x_i, p) \leq  \lim_{i \to \infty} \underline{\theta} +x_i=  \underline{\theta}+x_{\infty} \Rightarrow x_{\infty} \in S. $$
 \end{proof}

\begin{proof}[Proof of \cref{lemma:g_monotonically_decreasing}.]
\cref{lemma:curoff_optimal_quantile_one} shows that $p^{\star}(\Theta + x) \in (\underline{\theta} + x, \overline{\theta} + x)$ for $x \in [0,M)$ and $p^{\star}(\Theta + x) = \underline{\theta} + x$ for $x \geq M$ for some threshold $M \geq 0$. Thus, $g(x) \in ( \underline{\theta}, \overline{\theta})$ for $x \in [0,M)$ and $g(x) = \underline{\theta}$ for $x \in [M, + \infty)$. It thus suffices to show that $g(x)$ is monotonically decreasing for $x \in [0,M)$. The continuity of $\Theta$ implies that the function $p \prob_{\Theta \sim \mathcal{F}}[\Theta + x \geq p]$ is differentiable for $p \in (\underline{\theta}+x, \overline{\theta}+x)$ and thus the first-order conditions must be satisfied at $p = p^{\star}(\Theta + x)\in (\underline{\theta} + x, \overline{\theta} + x)$. Denoting the survival and density functions of $\Theta$ by $\overline{F}$ and $f$ respectively, this yields
$$\frac{d}{dp}p \prob_{\Theta \sim \mathcal{F}}[\Theta + x \geq p] = \frac{d}{dp}p \overline{F}(p-x) = \overline{F}(p-x) - p f(p-x) = 0$$
for $p = p^{\star}(\Theta + x)$. Rearranging the last equation yields
    \begin{equation}\label{equation:strict_regularity_implied}
     p^{\star}(\Theta + x)-x-\frac{\overline{F}(p^{\star}(\Theta + x)-x)}{f(p^{\star}(\Theta + x)-x)} = -x \quad \iff \quad g(x)-\frac{\overline{F}(g(x))}{f(g(x))} = -x
    \end{equation}
    for any $x \in [0,M)$. The strict regularity of $\Theta$ implies that the function $u \to u - \frac{\overline{F}(u)}{f(u)}$ is strictly increasing. Thus, the left-hand side of \eqref{equation:strict_regularity_implied} is increasing in $u = g(x)$ while the right-hand side is strictly decreasing in $x$. 
    Hence, $u = g(x)$ is decreasing for $x \in [0,M)$ concluding the proof. 
\end{proof}

\subsection{Platforms unaware of state-dependent behavior (Corollary~\ref{thm:dynamic_vs_static})}\label{appendix:what_if_platforms_are_not_aware}

\begin{proof}[Proof of \cref{thm:dynamic_vs_static}.]
Suppose that $\mathcal{F}$ has support on $[0, \overline{\theta}]$. Then \cref{thm:CoNF_opt_static_arbitrarily_bad} gives \begin{equation}\label{ineq:CoNF_static_general_lower_bound_broader}
\rev(\sigma^{\textsc{random}},\Pi^{\textsc{static}})\geq \frac{\mu^c h(c)}{h(0) + \overline{\theta}} \rev(\sigma^{\textsc{newest}},\Pi^{\textsc{static}}).
\end{equation} 
Since $\mathcal{F}$ is non-negative, Theorem \ref{theorem:dynamic_pricing_CoNF_bound} gives \begin{equation}\label{ineq:CoNF_dynamic_general_upper_bound_broader}
    \rev(\sigma^{\textsc{newest}},\Pi^{\textsc{dynamic}}) \geq \frac{1}{2} \rev(\sigma^{\textsc{random}},\Pi^{\textsc{dynamic}}).
\end{equation}
Combining (\ref{ineq:CoNF_static_general_lower_bound_broader}) and (\ref{ineq:CoNF_dynamic_general_upper_bound_broader}) with the fact that $\rev(\sigma^{\textsc{random}},\Pi^{\textsc{dynamic}}) \geq \rev(\sigma^{\textsc{random}},\Pi^{\textsc{static}})$,
$$\frac{\rev(\sigma^{\textsc{newest}},\Pi^{\textsc{dynamic}})}{\rev(\sigma^{\textsc{newest}},\Pi^{\textsc{static}})} \geq \frac{\mu^c h(c)}{2(h(0) + \overline{\theta})} .$$
For any $M > 0$ there exist $\epsilon(M) > 0$ such that when $\overline{\theta}, h(0) < \epsilon(M)$, $\frac{\rev(\sigma^{\textsc{newest}},\Pi^{\textsc{dynamic}})}{\rev(\sigma^{\textsc{newest}},\Pi^{\textsc{static}})} > M$. 
\end{proof}

\section{Supplementary material for Section~\ref{sec:dynamic_quality}}\label{app:dynamic_quality}
\subsection{Random yields higher revenue than Newest even with non-stationarity (Proposition \ref{prop: random_better_than_newest_nonstationarity})}\label{appendix_subsec:proof_prop_random_better_than_newest_nonstationarity}
Our proof relies on analyzing the Markov Chain $M_t = (r_t, \mu^{(t)})$ where $r_t$ is the newest review the customer sees at round $t$ and $\mu^{(t)}$ is the product quality at round $t$, under the $\newest$ review ordering policy. Let $\pi^{\textsc{newest}}(r, \mu;\xi)$ be the stationary distribution of the process $M_t$ at a review $r \in \{0,1\}$ and product quality $\mu \in \{\mu_L, \mu_H\}$ when the probability of change is $\xi$. We use the following lemma which shows that $\newest$ spends strictly more time at a negative review than $\random$.

\begin{lemma}\label{lemma:newest_more_time_negative_than_random_nonstationarity}
    For any product qualities $\mu_L$ and $\mu_H$ satisfying $\mu_L < \mu_H$, probability of change $\xi \in (0,1]$, and price $p$ satisfying Assumption \ref{assumption:non_abs_non_degen}, $\newest$ spends more time at a negative review than $\random$, i.e.,    $$\pi^{\textsc{newest}}(0, \mu_L;\xi) + \pi^{\textsc{newest}}(0, \mu_H;\xi) >\pi^{\textsc{random}}(0).$$
\end{lemma}

\begin{proof}[Proof of  \cref{prop: random_better_than_newest_nonstationarity}]
The revenue of $\newest$ is given by 
\begin{align*}
        \rev(\newest, p) &=  p \cdot \big[q_0 (\pi^{\textsc{newest}}(0, \mu_L;\xi) + \pi^{\textsc{newest}}(0, \mu_H;\xi)) \\
        &+ q_1 (\pi^{\textsc{newest}}(1, \mu_L;\xi) + \pi^{\textsc{newest}}(1, \mu_H;\xi)) \big].
\end{align*}
Letting $\pi^{\textsc{random}}(0) = 1-\frac{\mu_L + \mu_H}{2}$ and $\pi^{\textsc{random}}(1) = \frac{\mu_L + \mu_H}{2}$ be the probabilities of a positive and a negative review under $\random$ respectively, the revenue of $\random$ is given by
\begin{align*}
        \rev(\random, p) = p \cdot \big[ q_0 \pi^{\textsc{random}}(0) + q_1 \pi^{\textsc{random}}(1) \big].
\end{align*}
\cref{lemma:newest_more_time_negative_than_random_nonstationarity} yields that $\newest$ spends more time at a negative review than $\random$, i.e., $$\pi^{\textsc{newest}}(0, \mu_L;\xi) + \pi^{\textsc{newest}}(0, \mu_H;\xi) > \pi^{\textsc{random}}(0).$$ Combined with the fact that $q_1 > q_0$ (as $p$ is non-degenerate) the proof follows.
\end{proof}

To prove \cref{lemma:newest_more_time_negative_than_random_nonstationarity}, we first prove two auxiliary lemmas about the probability of a negative review under $\newest$ for a probability of change $\xi$ by $\pi^{\textsc{newest}}(0; \xi)$, i.e., $$\pi^{\textsc{newest}}(0;\xi) \coloneqq \pi^{\textsc{newest}}(0, \mu_L;\xi) + \pi^{\textsc{newest}}(0, \mu_H;\xi).$$
\cref{lemma:non-stationarity_newest_increasing_time_negative} shows that $\pi^{\textsc{newest}}(0;\xi)$ is non-decreasing in the switching probability $\xi$. \cref{lemma:non-stationarity_newest_q=0_greater} shows that as the switching probability $\xi$ approaches $0$, $\newest$ spends more time at a negative review than $\random$.

\begin{lemma}\label{lemma:non-stationarity_newest_increasing_time_negative}
For any product qualities $\mu_L$ and $\mu_H$ satisfying $\mu_L < \mu_H$, and any price $p$ satisfying Assumption \ref{assumption:non_abs_non_degen}, the probability of a negative review $\pi^{\textsc{newest}}(0;\xi)$ is strictly increasing in the probability of change $\xi$.
\end{lemma}

\begin{lemma}\label{lemma:non-stationarity_newest_q=0_greater}
For any product qualities $\mu_L$ and $\mu_H$ satisfying $\mu_L < \mu_H$, and any price $p$ satisfying Assumption \ref{assumption:non_abs_non_degen}, $\newest$ spends strictly more time at a negative review than $\random$ as the probability of change $\xi$ converges to zero, i. e., $\lim_{\xi \to 0} \pi^{\textsc{newest}}(0;\xi)> \pi^{\textsc{random}}(0)$.
\end{lemma}

\begin{proof}[Proof of \cref{lemma:newest_more_time_negative_than_random_nonstationarity}.]
The proof follows directly from \cref{lemma:non-stationarity_newest_increasing_time_negative} and \cref{lemma:non-stationarity_newest_q=0_greater}.
\end{proof}

To prove \cref{lemma:non-stationarity_newest_increasing_time_negative} and \cref{lemma:non-stationarity_newest_q=0_greater}, we characterize the stationary distribution of the Markov Chain $M_t = (r_t, \mu^{(t)})$ consisting of the newest review $r_t$ and the current product quality $\mu^{(t)}$ at round $t$ under the review ordering policy $\newest$. To state the characterization it is useful to define the quantities $A_H = 1- ((1-\mu_H)q_1 + \mu_H q_0)$ and $A_L = 1-((1-\mu_L)q_1 + \mu_L q_0)$.

\begin{lemma}\label{lemma:stationary_dist_non-stationary_quality}
    For any price $p$ satisfying Assumption \ref{assumption:non_abs_non_degen}, the stationary distribution of the Markov Chain $M_t$ at the states with a negative review is given by 
    \begin{align*}\pi^{\textsc{newest}}(0, \mu_H;\xi) &= \frac{q_1 \big[2(1-\mu_H)A_L (\xi-1) + (1-\mu_L)\xi +(1-\mu_H)(2-\xi) \big]}{(2-(2-\xi)A_H)(2-(2-\xi)A_L)-\xi^2 A_H A_L} \quad \text{and}\\
    \pi^{\textsc{newest}}(0, \mu_L;\xi) &= \frac{q_1 \big[2(1-\mu_L)A_H (\xi-1) + (1-\mu_H)\xi + (1-\mu_L)(2-\xi)\big]}{(2-(2-\xi)A_H)(2-(2-\xi)A_L)-\xi^2 A_H A_L}
   \end{align*}
\end{lemma}

\begin{proof}[Proof of \cref{lemma:non-stationarity_newest_increasing_time_negative}]
Using \cref{lemma:stationary_dist_non-stationary_quality}, we can rewrite 
    \begin{align*}
         \frac{\pi^{\textsc{newest}}(0; \xi)}{q_1} = \frac{((1-\mu_L)A_H + (1-\mu_H)A_L) \xi + (1-\mu_L)(1-A_H) + (1-\mu_H)(1-A_L)}{((1-A_H)A_L + (1-A_L)A_H)\xi + 2(1-A_H)(1-A_L)}
    \end{align*}
We will use the fact that a rational function $\xi \to \frac{A \xi + B}{C \xi + D}$ is strictly increasing if and only if $AD > BC$. To conclude the proof it suffices to show that this fact holds true for $A = (1-\mu_L)A_H + (1-\mu_H)A_L$, $B = (1-\mu_L)(1-A_H) + (1-\mu_H)(1-A_L)$, $C = (1-A_H)A_L + (1-A_L)A_H$, and $D = 2(1-A_H)(1-A_L)$. Expanding the product $AD$ yields
\begin{align}\label{eq:equation_expanded_ad}
    AD &= 2((1-\mu_L)A_H + (1-\mu_H)A_L)(1-A_H)(1-A_L) \nonumber\\
    &= 2(1-\mu_L) (1-A_H)(1-A_L)A_H+ 2(1-\mu_H)(1-A_L)(1-A_H)A_L.
\end{align}
Expanding the product $BC$ yields 
\begin{align}
    BC &= ((1-\mu_L)(1-A_H) + (1-\mu_H)(1-A_L))((1-A_H)A_L + (1-A_L)A_H) \nonumber\\
    &= (1-\mu_L)(1-A_H)^2 A_L + (1-\mu_H)(1-A_L)^2A_H \nonumber\\
        &+ (1-\mu_L)(1-A_H)(1-A_L)A_H  + (1-\mu_H)(1-A_L)(1-A_H)A_L.
\end{align}
Therefore, canceling the terms $(1-\mu_L)(1-A_H)(1-A_L)A_H$ and $(1-\mu_H)(1-A_L)(1-A_H)A_L$, the inequality $AD > BC$ is equivalent to showing 
\begin{align*}
&(1-\mu_L) (1-A_H)(1-A_L)A_H+ (1-\mu_H)(1-A_L)(1-A_H)A_L \\
&> (1-\mu_L)(1-A_H)^2 A_L + (1-\mu_H)(1-A_L)^2A_H \\
         \iff &(1-\mu_L)(1-A_H)((1-A_L)A_H-(1-A_H)A_L) \\
         &+ (1-\mu_H)(1-A_L)((1-A_H)A_L-(1-A_L)A_H) > 0\\
         \iff &(1-\mu_L)(1-A_H)(A_H-A_L) + (1-\mu_H)(1-A_L)(A_L-A_H) > 0 \text{  (dividing by $A_H -A_L > 0$)}\\
         \underset{(1)}{\iff}
 &(1-\mu_L)(1-A_H) > (1-\mu_H)(1-A_L) \text{ (substituting the expressions for $A_H$ and $A_L$)} \\
         \iff &(1-\mu_L)((1-\mu_H)q_1 + \mu_H q_0) > (1-\mu_H)((1-\mu_L)q_1 + \mu_L q_0) \\
         \iff &(1-\mu_L)\mu_H q_0 > (1-\mu_H)\mu_L q_0 \\
          \underset{(2)}{\iff} &\mu_H > \mu_L
    \end{align*}
    where in (1) we divided by $A_H-A_L = (\mu_H-\mu_L)(q_1 - q_0) > 0$ since $q_1 > q_0$ (as $p$ satisfies Assumption \ref{assumption:non_abs_non_degen}) and $\mu_H > \mu_L$ and in (2) we divided by $q_0 > 0$ as $p$ satisfies Assumption \ref{assumption:non_abs_non_degen}. The last inequality holds as $\mu_H > \mu_L$ finishing the proof. 
    \end{proof} 

\begin{proof}[Proof of \cref{lemma:non-stationarity_newest_q=0_greater}]
Using \cref{lemma:stationary_dist_non-stationary_quality}, 
$$\pi^{\textsc{newest}}(0;\xi) =  \frac{ q_1 \Big[ ((1-\mu_L)A_H + (1-\mu_H)A_L) \xi + (1-\mu_L)(1-A_H) + (1-\mu_H)(1-A_L) \Big]}{((1-A_H)A_L + (1-A_L)A_H)\xi + 2(1-A_H)(1-A_L)}$$
Taking the limit $\xi \to 0$ yields
$$ \lim_{\xi \to 0} \pi^{\textsc{newest}}(0;\xi) = \frac{\frac{1}{2}q_1 \big[(1-\mu_L)(1-A_H) + (1-\mu_H)(1-A_H) \big]}{(1-A_H)(1-A_L)} = \underbrace{\frac{1}{2} \frac{q_1(1-\mu_L)}{1-A_L}}_{\textsc{term}_L}+\underbrace{\frac{1}{2} \frac{q_1(1-\mu_H)}{1-A_H}}_{\textsc{term}_H}.$$
By substituting the expression for $A_L = 1-q_1 (1-\mu_L) - q_0 \mu_L$, and using that $q_1 > q_0$ (as $p$ is non-degenerate as it satisfies Assumption \ref{assumption:non_abs_non_degen}), we can lower bound $\textsc{term}_L$ as 
\begin{equation}\label{ineq:lb_on_term_term_L}
\textsc{term}_L = \frac{1}{2} \frac{q_1(1-\mu_L)}{(1-\mu_L)q_1 + \mu_L q_0} > \frac{1}{2} \frac{q_1(1-\mu_L)}{q_1} = \frac{1-\mu_L}{2}.
\end{equation}
By substituting the expression for $A_H = 1-q_1 (1-\mu_H) - q_0 \mu_H$, and using that $q_1 > q_0$ (as $p$ is non-degenerate as it satisfies Assumption \ref{assumption:non_abs_non_degen}), we can lower bound $\textsc{term}_H$ as 
\begin{equation}\label{ineq:lb_on_term_term_H}
\textsc{term}_H = \frac{1}{2} \frac{q_1(1-\mu_H)}{(1-\mu_H)q_1 + \mu_H q_0} > \frac{1}{2} \frac{q_1(1-\mu_H)}{q_1} = \frac{1-\mu_H}{2}.
\end{equation}
Combining \eqref{ineq:lb_on_term_term_L} and \eqref{ineq:lb_on_term_term_H} with the fact that  $\pi^{\textsc{random}}(0) = \frac{1-\mu_L}{2} + \frac{1-\mu_H}{2}$ finishes the proof. 
 \end{proof}
To prove \cref{lemma:stationary_dist_non-stationary_quality}, we first establish that $M_t$ is irreducible and aperiodic, which are conditions which guarantee the existence of a unique stationary distribution.

\begin{lemma}\label{lemma:M_t_review_quality_aperiodic_irreducible}
The Markov Chain $M_t$ is irreducible and aperiodic.
\end{lemma}

\begin{proof}[Proof of \cref{lemma:M_t_review_quality_aperiodic_irreducible}]
We first show that the Markov Chain $M_t$ is irreducible and aperiodic. For states $M^{(1)} = (r_1, \mu_1), M^{(2)} =(r_2, \mu_2) \in \{0,1\} \times \{\mu_L, \mu_H\}$, $M_t$ can reach state $M^{(2)}$ from $M^{(1)}$ after a state transition of the product quality (with probability $\xi$) and a new state of $\mu_1$ (with probability $\frac{1}{2}$), a purchase (with probability at least $\min(q_1, q_0) = q_0$), and a review $r_2$ from $\mu_1$ (with probability $\mu_1^{r_2}(1-\mu_1)^{1-r_2}$). Therefore, the one-step transition probability from $M^{(1)}$ to $M^{(2)}$ is at least \begin{equation}\label{ineq:transition_lb}
    \prob[M_t = M^{(2)}|M_{t-1} = M^{(1)}] \geq \frac{\xi}{2} q_0 \mu_1^{r_2} (1-\mu_1)^{1-r_2}.
\end{equation}
Therefore, using that $\mu_L < 1$ and $\mu_H > 0$ (as $\mu_L < \mu_H$) and applying this formula when $(M^{(1)}, M^{(2)}) \in \Big\{((0,\mu_L),(0,\mu_H)), ((0,\mu_H),(1,\mu_H)), ((1,\mu_H),(1,\mu_L)), ((1,\mu_L)),(0,\mu_L))\Big\}$,
\begin{align*}\prob[M_t = (0,\mu_H)|M_{t-1} = (0,\mu_L)] &\geq \frac{\xi}{2} q_0 (1-\mu_L) > 0,\\\prob[M_t = (1, \mu_H)|M_{t-1} = (0,\mu_H)] &\geq \frac{\xi}{2} q_0 \mu_H > 0,\\\prob[M_t = (1, \mu_L)|M_{t-1} = (1, \mu_H)] &\geq \frac{\xi}{2} q_0 \mu_H > 0, \quad \text{and}\\\prob[M_t = (0, \mu_L)|M_{t-1} = (1, \mu_L)] &\geq \frac{\xi}{2} q_0 (1-\mu_L)> 0.\end{align*} Thus, any state of $M_t$ can be reached from any other state, i.e., $M_t$ is irreducible. 

We next show that $M_t$ is aperiodic. Applying \eqref{ineq:transition_lb} to $M^{(1)} = M^{(2)} = (0, \mu_L)$ we obtain $\prob[M_t = (0, \mu_L)|M_{t-1} = (0, \mu_L)] \geq \frac{\xi}{2} q_0 (1-\mu_L)> 0$. Thus, $M_t$ has a self-loop, i.e., it is aperiodic. \end{proof}

\begin{proof}[Proof of \cref{lemma:stationary_dist_non-stationary_quality}]
By \cref{lemma:M_t_review_quality_aperiodic_irreducible},  $M_t$ is aperiodic and irreducible. Thus, $M_t$ has a unique stationary distribution $\pi(\cdot, \cdot; \xi)$. We next show that $\pi(0, \mu_L; \xi)$ and $\pi(0, \mu_H; \xi)$ are given by the expressions in the statement. Given that the product quality $\mu^{(t)}$ spends an equal fraction of rounds in each of $\mu_L$ and $\mu_H$ in steady-state, then 
\begin{equation}\label{eq:stationary_mu_equal_states}
        \pi(0, \mu_L;\xi) +\pi(1, \mu_L;\xi) = \pi(0, \mu_H;\xi) +\pi(1, \mu_H;\xi) = \frac{1}{2}.
    \end{equation}
    We first write down the steady-state equations of $\pi$ for the state $(0, \mu_L)$. This state can be reached from
    the state $(0, \mu_L)$ if the product quality stays the same (with probability $(1-\frac{\xi}{2})$) and there is either a purchase and a negative review (with probability $q_0(1-\mu_L)$) or a non-purchase (with probability $1-q_0$). Thus, the state $(0, \mu_L)$ can be reached from the state $(0, \mu_L)$ with probability of $(1-\frac{\xi}{2})(q_0(1-\mu_L) + 1-q_0)$. The state $(0, \mu_L)$ can be reached from the state $(0, \mu_H)$ if the quality changes (with probability $\frac{\xi}{2}$) and there is either a purchase and a negative review (with probability $q_0 (1-\mu_H)$) or a non-purchase (with probability $1-q_0$). Thus, the state $(0, \mu_L)$ can be reached from the state $(0, \mu_H)$ with probability $\frac{\xi}{2}(q_0(1-\mu_H) + 1-q_0)$. The state $(0, \mu_L)$ can be reached from the state $(1, \mu_L)$ if the quality stays the same and there is a purchase with a negative review (with probability $(1-\frac{\xi}{2})q_1(1-\mu_L)$). The state $(0, \mu_L)$ can be reached from the state $(1, \mu_H)$ if the quality stays the same and there is a purchase with negative review (with probability $\frac{\xi}{2} q_1(1-\mu_H)$). Therefore, the steady-state equation at the state $\pi(0, \mu_L)$ is given by 
    \begin{align*}
     \pi(0, \mu_L; \xi) &= (1-\frac{\xi}{2})(q_0(1-\mu_L) + 1-q_0) \pi(0, \mu_L;\xi) + \frac{\xi}{2}(q_0(1-\mu_H) + 1-q_0) \pi(0, \mu_H; \xi) \\
     &+ (1-\frac{\xi}{2})q_1(1-\mu_L) \pi(1, \mu_L;\xi) + \frac{\xi}{2} q_1(1-\mu_H) \pi(1, \mu_H; \xi).
    \end{align*}
    Using \eqref{eq:stationary_mu_equal_states} \and the expressions for $A_L$ and $A_H$, the above is rewritten as
    $$
     \pi(0, \mu_L;\xi) = \pi(0, \mu_L;\xi) (1-\frac{\xi}{2})A_L + \pi(0, \mu_H;\xi) \frac{\xi}{2} A_H + \frac{1}{2} q_1 \big((1-\frac{\xi}{2})(1-\mu_L) + \frac{\xi}{2}(1-\mu_H) \big).$$
Rearranging terms,  we obtain     \begin{equation}\label{eq:steady-state-eq-(0,mu_L)}\pi(0, \mu_L;\xi) = \frac{\pi(0, \mu_H;\xi) \xi  A_H + \frac{1}{2} q_1\big((2-\xi)(1-\mu_L) + \xi(1-\mu_H) \big)}{2-(2-\xi)A_L}.\end{equation}
Due to an analogous argument, the steady-state equation at the state $(0, \mu_H)$ is 
    $$
     \pi(0, \mu_H; \xi) = \pi(0, \mu_L;\xi) \frac{\xi}{2} A_L + \pi(0, \mu_H;\xi) (1-\frac{\xi}{2}) A_H + \frac{1}{2} q_1 \big(\frac{\xi}{2}(1-\mu_L) + (1-\frac{\xi}{2})(1-\mu_H) \big).
    $$
    which can similarly be rewritten as 

    \begin{equation}\label{eq:steady-state-eq-(0,mu_H)}\pi(0, \mu_H;\xi) = \frac{\pi(0, \mu_L;\xi) \xi A_L + \frac{1}{2} q_1\big((2-\xi)(1-\mu_H) + \xi(1-\mu_L) \big)}{2-(2-\xi)A_H}.\end{equation}
Substituting \eqref{eq:steady-state-eq-(0,mu_L)} into \eqref{eq:steady-state-eq-(0,mu_H)} yields that 
\begin{align}\label{eq:expression_pi(0,mu_H)_with_X}
    \pi(0, \mu_H;\xi) &= \frac{\frac{\pi(0, \mu_H;\xi) \xi  A_H + \frac{1}{2} q_1\big((2-\xi)(1-\mu_L) + \xi(1-\mu_H) \big)}{2-(2-\xi)A_L} \xi A_L + \frac{1}{2} q_1\big((2-\xi)(1-\mu_H) + \xi(1-\mu_L) \big)}{2-(2-\xi)A_H} \nonumber \\
    &= \frac{\pi(0, \mu_H;\xi) \xi^2  A_L A_H + \frac{1}{2} q_1 X}{(2-(2-\xi) A_H)(2-(2-\xi) A_L)}
\end{align}
where
\begin{align*}
    X &=  \xi A_L \Big((2-\xi)(1-\mu_L) + \xi(1-\mu_H)\Big) + (2-(2-\xi) A_L )\Big((2-\xi)(1-\mu_H) + \xi(1-\mu_L) \Big) \\
    &=A_L\Big( \cancel{\xi (2-\xi)(1-\mu_L)} + \xi^2 (1-\mu_H) -(2-\xi)^2(1-\mu_H) - \cancel{\xi(2-\xi) (1-\mu_L)}\Big) \\
    &+ 2\Big((2-\xi)(1-\mu_H) + \xi(1-\mu_L) \Big) \\
    &= A_L(1-\mu_H)(\xi^2-(2-\xi)^2) + 2\Big((2-\xi)(1-\mu_H) + \xi(1-\mu_L) \Big)\\
    &= 4(1-\mu_H) A_L(\xi-1) + 2\Big((2-\xi)(1-\mu_H) + \xi(1-\mu_L) \Big)
\end{align*}
Substituting the expression for $X$ in \eqref{eq:expression_pi(0,mu_H)_with_X} yields 
\begin{align*}
     \pi(0, \mu_H;\xi) &=\frac{\pi(0, \mu_H;\xi) \xi^2  A_L A_H +  q_1 \big[2(1-\mu_H) A_L(\xi-1) + (2-\xi)(1-\mu_H) + \xi(1-\mu_L) \big]}{(2-(2-\xi) A_H)(2-(2-\xi) A_L)}\\
     \iff \pi(0, \mu_H;\xi) &= \frac{q_1 \big[2(1-\mu_H)A_L (\xi-1) + (1-\mu_L)\xi + (1-\mu_H)(2-\xi)\big]}{(2-(2-\xi)A_H)(2-(2-\xi)A_L)-\xi^2 A_H A_L}
\end{align*}
which is exactly the expression for $\pi^{\textsc{newest}}(0, \mu_H;\xi)$ given in the theorem statement. Due an analogous argument substituting \eqref{eq:steady-state-eq-(0,mu_H)} into \eqref{eq:steady-state-eq-(0,mu_L)} and solving for $\pi(0, \mu_L;\xi)$ yields 
$$\pi^{\textsc{newest}}(0, \mu_L;\xi) = \frac{q_1 \big[2(1-\mu_L)A_H (\xi-1) + (1-\mu_H)\xi + (1-\mu_L)(2-\xi)\big]}{(2-(2-\xi)A_H)(2-(2-\xi)A_L)-\xi^2 A_H A_L} $$
concluding the proof.
\end{proof}

\subsection{Newest always yields lower belief error than Random under non-stationarity (Proposition~\ref{prop: nonstationarity-belieferror-newest-better-than-random})}\label{appendix_subsec:non-stationarity-belief-error}
Our proof relies on analyzing the Markov Chain $M_t = (r_t, \mu^{(t)})$ where $r_t$ is the newest review the customer sees at round $t$ and $\mu^{(t)}$ is the product quality at round $t$, under the $\newest$ review ordering policy. Let $\pi^{\textsc{newest}}(r, \mu;\xi)$ be the stationary distribution of the process $M_t$ at a review $r \in \{0,1\}$ and product quality $\mu \in \{\mu_L, \mu_H\}$ when the probability of change is $\xi$. To prove the proposition we will use the characterization the stationary distribution of process $M_t$ from \cref{lemma:stationary_dist_non-stationary_quality}.

\begin{proof}[Proof of  \cref{prop: nonstationarity-belieferror-newest-better-than-random}.]
For a review ordering policy $\sigma$ and probability of change $\xi$, let $\pi^{\sigma}(r,\mu; \xi)$ be the fraction of time this policy spends at a state where the current review is $r \in \{0,1\}$ and the current product quality is $\mu \in \{\mu_L, \mu_H\}$. The belief error $\textsc{BeliefError}(\sigma, p;\xi)$ is given by:
\begin{align*}
    &\big(\expect[\mathrm{Beta}(a,b+1)]-\mu_L \big)^2 \pi^{\sigma}(0,\mu_L; \xi) + \big(\expect[\mathrm{Beta}(a+1,b)]-\mu_L \big)^2 \pi^{\sigma}(1,\mu_L; \xi)\\
    &+ \big(\expect[\mathrm{Beta}(a,b+1)]-\mu_H \big)^2 \pi^{\sigma}(0,\mu_H; \xi) + \big(\expect[\mathrm{Beta}(a+1,b)]-\mu_H \big)^2 \pi^{\sigma}(1,\mu_H; \xi).
\end{align*}
Given that $\mu_L = \expect[\mathrm{Beta}(a,b+1)]$ and $\mu_H = \expect[\mathrm{Beta}(a+1,b)]$, the belief error can be rewritten as
$\textsc{BeliefError}(\sigma, p;\xi) = (\mu_H-\mu_L)^2 (\pi^{\sigma}(1,\mu_L; \xi) + \pi^{\sigma}(0,\mu_H; \xi))$.
Given that the product quality $\mu^{(t)}$ spends an equal amount of time each of the states $\mu_L$ and $\mu_H$,  $\pi^{\sigma}(1, \mu_L;\xi) = \frac{1}{2}-\pi^{\sigma}(0, \mu_L;\xi)$, the belief error can be further rewritten as 
\begin{align}\label{eq:belief_error_general_policy}
    \textsc{BeliefError}(\sigma, p;\xi) &=(\mu_H-\mu_L)^2 (\frac{1}{2}-\pi^{\sigma}(0,\mu_L; \xi) + \pi^{\sigma}(0,\mu_H; \xi))\nonumber\\
    &= (\mu_H-\mu_L)^2(\pi^{\sigma}(0,\mu_H; \xi)-\pi^{\sigma}(0,\mu_L; \xi))+\frac{(\mu_H-\mu_L)^2}{2} .
\end{align}

Taking $\sigma = \random$, $\pi^{\sigma}(0,\mu_H; \xi)=\pi^{\sigma}(0,\mu_L; \xi) = \frac{1}{2}(1-\frac{\mu_H + \mu_L}{2})$, and thus \eqref{eq:belief_error_general_policy} implies that the belief error of $\random$ is $\textsc{BeliefError}(\random, p;\xi)  = \frac{(\mu_H-\mu_L)^2}{2}$. 

Taking $\sigma = \newest$ and using \cref{lemma:stationary_dist_non-stationary_quality} and the expressions for $A_L$ and $A_H$, the difference $ \pi^{\textsc{newest}}(0,\mu_H; \xi) -  \pi^{\textsc{newest}}(0,\mu_L; \xi)$ equals
\begin{align*}
    &\frac{q_1 \big[2(1-\mu_H)A_L (\xi-1) + (1-\mu_L)\xi + (1-\mu_H)(2-\xi)\big]}{(2-(2-\xi)A_H)(2-(2-\xi)A_L)-\xi^2 A_H A_L}\\
    &-  \frac{q_1 \big[2(1-\mu_L)A_H (\xi-1) + (1-\mu_H)\xi + (1-\mu_L)(2-\xi)\big]}{(2-(2-\xi)A_H)(2-(2-\xi)A_L)-\xi^2 A_H A_L}\\ 
&= \frac{2 q_1 (\xi-1) \big( (1-\mu_L)(1-A_H) - (1-\mu_H)(1-A_L) \big)}{(2-(2-\xi)A_H)(2-(2-\xi)A_L)-\xi^2 A_H A_L}\\
&= \frac{2 q_1 (\xi-1) \big[ (1-\mu_L)((1-\mu_H) q_1 + \mu_H q_0) - (1-\mu_H)((1-\mu_L)q_1 + \mu_L q_0) \big]}{(2-(2-\xi)A_H)(2-(2-\xi)A_L)-\xi^2 A_H A_L}\\
&= \frac{2 q_1 q_0(\xi-1)(\mu_H - \mu_L)}{(2-(2-\xi)A_H)(2-(2-\xi)A_L)-\xi^2 A_H A_L}.
\end{align*}
Therefore \eqref{eq:belief_error_general_policy}combined with the fact that $\textsc{BeliefError}(\random, p;\xi)  = \frac{(\mu_H-\mu_L)^2}{2}$ implies
\begin{align*}
    \textsc{BeliefError}(\newest, p;\xi) &= (\mu_H -\mu_L)^2 \frac{2 q_1 q_0 (\xi-1)(\mu_H - \mu_L)}{(2-(2-\xi)A_H)(2-(2-\xi)A_L)-\xi^2 A_H A_L} \\
    & + \textsc{BeliefError}(\random, p;\xi) \\
    &= 2 q_1 q_0(\mu_H -\mu_L)^3 g(\xi) + \textsc{BeliefError}(\random, p;\xi)
\end{align*}
where $g(\xi) = \frac{(\xi-1)}{(2-(2-\xi)A_H)(2-(2-\xi)A_L)-\xi^2 A_H A_L} = \frac{\xi-1}{2((1-A_L)A_H+ (1-A_H)A_L) \xi + 4(1-A_H)(1-A_L)}$. Given that for any $\xi \in (0,1)$, the term $g(\xi)$ is strictly negative and $\mu_H -\mu_L > 0$, $\textsc{BeliefError}(\newest, p;\xi) < \textsc{BeliefError}(\random, p;\xi)$ showing the main claim of the proposition. 

To show the second claim, it suffices to show that  $g(\xi)$ is strictly increasing in $\xi$. A function of the form $\xi \to \frac{A \xi + B}{C \xi + D}$ is strictly increasing if and only if $AD - BC>0$. The function $g$ corresponds to parameters $A = 1$, $B = -1$, $C = 2((1-A_L)A_H+ (1-A_H)A_L)$, and $D = 4(1-A_H)(1-A_L)$. The condition $AD -BC > 0$ is equivalent to $4(1-A_H)(1-A_L)  + 2((1-A_L)A_H+ (1-A_H)A_L) > 0$ which holds as $A_H > 0$ and $1-A_L > 0$, showing the second claim of the proposition. Given that $\textsc{BeliefError}(\random, p;\xi) = \frac{(\mu_H-\mu_L)^2}{2}$ does not depend on $\xi$, the third claim follows.
 \end{proof}

\section{Motivation behind using Random as benchmark (Section \ref{subsec: policy_decison})}\label{appendix:why_random_right_benchmark}
To motivate our choice of $\random$ as a benchmark, we establish that it is the revenue-maximizing policy out of a class of review ordering policies which do not reveal review ratings, i.e., they
\begin{enumerate}
    \item do not take the review ratings into account, and
    \item do not take the actual times of posting of the reviews into account.
\end{enumerate}
Without the first requirement, the platform can maximize revenue by always ordering all positive reviews first; this strategy provides a clearly biased set of reviews which makes it undesirable. Without the second requirement, the platform can use the time between consecutive reviews to infer whether a review is positive -- if another review follows soon after, then it is more likely that the review is positive as positive reviews induce purchases faster than negative reviews. Incorporating the actual time therefore has a similar criticism with incorporating review ratings. As a result, we restrict our attention to ordering policies that only take the relative time of posting of each of the reviews (which does not allow inference of whether a review is positive). We further restrict our attention to stationary policies: the exact ordering rule does not depend on the round.

We define a general class of ordering policies that satisfy these requirements. Let $S_w$ be the family of all $c$-sized subsets of $\{1, \ldots, w\}$. 
We denote by $\sigma^{\textsc{random}(w, \psi)}$ the review ordering policy that selects $c$ reviews from the most recent $w$ according to a fixed probability distribution $\psi \in \Delta(S_w)$. We denote $\sigma^{\textsc{random}(w, \psi)}$ as $\sigma^{\textsc{random}(w)}$ when $\psi = \mathcal{U}(S_w)$ is uniform distribution over $\mathcal{S}_w$. Let $\mathcal{C} = \{\sigma^{\textsc{random}(w, \psi)}, w \geq c, \psi \in \Delta(S_w)\}$ be the class of review ordering policies $\sigma^{\textsc{random}(w, \psi)}$ for some $w \geq c$ and distribution $\psi \in \Delta(S_w)$. Note that $\random$ is not strictly in the class $\mathcal{C}$ but it can defined as the limit of $\sigma^{\textsc{random}(w)}$ as $ w \to \infty$ (see \cref{prop: limit_dist_window_w_to_random}, Appendix \ref{subsec_app:formal_defintion_random}). In Proposition \ref{prop:random_highest_revenue_amongst_class} (Appendix \ref{app_subsec_proof_random_highest_revenue}), we show that  (1) $\random$ yields strictly larger revenue than any ordering policy in $\mathcal{C}$ and (2) the revenue of $\random$ can be approximated arbitrarily well by ordering policies in the class $\mathcal{C}$. Therefore, we can think of the revenue under $\random$ as the maximum revenue achievable by any policy in this class.

\subsection{Formal definition of Random as the limit of window-random policies}\label{subsec_app:formal_defintion_random}
We now show that the policy $\random(\mathcal{H}_t)$ can be defined as the limit of $\sigma^{\textsc{random}(w)}(\mathcal{H}_t)$ as $w \to \infty$.   For a vector of review ratings $\bm{z} \in \{0,1\}^c$, we denote $N(\bm{z}) \coloneqq N_{\bm{z}} = \sum_{i=1}^c z_i$. \cref{prop: limit_dist_window_w_to_random} implies that as $w \to \infty$, the draws from $\sigma^{\textsc{random}(w)}$ across different rounds are independent and the distribution $\sigma^{\textsc{random}(w)}(\mathcal{H}_t)$ approaches the distribution of $\random$.

\begin{proposition}\label{prop: limit_dist_window_w_to_random}
    For any review rating vectors $\bm{z}^{(1)}, \bm{z}^{(2)}, \ldots, \bm{z}^{(k)} \in \{0,1\}^c$ and rounds $t^{(1)} < t^{(2)}<\ldots <t^{(k)}$, the distribution of reviews by $\sigma^{\textsc{random}(w)}$ approaches the one of $\sigma^{\textsc{random}}$ as $w\to \infty$, i.e.,
    $$\lim_{w \to \infty} \prob \Big[\forall i \in \{1, \ldots, k\}: \sigma^{\textsc{random}(w)}(\mathcal{H}_{t^{(i)}})= \bm{z}^{(i)} \Big] 
    = \prod_{i=1}^k \prob \Big[\sigma^{\textsc{random}}(\mathcal{H}_{t^{(i)}})= \bm{z}^{(i)}\Big].
    $$
\end{proposition}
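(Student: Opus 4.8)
The plan is to show that, for fixed rounds $t^{(1)} < \cdots < t^{(k)}$ with $T \coloneqq t^{(k)}$, the $\sigma^{\textsc{random}(w)}$ process coincides with $k$ genuinely independent $\bern(\mu)^c$ draws on an event whose probability tends to $1$ as $w \to \infty$. The structural observation driving this is that at most $T-1$ reviews have been generated endogenously by round $T$, so for large $w$ each window of $w$ most recent reviews consists of these finitely many endogenous reviews together with at least $w-(T-1)$ entries drawn from the i.i.d.\ initial pool $\{X_\tau\}_{\tau \le -1} \sim \bern(\mu)$. I would define the good event $G_w$ to be that, across all rounds $t = 1, \ldots, T$, every one of the $c$ sampled reviews is a pool review and all these samples are mutually distinct; let $B_w = G_w^c$. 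On $G_w$ the $cT$ reviews ever looked at are $cT$ distinct pool entries, whose ratings are, by definition of the pool, i.i.d.\ $\bern(\mu)$, and I write $\bm{Z}_{t^{(i)}} \coloneqq \sigma^{\textsc{random}(w)}(\mathcal{H}_{t^{(i)}})$ for the observed rating vectors.

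Next I would show $\prob[B_w] \to 0$ with two elementary union bounds. Each of the $c$ positions drawn in a given round is marginally uniform over the $w$ window entries, so the chance it lands on one of the $\le T-1$ endogenous reviews is at most $(T-1)/w$; summing over the $c$ draws and the $T$ rounds bounds the ``endogenous hit'' probability by $cT(T-1)/w$. For collisions, the per-round selections are independent, a fixed window entry is selected with probability $c/w$, and two windows share at most $w$ reviews, so a common review between two rounds occurs with probability at most $c^2/w$; a union bound over the $\binom{T}{2}$ pairs of rounds contributes at most $\binom{T}{2}c^2/w$. Hence $\prob[B_w] = O(1/w) \to 0$ for fixed $T$, $c$, $k$.

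The crux is to turn the observation on $G_w$ into a statement about the joint law of the observed rating vectors. I would build a coupling between the process and an i.i.d.\ sequence $(W_{t^{(1)}}, \ldots, W_{t^{(k)}})$ of $\bern(\mu)^c$ vectors by revealing pool values lazily: a pool review is assigned a fresh $\bern(\mu)$ value, drawn from an independent reservoir, the first time it is sampled. This lazy reveal is distributionally identical to the i.i.d.\ pool, and on $G_w$ the reviews read at rounds $t^{(1)}, \ldots, t^{(k)}$ are all first-time (fresh) samples, so $\bm{Z}_{t^{(i)}} = W_{t^{(i)}}$. Consequently $(\bm{Z}_{t^{(1)}}, \ldots, \bm{Z}_{t^{(k)}}) = (W_{t^{(1)}}, \ldots, W_{t^{(k)}})$ on $G_w$, which yields $\|\mathcal{L}(\bm{Z}_{t^{(1)}}, \ldots, \bm{Z}_{t^{(k)}}) - \bern(\mu)^{ck}\|_{TV} \le \prob[B_w] \to 0$. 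Evaluating at the specific vectors and recalling that $\sigma^{\textsc{random}}$ produces $\bm{z}^{(i)}$ with probability $\mu^{N(\bm{z}^{(i)})}(1-\mu)^{c-N(\bm{z}^{(i)})}$ independently across rounds then gives the claimed limit. The main obstacle is making this coupling rigorous: because the sampling positions at a round depend on the ratings observed earlier (the process is endogenous), one must argue carefully that conditioning on $G_w$ does not bias the freshly revealed ratings — the lazy-reveal construction is precisely what isolates the fresh values from the position-selection mechanism and circumvents this dependence.
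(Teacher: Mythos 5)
Your proposal is correct, and it reaches the result by a genuinely different route than the paper. The paper avoids any coupling: it conditions on the realized pool values, obtains independence across rounds from the independent position sampling (\cref{lemma_independence_across_rounds}), and is then left with the problem of recovering the within-round product form $\mu^{N(\bm{z})}(1-\mu)^{c-N(\bm{z})}$ from sampling without replacement out of a \emph{fixed realized} $0/1$ sequence. That is what forces its heavier machinery: the partition of the pool into $n_w$ groups of size $n_w$, Chernoff concentration of each group's empirical mean (\cref{lemma_bernoulli_concentration_nc}), the proof that the $c$ samples land in distinct groups w.h.p.\ (\cref{lemma_diff_groups_whp}), conditional independence across distinct groups (\cref{lemma:indep_across_diff_groups}), and the assembly with an error term $f(n_w)\to 0$ (\cref{lemma_uniform_bound_probabilities}). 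Your deferred-decision coupling makes all of that unnecessary: on your good event $G_w$ the entries read at rounds $1,\ldots,T$ are distinct, never-before-read pool reviews, so their values are \emph{exactly} fresh i.i.d.\ $\bern(\mu)$ draws rather than approximately so, and the coupling inequality $\big|\prob[(\bm{Z}_{t^{(i)}})_i \in A]-\prob[(W_{t^{(i)}})_i \in A]\big| \le \prob[B_w]$ finishes the proof with an explicit $O(1/w)$ total-variation rate (for fixed $t^{(k)}, c, k$), which is both more elementary and quantitatively stronger than the rate implicit in the paper's argument (dominated by the $n_w^{-1/3}\approx w^{-1/6}$ concentration error). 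Note also that this coupling inequality already absorbs the natural worry that conditioning on $G_w$ biases the revealed values (purchases, and hence the number of endogenous reviews in the window, do depend on what was read): no claim about the conditional law of the values given $G_w$ is needed. What you must supply in exchange — and you identify this correctly as the crux — is a rigorous lazy-reveal lemma for an endogenous process: the event that a given pool review is \emph{first} read at a given round is measurable with respect to the position selections, the purchase randomness, and the values of \emph{other} reviews only, hence independent of that review's value. Your two union bounds are likewise valid, provided you state them conditionally on the history (the window shift is history-dependent, but the position sampling is not, which is exactly what yields the uniform $(T-1)/w$ and $c/w$ bounds despite endogeneity).
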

We first show that as $w$ goes to infinity, the $c$ reviews selected by $\sigma^{\textsc{random}(w)}(\mathcal{H}_{t^{(i)}})$ at each round~$t^{(i)}$ come from the infinite pool of reviews $\{X_{-i}\}_{i=1}^{\infty}$ with high probability. Let $n_w = \floor{\sqrt{w - t^{(k)}}}$ be the largest integer such that $n_w^2 \leq w-t^{(k)}$ and let $w$ be large enough so that $n_w \geq 1$. We define $\mathcal{E}^{\textsc{OrderInf}}$ as the event that for every $i \in \{1, \ldots, k\}$ the $c$ reviews selected by $\sigma^{\textsc{random}(w)}(\mathcal{H}_{t^{(i)}})$ come from the set $\{X_{-1}, \ldots, X_{-n_w^2}\}$. Lemma~\ref{lemma_reviews_from_inf_pool} shows that $\mathcal{E}^{\textsc{OrderInf}}$ happens with high probability. 
\begin{lemma}\label{lemma_reviews_from_inf_pool}
    $\prob[\mathcal{E}^{\textsc{OrderInf}}] \geq \Big[ \frac{\binom{n_w^2}{c}}{\binom{w}{c}} \Big]^k$. Furthermore, $\lim_{w \to \infty}\prob[\mathcal{E}^{\textsc{OrderInf}}] = 1$. 
\end{lemma}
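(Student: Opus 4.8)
The plan is to decompose $\mathcal{E}^{\textsc{OrderInf}}$ into the per-round events $A_i$ that the $c$ reviews drawn at round $t^{(i)}$ all belong to the fixed block $\{X_{-1},\ldots,X_{-n_w^2}\}$, to show that each $A_i$ occurs with probability exactly $\binom{n_w^2}{c}/\binom{w}{c}$ irrespective of the past, and then to exploit the independence of the draws across rounds to multiply these probabilities. The lower bound and the limit then follow from routine counting.

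First I would verify that the block $\{X_{-1},\ldots,X_{-n_w^2}\}$ always lies inside the window at every round $t^{(i)}$. At round $t^{(i)}$ at most $t^{(i)}-1 \le t^{(k)}-1$ reviews have been generated by the process, so among the $w$ most recent reviews at least $w-(t^{(k)}-1)$ come from the initial pool, and these are precisely $X_{-1},X_{-2},\ldots$ in order. Since $n_w^2 \le w - t^{(k)} < w-(t^{(k)}-1)$, the whole block sits inside the window, and for $w$ large we also have $c \le n_w^2$.

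Next I would condition on the full realized history prior to round $t^{(i)}$ — all earlier selections and purchase decisions — which fixes the number $m_{t^{(i)}}$ of process-generated reviews and hence the exact positions of the initial-pool reviews in the window. By definition of $\sigma^{\textsc{random}(w)}$, the $c$-subset drawn at round $t^{(i)}$ is uniform over the $w$ window positions and independent of this history, so the conditional probability that it lands entirely inside the fixed block of size $n_w^2$ is exactly $\binom{n_w^2}{c}/\binom{w}{c}$, with no dependence on the realized history (here the deterministic inclusion of the block is what removes the dependence). Since $A_1,\ldots,A_{i-1}$ are determined by randomness strictly before round $t^{(i)}$, a telescoping conditioning argument over $i=k,k-1,\ldots,1$ yields $\prob[\mathcal{E}^{\textsc{OrderInf}}] = \big(\binom{n_w^2}{c}/\binom{w}{c}\big)^k \ge \big[\binom{n_w^2}{c}/\binom{w}{c}\big]^k$, giving the stated inequality. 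For the limit I would use that $c$ and $k$ are fixed: the definition $n_w=\floor{\sqrt{w-t^{(k)}}}$ gives $n_w^2 = w - O(\sqrt{w})$, hence $n_w^2/w \to 1$, so each of the $c$ factors $\tfrac{n_w^2-j}{w-j}\to 1$ and $\binom{n_w^2}{c}/\binom{w}{c}\to 1$; raising to the fixed power $k$ preserves the limit and forces $\lim_{w\to\infty}\prob[\mathcal{E}^{\textsc{OrderInf}}]=1$ since probabilities are bounded by $1$.

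The main obstacle is the conditioning bookkeeping: one must argue that the window-interior probability is genuinely history-independent, which rests on the deterministic guarantee that the block $\{X_{-1},\ldots,X_{-n_w^2}\}$ always fits inside the window (via $n_w^2 \le w-t^{(k)}$ and the bound $m_{t^{(i)}}\le t^{(k)}-1$) combined with the independence of each round's uniform draw from everything that came before. Once this is in place, the remaining counting and the fixed-length product limit are elementary.
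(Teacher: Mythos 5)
Your proof is correct and follows essentially the same route as the paper's: verify that the block $\{X_{-1},\ldots,X_{-n_w^2}\}$ always sits inside the window of the $w$ most recent reviews, note that each round's uniform $c$-subset then lands in the block with probability $\binom{n_w^2}{c}/\binom{w}{c}$, multiply across rounds by independence of the draws, and conclude the limit by elementary counting since $c$ and $k$ are fixed. Your version is slightly more careful (the explicit conditioning argument even yields equality rather than just the stated lower bound, and your limit computation via $n_w^2 = w - O(\sqrt{w})$ matches the paper's bound $n_w^2 \geq w - t^{(k)} - 1$ in substance), but the underlying argument is the same.
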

Given that $\sigma^{\textsc{random}(w)}$ selects from the infinite negative pool of reviews with high probability (if $\mathcal{E}^{\textsc{OrderInf}}$ holds), we now derive a concentration bound for the reviews in this pool. Recall that $X_{-\ell} \sim_{i.i.d.} \bern(\mu)$ for $\ell \in \{1, \ldots, n_w^2\}$. 
We partition those reviews into $n_w$ groups each containing $n_w$ reviews. Let $\mathcal{E}^{\textsc{Conc}}$ be the event that, for each group, the average review rating concentrates around the group's mean, i.e. 

$$\mathcal{E}^{\textsc{Conc}} = \Big\{ \forall j \in \{1, \ldots, n_w\}: \sum_{\ell= (j-1)\cdot n_w+1}^{j\cdot n_w} X_{-\ell} \in [\mu(n_w-n_w^{2/3}), \mu(n_w+n_w^{2/3})] \Big\}.$$ Our next lemma shows that the concentration event $\mathcal{E}^{\textsc{Conc}}$ happens with high probability. 

\begin{lemma}\label{lemma_bernoulli_concentration_nc}
    $\prob\Big[ \mathcal{E}^{\textsc{Conc}} \Big] \geq 1-2n_w \exp \Big(-\frac{n_w^{1/3} \mu}{3} \Big)$. Furthermore, $\lim_{w \to \infty} \prob\Big[ \mathcal{E}^{\textsc{Conc}} \Big] = 1$.
\end{lemma}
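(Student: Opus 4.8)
The plan is to treat each of the $n_w$ groups separately, apply a multiplicative Chernoff bound to its sum, and then union bound over the groups. Fix a group $j \in \{1, \ldots, n_w\}$ and write $S_j = \sum_{\ell = (j-1)n_w + 1}^{j n_w} X_{-\ell}$. Since the $X_{-\ell}$ are i.i.d.\ $\bern(\mu)$ and each group contains exactly $n_w$ terms, $S_j$ is a sum of $n_w$ independent $\{0,1\}$ variables with mean $\expect[S_j] = \mu n_w$. The key observation is that the target interval $[\mu(n_w - n_w^{2/3}), \mu(n_w + n_w^{2/3})]$ is exactly $[(1-\delta)\mu n_w, (1+\delta)\mu n_w]$ for the relative deviation $\delta \coloneqq n_w^{-1/3}$, which satisfies $\delta \le 1$ whenever $n_w \ge 1$.

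Next I would apply the standard multiplicative Chernoff bounds to both tails of $S_j$. For $0 < \delta \le 1$ these give $\prob[S_j \ge (1+\delta)\mu n_w] \le \exp(-\delta^2 \mu n_w / 3)$ and $\prob[S_j \le (1-\delta)\mu n_w] \le \exp(-\delta^2 \mu n_w / 2) \le \exp(-\delta^2 \mu n_w / 3)$. Substituting $\delta = n_w^{-1/3}$ makes the exponent collapse cleanly, since $\delta^2 \mu n_w = n_w^{-2/3}\, \mu\, n_w = \mu\, n_w^{1/3}$. Hence the probability that group $j$ falls outside its interval is at most $2\exp(-\mu n_w^{1/3}/3)$. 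A union bound over the $n_w$ groups then yields $\prob[\neg \mathcal{E}^{\textsc{Conc}}] \le 2 n_w \exp(-\mu n_w^{1/3}/3)$, which rearranges to the claimed lower bound on $\prob[\mathcal{E}^{\textsc{Conc}}]$.

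For the second statement, I would observe that as $w \to \infty$ we have $n_w = \floor{\sqrt{w - t^{(k)}}} \to \infty$, and the failure bound $2 n_w \exp(-\mu n_w^{1/3}/3)$ tends to $0$ because exponential decay dominates polynomial growth: writing $m = n_w^{1/3}$, the expression equals $2 m^3 \exp(-\mu m / 3) \to 0$ as $m \to \infty$ (using $\mu > 0$). Therefore $\prob[\mathcal{E}^{\textsc{Conc}}] \to 1$.

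The argument is essentially routine; the only delicate point is the calibration of the Chernoff exponent through the choice $\delta = n_w^{-1/3}$. This exponent is exactly tuned so that each per-group failure probability decays like $\exp(-\Theta(n_w^{1/3}))$ while there are only $n_w$ groups, leaving enough slack for the union bound to still vanish in the limit. No genuinely hard step is expected beyond verifying $\delta \le 1$ and that the polynomial prefactor $n_w$ is overwhelmed by the stretched-exponential decay.
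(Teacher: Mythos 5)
Your proposal is correct and follows exactly the paper's argument: a per-group multiplicative Chernoff bound (with the same exponent $\mu n_w^{1/3}/3$), a union bound over the $n_w$ groups, and the observation that the polynomial prefactor is dominated by the stretched-exponential decay as $n_w \to \infty$. The only difference is that you spell out the calibration $\delta = n_w^{-1/3}$ and the two tail bounds explicitly, which the paper compresses into "by Chernoff bound."
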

In order to decompose the left-hand-side of \cref{prop: limit_dist_window_w_to_random} into a product of probabilities, we show the following independence lemma. 
\begin{lemma}\label{lemma_independence_across_rounds}
    Let $(y_1, \ldots, y_{n_w^2}) \in \{0,1\}^{n_w^2}$. Then conditioned on $(X_{-1}, \ldots, X_{-n_w^2}) = (y_1, \ldots, y_{n_w^2})$ and $\mathcal{E}^{\textsc{OrderInf}}$, the events $\{\sigma^{\textsc{random}(w)}(\mathcal{H}_{t^{(i)}}) = \bm{z}^{(i)}\}$ for $i = 1, \ldots, k$ are independent.
\end{lemma}
Having shown independence across rounds, we prove that, for any round $t^{(i)}$, $\sigma^{\textsc{random}(w)}(\mathcal{H}_{t^{(i)}})$ is close to $\sigma^{\textsc{random}}$. Let $\mathcal{A}$ be the set of review rating sequences for which event $\mathcal{E}^{\textsc{Conc}}$ holds i.e. 
$$\mathcal{A} = \Big\{ (y_1, \ldots, y_{n_w^2}) \in \{0,1\}^{n_w^2}: \forall j \in \{1, \ldots, n_w\},  \sum_{\ell= (j-1) \cdot n_w+1}^{j \cdot n_w} y_{\ell} \in [\mu(n_w-n_w^{2/3}), \mu(n_w+n_w^{2/3})] \Big\}$$
Our next lemma shows that if the events $\mathcal{E}^{\textsc{OrderInf}}$ and $\mathcal{E}^{\textsc{Conc}}$ hold, then for any round $t^{(i)}$, the distribution of $\sigma^{\textsc{random}(w)}(\mathcal{H}_{t^{(i)}})$ is close to the distribution of $\sigma^{\textsc{random}}(\mathcal{H}_{t^{(i)}})$. To ease notation, we let $X_{-1:n_w^2}\coloneqq (X_{-1}, \ldots, X_{-n_w^2})$ and $y_{1:n_w^2}\coloneqq(y_1, \ldots, y_{n_w^2})$.

\begin{lemma}\label{lemma_uniform_bound_probabilities}
There exists some function $f: \mathbb{N} \to \mathbb{R}$ satisfying $\lim_{n \to \infty} f(n) = 0$ such that assuming that $\mathcal{E}^{\textsc{OrderInf}}$ and $\mathcal{E}^{\textsc{Conc}}$ hold, for any $\bm{z} = (z_1, \ldots, z_c) \in \{0,1\}^c$ and $(y_1, \ldots, y_{n_w^2}) \in \mathcal{A}$:
    \begin{equation*}\label{eq:prob_approximation_bd}
\Big|\prob\Big[\sigma^{\textsc{random}(w)}(\mathcal{H}_{t^{(i)}}) = \bm{z}|X_{-1:n_w^2} = y_{1:n_w^2}\Big] - \prob\Big[\sigma^{\textsc{random}}(\mathcal{H}_{t^{(i)}}) = \bm{z}\Big] \Big| \leq f(n_w).
\end{equation*}
\end{lemma}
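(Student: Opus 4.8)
The plan is to compute both probabilities in the difference directly and show each is close to $\mu^{N_{\bm{z}}}(1-\mu)^{c-N_{\bm{z}}}$. Under $\sigma^{\textsc{random}}$ the ratings are i.i.d.\ $\bern(\mu)$, so $\prob[\sigma^{\textsc{random}}(\mathcal{H}_{t^{(i)}}) = \bm{z}] = \mu^{N_{\bm{z}}}(1-\mu)^{c-N_{\bm{z}}}$ exactly, where $N_{\bm{z}} = \sum_{j=1}^c z_j$. Hence it suffices to bound the gap between the conditional $\sigma^{\textsc{random}(w)}$ probability and this product by a quantity $f(n_w)$ that vanishes and is uniform in $\bm{z}$ and in $y \in \mathcal{A}$.

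First I would identify the conditional law of $\sigma^{\textsc{random}(w)}$. Conditioning on $\mathcal{E}^{\textsc{OrderInf}}$, the $c$ reviews shown at round $t^{(i)}$ are drawn without replacement from the window and land inside $\{X_{-1}, \ldots, X_{-n_w^2}\}$; since uniform-without-replacement sampling, restricted to the event that the sample lies in a fixed sub-collection, is itself uniform over the $c$-subsets of that sub-collection, the conditional distribution given $X_{-1:n_w^2} = y_{1:n_w^2}$ is that of drawing $c$ of the fixed values $y_1, \ldots, y_{n_w^2}$ without replacement. Writing $m = \sum_{\ell=1}^{n_w^2} y_\ell$ for the number of positive ratings in the pool, exchangeability implies that the probability of any ordered outcome $\bm{z}$ depends only on $N_{\bm{z}}$ and equals the falling-factorial expression $(m)_{N_{\bm{z}}}\,(n_w^2 - m)_{c-N_{\bm{z}}} / (n_w^2)_c$, where $(a)_k = a(a-1)\cdots(a-k+1)$.

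Next I would use $y \in \mathcal{A}$ to control $m$. Summing the $\mathcal{E}^{\textsc{Conc}}$ bounds over the $n_w$ groups gives $m \in [\mu(n_w^2 - n_w^{5/3}), \mu(n_w^2 + n_w^{5/3})]$, hence $|m/n_w^2 - \mu| \le \mu n_w^{-1/3}$. The falling-factorial expression factors into exactly $c$ ratios, each of the form $(m - a)/(n_w^2 - d)$ or $((n_w^2 - m) - b)/(n_w^2 - d)$ with $a, b, d \in \{0, \ldots, c-1\}$; since $c$ is a fixed constant, each such ratio lies within $O(n_w^{-1/3})$ of $\mu$ (respectively $1-\mu$), with a constant depending only on $c$ and $\mu$. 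A telescoping bound on a product of $c$ factors, each within $\delta$ of a target value and all bounded by $1$, then shows the falling-factorial expression differs from $\mu^{N_{\bm{z}}}(1-\mu)^{c-N_{\bm{z}}}$ by at most $c\,O(n_w^{-1/3})$. Setting $f(n_w)$ equal to this bound gives a function that tends to $0$, and the estimate is uniform over all $\bm{z} \in \{0,1\}^c$ and all $y \in \mathcal{A}$ precisely because $N_{\bm{z}} \le c$ is constant.

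The main obstacle will be the reduction in the second step: justifying that the windowed sampling, once conditioned on $\mathcal{E}^{\textsc{OrderInf}}$, is exactly uniform sampling without replacement from the fixed pool, and then passing from this hypergeometric-type probability to the i.i.d.\ Bernoulli probability uniformly. Once one observes that the number of factors equals the fixed constant $c$, the estimate reduces to a routine product-telescoping argument driven by the single rate $|m/n_w^2 - \mu| \le \mu n_w^{-1/3}$ supplied by $\mathcal{A}$; the only care needed is to keep every constant free of $w$ and of $\bm{z}$ so that a single function $f$ works simultaneously for all of them.
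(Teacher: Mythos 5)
Your reduction in the second step contains a genuine error: the claim that, conditional on the pool values and on $\mathcal{E}^{\textsc{OrderInf}}$, the probability of an \emph{ordered} outcome $\bm{z}$ depends only on $N_{\bm{z}}$ and equals the falling-factorial expression $(m)_{N_{\bm{z}}}(n_w^2-m)_{c-N_{\bm{z}}}/(n_w^2)_c$. That formula is the law of a without-replacement sample presented in its (uniformly random) \emph{drawing} order; but $\sigma^{\textsc{random}(w)}$ presents the $c$ selected reviews ordered by \emph{recency}, i.e., by their positions in the pool (this is exactly how the paper uses the policy, e.g.\ in the proof of \cref{lemma_diff_groups_whp}). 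For a fixed realization of the pool, the position-ordered sample is \emph{not} exchangeable. Concretely, take the pool $(y_1,y_2,y_3,y_4)=(1,0,1,0)$ and $c=2$: enumerating the six position-subsets gives $\prob[(1,0)]=1/2$ and $\prob[(0,1)]=1/6$, while your formula assigns $1/3$ to both. The discrepancy does not vanish as the pool grows: for a pool consisting of all $1$'s followed by all $0$'s (global count exactly $\mu n_w^2$ with $\mu = 1/2$), the pattern $(1,0)$ has probability tending to $1/2$ while $(0,1)$ has probability $0$.

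This is fatal to the argument as written, because after step 2 you retain from $\mathcal{A}$ only the global count bound $|m/n_w^2-\mu|\le \mu n_w^{-1/3}$. The sorted pool above satisfies that global constraint, yet its ordered-pattern probabilities stay bounded away from $\mu^{N_{\bm{z}}}(1-\mu)^{c-N_{\bm{z}}}$; hence no argument that uses only $m$ can establish the lemma. What rules out such pools is precisely the \emph{group-wise} structure of $\mathcal{A}$: every block of $n_w$ consecutive reviews is individually balanced. The paper exploits this by showing that with high probability the $c$ selected reviews fall in $c$ distinct blocks (\cref{lemma_diff_groups_whp}), that conditionally on this event their values are independent with marginals given by the block averages (\cref{lemma:indep_across_diff_groups}), and then applying the per-block concentration — which yields the product form for every ordered pattern regardless of where in the pool the selected positions sit. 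Repairing your proof would require conditioning on the selected positions and invoking local (per-block) balance rather than the global count, which is essentially the paper's argument. (Your first reduction — that conditioning on $\mathcal{E}^{\textsc{OrderInf}}$ makes the selection uniform over $c$-subsets of the pool — is sound, though it deserves a word on why it holds uniformly over the random window composition induced by the purchase history; this is the role of \cref{lemma_independence_across_rounds}.)
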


\begin{proof}[Proof of Proposition~\ref{prop: limit_dist_window_w_to_random}.]
By \cref{lemma_reviews_from_inf_pool} and \cref{lemma_bernoulli_concentration_nc}, we can upper bound the probability that neither of $\mathcal{E}^{\textsc{OrderInf}}$  nor $\mathcal{E}^{\textsc{Conc}}$ holds by 
\begin{equation}\label{event_prob_upper_bound_zero}
    \prob[(\neg \mathcal{E}^{\textsc{OrderInf}}) \cup  (\neg \mathcal{E}^{\textsc{Conc}})] \leq 1-\Big[ \frac{\binom{n_w^2}{c}}{\binom{w}{c}} \Big]^k + 2n_w \exp \Big(-\frac{n_w^{1/3} \mu}{3} \Big) \to_{w \to \infty} 0.
\end{equation}
Hence, we assume that $\mathcal{E}^{\textsc{OrderInf}}$  and $\mathcal{E}^{\textsc{Conc}}$ hold and thus focus on sequences $(X_{-1}, \ldots, X_{-n_w^2}) \in \mathcal{A}$.  By the law of total probability and the independence of $\sigma^{\textsc{random}(w)}$ across rounds (i.e. \cref{lemma_independence_across_rounds}):
\begin{align}\label{prob_decomposition_sigma_random}
    &\prob \Big[\forall i \in \{1, \ldots, k\}: \sigma^{\textsc{random}(w)}(\mathcal{H}_{t^{(i)}}) = \bm{z}^{(i)}|\mathcal{E}^{\textsc{OrderInf}}, \mathcal{E}^{\textsc{Conc}} \Big] \\
    &=\sum_{(y_1, \ldots, y_{n_w^2}) \in \mathcal{A}} \underbrace{\prod_{i=1}^k \prob \Big[ \sigma^{\textsc{random}(w)}(\mathcal{H}_{t^{(i)}}) = \bm{z}^{(i)}|\mathcal{E}^{\textsc{OrderInf}},  X_{-1:n_w^2} =y_{1:n_w^2} \Big]}_{(\textsc{Dec})} \cdot \underbrace{\prob\big[X_{-1:n_w^2} =y_{1:n_w^2}|\mathcal{E}^{\textsc{OrderInf}},\mathcal{E}^{\textsc{Conc}}\big]}_{\textsc{Seq}(y_{1:n_w^2})}. \nonumber
\end{align}
By \cref{lemma_uniform_bound_probabilities}, we can upper and lower bound this decomposition by 
\begin{equation}\label{lower_upper_bound_decomp}
\prod_{i=1}^k  \Bigg(\prob\Big[ \sigma^{\textsc{random}}(\mathcal{H}_{t^{(i)}}) = \bm{z}^{(i)} \Big] -f(n_w) \Bigg)
\leq (\textsc{Dec}) \leq
\prod_{i=1}^k  \Bigg(\prob\Big[ \sigma^{\textsc{random}}(\mathcal{H}_{t^{(i)}}) = \bm{z}^{(i)} \Big] +f(n_w) \Bigg).
\end{equation}
Recall that $\mathcal{A}$ consists of all sequences $(y_1, \ldots, y_{n_w^2})$ that satisfy the event $\mathcal{E}^{\textsc{Conc}}$. Summing across those sequences and using the independence between the choice of $\sigma^{\textsc{random}(w)}$ (which determines $\mathcal{E}^{\textsc{OrderInf}}$) and the values of $X_{-1:n_w^2}$ (which determine $\mathcal{E}^{\textsc{Conc}}$), we obtain \begin{equation}\label{eq:sum_seq}\sum_{(y_1, \ldots, y_{n_w^2}) \in \mathcal{A}} \textsc{Seq}(y_{1:n_w^2}) = \prob[\mathcal{E}^{\textsc{Conc}}| \mathcal{E}^{\textsc{OrderInf}}] = \underbrace{\prob[\mathcal{E}^{\textsc{Conc}}] \to_{w \to \infty} 1}_{\cref{lemma_bernoulli_concentration_nc}}. \end{equation} 
We conclude the proof by combining (\ref{event_prob_upper_bound_zero}), (\ref{prob_decomposition_sigma_random}), (\ref{lower_upper_bound_decomp}), (\ref{eq:sum_seq}) and using the fact that $f(n_w) \to_{w \to \infty} 0 $. 
\end{proof}

\subsubsection{High probability bound on reviews being drawn from infinite pool (Lemma~\ref{lemma_reviews_from_inf_pool})}
\begin{proof}[Proof of \cref{lemma_reviews_from_inf_pool}.]
Fix a particular round $t^{(i)}$. Note that since $n_w =\floor{\sqrt{w-t^{(k)}}}$, the $w$ most recent reviews that $\sigma^{\textsc{random}(w)}(\mathcal{H}_{t^{(i)}})$ considers contain $(X_{-1}, \ldots, X_{-n_w^2})$. The probability that all~$c$ reviews come from that set is $\frac{\binom{n_w^2}{c}}{\binom{w}{c}}$. For each round in $t^{(1)}, \ldots, t^{(k)}$, the subset of $c$ reviews is independently drawn by $\sigma^{\textsc{random}(w)}$. Thus, $\prob[\mathcal{E}^{\textsc{OrderInf}}] \geq \Big[ \frac{\binom{n_w^2}{c}}{\binom{w}{c}} \Big]^k$. Next notice that $n_w^2 \geq w-t^{(k)}-1$ and therefore, $\frac{\binom{n_w^2}{c}}{\binom{w}{c}} \geq \frac{\binom{w-t^{(k)}-1}{c}}{\binom{w}{c}} \to_{w \to \infty} 1$ showing the second part. 
\end{proof}

\subsubsection{Concentration for infinite pool (Lemma~\ref{lemma_bernoulli_concentration_nc})}
\begin{proof}[Proof of \cref{lemma_bernoulli_concentration_nc}.]
For a fixed $j \in \{1, \ldots, n_w\}$, by Chernoff bound, it holds that $$\prob \Big[ \sum_{\ell=(j-1)n_w +1}^{jn_w} X_{-\ell} \in [\mu (n_w-n_w^{2/3}), \mu(n_w+n_w^{2/3})] \Big] \geq 1-2 \exp \Big(-\frac{n_w^{1/3} \mu}{3} \Big).$$ By union bound on the $n_w$ groups, the event $\mathcal{E}^{\textsc{Conc}}$ has probability at least $1-2n_w \exp \Big(-\frac{n_w^{1/3} \mu}{3} \Big)$. Note that  $n_w \exp \Big(-\frac{n_w^{1/3} \mu}{3} \Big) \to_{w \to \infty} 0$, which yields the second part. 
\end{proof}

\subsubsection{Independence of window-random policy across rounds (Lemma~\ref{lemma_independence_across_rounds})}
\begin{proof}[Proof of \cref{lemma_independence_across_rounds}.]
    By definition, $\sigma^{\textsc{random}(w)}$ samples a $c$-sized subset of reviews independently at every round. Thus, conditioning on the subset being sampled from $(X_{-1}, \ldots, X_{-n_w^2})$, i.e. $\mathcal{E}^{\textsc{OrderInf}}$, and also conditioning on the exact values of these ratings, i.e. $(X_{-1}, \ldots, X_{-n_w^2}) = (y_1, \ldots, y_{n_w^2})$, the draws of $\sigma^{\textsc{random}(w)}(\mathcal{H}_{t^{(i)}})$ are independent for $i = 1, \ldots, k$. 
\end{proof}

\subsubsection{Single round approximation (Lemma~\ref{lemma_uniform_bound_probabilities})}
Let $\sigma^{\textsc{random}(w)}(\mathcal{H}_{t^{(i)}}) = (Z_1, \ldots, Z_c) \in \{0,1\}^c$ be the $c$ review ratings chosen from the most recent $w$ reviews. Recall that when   $\mathcal{E}^{\textsc{OrderInf}}$ holds,  $\sigma^{\textsc{random}(w)}$ selects reviews from $(X_{-1}, \ldots, X_{-n_w^2})$
and that $(X_{-1}, \ldots, X_{-n_w^2}) = (y_1, \ldots, y_{n_w^2})$. Let $S_j = \{y_{\ell}\}_{\ell = (j-1) \cdot n_w +1}^{j \cdot n_w}$ for $j \in \{1, \ldots, n_w\}$ be a partition of all reviews $(y_1, \ldots, y_{n_w^2})$ into $n_w$ groups of $n_w$ reviews each. We first show that the reviews drawn by $\sigma^{\textsc{random}(w)}(\mathcal{H}_{t^{(i)}})$ come from different groups $S_j$ with high probability. Let $\mathcal{E}^{\textsc{diff\_groups}}$ be the event that no two indices $m_1 < m_2 \in \{1, \ldots, c\}$ are such that $Z_{m_1}$ and $Z_{m_2}$ belong to the same group $S_j$ for some $j \in \{1, \ldots, n_w\}$. Our next lemma lower bounds the probability of $\mathcal{E}^{\textsc{diff\_groups}}$. 
\begin{lemma}\label{lemma_diff_groups_whp}
Assume that $\mathcal{E}^{\textsc{OrderInf}}$ and $\mathcal{E}^{\textsc{Conc}}$ hold. The probability of any two selected indices being of different groups is:
$\prob[\mathcal{E}^{\textsc{diff\_group}}] \geq  1-c! \cdot c^2 \frac{n_w^{2c-1}}{(n_w^2-c)^c}$ and thus $\lim_{w \to \infty}\prob[\mathcal{E}^{\textsc{diff\_group}}] = 1 $. 
\end{lemma}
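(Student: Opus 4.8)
The plan is to reduce the statement to a balls\nobreakdash-into\nobreakdash-bins (birthday\nobreakdash-type) count, once I have stripped away the conditioning. The first thing I would do is argue that conditioning on $\mathcal{E}^{\textsc{Conc}}$ is irrelevant to $\mathcal{E}^{\textsc{diff\_groups}}$: the ordering policy $\sigma^{\textsc{random}(w)}$ chooses its $c$ positions uniformly at random with no regard to the realized ratings, while $\mathcal{E}^{\textsc{diff\_groups}}$ depends only on \emph{which positions} are selected and $\mathcal{E}^{\textsc{Conc}}$ depends only on the rating values $(y_1,\ldots,y_{n_w^2})$. Since the policy's randomness is independent of the i.i.d.\ ratings $X_{-\ell}$, conditioning on $\mathcal{E}^{\textsc{Conc}}$ does not change the law of the selected index set. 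Combined with the fact that under $\mathcal{E}^{\textsc{OrderInf}}$ all $c$ chosen reviews lie in $\{X_{-1},\ldots,X_{-n_w^2}\}$, a symmetry argument gives that the selected indices form a uniformly random $c$-subset of these $n_w^2$ positions, which are partitioned into the $n_w$ groups $S_1,\ldots,S_{n_w}$ of size $n_w$ each.

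Next I would compute $\prob[\mathcal{E}^{\textsc{diff\_groups}}]$ exactly by revealing the $c$ selected indices one at a time and demanding that each new index fall in a previously unused group. After $k-1$ indices have landed in $k-1$ distinct groups, there remain $(k-1)(n_w-1)$ unselected positions inside used groups and $n_w^2-(k-1)$ positions overall, so the conditional probability that the $k$-th index lands in a fresh group is $\frac{n_w^2-(k-1)n_w}{n_w^2-(k-1)}=\frac{n_w\,(n_w-(k-1))}{n_w^2-(k-1)}$. Telescoping over $k=1,\ldots,c$ yields
\[
\prob[\mathcal{E}^{\textsc{diff\_groups}}]=\prod_{i=0}^{c-1}\frac{n_w\,(n_w-i)}{n_w^2-i}
=\frac{n_w^{\,c}\prod_{i=0}^{c-1}(n_w-i)}{\prod_{i=0}^{c-1}(n_w^2-i)}.
\]

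Finally I would bound the complementary probability. Lower bounding the denominator by $(n_w^2-c)^c$ and the product $\prod_{i=0}^{c-1}(n_w-i)$ by $(n_w-c)^c$, the numerator of $1-\prob[\mathcal{E}^{\textsc{diff\_groups}}]$ is at most $n_w^{2c}-n_w^{c}(n_w-c)^c=n_w^{c}\bigl(n_w^{c}-(n_w-c)^c\bigr)$; the elementary estimate $a^c-b^c\le c\,(a-b)\,a^{c-1}$ applied with $a=n_w$, $b=n_w-c$ bounds the inner difference by $c^2 n_w^{c-1}$, so a direct count gives $1-\prob[\mathcal{E}^{\textsc{diff\_groups}}]\le c^2\,\frac{n_w^{2c-1}}{(n_w^2-c)^c}$, which is subsumed by the stated bound with its extra $c!$ factor of slack. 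Since $c$ is a fixed constant this quantity is of order $1/n_w$, and because $n_w=\lfloor\sqrt{w-t^{(k)}}\rfloor\to\infty$ as $w\to\infty$, the limit claim $\lim_{w\to\infty}\prob[\mathcal{E}^{\textsc{diff\_groups}}]=1$ follows.

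The main obstacle is not the combinatorics, which is routine once the reduction is in place, but the careful justification that conditioning on the two events $\mathcal{E}^{\textsc{OrderInf}}$ and $\mathcal{E}^{\textsc{Conc}}$ leaves the index\nobreakdash-selection distribution uniform over $c$-subsets; everything hinges on the independence between the rating\nobreakdash-blind randomness of $\sigma^{\textsc{random}(w)}$ and the i.i.d.\ ratings, so that $\mathcal{E}^{\textsc{Conc}}$ (a value event) and $\mathcal{E}^{\textsc{diff\_groups}}$ (a position event) decouple.
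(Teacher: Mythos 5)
Your proof is correct, but it takes a genuinely different route from the paper's. The paper never computes $\prob[\mathcal{E}^{\textsc{diff\_group}}]$ exactly: it takes a union bound over the pair events $\mathcal{E}_{m_1,m_2,j} = \{Z_{m_1}, Z_{m_2} \in S_j\}$, exploits the fact that the selected reviews are listed by recency and the groups are contiguous blocks (so $Z_{m_1}, Z_{m_2} \in S_j$ forces all intermediate $Z_m \in S_j$), and bounds each event by a ratio of binomial coefficients using $\frac{(n-k)^k}{k!} \leq \binom{n}{k} \leq n^k$; the $c!$ in the stated bound is exactly the slack from that estimate. You instead observe that, conditioned on $\mathcal{E}^{\textsc{OrderInf}}$, the selected positions form a uniform $c$-subset of the $n_w^2$ positions (and that $\mathcal{E}^{\textsc{Conc}}$, being a value event independent of the rating-blind selection, does not perturb this law), and then compute the probability exactly as a birthday-problem product $\prod_{i=0}^{c-1}\frac{n_w(n_w-i)}{n_w^2-i}$ before bounding it elementarily via $a^c - b^c \leq c(a-b)a^{c-1}$. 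This buys you two things: the decoupling of position and value events, which the paper leaves implicit in its use of $\binom{n_w^2}{c}$ as the denominator, is made explicit and rigorous; and your final bound $1 - c^2\frac{n_w^{2c-1}}{(n_w^2-c)^c}$ is strictly tighter than the paper's, subsuming the stated lemma since $c! \geq 1$. The only caveat worth flagging is that your intermediate bounds (e.g., $\prod_{i=0}^{c-1}(n_w - i) \geq (n_w-c)^c$) tacitly require $n_w \geq c$, but this is harmless: the paper's own bound is only meaningful once $n_w^2 > c$, and the limit claim concerns $w \to \infty$ where $n_w \to \infty$.
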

\begin{proof}
Let $\mathcal{E}_{m_1,m_2,j}$ be the event that $Z_{m_1}, Z_{m_2} \in S_j$. Notice that $\mathcal{E}^{\textsc{diff\_group}}$ is exactly the event that none of the 
$\mathcal{E}_{m_1,m_2,j}$ hold. Further, note that $Z_{m_1}, Z_{m_2} \in S_j$ implies that $Z_{m_1}, Z_{m_1+1}, \ldots, Z_{m_2} \in S_j$ since we output an ordered set of $c$ reviews by recency. There are thus $\binom{n_w}{m_2-m_1+1}$ ways to choose $Z_{m_1}, Z_{m_1+1}, \ldots, Z_{m_2} $ to be in the same group $S_j$. Hence the  probability of $\mathcal{E}_{m_1,m_2,j}$ is at most 
$$\prob[\mathcal{E}_{m_1,m_2,j}] \leq \frac{\binom{n_w^2}{m_1-1} \binom{n_w}{m_2-m_1+1} \binom{n_w^2}{c-m_2}}{\binom{n_w^2}{c}}$$
as there are at most $\binom{n_w^2}{m_1-1}$ ways to choose $Z_1, \ldots, Z_{m_1-1}$ and at most $\binom{n_w^2}{m_2-m_1}$ ways to choose $Z_{m_2+1}, \ldots, Z_c$. Using the inequalities $\frac{(n-k)^k}{k!} \leq \binom{n}{k} \leq n^k$, we obtain that 
$$\prob[\mathcal{E}_{m_1,m_2,j}] \leq c! \cdot \frac{n_w^{2(m_1-1)} n_w^{m_2-m_1+1} n_w^{2(c-m_2)}}{(n_w^2-c)^c} = c! \cdot \frac{n_w^{2c-(m_2-m_1+1)}}{(n_w^2-c)^c} \leq c! \cdot \frac{n_w^{2c-2}}{(n_w^2-c)^c}$$ since $m_2-m_1+1 \geq 2$. Thus, via union bound over $m_1, m_2 \in \{1, \ldots, c\}$ and $j \in \{1, \ldots, n_w\}$, i.e., $c^2n_w$ events, the probability of $\mathcal{E}^{\textsc{diff\_group}}$ is lower bounded as follows
$$\prob[\mathcal{E}^{\textsc{diff\_group}}] \geq 1- c^2 n_w \cdot c! \cdot  \frac{n_w^{2c-2}}{(n_w^2-c)^c} = 1-c! \cdot c^2 \frac{n_w^{2c-1}}{(n_w^2-c)^c} \to_{n_w \to \infty} 1.$$
 \end{proof}
When the event $\mathcal{E}^{\textsc{diff\_groups}}$ holds then the reviews $Z_m$ for $m = 1, \ldots, c$ come from different groups $\{S_j\}_{j =1}^{n_w}$. We show that when this happens the values of the reviews $Z_m$ are independent. Let $\mathcal{E}_{j_1, \ldots, j_c}$ be the event that review $Z_m$ comes from group $S_{j_m}$ for $m = 1, \ldots, c$. The next lemma shows that conditioned on the event $\mathcal{E}_{j_1, \ldots, j_c}$, the values of $Z_1, \ldots, Z_c$ are independent.\footnote{This is not the case without conditioning on $\mathcal{E}_{j_1, \ldots, j_c}$. As an example, consider two ordered reviews $(Z_1, Z_2)$ drawn uniformly from the three ordered review ratings $(1,0,1,1)$. Conditioned on $Z_1 = 0$ then $Z_2 = 1$ deterministically while conditioned on $Z_1 = 1$ then $\prob[Z_2 = 0|Z_1 = 1] = \frac{1}{4}$. As a results $Z_1$ and $Z_2$ are correlated.} 
Recall that $\mathcal{E}^{\textsc{Conc}}$ implies that $(y_1, \ldots, y_{n_w^2}) \in \mathcal{A}$. 
\begin{lemma}\label{lemma:indep_across_diff_groups}Let $j_1, j_2 \ldots, j_c \in \{1, \ldots, n_w\}$ be group indices with $j_1<j_2<\ldots,\leq j_c$. Conditioned on $\mathcal{E}_{j_1, \ldots, j_c}, \mathcal{E}^{\textsc{Conc}}$,  $\mathcal{E}^{\textsc{OrderInf}}$, and $(y_1, \ldots, y_{n_w^2}) \in \mathcal{A}$, for any vector of review ratings $\bm{z} \in \{0,1\}^c$, the events $\{Z_m = z_m\}$ for $m = 1, \ldots, c$ are independent. Furthermore, $$\prob\big[Z_m = z_m\big] = \Big(\frac{\sum_{\ell \in S_{j_m}}y_{\ell}}{n_w} \Big)^{z_{m}} \Big(1-\frac{\sum_{\ell \in S_{j_m}}y_{\ell}}{n_w}\Big)^{1-z_{m}}.$$
\end{lemma}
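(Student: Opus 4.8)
The plan is to reduce the statement to a clean combinatorial fact about uniform sampling without replacement from a partitioned ground set. First I would observe that conditioning on $\mathcal{E}^{\textsc{OrderInf}}$ makes the selection of $\sigma^{\textsc{random}(w)}$ uniform over $c$-subsets of the pool $\{X_{-1}, \ldots, X_{-n_w^2}\}$: since $\sigma^{\textsc{random}(w)}$ draws a uniformly random $c$-subset of the $w$ most recent reviews, conditioning on the event that all $c$ of them fall in the fixed pool of size $n_w^2$ leaves a distribution that is uniform over the $\binom{n_w^2}{c}$ subsets of that pool (all of which had equal probability to begin with). Throughout, the index that is drawn is independent of the realized values $(y_1, \ldots, y_{n_w^2})$, so conditioning on $\mathcal{E}^{\textsc{Conc}}$ and on $(y_1, \ldots, y_{n_w^2}) \in \mathcal{A}$ --- both events about the values --- does not alter the selection distribution; it merely fixes the $y_{\ell}$'s.

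The main step is to analyze the conditional law of the selected indices given $\mathcal{E}_{j_1, \ldots, j_c}$. Writing $\ell_1, \ldots, \ell_c$ for the indices of the $c$ selected reviews in order of recency, the event $\mathcal{E}_{j_1, \ldots, j_c}$ with $j_1 < \cdots < j_c$ specifies that exactly one selected review comes from each of the groups $S_{j_1}, \ldots, S_{j_c}$. Because the groups are consecutive blocks ordered by recency, every $c$-subset meeting this description contains precisely one element of each $S_{j_m}$, and its recency order is forced to be $\ell_m \in S_{j_m}$ for $m = 1, \ldots, c$. Hence the map from such $c$-subsets to tuples $(\ell_1, \ldots, \ell_c) \in S_{j_1} \times \cdots \times S_{j_c}$ is a bijection, and by the uniformity from the first step the tuple $(\ell_1, \ldots, \ell_c)$ is \emph{uniform over the product set} $S_{j_1} \times \cdots \times S_{j_c}$. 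A uniform distribution over a product set factorizes, so the coordinates $\ell_1, \ldots, \ell_c$ are mutually independent with $\ell_m$ uniform on $S_{j_m}$.

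To conclude, each $Z_m$ equals the value $y_{\ell_m}$ of the (uniformly drawn) index $\ell_m \in S_{j_m}$, so $Z_m$ is a deterministic function of $\ell_m$ alone; independence of $\ell_1, \ldots, \ell_c$ therefore transfers to independence of the events $\{Z_m = z_m\}$. Moreover, since $\ell_m$ is uniform over the $n_w$ indices of $S_{j_m}$, we get $\prob[Z_m = 1] = \frac{\sum_{\ell \in S_{j_m}} y_{\ell}}{n_w}$, which yields the claimed Bernoulli expression after combining the two cases $z_m \in \{0,1\}$. The step I expect to require the most care is the bijection argument: one must verify that conditioning on exactly one element per group genuinely yields a product-uniform law (rather than introducing cross-group dependence through the recency ordering), which is precisely where the assumption $j_1 < \cdots < j_c$ and the block structure of the partition are used --- indeed, the footnote's counterexample shows that independence fails without conditioning on which groups are hit.
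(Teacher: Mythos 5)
Your proposal is correct and follows essentially the same route as the paper's proof: both arguments boil down to showing that, conditional on $\mathcal{E}_{j_1,\ldots,j_c}$, the tuple of selected indices is uniform over the product set $S_{j_1}\times\cdots\times S_{j_c}$ (the paper via an explicit Bayes-rule computation giving each atom conditional probability $1/(n_w)^c$, you via the bijection between group-hitting $c$-subsets and tuples), after which independence and the Bernoulli marginals follow identically. The extra care you take — noting that conditioning on the value events $\mathcal{E}^{\textsc{Conc}}$ and $(y_1,\ldots,y_{n_w^2})\in\mathcal{A}$ does not perturb the selection law, and that the block structure forces the recency ordering $\ell_m\in S_{j_m}$ — makes explicit two points the paper leaves implicit, but does not change the substance of the argument.
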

\begin{proof}
For any specific review ratings $y_{\ell_1} \in S_{j_1}, \ldots, y_{\ell_m} \in S_{j_m}$, applying Bayes rule
\begin{align*}
    \prob\big[Z_1 = y_{\ell_1}, \ldots, Z_c = y_{\ell_c}|\mathcal{E}_{j_1, \ldots, j_c}\big] &= \frac{\prob\big[Z_1 = y_{\ell_1}, \ldots, Z_m = y_{\ell_m}, \mathcal{E}_{j_1, \ldots, j_c}\big]}{\prob \big[\mathcal{E}_{j_1, \ldots, j_c} \big]}
\end{align*}

There are $(n_w)^c$ choices for the set of $c$ ratings $(Z_1, \ldots, Z_c)$ so that $Z_m \in S_{j_m}$ for $m =1 , \ldots, c$ and $\binom{(n_w)^2}{c}$ total number of choices. 
Hence $\mathcal{E}_{j_1, \ldots, j_c}$ holds with probability $\prob \big[\mathcal{E}_{j_1, \ldots, j_c} \big] = \frac{(n_w)^c}{\binom{(n_w)^2}{c}}$. Given that there is exactly one choice for the $c$ reviews $(Z_1, \ldots, Z_c)$ which satisfies $Z_1 = y_{\ell_1}, \ldots, Z_c = y_{\ell_c}$:
\begin{equation}\label{eq:independece_across_groups}
\prob\big[Z_1 = y_{\ell_1}, \ldots, Z_c = y_{\ell_c}|\mathcal{E}_{\ell_1, \ldots, \ell_c}\big] =  \frac{\frac{1}{\binom{(n_w)^2}{c}}}{\frac{(n_w)^c}{\binom{(n_w)^2}{c}}} =\frac{1}{(n_w)^c}. 
\end{equation}
Since this holds for any $y_{\ell_1} \in S_{j_1}, \ldots, y_{\ell_m} \in S_{j_m}$ we obtain independence of $\{Z_m = z_m\}$ for $m =1, \ldots, c$. By summing (\ref{eq:independece_across_groups}) over $y_{\ell_2}, \ldots, y_{\ell_c}$, for any particular $y_{\ell_m} \in S_{j_m}$, we have 
$\prob[Z_m = y_{\ell_m}|\mathcal{E}_{\ell_1, \ldots, \ell_c}] = \frac{1}{n_w}$. Therefore, the probability that $Z_m$ has a particular value $z_m \in \{0,1\}$ is equal to the fraction of $y_{\ell_m} \in S_{j_m}$ which have value $z_m$ i.e. 
$$\prob\big[Z_m = z_m\big] = \Big(\frac{\sum_{\ell \in S_{j_m}}y_{\ell}}{n_w} \Big)^{z_{m}} \Big(1-\frac{\sum_{\ell \in S_{j_m}}y_{\ell}}{n_w}\Big)^{1-z_{m}}.$$
 \end{proof}

We next show that conditioned on the events $\mathcal{E}^{\textsc{diff\_group}}$,$\mathcal{E}^{\textsc{Conc}}$, and $\mathcal{E}^{\textsc{OrderInf}}$ the distribution of $\sigma^{\textsc{random}(w)}(\mathcal{H}_{t^{(i)}})$ is close to the distribution of $\sigma^{\textsc{random}}$. Given that the latter consists of i.i.d. $\bern(\mu)$, $\prob(\sigma^{\textsc{random}} = \bm{z}) = \mu^{N(\bm{z})}(1-\mu)^{c-N(\bm{z})}$ for vector of review ratings $\bm{z} \in \{0,1\}^c$. Recall that $N(\bm{z}) \coloneqq N_{\bm{z}} = \sum_{i=1}^c z_i$. Using the independence acorss different groups (\cref{lemma:indep_across_diff_groups}), the following lemma shows that $\prob[\sigma^{\textsc{random}(w)}(\mathcal{H}_{t^{(i)}}) = \bm{z}]$ has a similar decomposition as $\prob[\sigma^{\textsc{random}} = \bm{z}]$. 

\begin{lemma}\label{lemma_close_conditioned_on_diff_groups}
Conditioned on $\mathcal{E}_{j_1, \ldots, j_c}, \mathcal{E}^{\textsc{Conc}}$,  $\mathcal{E}^{\textsc{OrderInf}}$, and $(y_1, \ldots, y_{n_w}) \in \mathcal{A}$, for any vector of review ratings $\bm{z} \in \{0,1\}^c$, $\prob[\sigma^{\textsc{random}(w)}(\mathcal{H}_{t^{(i)}}) = \bm{z}]$ is lower bounded by 
    $(\mu(1-n_w^{-\frac{1}{3}}))^{N(\bm{z})}(1-\mu(1+n_w^{-\frac{1}{3}}))^{c-N(\bm{z})}$ and upper bounded by $(\mu(1+n_w^{-\frac{1}{3}}))^{N(\bm{z})}(1-\mu(1-n_w^{-\frac{1}{3}}))^{c-N(\bm{z})}$
\end{lemma}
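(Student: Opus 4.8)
The plan is to combine the independence-and-factorization result of \cref{lemma:indep_across_diff_groups} with the per-group concentration guaranteed by $\mathcal{E}^{\textsc{Conc}}$, and then bound the resulting product of Bernoulli factors term by term. First I would invoke \cref{lemma:indep_across_diff_groups}: conditioned on $\mathcal{E}_{j_1, \ldots, j_c}$ (together with $\mathcal{E}^{\textsc{Conc}}$, $\mathcal{E}^{\textsc{OrderInf}}$, and $(y_1, \ldots, y_{n_w^2}) \in \mathcal{A}$), the events $\{Z_m = z_m\}$ are mutually independent, so that
\begin{equation*}
\prob\big[\sigma^{\textsc{random}(w)}(\mathcal{H}_{t^{(i)}}) = \bm{z}\big] = \prod_{m=1}^{c} p_{j_m}^{\,z_m}\,(1-p_{j_m})^{1-z_m}, \qquad \text{where } p_{j_m} \coloneqq \frac{\sum_{\ell \in S_{j_m}} y_\ell}{n_w}.
\end{equation*}

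Next I would translate the concentration event into a two-sided bound on each group average. Since $(y_1, \ldots, y_{n_w^2}) \in \mathcal{A}$, every group $S_j$ satisfies $\sum_{\ell \in S_j} y_\ell \in [\mu(n_w - n_w^{2/3}), \mu(n_w + n_w^{2/3})]$; dividing by $n_w$ gives
$$p_{j_m} \in \big[\mu(1 - n_w^{-1/3}),\ \mu(1 + n_w^{-1/3})\big] \qquad \text{for every } m \in \{1, \ldots, c\}.$$
This is immediate from the definition of $\mathcal{A}$ and requires no further work.

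Finally I would bound the product factor by factor, which is the only place that needs care. Each factor with $z_m = 1$ equals $p_{j_m}$, which is increasing in $p_{j_m}$, whereas each factor with $z_m = 0$ equals $1 - p_{j_m}$, which is \emph{decreasing} in $p_{j_m}$; there are $N(\bm{z})$ of the former and $c - N(\bm{z})$ of the latter. Hence, for the upper bound I would use $p_{j_m} \leq \mu(1 + n_w^{-1/3})$ on the positive-review factors and $p_{j_m} \geq \mu(1 - n_w^{-1/3})$ on the negative-review factors, yielding $(\mu(1 + n_w^{-1/3}))^{N(\bm{z})}(1 - \mu(1 - n_w^{-1/3}))^{c - N(\bm{z})}$; the lower bound follows symmetrically by reversing both directions. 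The only genuine obstacle is bookkeeping the opposite monotonicity of the two factor types so that the correct endpoint of the concentration interval is applied to each; everything else is a direct substitution, and matching the two resulting products against the stated bounds completes the proof.
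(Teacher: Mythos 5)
Your proposal is correct and follows essentially the same route as the paper's proof: invoke \cref{lemma:indep_across_diff_groups} to factorize the conditional probability into per-group Bernoulli terms, use membership in $\mathcal{A}$ to get the two-sided bound $p_{j_m} \in [\mu(1-n_w^{-1/3}), \mu(1+n_w^{-1/3})]$, and then bound each factor using the correct endpoint according to whether $z_m = 1$ or $z_m = 0$. If anything, you are slightly more explicit than the paper about the opposite monotonicity of the two factor types, which the paper glosses over with ``applying these inequalities for each $m$.''
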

\begin{proof}
 Using \cref{lemma:indep_across_diff_groups}: 
$$\prob[\sigma^{\textsc{random}(w)}(\mathcal{H}_{t^{(i)}}) = \bm{z}|\mathcal{E}_{j_1, \ldots, j_c}]  = \prod_{m=1}^{c}(\frac{\sum_{\ell \in S_{j_m}}y_{\ell}}{n_w})^{z_{m}}(1-\frac{\sum_{\ell \in S_{j_m}}y_{\ell}}{n_w})^{1-z_{m}}.$$
As $(y_1, \ldots, y_{n_w^2}) \in \mathcal{A}$, we have that $\mu(1-n_w^{-\frac{1}{3}}) \leq \frac{\sum_{\ell \in S_{j_m}}y_{\ell}}{n_w}  \leq \mu(1+n_w^{-\frac{1}{3}})$ for all $m = 1, \ldots, c$. Applying these inequalities for each $m$ yields
\begin{align*}
&\prob[\sigma^{\textsc{random}(w)}(\mathcal{H}_{t^{(i)}}) = \bm{z}|\mathcal{E}_{j_1, \ldots, j_c}] \\
&\in \Big[(\mu(1-n_w^{-\frac{1}{3}}))^{N(\bm{z})}(1-\mu(1+n_w^{-\frac{1}{3}}))^{c-N(\bm{z})},(\mu(1+n_w^{-\frac{1}{3}}))^{N(\bm{z})}(1-\mu(1-n_w^{-\frac{1}{3}}))^{c-N(\bm{z})} \Big].
\end{align*}  
for any $\mathcal{E}_{j_1, \ldots, j_c}$. As $\mathcal{E}_{\textsc{diff\_groups}}$ is the union of events $\mathcal{E}_{j_1, \ldots, j_c}$ for all indices $j_1, \ldots, j_c$, this implies
\begin{align*}
&\prob[\sigma^{\textsc{random}(w)}(\mathcal{H}_{t^{(i)}}) = \bm{z}|\mathcal{E}_{\textsc{diff\_groups}}] \\
&\in \Big[(\mu(1-n_w^{-\frac{1}{3}}))^{N(\bm{z})}(1-\mu(1+n_w^{-\frac{1}{3}}))^{c-N(\bm{z})},(\mu(1+n_w^{-\frac{1}{3}}))^{N(\bm{z})}(1-\mu(1-n_w^{-\frac{1}{3}}))^{c-N(\bm{z})} \Big].
\end{align*}
 \end{proof}

\begin{proof}[Proof of \cref{lemma_uniform_bound_probabilities}.]
The law of total probability yields the following decomposition
\begin{align*}
    \prob[(Z_1, \ldots, Z_c) = \bm{z}] &= \prob[(Z_1, \ldots, Z_c) = \bm{z}|\mathcal{E}^{\textsc{diff\_group}}] \prob[\mathcal{E}^{\textsc{diff\_group}}]\\
    &+\prob[(Z_1, \ldots, Z_c) = \bm{z}|\neg \mathcal{E}^{\textsc{diff\_group}}] \prob[\neg \mathcal{E}^{\textsc{diff\_group}}].
\end{align*}
By this decomposition and Lemma \ref{lemma_close_conditioned_on_diff_groups}, we can lower bound $\prob[(Z_1, \ldots, Z_c) = \bm{z}]$ by
\begin{align*}
    &\prob[(Z_1, \ldots, Z_c) = \bm{z}|\mathcal{E}^{\textsc{diff\_group}}] \prob[\mathcal{E}^{\textsc{diff\_group}}] \geq \underbrace{(\mu(1-n_w^{-\frac{1}{3}}))^{N(\bm{z})}(1-\mu(1+n_w^{-\frac{1}{3}}))^{c-N(\bm{z})} \prob[\mathcal{E}^{\textsc{diff\_group}}]}_{\textsc{LB}(n_w, \bm{z})}
\end{align*}
and upper bound the same probability by 
\begin{align*}
    \prob[(Z_1, \ldots, Z_c) = \bm{z}|\mathcal{E}^{\textsc{diff\_group}}]  &+ 1-\prob[\mathcal{E}^{\textsc{diff\_group}}] \nonumber \\
    &\leq \underbrace{(\mu(1+n_w^{-\frac{1}{3}}))^{N(\bm{z})}(1-\mu(1-n_w^{-\frac{1}{3}}))^{c-N(\bm{z})} +1-\prob[\mathcal{E}^{\textsc{diff\_group}}]}_{\textsc{UB}(n_w, \bm{z})}
\end{align*}
Therefore, 
$|\prob[(Z_1, \ldots, Z_c) = \bm{z}] - \mu^{N(\bm{z})} (1-\mu)^{c-N(\bm{z})}| \leq f(\bm{z}, n_w)$
where 
\begin{align*}
f(\bm{z}, n_w) = \max\Big(|\textsc{LB}(n_w, \bm{z}) -\mu^{N(\bm{z})}(1-\mu)^{c-N(\bm{z})}|,|\textsc{UB}(n_w, \bm{z}) -\mu^{N(\bm{z})}(1-\mu)^{c-N(\bm{z})}|\Big).
\end{align*}
Recall that $\prob[\sigma^{\textsc{random}} = \bm{z}] = \mu^{N(\bm{z})}(1-\mu)^{c-N(\bm{z})}$. Since $\prob[\mathcal{E}^{\textsc{diff\_group}}] \to_{n_w \to \infty} 1$ (\cref{lemma_diff_groups_whp}) and $n_w^{-\frac{1}{3}} \to 0$,  we obtain that $f(\bm{z}, n_w) \to_{n \to \infty} 0$ for any $\bm{z} \in \{0,1\}^c$ because
$$\textsc{UB}(n_w, \bm{z}), \textsc{LB}(n_w, \bm{z}) \to_{n_w \to \infty} \mu^{N(\bm{z})}(1-\mu)^{c-N(\bm{z})}.$$ Hence, $f(n_w) = \max_{\bm{z} \in \{0,1\}^c} f(\bm{z}, n_w)$ satisfies the desired property and concludes the proof. 
\end{proof}

\subsection{Random maximizes revenue in a class of review-oblivious policies}\label{app_subsec_proof_random_highest_revenue}

\begin{proposition}\label{prop:random_highest_revenue_amongst_class}
For any fixed price $p$ satisfying Assumption \ref{assumption:non_abs_non_degen} and any finite window $w$,
\begin{itemize}
    \item $\mathcal{U}(S_w)$ maximizes revenue over $\Delta(S_w)$;
    $\rev(\sigma^{\textsc{random}(w)},p) = \max_{\psi \in \Delta(S_w)} \rev(\sigma^{\textsc{random}(w, \psi)},p)$
    \item $\rev(\sigma^{\textsc{random}(w)},p)$ is strictly increasing in $w$
    \item $\lim_{w \to \infty} \rev(\sigma^{\textsc{random}(w)},p) = \rev(\random, p)$.
\end{itemize}
\end{proposition}
To show the proposition we show three useful lemmas. 
The first lemma (proof in \ref{subsubsec_proof_lemma_uniform_dist_best_window_w}) shows that $\mathcal{U}(\mathcal{S}_{w})$ maximizes the revenue over all distributions $\psi$ over $\mathcal{S}_{w}$, which suggests for any fixed window $w$, more randomness in the ordering implies more revenue.

\begin{lemma}\label{lemma: entropy_max_revenue}
For any  fixed price $p$ satisfying Assumption \ref{assumption:non_abs_non_degen} the uniform distribution $\mathcal{U}(\mathcal{S}_{w})$ maximizes the revenue over $\Delta (\mathcal{S}_{w})$, i.e.,
$\textsc{Rev}(\sigma^{\textsc{random}(w)},p) = \max_{\psi \in \Delta (\mathcal{S}_{w})} \textsc{Rev}(\sigma^{\textsc{random}(w,\psi)},p)$.
\end{lemma}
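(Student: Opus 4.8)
The plan is to work in the steady state of the Markov chain on review states induced by a fixed but arbitrary selection distribution $\psi \in \Delta(\mathcal{S}_w)$, and to show that the uniform distribution $\unif(\mathcal{S}_w)$ is revenue-optimal. The key idea is that under $\gamma = 1$ (no discounting), the purchase probability in any state depends only on which $c$ of the $w$ most recent reviews are selected and how many of them are positive. Since $\psi$ selects a $c$-subset $I \in \mathcal{S}_w$ independently at each round, I would first set up the Markov chain $\bm{W}_t \in \{0,1\}^w$ tracking the $w$ newest reviews, whose state is updated only when a purchase occurs (exactly as in the analysis leading to \cref{lemma:stationary_state_distribiton_newest_first}). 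The purchase probability in state $\bm{w} = (w_1,\dots,w_w)$ is the $\psi$-average $\bar{q}_{\psi}(\bm{w}) = \sum_{I \in \mathcal{S}_w} \psi(I)\, \prob_{\Theta\sim\mathcal{F}}[\Theta + h(\sum_{i\in I} w_i) \ge p]$ over the selected subset.

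**Applying the general stationary-distribution machinery.** The process $\bm{W}_t$ is again a chain that remains in place with the state-dependent ``no purchase'' probability $1 - \bar{q}_{\psi}(\bm{w})$ and otherwise transitions by prepending a fresh $\bern(\mu)$ review and dropping the oldest. I would invoke \cref{lemma: general_theorem_markov_chains_stationary} exactly as in the proof of \cref{lemma:stationary_state_distribiton_newest_first}: the ``underlying'' chain that always refreshes has stationary distribution $\mu^{N_{\bm{w}}}(1-\mu)^{w-N_{\bm{w}}}$, so the stationary distribution of $\bm{W}_t$ is $\pi_{\bm{w}} \propto \mu^{N_{\bm{w}}}(1-\mu)^{w-N_{\bm{w}}}/\bar{q}_{\psi}(\bm{w})$. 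By the Ergodic theorem, the steady-state revenue equals $p$ times the expected purchase probability in stationarity, and the $\bar{q}_{\psi}(\bm{w})$ factors cancel exactly as in \cref{theorem:most_recent_C_revenue}, yielding the clean closed form
\begin{equation*}
\textsc{Rev}_1(\sigma^{\textsc{random}(w,\psi)}, p) = \frac{p}{\expect_{\bm{W}\sim \bern(\mu)^{\otimes w}}\!\big[\,1/\bar{q}_{\psi}(\bm{W})\,\big]}.
\end{equation*}

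**Reducing optimization to a convexity argument.** Maximizing this revenue over $\psi$ is equivalent to \emph{minimizing} the denominator $\expect_{\bm{W}}[1/\bar{q}_{\psi}(\bm{W})]$. Here $\bar{q}_{\psi}(\bm{W})$ is linear in $\psi$ for each fixed $\bm{W}$, and $x \mapsto 1/x$ is convex on $(0,\infty)$; hence $\psi \mapsto \expect_{\bm{W}}[1/\bar{q}_{\psi}(\bm{W})]$ is convex, so the minimizer may not be an extreme point — I cannot simply argue the optimum is a single deterministic subset. Instead I would exploit the \emph{symmetry} of the objective: when reviews are drawn i.i.d.\ $\bern(\mu)$, the distribution of $\bm{W}$ is exchangeable, so permuting the coordinates of $\bm{W}$ leaves $\expect_{\bm{W}}[1/\bar{q}_{\psi}(\bm{W})]$ unchanged if we correspondingly permute $\psi$. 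By convexity (Jensen applied to the average of $\psi$ over the symmetric group $S_w$ acting on $\{1,\dots,w\}$), averaging any $\psi$ over all coordinate permutations can only weakly decrease the denominator. The permutation-averaged selection distribution is precisely the exchangeable one that selects each $c$-subset with equal probability, i.e.\ $\unif(\mathcal{S}_w)$. This shows $\unif(\mathcal{S}_w)$ minimizes the denominator and hence maximizes revenue.

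**Main obstacle.** The delicate step is justifying that symmetrizing $\psi$ over $S_w$ yields exactly $\unif(\mathcal{S}_w)$ and that the inequality goes the right way. The convexity gives $\expect_{\bm{W}}[1/\bar{q}_{\bar\psi}(\bm{W})] \le \frac{1}{w!}\sum_{\tau\in S_w} \expect_{\bm{W}}[1/\bar{q}_{\tau\cdot\psi}(\bm{W})]$ where $\bar\psi = \frac{1}{w!}\sum_\tau \tau\cdot\psi$, and each summand on the right equals $\expect_{\bm{W}}[1/\bar{q}_{\psi}(\bm{W})]$ by exchangeability of $\bm{W}$; thus $\bar\psi$ is at least as good as $\psi$, and since $\bar\psi$ is $S_w$-invariant it assigns equal mass to all $c$-subsets, giving $\bar\psi = \unif(\mathcal{S}_w)$. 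I expect the main care to be in writing the exchangeability cancellation cleanly and in confirming that non-degeneracy and non-absorbingness of $p$ guarantee $\bar{q}_{\psi}(\bm{W}) > 0$ almost surely (so $1/\bar{q}_{\psi}$ is well-defined and the chain is irreducible), which is exactly what lets \cref{lemma: general_theorem_markov_chains_stationary} apply.
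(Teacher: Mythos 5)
Your proof is correct, and while your setup coincides with the paper's (the same Markov chain on $\{0,1\}^w$, the same application of \cref{lemma: general_theorem_markov_chains_stationary} to get the stationary distribution, and the same cancellation giving $\textsc{Rev}_1 = p \big/ \expect_{\bm{W}}[1/\bar{q}_{\psi}(\bm{W})]$, which are exactly \cref{lemma:stationary_dist_window_W} and \cref{lemma:revenue_sliding_window_W}), your key optimization step takes a genuinely different route. The paper conditions on the number of positive reviews $k$: it applies Jensen's inequality to the harmonic--arithmetic mean comparison \emph{within} the set of states $\bm{z}$ with $N_{\bm{z}}=k$, then shows by a combinatorial counting identity that $\sum_{\bm{z}: N_{\bm{z}}=k} q_{\bm{z}}^{\psi}$ does not depend on $\psi$ at all, and finally observes that under $\unif(\mathcal{S}_w)$ the purchase probability is constant across such states, so Jensen is tight there (\cref{lemma: purchase_rate_at_k_inequality}). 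You instead apply Jensen \emph{in $\psi$-space}: convexity of $\psi \mapsto \expect_{\bm{W}}[1/\bar{q}_{\psi}(\bm{W})]$ plus exchangeability of the i.i.d.\ $\bern(\mu)$ reviews shows that symmetrizing $\psi$ over the coordinate action of $S_w$ weakly decreases the denominator, and the orbit-average of any $\psi$ under the transitive action of $S_w$ on $c$-subsets is exactly $\unif(\mathcal{S}_w)$ (each subset receives mass $c!(w-c)!/w! = 1/\binom{w}{c}$). Your symmetrization argument is arguably more conceptual and avoids the binomial-coefficient bookkeeping; what the paper's route buys in exchange is the explicit \emph{equality characterization} in \cref{lemma: purchase_rate_at_k_inequality} (equality iff $q_{\bm{z}}^{\psi}$ is constant over states with the same $k$), which the paper then needs in \cref{lemma: psi_2_w_plus_1_suboptimal} to establish the \emph{strict} suboptimality of the distribution that ignores the $(w+1)$-th review, and hence strict monotonicity in \cref{theorem:revenue_increasing_in_window_size}. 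If you wanted your argument to support that later strictness claim, you would additionally need to track when your Jensen step is tight, which is less immediate in the symmetrized formulation.
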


The next lemma (proof in Section \ref{subsubsec_proof_lemma_rev_w_strictly_increasing}) shows the revenue of $\sigma^{\textsc{random}(w)}$ is strictly increasing in the window size $w$ for any price $p$ satisfying Assumption \ref{assumption:non_abs_non_degen}.

\begin{lemma}\label{lemma:rev_w_strictly_increasing}
    For any fixed price $p$ satisfying Assumption \ref{assumption:non_abs_non_degen}, $\rev(\sigma^{\textsc{random}(w)},p)$ is strictly increasing in the window $w$.
\end{lemma}

The third lemma (proof in Section \ref{subsubsec_lemma_limit_rev_w_to_infty}) shows that as the window $w$ goes to infinity the revenue of $\sigma^{\textsc{random}(w)}$ approaches the revenue of $\random$.

\begin{lemma}\label{lemma:limit_rev_w_to_infty}
   For any fixed price $p$ satisfying Assumption \ref{assumption:non_abs_non_degen}, $\lim_{w \to \infty} \rev(\sigma^{\textsc{random}(w)},p) = \rev(\random, p)$. 
\end{lemma}

\begin{proof}[Proof of \cref{prop:random_highest_revenue_amongst_class}]
The proof follows by combining Lemmas~\ref{lemma: entropy_max_revenue}, \ref{lemma:rev_w_strictly_increasing}, and \ref{lemma:limit_rev_w_to_infty}.
\end{proof}

\subsubsection{Uniform distribution maximizes revenue (Lemma~\ref{lemma: entropy_max_revenue})}\label{subsubsec_proof_lemma_uniform_dist_best_window_w}
To prove \cref{lemma: entropy_max_revenue}, we first characterize the stationary distribution of the reviews generated by $\sigma^{\textsc{random}(w, \psi)}$ in way analogous to \cref{lemma:stationary_state_distribiton_newest_first}. In particular for any state of the $w$ most recent review ratings $\bm{z} \in \{0,1\}^w$, the purchase probability is $q_{\bm{z}}^{\psi} \coloneqq \sum_{S \in \mathcal{S}_{w}} \psi(S) \prob_{\Theta \sim \mathcal{F}}[\Theta + h(\sum_{i \in S} z_i) \geq p]$.
A useful quantity is the inverse purchase rate conditioned on $k$ positive reviews:
$$\iota_k^{\psi} =  \frac{1}{\binom{w}{k}}\sum_{\bm{z} \in \{0,1\}^w: N_{\bm{z}} = k} \frac{1}{q_{\bm{z}}^{\psi}}.$$  Intuitively, $\iota_k^{\psi}$ is the average number of rounds the process of the most recent $w$ reviews spends at a review state with $k$ positive reviews. 
The next lemma (proof in \cref{app:purchase_rate_at_k_inequality}) shows that the uniform distribution minimizes this purchase rate for any $k$ and characterizes the equality conditions.

\begin{lemma}\label{lemma: purchase_rate_at_k_inequality}
    For any $k \in \{0,\ldots, c\}$ and any probability distribution $\psi$, the inverse purchase rate under $\psi$ is at least the inverse purchase rate under $\mathcal{U}(\mathcal{S}_w)$, i.e., $\iota^\psi_k \geq \iota^{\mathcal{U}(\mathcal{S}_w)}_k$. Equality $\iota^\psi_k = \iota^{\mathcal{U}(\mathcal{S}_w)}_k$ is achieved if and only if $q_{\bm{z}}^{\psi} =q_{\bm{z}'}^{\psi}$ for all states $\bm{z}, \bm{z}' \in \{0,1\}^w$ with $N_{\bm{z}} = N_{\bm{z}'} =k$.
\end{lemma}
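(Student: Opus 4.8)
<br>

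The plan is to prove the inequality $\iota^\psi_k \geq \iota^{\mathcal{U}(\mathcal{S}_w)}_k$ by recognizing that both sides are averages of reciprocals of purchase probabilities, and that the uniform distribution $\mathcal{U}(\mathcal{S}_w)$ has a symmetry that the reciprocal (a convex function) exploits via Jensen's inequality. First I would fix the number of positive reviews $k$ and introduce, for each state $\bm{z} \in \{0,1\}^w$ with $N_{\bm{z}} = k$, the purchase probability $q_{\bm{z}}^{\psi} = \sum_{S \in \mathcal{S}_w} \psi(S)\,\prob_{\Theta \sim \mathcal{F}}[\Theta + h(\sum_{i\in S} z_i) \geq p]$. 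The key observation is that for the uniform distribution, $q_{\bm{z}}^{\mathcal{U}(\mathcal{S}_w)}$ does \emph{not} depend on the particular arrangement $\bm{z}$ but only on $k = N_{\bm{z}}$: averaging over all $c$-subsets $S$ chosen uniformly, the distribution of $\sum_{i \in S} z_i$ (the number of positive reviews landing in the window) is the hypergeometric distribution determined solely by $w$, $c$, and $k$. Hence $q^{\mathcal{U}(\mathcal{S}_w)}_{\bm{z}} = \bar{q}_k$ is a constant across all states with $N_{\bm{z}} = k$, which immediately gives $\iota^{\mathcal{U}(\mathcal{S}_w)}_k = 1/\bar{q}_k$.

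The core step is then to show $\frac{1}{\binom{w}{k}}\sum_{\bm{z}: N_{\bm{z}}=k} \frac{1}{q_{\bm{z}}^{\psi}} \geq \frac{1}{\bar{q}_k}$. By convexity of $x \mapsto 1/x$ on $(0,\infty)$ and Jensen's inequality applied to the uniform average over the $\binom{w}{k}$ states,
\begin{equation*}
\frac{1}{\binom{w}{k}}\sum_{\bm{z}: N_{\bm{z}}=k} \frac{1}{q_{\bm{z}}^{\psi}} \;\geq\; \frac{1}{\frac{1}{\binom{w}{k}}\sum_{\bm{z}: N_{\bm{z}}=k} q_{\bm{z}}^{\psi}}.
\end{equation*}
The remaining task is to verify that the average of $q_{\bm{z}}^{\psi}$ over all arrangements $\bm{z}$ with $N_{\bm{z}} = k$ equals $\bar{q}_k$, regardless of $\psi$. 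This is a symmetrization identity: when we sum $\prob_{\Theta \sim \mathcal{F}}[\Theta + h(\sum_{i \in S} z_i) \geq p]$ over all permutations of a fixed multiset of $k$ ones and $w-k$ zeros, the induced distribution on $\sum_{i\in S} z_i$ for \emph{any} fixed $S$ is again hypergeometric and independent of which $S$ (and hence independent of $\psi$, since $\psi$ is a convex combination over $S$'s that all wash out under the uniform average over $\bm{z}$). Interchanging the two sums (over $\bm{z}$ and over $S$ weighted by $\psi$) and using $\sum_S \psi(S) = 1$ yields $\frac{1}{\binom{w}{k}}\sum_{\bm{z}: N_{\bm{z}}=k} q_{\bm{z}}^{\psi} = \bar{q}_k$, so the right-hand side of the Jensen bound is exactly $\iota^{\mathcal{U}(\mathcal{S}_w)}_k$.

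For the equality characterization, Jensen's inequality for the strictly convex function $1/x$ is tight if and only if all the quantities being averaged are equal, i.e.\ $q_{\bm{z}}^{\psi} = q_{\bm{z}'}^{\psi}$ for all $\bm{z}, \bm{z}'$ with $N_{\bm{z}} = N_{\bm{z}'} = k$; this is precisely the stated condition. The main obstacle I anticipate is the symmetrization identity that the $\bm{z}$-average of $q_{\bm{z}}^{\psi}$ is the \emph{same constant} $\bar{q}_k$ for every $\psi$ — one must carefully justify the double-sum interchange and confirm that, for each fixed subset $S$, averaging $\prob_{\Theta}[\Theta + h(\sum_{i\in S}z_i)\geq p]$ uniformly over all placements of $k$ ones in $w$ slots produces the hypergeometric law of how many of the $k$ ones fall into the $c$ positions indexed by $S$, which depends only on $(w,c,k)$ and not on $S$. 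Once that bookkeeping is in place, the inequality and its equality condition follow directly from strict convexity.
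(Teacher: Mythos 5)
Your proposal is correct and follows essentially the same route as the paper's proof: Jensen's inequality applied to the convex function $x \mapsto 1/x$ over the uniform average of states with $N_{\bm{z}}=k$, combined with the symmetrization identity that $\frac{1}{\binom{w}{k}}\sum_{\bm{z}:N_{\bm{z}}=k} q_{\bm{z}}^{\psi}$ is independent of $\psi$ (the paper proves this by the same sum interchange and the hypergeometric counting $\sum_{\bm{z}:N_{\bm{z}}=k} q(\sum_{i\in S}z_i)=\sum_n q(n)\binom{c}{n}\binom{w-c}{k-n}$ for every fixed $S$), and the equality case read off from strict convexity. The only cosmetic difference is that the paper packages the uniform-case computation as a separate lemma with explicit binomial-coefficient identities, whereas you invoke the hypergeometric law directly.
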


Let $\bm{Z}_t = (Z_{t,1}, \ldots, Z_{t,w})$ be the most recent $w$ reviews at round $t$. We show that $\bm{Z}_t$ is a time-homogenous Markov chain on $\{0,1\}^w$. If $\bm{Z}_t$ is at state $\bm{z} \in \{0,1\}^w$, a purchase occurs with probability~$q_{\bm{z}}^{\psi}$. If a purchase occurs, a new review is left and the state transitions to $\bm{Z}_{t+1} = (1, z_1, \ldots, z_{w-1})$ if the review is positive (with probability $\mu$) and to $\bm{Z}_{t+1} = (0, z_1, \ldots, z_{w-1})$ if the review is negative (with probability $1-\mu$). If there is no purchase, the state remains the same $(\bm{Z}_{t+1}=\bm{Z}_{t})$. Given that $p$ is a non-absorbing price (Assumption \ref{assumption:non_abs_non_degen}), for every state of reviews $\bm{z} \in \{0,1\}^w$, the purchase probability $q_{\bm{z}}^{\psi}$ is positive and the probability of any new review is strictly positive (since $\mu \in (0,1)$). Then $\bm{Z}_t$ can reach every state from every other state with positive probability (i.e. it is a single-recurrence-class Markov chain with no transient states), and hence $\bm{Z}_t$ has a unique stationary distribution denoted by $\pi$. Our next lemma characterizes the form of this stationary distribution. For a a vector of review ratings $\bm{z} \in \{0,1\}^w$, $N_{\bm{z}} = \sum_{i=1}^w z_i$ denotes the number of positive ratings.

\begin{lemma}\label{lemma:stationary_dist_window_W} For any price $p$ satisfying Assumption \ref{assumption:non_abs_non_degen} and any distribution $\psi$, the stationary distribution of the Markov chain $\bm{Z}_t$ under the review ordering policy $\sigma^{\textsc{random}(w,\psi)}$ is given by
$$\pi_{\bm{z}}^{\psi} = \kappa^{\psi} \cdot \frac{\mu^{N_{\bm{z}}}(1-\mu)^{w-N_{\bm{z}}}}{q_{\bm{z}}^{\psi}} \quad \text{ where } 
\kappa^{\psi} = \frac{1}{\expect_{K \sim \binomial(w, \mu)}\big[\iota_{K}^{\psi}\big]}.
$$ 
\end{lemma}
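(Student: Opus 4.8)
The plan is to mirror the proof of \cref{lemma:stationary_state_distribiton_newest_first} and invoke the general Markov chain property of \cref{lemma: general_theorem_markov_chains_stationary}. In the language of that lemma, I would take $\bm{Z}_t$ to be the chain $\mathcal{M}_f$ on the state space $\mathcal{S} = \{0,1\}^w$, with the self-loop probability governed by $f(\bm{z}) = q_{\bm{z}}^{\psi}$, the purchase probability at state $\bm{z}$. Indeed, with probability $1 - q_{\bm{z}}^{\psi}$ no purchase occurs and the chain remains at $\bm{z}$; the key observation is that $\psi$ enters only through $f = q^{\psi}$.

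Next I would identify the underlying chain $\mathcal{M}$ that governs the transition conditional on a purchase. Upon a purchase, the state moves to $(1, z_1, \ldots, z_{w-1})$ or $(0, z_1, \ldots, z_{w-1})$ with probabilities $\mu$ and $1-\mu$ respectively; that is, $\mathcal{M}$ drops the oldest review, shifts the remaining ones, and prepends a fresh $\bern(\mu)$ draw, independently of which subset was displayed. I would then argue that the product-Bernoulli distribution $\mu^{N_{\bm{z}}}(1-\mu)^{w-N_{\bm{z}}}$ is stationary for $\mathcal{M}$: if the coordinates of the current state are i.i.d.\ $\bern(\mu)$, then after shifting, positions $2,\ldots,w$ remain i.i.d.\ $\bern(\mu)$ and the prepended coordinate is an independent fresh $\bern(\mu)$, so the new state has the same law.

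Applying \cref{lemma: general_theorem_markov_chains_stationary} with this stationary law and $f = q^{\psi}$ then yields $\pi_{\bm{z}}^{\psi} \propto \mu^{N_{\bm{z}}}(1-\mu)^{w-N_{\bm{z}}} / q_{\bm{z}}^{\psi}$, with normalizing constant $\kappa^{\psi} = 1/\sum_{\bm{z}} \mu^{N_{\bm{z}}}(1-\mu)^{w-N_{\bm{z}}}/q_{\bm{z}}^{\psi}$. The only genuine computation is to rewrite this denominator in the claimed form: grouping the states by their number of positive reviews $N_{\bm{z}} = k$ and using $\sum_{\bm{z}: N_{\bm{z}} = k} 1/q_{\bm{z}}^{\psi} = \binom{w}{k}\iota_k^{\psi}$ gives $\sum_{k=0}^w \binom{w}{k}\mu^k(1-\mu)^{w-k}\iota_k^{\psi} = \expect_{K \sim \binomial(w,\mu)}[\iota_K^{\psi}]$, which is exactly $1/\kappa^{\psi}$.

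Finally, since $\bm{Z}_t$ is irreducible and aperiodic (as noted preceding the lemma, using that $p$ is non-absorbing and $\mu \in (0,1)$), this stationary distribution is unique. The main conceptual point, rather than an obstacle, is correctly disentangling the roles of $\psi$ and the refresh dynamics: recognizing that $\psi$ affects the chain solely through $f = q^{\psi}$, while the purchase chain $\mathcal{M}$ and hence its product-Bernoulli stationary law are identical to the newest-first case. Everything else is bookkeeping over the state space $\{0,1\}^w$.
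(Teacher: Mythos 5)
Your proposal is correct and follows essentially the same route as the paper's proof: both invoke \cref{lemma: general_theorem_markov_chains_stationary} with $f(\bm{z}) = q_{\bm{z}}^{\psi}$, identify the purchase-conditional chain $\mathcal{M}$ as the shift-and-prepend-$\bern(\mu)$ chain whose stationary law is the product-Bernoulli distribution, and then regroup the normalizing constant by the number of positive reviews to recover $\expect_{K \sim \binomial(w,\mu)}[\iota_K^{\psi}]$. Your explicit remark that $\psi$ enters only through $f$, and your shift argument for why the product-Bernoulli law is stationary for $\mathcal{M}$, make steps the paper leaves implicit slightly more detailed, but the structure is identical.
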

\begin{proof}
Similar to the proof of \cref{lemma:stationary_state_distribiton_newest_first}, we invoke 
 \cref{lemma: general_theorem_markov_chains_stationary}. Recall that \cref{lemma: general_theorem_markov_chains_stationary} starts with a Markov chain $\mathcal{M}$ with state space $\mathcal{S}$, transition matrix $M$, and stationary distribution $\{\pi(s)\}_{s \in \mathcal{S}}$. For any function $f$ on $\mathcal{S}$, it then transforms $\mathcal{M}$ into a new Markov chain $\mathcal{M}_f$ which remains at every state $s$ with probability $1-f(s)$ and transitions according to $M$ otherwise. The lemma establishes that the stationary distribution of $\mathcal{M}_f$ is given by $\pi_{f} = \{\kappa \cdot \frac{\pi(s)}{f(s)}\}_{s \in \mathcal{S}}$ where $\kappa = 1/\sum_{s \in \mathcal{S}}\frac{\pi(s)}{f(s)}$.
 
In the language of \cref{lemma: general_theorem_markov_chains_stationary}, $\bm{Z}_t$ corresponds to $\mathcal{M}_f$, the state space $\mathcal{S}$ to $\{0,1\}^w$, and $f$ is a function that expresses the purchase probability at a given state, i.e., $f(\bm{z}) = q_{\bm{z}}^{\psi}$. Note that with probability $1-f(\bm{z})$, $\bm{Z}_t$ remains at the same state (as there is no purchase). 

To apply \cref{lemma: general_theorem_markov_chains_stationary}, we need to show that whenever there is a purchase, $\bm{Z}_t$ transitions according to a Markov chain with stationary distribution $\mu^{N_{\bm{z}}}(1-\mu)^{w-N_{\bm{z}}}$. Consider the Markov chain $\mathcal{M}$ which always replaces the $w$-th last review with a new $\bern(\mu)$ review. This process has stationary distribution equal to the above numerator and $\bm{Z}_t$ transitions according to $\mathcal{M}$ upon a purchase, i.e., with probability $f(\bm{z})$. Hence, by \cref{lemma: general_theorem_markov_chains_stationary}, a stationary distribution for $\bm{Z}_t$ is 
$$\pi_{\bm{z}}^{\psi} = \kappa^{\psi} \cdot \frac{\mu^{N_{\bm{z}}}(1-\mu)^{w-N_{\bm{z}}}}{q_{\bm{z}}^{\psi}} \quad \text{ where } \kappa^{\psi} = \frac{1}{\expect_{Z_1, \ldots, Z_{w} \sim_{i.i.d.} \bern(\mu)}\Big[\frac{1}{q_{(Z_1, \ldots, Z_w)}^{\psi}} \Big]}.$$
This is the unique stationary distribution as $\bm{Z}_t$ is irreducible and aperiodic. Expanding over the number of positive reviews $k \in \{0, \ldots, w\}$, the lemma follows as the expectation in the denominator of $\kappa^{\psi}$ can be expressed as
$$\sum_{k=0}^w \Bigg[ \sum_{\bm{z} \in \{0,1\}^w: N_{\bm{z}} = k} \frac{1}{q_{\bm{z}}^{\psi}} \Bigg] \mu^{k}(1-\mu)^{w-k} =  \sum_{k=0}^w \iota^\psi_k \binom{w}{k}\mu^k (1-\mu)^{w-k} = \expect_{K \sim \binomial(w, \mu)}\big[\iota_{K}^{\psi}\big].$$
\end{proof}

Having established the stationary distribution of $\bm{Z}_t$ we give an expression for the revenue of $\sigma^{\textsc{random}(w,\psi)}$ (similar to \cref{theorem:most_recent_C_revenue}).

\begin{lemma}\label{lemma:revenue_sliding_window_W}
For any price $p$ satisfying Assumption \ref{assumption:non_abs_non_degen} and any distribution $\psi \in \Delta(\mathcal{S}_{w})$, the revenue of $\sigma^{\textsc{random}(w,\psi)}$ is given by $\textsc{Rev}(\sigma^{\textsc{random}(w,\psi)}, p) =p \cdot \kappa^{\psi}  \quad \text{where}\quad \kappa^{\psi} = \frac{1}{\expect_{K \sim \binomial(w, \mu)}\big[\iota_{K}^{\psi}\big]}$.
\end{lemma}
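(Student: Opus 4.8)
The plan is to mirror the proof of Proposition~\ref{theorem:most_recent_C_revenue}, now leveraging the stationary distribution $\pi^{\psi}$ of the window-$w$ chain $\bm{Z}_t$ characterized in Lemma~\ref{lemma:stationary_dist_window_W}. First I would observe that, under the fixed static price $p$, the per-round expected revenue when the chain is at state $\bm{z} \in \{0,1\}^w$ equals $p \cdot q_{\bm{z}}^{\psi}$, where $q_{\bm{z}}^{\psi} = \sum_{S \in \mathcal{S}_{w}} \psi(S)\, \prob_{\Theta \sim \mathcal{F}}[\Theta + h(\sum_{i \in S} z_i) \geq p]$ is the purchase probability at $\bm{z}$ (averaging over both the subset drawn by $\sigma^{\textsc{random}(w,\psi)}$ and the idiosyncratic valuation $\Theta$). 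Since $p$ is non-absorbing, $\bm{Z}_t$ is irreducible and aperiodic with unique stationary distribution $\pi^{\psi}$, so the Ergodic theorem lets me replace the long-run average in Eq.~\eqref{equation:reveneue_def} by the stationary average:
\begin{equation*}
\textsc{Rev}_1(\sigma^{\textsc{random}(w,\psi)}, p) = p \sum_{\bm{z} \in \{0,1\}^w} \pi_{\bm{z}}^{\psi} \, q_{\bm{z}}^{\psi}.
\end{equation*}

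Next I would substitute the closed form $\pi_{\bm{z}}^{\psi} = \kappa^{\psi}\, \mu^{N_{\bm{z}}}(1-\mu)^{w-N_{\bm{z}}} / q_{\bm{z}}^{\psi}$ from Lemma~\ref{lemma:stationary_dist_window_W}. The factor $q_{\bm{z}}^{\psi}$ appearing in the revenue cancels the $1/q_{\bm{z}}^{\psi}$ in $\pi_{\bm{z}}^{\psi}$, leaving
\begin{equation*}
\textsc{Rev}_1(\sigma^{\textsc{random}(w,\psi)}, p) = p \, \kappa^{\psi} \sum_{\bm{z} \in \{0,1\}^w} \mu^{N_{\bm{z}}}(1-\mu)^{w-N_{\bm{z}}}.
\end{equation*}
The remaining sum is exactly the total probability mass of $w$ i.i.d.\ $\bern(\mu)$ trials over all outcomes $\bm{z} \in \{0,1\}^w$, hence equals $1$. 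This yields $\textsc{Rev}_1(\sigma^{\textsc{random}(w,\psi)}, p) = p \cdot \kappa^{\psi}$, and the stated value of $\kappa^{\psi}$ is precisely the normalizing constant recorded in Lemma~\ref{lemma:stationary_dist_window_W}, namely $1/\expect_{K \sim \binomial(w,\mu)}[\iota_K^{\psi}]$.

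Unlike the characterization of the stationary distribution itself, this step poses no real obstacle: the only points requiring care are (i) confirming $q_{\bm{z}}^{\psi} > 0$ for every $\bm{z}$ so that both the cancellation and $\kappa^{\psi}$ are well-defined, which follows from $p$ being non-absorbing, and (ii) justifying the passage from the $\liminf$ in Eq.~\eqref{equation:reveneue_def} to the stationary average, for which the irreducibility and aperiodicity of $\bm{Z}_t$ established in the lead-up to Lemma~\ref{lemma:stationary_dist_window_W} suffice. All the genuine difficulty has already been absorbed into Lemma~\ref{lemma:stationary_dist_window_W}, which in turn rests on the general Markov-chain identity of Lemma~\ref{lemma: general_theorem_markov_chains_stationary}.
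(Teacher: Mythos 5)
Your proposal is correct and follows essentially the same route as the paper's proof: apply the Ergodic theorem to rewrite the $\liminf$ in Eq.~\eqref{equation:reveneue_def} as the stationary average $p\sum_{\bm{z}}\pi_{\bm{z}}^{\psi}q_{\bm{z}}^{\psi}$, substitute the closed form of $\pi_{\bm{z}}^{\psi}$ from Lemma~\ref{lemma:stationary_dist_window_W} so that the $q_{\bm{z}}^{\psi}$ terms cancel, and observe that the remaining sum of $\binomial(w,\mu)$ probabilities equals $1$. The two care points you flag (positivity of $q_{\bm{z}}^{\psi}$ under a non-absorbing price, and irreducibility/aperiodicity justifying the Ergodic theorem) are exactly the ones the paper relies on in the lead-up to Lemma~\ref{lemma:stationary_dist_window_W}.
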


\begin{proof}
Using Eq.~\eqref{equation:reveneue_def} and the Ergodic theorem, the revenue of $\sigma^{\textsc{random}(w,\psi)}$ is
   \begin{equation*}
  \textsc{Rev}_1(\sigma^{\textsc{random}(w,\psi)}, p) = \liminf_{T \to \infty} \frac{\expect[\sum_{t=1}^T p \mathbf{1}_{\Theta_t + h(\Phi_t) \geq p}]}{T} = p\liminf_{T \to \infty} \frac{\expect[\sum_{t=1}^T q_{\bm{Z}_t}^{\psi}]}{T} = p\sum_{\bm{z} \in \{0,1\}^{w}} \pi_{\bm{z}}^{\psi}q_{\bm{z}}^{\psi}.\end{equation*}
   The second equality uses that the ex-ante purchase probability for review state $\bm{Z}_t$ is $\expect[\mathbf{1}_{\Theta_t + h(\Phi_t) \geq p}] =q_{\bm{Z_t}}^{\psi}$
   and the law of iterated expectation. The third equality expresses the revenue of the stationary distribution via the Ergodic theorem. Expanding $\pi_{\bm{z}}^{\psi}$ by Lemma~\ref{lemma:stationary_dist_window_W}, the $q_{\bm{z}}^{\psi}$ term cancels out: 
    \begin{align*}
     \textsc{Rev}_1(\sigma^{\textsc{random}(w,\psi)}, p)
       &= p \cdot \kappa^{\psi} \cdot \Big[\sum_{\bm{z} \in \{0,1\}^{w}} \mu^{N_{\bm{z}}} (1-\mu)^{w-N_{\bm{z}}} \Big].
    \end{align*}
    The proof is concluded by noting that the term in the square brackets equals 1, since it is the sum over all probabilities of $\binomial(w, \mu)$.
\end{proof} 

\begin{proof}[Proof of \cref{lemma: entropy_max_revenue}.]
For $\kappa^{\psi} = \frac{1}{\expect_{K \sim \binomial(w, \mu)}\big[\iota_{K}^{\psi}\big]}$, the expectation in $\frac{1}{\kappa^{\psi}}$ can be expressed by summing over the number of positive review ratings $k \in \{0, 1, \ldots, w\}$:
\begin{equation*}
     \frac{1}{\kappa^{\psi}} =  \sum_{k=0}^w \iota^\psi_k \binom{w}{k}\mu^k (1-\mu)^{w-k} \geq \sum_{k=0}^w \iota^{\mathcal{U}(\mathcal{S}_w)}_k \binom{w}{k}\mu^k (1-\mu)^{w-k} = \frac{1}{\kappa^{\mathcal{U}(\mathcal{S}_w)}}.
\end{equation*}
    where the inequality follows from \cref{lemma: purchase_rate_at_k_inequality}. Thus, $\kappa^{\psi} \leq \kappa^{\mathcal{U}(\mathcal{S}_w)}$ for any distribution $\psi$. 
    \cref{lemma:revenue_sliding_window_W} then implies that 
    $\textsc{Rev}_1(\sigma^{\textsc{random}(w,\psi)}, p) = p \cdot \kappa^{\psi} \leq p \cdot \kappa^{\mathcal{U}(\mathcal{S}_w)} =\textsc{Rev}_1(\sigma^{\textsc{random}(w)}, p).$ \end{proof}

\subsubsection{Uniform distribution minimizes the inverse purchase rate (Lemma~\ref{lemma: purchase_rate_at_k_inequality})}\label{app:purchase_rate_at_k_inequality}
The following lemma characterizes the inverse purchase rate for $k$ positive review ratings under the uniform distribution, i.e., $\iota^{\mathcal{U}(\mathcal{S}_w)}_k$. To ease notation, let $q(n) \coloneqq \prob_{\Theta \sim \mathcal{F}}[\Theta + h(n) \geq p]$ be the purchase probability given $n \in \{0,1 \ldots, c\}$ positive reviews.

\begin{lemma}\label{lemma:inverse_purchase_rate_k_uniform}
For any number of positive review ratings $k \in \{0, \ldots, w\}$, the inverse purchase rate for $k$ under the uniform distribution on $\mathcal{S}_w$ is given by
$\iota^{\mathcal{U}(\mathcal{S}_w)}_k = \frac{\binom{w}{k}}{\sum_{n= \max(0,c+k-w)}^{\min(c,k)}q(n) \binom{c}{n} \binom{w-c}{k-n} }$.
\end{lemma}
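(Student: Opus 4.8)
The plan is to exploit the symmetry of the uniform ordering policy: under $\mathcal{U}(\mathcal{S}_w)$, the purchase probability $q_{\bm{z}}^{\mathcal{U}(\mathcal{S}_w)}$ at a review state $\bm{z}$ depends only on the number of positive ratings $N_{\bm{z}}=k$ and not on their positions, which is precisely what will let the average defining $\iota^{\mathcal{U}(\mathcal{S}_w)}_k$ collapse to a single term.

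First I would compute $q_{\bm{z}}^{\mathcal{U}(\mathcal{S}_w)}$ for a state $\bm{z}$ with $N_{\bm{z}}=k$. Since $\mathcal{U}(\mathcal{S}_w)$ selects a uniformly random $c$-subset of the $w$ most recent reviews, the number $n$ of positive ratings among the $c$ chosen reviews follows a hypergeometric distribution: $\prob[n \text{ positive selected}] = \binom{k}{n}\binom{w-k}{c-n}/\binom{w}{c}$, supported on $\max(0,c+k-w)\le n\le\min(c,k)$ (the limits coming from $0\le n\le c$ together with $0\le k-n\le w-c$). Recalling the shorthand $q(n)=\prob_{\Theta\sim\mathcal{F}}[\Theta+h(n)\ge p]$, this yields
\[
q_{\bm{z}}^{\mathcal{U}(\mathcal{S}_w)} \;=\; \sum_{n=\max(0,c+k-w)}^{\min(c,k)} \frac{\binom{k}{n}\binom{w-k}{c-n}}{\binom{w}{c}}\, q(n),
\]
which depends on $\bm{z}$ only through $k$; I denote it $q^{\mathcal{U}}(k)$.

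Next I would convert this into the claimed form via a double-counting identity. Counting the configurations consisting of a ``positive'' set of size $k$ and a ``selected'' set of size $c$ among $w$ positions that overlap in exactly $n$ elements, once by fixing the positive set first and once by fixing the selected set first, gives $\binom{w}{k}\binom{k}{n}\binom{w-k}{c-n} = \binom{w}{c}\binom{c}{n}\binom{w-c}{k-n}$. Substituting this into the expression above cancels the $\binom{w}{c}$ factor and produces $q^{\mathcal{U}}(k) = \frac{1}{\binom{w}{k}}\sum_{n}\binom{c}{n}\binom{w-c}{k-n}q(n)$.

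Finally I would plug into the definition $\iota^{\mathcal{U}(\mathcal{S}_w)}_k = \frac{1}{\binom{w}{k}}\sum_{\bm{z}:\,N_{\bm{z}}=k} 1/q_{\bm{z}}^{\mathcal{U}(\mathcal{S}_w)}$. Because all $\binom{w}{k}$ states with $N_{\bm{z}}=k$ share the common value $q^{\mathcal{U}}(k)$, the normalized sum reduces to $1/q^{\mathcal{U}}(k)$, which is exactly the stated formula $\binom{w}{k}\big/\big(\sum_{n=\max(0,c+k-w)}^{\min(c,k)} q(n)\binom{c}{n}\binom{w-c}{k-n}\big)$. The only genuine step is the double-counting identity together with care over the summation limits; the position-independence of $q^{\mathcal{U}}(k)$ and the collapse of the average are then immediate bookkeeping.
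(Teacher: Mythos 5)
Your proof is correct and follows essentially the same route as the paper's: compute the purchase probability at a state with $k$ positives as a hypergeometric average of the $q(n)$, apply the identity $\binom{w}{k}\binom{k}{n}\binom{w-k}{c-n}=\binom{w}{c}\binom{c}{n}\binom{w-c}{k-n}$ to rewrite it, and use the fact that this quantity depends on $\bm{z}$ only through $k$ to collapse the average defining $\iota^{\mathcal{U}(\mathcal{S}_w)}_k$. The only cosmetic difference is that you verify the binomial identity by double counting, whereas the paper verifies it by direct factorial manipulation.
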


\begin{proof}[Proof of \cref{lemma:inverse_purchase_rate_k_uniform}.]
Recall that $\mathcal{U}(\mathcal{S}_w)$ places a probability of $\frac{1}{\binom{w}{c}}$ on every subset of $\mathcal{S}_w$. Thus, 
by counting the number of occurrences of each term $q(n)$, the purchase probability at any state $\bm{z} \in \{0,1\}^w$ with $N_{\bm{z}} = k$ can be expressed as:
\begin{equation}\label{eq:purchase_prob_z_uniform_indep_z}
q_{\bm{z}}^{\mathcal{U}(\mathcal{S}_w)} = \sum_{S \in \mathcal{S}_w}\frac{1}{\binom{w}{c}} q(\sum_{i \in S} z_i) = \sum_{n= \max(0,c+k-w)}^{\min(c,k)} q(n) \frac{\binom{k}{n}\binom{w-k}{c-n}}{\binom{w}{c}}= \sum_{n= \max(0,c+k-w)}^{\min(c,k)} q(n) \frac{\binom{c}{n}\binom{w-c}{k-n}}{\binom{w}{k}}. 
\end{equation}
The second equality uses that for any number of positive reviews $n \in [\max(0,c+k-w), \min(c,k)]$ there are $\binom{k}{n} \binom{w-k}{c-n}$ ways to choose $S \in \mathcal{S}_w$ with $\sum_{i \in S} z_i = n$: $\binom{k}{n}$ ways to choose the $n$ indices $i \in S$ from the $k$ indices $i$ where $z_i = 1$, and $\binom{w-k}{c-n}$ ways to choose the remaining $c-n$ indices $i \in S$ from the $w-k$ indices where $z_i = 0$. The third equality uses the binomial coefficient identity 
$$\frac{\binom{k}{n}\binom{w-k}{c-n}}{\binom{w}{c}} = \frac{\frac{k!}{n!(k-n)!} \cdot \frac{(w-k)!}{(c-n)!(w-k-c+n)!}}{\frac{w!}{c!(w-c)!}} = \frac{\frac{c!}{n!(c-n)!} \cdot \frac{(w-c)!}{(k-n)!(w-k-c+n)!}}{\frac{w!}{k!(w-k)!}} = \frac{\binom{c}{n}\binom{w-c}{k-n}}{\binom{w}{k}}.$$
Given that the right-hand side of \cref{eq:purchase_prob_z_uniform_indep_z} does not depend on the review ratings $\bm{z}$,
$$\iota^{\mathcal{U}(\mathcal{S}_w)}_k =  \frac{1}{\binom{w}{k}}\sum_{\bm{z} \in \{0,1\}^w: N_{\bm{z}} = k} \frac{1}{q_{\bm{z}}^{\mathcal{U}(\mathcal{S}_w)}} = \frac{1}{\sum_{n= \max(0,c+k-w)}^{\min(c,k)} q(n) \frac{\binom{c}{n}\binom{w-c}{k-n}}{\binom{w}{k}}} =\frac{\binom{w}{k}}{\sum_{n= \max(0,c+k-w)}^{\min(c,k)}q(n) \binom{c}{n} \binom{w-c}{k-n} }$$
which completes the proof.
\end{proof}

\begin{proof}[Proof of \cref{lemma: purchase_rate_at_k_inequality}.]
   By Jensen's inequality applied to the convex function $x \to \frac{1}{x}$,  \begin{equation}\label{ineq:jensen_inverse_purchase_rate_k}
    \iota^\psi_k =  \frac{1}{\binom{w}{k}} \sum_{\bm{z} \in \{0,1\}^w: N_{\bm{z}} = k} \frac{1}{q_{\bm{z}}^{\psi}} \geq \frac{\binom{w}{k}}{\sum_{\bm{z} \in \{0,1\}^w: N_{\bm{z}} = k} q_{\bm{z}}^{\psi}}.
   \end{equation}
   To show the lemma, it suffices to prove that for any probability distribution $\psi  \in \Delta(\mathcal{S}_w)$, the lower bound term $\frac{\binom{w}{k}}{\sum_{\bm{z} \in \{0,1\}^w: N_{\bm{z}} = k} q_{\bm{z}}^{\psi}}$ on the left-hand side is independent of $\psi$ and equal to $\iota^{\mathcal{U}(\mathcal{S}_w)}_k$. We start by analyzing the sum in the denominator $\sum_{\bm{z} \in \{0,1\}^w: N_{\bm{z}} = k} q_{\bm{z}}^{\psi}$. 
   Letting $q(n) \coloneqq \prob_{\Theta \sim \mathcal{F}}[\Theta + h(n) \geq p]$ be the purchase probability given $n \in \{0,1 \ldots, c\}$ positive reviews, recalling that $ q_{\bm{z}}^{\psi} = \sum_{S \in \mathcal{S}_w} \psi(S) q(\sum_{i \in S} z_i)$, and rearranging, we express the sum of interest as:
   \begin{equation*}
\sum_{\bm{z} \in \{0,1\}^w: N_{\bm{z}} = k} q_{\bm{z}}^{\psi} = \sum_{\bm{z} \in \{0,1\}^w: N_{\bm{z}} = k} \sum_{S \in \mathcal{S}_w} \psi(S) q(\sum_{i \in S} z_i) = \sum_{S \in \mathcal{S}_w} \psi(S) \sum_{\bm{z} \in \{0,1\}^w: N_{\bm{z}} = k}  q(\sum_{i \in S} z_i). 
   \end{equation*} 
By counting the number of occurrences of each term $q(n)$, the inner sum can be expressed as:
   \begin{equation}\label{eq:sum_of_q_independent_of_S}
\sum_{\bm{z} \in \{0,1\}^w: N_{\bm{z}} = k}  q(\sum_{i \in S} z_i)  = \sum_{n= \max(0,c+k-w)}^{\min(c,k)}q(n) \binom{c}{n} \binom{w-c}{k-n}
   \end{equation}
   since for any number of positive reviews $n \in [\max(0,c+k-w), \min(c,k)]$ there are $\binom{c}{n} \binom{w-c}{k-n}$ ways to choose sequences $\bm{z} \in \{0,1\}^w$ with $N_{\bm{z}} = k$ such that $\sum_{i \in S} z_i = n$: there are $\binom{c}{n}$ ways to choose the subsequence $(z_i)_{i \in S} \in \{0,1\}^c$ such that $\sum_{i \in S} z_i = n$ and $\binom{w-c}{k-n}$ ways to choose the complementary subsequence $(z_i)_{i \not \in S} \in \{0,1\}^{w-c}$ such that $\sum_{i \not \in S} z_i = k-n$. 
   
Observing that the right-hand side of \eqref{eq:sum_of_q_independent_of_S} is independent of $S$, using the fact that $\sum_{S \in \mathcal{S}_w} \psi(S) = 1$ (as $\psi$ is a probability distribution) as well as inequality \cref{ineq:jensen_inverse_purchase_rate_k} and \cref{lemma:inverse_purchase_rate_k_uniform}, we obtain:
\begin{equation*}
\iota^{\psi}_k \geq \frac{\binom{w}{k}}{\sum_{\bm{z} \in \{0,1\}^w: N_{\bm{z}} = k} q_{\bm{z}}^{\psi}} = \frac{\binom{w}{k}}{\sum_{n= \max(0,c+k-w)}^{\min(c,k)}q(n) \binom{c}{n} \binom{w-c}{k-n} } = \iota^{\mathcal{U}(\mathcal{S}_w)}_k. 
\end{equation*}

This completes the proof of the inequality statement of the lemma.
With respect to the equality statement, for any $k \in \{0,\ldots, c\}$, the equality conditions of Jensen's inequality imply that  \cref{ineq:jensen_inverse_purchase_rate_k} holds with equality if and only if $q_{\bm{z}}^{\psi} = q_{\bm{z}'}^{\psi}$ for any $\bm{z}, \bm{z}' \in \{0,1\}^w$ with $N_{\bm{z}}= N_{\bm{z}'} = k$.
\end{proof} 

\subsubsection{Revenue of window-random policies is increasing in window size (Lemma~\ref{lemma:rev_w_strictly_increasing})}\label{subsubsec_proof_lemma_rev_w_strictly_increasing}
We first show the suboptimality of the distribution $\psi_{w} \in \Delta(\mathcal{S}_{w+1})$ which ignores the $(w+1)$-th most recent review and places equal probability on every $c$-sized subset of $\{1, \ldots, w\}$.

\begin{lemma}\label{lemma: psi_2_w_plus_1_suboptimal}
    For any price $p$ satisfying Assumption \ref{assumption:non_abs_non_degen}, the revenue of $\psi_{w}$ is strictly suboptimal, i.e.,  $\textsc{Rev}(\sigma^{\textsc{random}(w+1, \psi_{w})},p)< \max_{\psi \in \Delta (\mathcal{S}_{w+1})} \textsc{Rev}(\sigma^{\textsc{random}(w+1,\psi)},p)$.
\end{lemma}

\begin{proof}
By \cref{lemma:revenue_sliding_window_W}, the revenue of $\psi_{w}$ can be expressed as:
    $$\textsc{Rev}_1(\sigma^{\textsc{random}(w+1,\psi_{w})}, p) = p \cdot \kappa^{\psi_{w}} = \frac{p}{ \sum_{k=0}^w \iota^{\psi_{w}}_k \binom{w}{k} \mu^k (1-\mu)^{w-k}}.$$
    By \cref{lemma: purchase_rate_at_k_inequality}, it is thus sufficient to show that there is some number of positive ratings $k \in \{0, \ldots, c\}$ such that $\iota^{\psi_{w}}_k > \iota^{\mathcal{U}(\mathcal{S}_w)}_ k$. For the sake of contradiction assume this is not the case, i.e., $\iota^{\psi_{w}}_k = \iota^{\mathcal{U}(\mathcal{S}_w)}_ k$ for all $k \in \{0,\ldots, c\}$. The equality condition of \cref{lemma: purchase_rate_at_k_inequality} applied for each $k \in \{0,\ldots, c\}$ implies \begin{equation}\label{eq:equality_condtion_implied_by_lemma_inverse_purchase_k}
        q_{\bm{z}}^{\psi_{w}} =q_{\bm{z}'}^{\psi_{w}} \text{ for any } \bm{z}, \bm{z}' \in \{0,1\}^w \text{ with } N_{\bm{z}} = N_{\bm{z}'}.
    \end{equation}
    Letting $q(n) \coloneqq \prob_{\Theta \sim \mathcal{F}}[\Theta + h(n) \geq p]$ denote the purchase probability given $n$ positive reviews, we show below (by induction on $n$) that \eqref{eq:equality_condtion_implied_by_lemma_inverse_purchase_k} implies $q(n) = q(0)$ for any number of positive review ratings $n \in \{0,\ldots, c\}$, which contradicts the fact that the price $p$ is non-degenerate. 
   
   For the base case of the induction ($n=1$), consider two settings with one positive review: in the first one the positive review is the $(w+1)$-th most recent review, while in the second one the positive review is the $w$-th most recent review. Formally, we apply \cref{eq:equality_condtion_implied_by_lemma_inverse_purchase_k} for $\bm{z} = (0, \ldots,0, 0,1)$  and $\bm{z}' = (0, \ldots, 0,1, 0)$. As $\psi_{w}$ places zero mass on any subset containing the last review implies that, the purchase probability at $\bm{z}$ is $q_{\bm{z}}^{\psi_{w}} = q(0)$. The purchase probability at $\bm{z}'$ equals $q(1)$ when the $w$-th review is chosen and $q(0)$ otherwise implying $q_{\bm{z}'}^{\psi_{w}} =q(1) \prob_{S \sim \psi_w}[w \in S] + q(0)\prob_{S \sim \psi_w}[w \not \in S)]$. Observing that $\prob_{S \sim \psi_w}[w \in S] > 0$ and solving the equality $q_{\bm{z}}^{\psi_{w}} = q_{\bm{z}'}^{\psi_{w}}$ implies $q(1) = q(0)$.
        
    For the induction step ($n-1 \rightarrow n$), suppose that $q(n') = q(0)$ for all $n' < n$. We apply \cref{eq:equality_condtion_implied_by_lemma_inverse_purchase_k} for $\bm{z} = (0, \ldots, 0, 1, \ldots, 1)$ ($w-n$ zeros followed by $n$ ones) and $\bm{z}' = (0, \ldots, 0, 1, \ldots, 1, 0)$ ($w-n-1$ zeros, followed by $n$ ones, followed by one). If the state of the reviews is $\bm{z}$, as $\psi_w$ never selects the $(w+1)$-th most recent review, any $c$ reviews selected by $\psi_w$ contain at most $n - 1$ positive review ratings and the induction hypothesis implies that the purchase probability is thus equal to $q(0)$ regardless of the choice of the $c$ reviews. Thus, $q_{\bm{z}}^{\psi_{w}} = q(0)$. Suppose the state of the review is $\bm{z}'$.  If the selected set of the $c$ selected reviews contains all the reviews at indices $I_{n} = \{w-n,w-n+1 \ldots, w\}$, the number of positive reviews is $n$ and the purchase probability $q(n)$. Otherwise, the $c$ selected reviews contain at most $n -1$ positive review ratings and by the induction hypothesis the purchase probability is $q(0)$. Thus,    $q_{\bm{z}}^{\psi_{w}} = q(n) \prob_{S \sim \psi_w}[I_{n} \subseteq S] +q(0) \prob_{S \sim \psi_w}[I_{n} \not \subseteq S]$. Observing that $\prob_{S \sim \psi_w}[I_{n} \subseteq S] > 0$ and solving the equality $q_{\bm{z}}^{\psi_{w}} = q_{\bm{z}'}^{\psi_{w}}$ implies $q(n) = q(0)$, which finishes the induction step and the proof. 
\end{proof}

\begin{proof}[Proof of \cref{lemma:rev_w_strictly_increasing}.]
Note that $\sigma^{\textsc{random}(w+1, \psi_{w})} = 
\sigma^{\textsc{random}(w)}$. By \cref{lemma: entropy_max_revenue} and \cref{lemma: psi_2_w_plus_1_suboptimal},   $\textsc{Rev}(\sigma^{\textsc{random}(w)},p) $ is strictly increasing in $w$ as for any window $w \geq c$,
\begin{align*}
    \textsc{Rev}(\sigma^{\textsc{random}(w)},p) &= \textsc{Rev}(\sigma^{\textsc{random}(w+1, \psi_{w})},p)\\
    &< \max_{\psi \in \Delta (\mathcal{S}_{w+1})} \textsc{Rev}(\sigma^{\textsc{random}(w+1,\psi)},p) = \textsc{Rev}(\sigma^{\textsc{random}(w+1)},p).
\end{align*}
\end{proof}

\subsubsection{Window-random revenue converges to revenue of Random (Lemma~\ref{lemma:limit_rev_w_to_infty})}\label{subsubsec_lemma_limit_rev_w_to_infty}
Given that  $\textsc{Rev}(\sigma^{\textsc{random}(w)}, p)$ can be expressed as a function of a $w$ i.i.d. $\bern(\mu)$ trials (\cref{lemma:revenue_sliding_window_W}), we first show that $K \sim \binomial(w, \mu)$ with $w$ trials and success probability
 $\mu$ is well-concentrated around its mean. Formally, the event $\mathcal{E}^{\textsc{Conc}}(w)= \{ K \in [ \mu(w - w^{\frac{2}{3}}),\mu(w + w^{\frac{2}{3}})]\}$ occurs with high probability. 

\begin{lemma}\label{lemma:event_K_concentrates_around_mu_w}
    For any $w \geq c$, $\prob[\mathcal{E}^{\textsc{Conc}}(w)] \geq 1-2 \exp \Big(\frac{w^{-\frac{1}{3}}\mu}{3} \Big)$ and thus $\lim_{w \to \infty}\prob[\mathcal{E}^{\textsc{Conc}}(w)] = 1$. 
\end{lemma}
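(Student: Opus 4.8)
The plan is to recognize $K \sim \binomial(w,\mu)$ as a sum of $w$ independent $\bern(\mu)$ random variables with mean $\expect[K] = w\mu$, and then apply the standard multiplicative Chernoff bound to both tails with the specific deviation parameter $\delta = w^{-1/3}$. This is exactly the one-group analogue of the concentration argument already carried out in \cref{lemma_bernoulli_concentration_nc}, so the structure is identical; here we simply have a single block of $w$ trials rather than $n_w$ blocks.

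First I would recall the two-sided multiplicative Chernoff bound: for any $\delta \in (0,1)$, $\prob[K \geq (1+\delta)w\mu] \leq \exp(-\delta^2 w\mu/3)$ and $\prob[K \leq (1-\delta)w\mu] \leq \exp(-\delta^2 w\mu/2)$. The key observation that makes the bound clean is that choosing $\delta = w^{-1/3}$ aligns the tail boundaries exactly with the endpoints of the target interval, since $(1\pm\delta)w\mu = \mu(w \pm w^{2/3})$. Thus the event $\neg\,\mathcal{E}^{\textsc{Conc}}(w)$ is precisely the union of the two tail events $\{K \geq \mu(w+w^{2/3})\}$ and $\{K \leq \mu(w-w^{2/3})\}$.

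Next I would substitute to compute the exponent: $\delta^2 w\mu = w^{-2/3}\cdot w\mu = w^{1/3}\mu$. Using $1/3 \leq 1/2$ to weaken the lower-tail bound to the common form, both tail probabilities are at most $\exp(-w^{1/3}\mu/3)$, and a union bound over the two tails yields $\prob[\neg\,\mathcal{E}^{\textsc{Conc}}(w)] \leq 2\exp(-w^{1/3}\mu/3)$, equivalently $\prob[\mathcal{E}^{\textsc{Conc}}(w)] \geq 1 - 2\exp(-w^{1/3}\mu/3)$, which is the stated bound (the exponent in the statement should read $-w^{1/3}\mu/3$).

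Finally, for the limiting claim, since $\mu > 0$ is fixed and $w^{1/3} \to \infty$ as $w \to \infty$, the exponential term tends to $0$, so $\prob[\mathcal{E}^{\textsc{Conc}}(w)] \to 1$. There is no genuine obstacle here; the argument is a routine concentration bound, and the only point requiring care is verifying the algebra that relates $\delta = w^{-1/3}$ both to the interval endpoints $\mu(w\pm w^{2/3})$ and to the resulting exponent $w^{1/3}\mu$.
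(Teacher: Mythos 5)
Your proof is correct and takes essentially the same approach as the paper, whose entire argument is a direct application of the two-sided Chernoff bound to $K \sim \binomial(w,\mu)$ with deviation $\mu w^{2/3}$ followed by taking the limit; you simply make explicit the choice $\delta = w^{-1/3}$, the exponent computation $\delta^2 w\mu = w^{1/3}\mu$, and the union bound over the two tails. You are also right that the exponent in the lemma statement (which the paper's proof repeats verbatim) contains a sign/exponent typo and should read $-\frac{w^{1/3}\mu}{3}$, consistent with the analogous bound in \cref{lemma_bernoulli_concentration_nc}.
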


\begin{proof}
By Chernoff bound, $\prob_{K \sim \binomial(w, \mu)}\Big[|K-\mu w| \leq \mu w^{\frac{2}{3}} \Big]\geq 1-2 \exp \Big(\frac{w^{-\frac{1}{3}}\mu}{3} \Big)$ . Taking the limit as $w\rightarrow \infty$ yields the result.
\end{proof} 

Recall that $\iota^{\mathcal{U}(\mathcal{S}_w)}_k = \frac{1}{\binom{w}{c}} \sum_{\bm{z} \in \{0,1\}^w: N_{\bm{z}} = k} \frac{1}{q_{\bm{z}}^{\mathcal{U}(\mathcal{S}_w)}}$ is the inverse purchase rate conditioned on $k$ positive review ratings and $q(n) \coloneqq \prob_{\Theta \sim \mathcal{F}}[\Theta + h(n) \geq p]$ denotes the purchase probability given $n \in \{0,1 \ldots, c\}$ positive reviews. The following lemma connects the inverse purchase rate of $\sigma^{\textsc{random}(w)}$ (assuming that the above concentration holds) with the revenue of $\sigma^{\textsc{random}}$. 

\begin{lemma}\label{lemma:expectation_iota_concentration}
    There exists a threshold $M > 0$ such that for any window $w > M$, it holds that
  $$\rev(\sigma^{\textsc{random}}, p) = p \cdot \lim_{w \to \infty} \Bigg(\frac{1}{\expect_{K \sim \binomial(w, \mu)}\Big[\iota_{K}^{\mathcal{U}(\mathcal{S}_w)} \Big| \mathcal{E}^{\textsc{Conc}}(w) \Big]} \Bigg).$$
\end{lemma}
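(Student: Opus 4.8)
The plan is to reduce the conditional expectation on the right-hand side to an expectation of $q(\cdot)$ against a hypergeometric law, and then exploit the convergence of the hypergeometric distribution to the binomial as the window grows. Throughout, write $q(n) \coloneqq \prob_{\Theta \sim \mathcal{F}}[\Theta + h(n) \geq p]$; since $p$ is strongly non-absorbing, every $q(n) > 0$, so all quantities below are well-defined once $w$ exceeds a threshold $M$ large enough that the concentration window $[\mu(w-w^{2/3}), \mu(w+w^{2/3})]$ lies inside $\{0, \ldots, w\}$ (this fixes $M$; note the stated identity's right-hand side is a limit, so it does not otherwise depend on the chosen $w > M$).

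First I would rewrite the inverse purchase rate. Starting from the closed form in \cref{lemma:inverse_purchase_rate_k_uniform} and taking reciprocals, the binomial-coefficient identity $\frac{\binom{c}{n}\binom{w-c}{k-n}}{\binom{w}{k}} = \frac{\binom{k}{n}\binom{w-k}{c-n}}{\binom{w}{c}}$ (already used in that proof) gives
$$\frac{1}{\iota_k^{\mathcal{U}(\mathcal{S}_w)}} = \sum_{n} q(n) \frac{\binom{k}{n}\binom{w-k}{c-n}}{\binom{w}{c}} = \expect_{N \sim \mathrm{HG}(w,k,c)}[q(N)],$$
where $\mathrm{HG}(w,k,c)$ denotes the number of positives among $c$ draws without replacement from a pool of $w$ reviews containing $k$ positives.

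Next I would establish the limit. For any sequence $k=k_w$ with $k_w/w \to \mu$, the standard convergence of the hypergeometric pmf to the binomial pmf (sampling without replacement from a large pool approximates i.i.d.\ sampling) yields, for each fixed $n \in \{0,\ldots,c\}$, that $\frac{\binom{k_w}{n}\binom{w-k_w}{c-n}}{\binom{w}{c}} \to \binom{c}{n}\mu^n(1-\mu)^{c-n}$. Because $c$ is a fixed constant and $q$ is bounded, the finite sum converges, so $\frac{1}{\iota_{k_w}^{\mathcal{U}(\mathcal{S}_w)}} \to \expect_{N \sim \binomial(c,\mu)}[q(N)]$. Moreover the convergence is uniform over all $k$ with $|k/w - \mu| \leq \mu w^{-1/3}$, since each hypergeometric weight is a smooth function of the ratio $k/w$ whose deviation from the binomial weight is $O(1/w)$ uniformly on this shrinking band. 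Conditioning on $\mathcal{E}^{\textsc{Conc}}(w)$ forces $|K/w - \mu| \leq \mu w^{-1/3} \to 0$, so $\iota_K^{\mathcal{U}(\mathcal{S}_w)} \to 1/\expect_{N \sim \binomial(c,\mu)}[q(N)]$ uniformly over the support of the conditional law of $K$; taking the conditional expectation gives $\lim_{w\to\infty}\expect_{K}[\iota_K^{\mathcal{U}(\mathcal{S}_w)} \mid \mathcal{E}^{\textsc{Conc}}(w)] = 1/\expect_{N \sim \binomial(c,\mu)}[q(N)]$. Since \cref{theorem:random_C_revenue} gives $\rev_1(\sigma^{\textsc{random}},p) = p\,\expect_{N \sim \binomial(c,\mu)}[q(N)]$, taking reciprocals and multiplying by $p$ yields the claimed identity.

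The main obstacle is the passage from pointwise to \emph{uniform} convergence across the concentration window: the conditional expectation averages $\iota_K$ over an entire band of $k$-values near $\mu w$, not a single ratio, so one must control the hypergeometric-to-binomial convergence uniformly in $k/w$ over $[\mu - \mu w^{-1/3}, \mu + \mu w^{-1/3}]$. Expressing each hypergeometric weight as a ratio of falling factorials and bounding its deviation from the corresponding binomial weight by $O(1/w)$ uniformly (using that $c$ is fixed and that the $q(n)$ are bounded and bounded away from zero) handles this cleanly.
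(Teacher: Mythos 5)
Your proof is correct and follows essentially the same route as the paper's: both start from the closed form of $\iota_k^{\mathcal{U}(\mathcal{S}_w)}$ in \cref{lemma:inverse_purchase_rate_k_uniform}, show that the weights $\binom{k}{n}\binom{w-k}{c-n}/\binom{w}{c}$ converge to $\binom{c}{n}\mu^n(1-\mu)^{c-n}$ uniformly over the concentration band via falling-factorial estimates (so that reciprocals and the conditional expectation pass to the limit, using $q(n)\geq q(0)>0$), and conclude with \cref{theorem:random_C_revenue}. Your packaging of $1/\iota_k^{\mathcal{U}(\mathcal{S}_w)}$ as the hypergeometric expectation $\expect_{N \sim \mathrm{HG}(w,k,c)}[q(N)]$ is only a cosmetic reframing of the paper's explicit sandwich bounds on $\binom{w-c}{k-n}/\binom{w}{k}$, which rest on exactly the same product-of-ratios computation.
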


\begin{proof}
Let $M$ be large enough so that for $w > M$,
 $ \mu(w -w^{-\frac{2}{3}}) \geq c$ and $w \geq c  + \mu(w +w^{-\frac{2}{3}}) $, which implies that $\min(c, k) = c$ and $\max(0, c+k-w) = 0$ for $k \in [ \mu(w - w^{\frac{2}{3}}),\mu(w + w^{\frac{2}{3}})]$. By \cref{lemma:inverse_purchase_rate_k_uniform} for $w > M$ and  $k \in [ \mu(w - w^{\frac{2}{3}}),\mu(w + w^{\frac{2}{3}})]$, it holds that
\begin{equation*}\label{equation:iota_without_max_min_in_sum}
\iota^{\mathcal{U}(\mathcal{S}_w)}_k  =\frac{\binom{w}{k}}{\sum_{n= \max(0,c+k-w)}^{\min(c,k)}q(n) \binom{c}{n} \binom{w-c}{k-n} }  =\frac{1}{\sum_{n= \max(0,c+k-w)}^{\min(c,k)}q(n) \frac{\binom{c}{n} \binom{w-c}{k-n}}{\binom{w}{k}} } = \frac{1}{\sum_{n= 0}^{c}q(n) \frac{\binom{c}{n} \binom{w-c}{k-n}}{\binom{w}{k}} }. 
\end{equation*}
One of the terms in the denominator is the binomial coefficient ratio $\frac{\binom{w-c}{k-n}}{\binom{w}{k}}$. Letting $\underline{\mu}_{w} = \mu - w^{-\frac{1}{3}}$ and $\overline{\mu}_{w} = \mu + w^{-\frac{1}{3}}$, we show below
\begin{equation}\label{eq:binom_coeff_bounds}
(\frac{w-c+1}{w})^c  (\overline{\mu}_{w})^c (1-\underline{\mu}_{w})^{c-n} \geq \frac{\binom{w-c}{k-n}}{\binom{w}{k}} \geq (\underline{\mu}_{w} -\frac{n-1}{w})^n(1-\overline{\mu}_{w} -\frac{c-n-1}{w})^{c-n}
\end{equation}
which implies that $\expect_{K \sim \binomial(w, \mu)}\Big[\iota_{K}^{\mathcal{U}(\mathcal{S}_w)}|\mathcal{E}^{\textsc{Conc}}(w)\Big]$ is lower and upper bounded respectively by 
$$\frac{(\frac{w}{w-c+1})^c}{\sum_{n=0}^{c} q(n) \binom{c}{n} (\overline{\mu}_{w})^c(1-\underline{\mu}_{w})^{c-n}} \text{ and }\frac{1}{\sum_{n=0}^c q(n) \binom{c}{n} (\underline{\mu}_{w} -\frac{n-1}{w})^n(1-\overline{\mu}_{w} -\frac{c-n-1}{w})^{c-n}}.$$ Given that  $\underline{\mu}_{w}, \overline{\mu}_{w} \to_{w \to \infty} \mu$ (as $w^{-\frac{1}{3}} \to_{w \to \infty} 0$), $\frac{w-c+1}{w} \to_{w \to \infty} 1$ and $\frac{n-1}{w}, \frac{c-n-1}{w} \to_{w \to \infty} 0$, these lower and upper bounds converge to $\frac{1}{\expect_{N \sim \binomial(c, \mu)}[q(N)]}$ as $w \to \infty$ which is equal to $ \frac{p}{\rev(\random,p)}$  (by \cref{theorem:random_C_revenue}) and yields the result. We conclude the proof by showing Eq. \eqref{eq:binom_coeff_bounds}. We expand the binomial coefficient ratio in the denominator of Eq. \eqref{eq:binom_coeff_bounds} as:
$$ \frac{\binom{w-c}{k-n}}{\binom{w}{k}} = \frac{\frac{(w-c)!}{(k-n)!(w-c-k+n)!}}{\frac{w!}{k!(w-k)!}} = \frac{\prod_{i=1}^n(k-i+1) \prod_{j=1}^{c-n}(w-k-j+1)}{\prod_{l=1}^c(w-l+1)}.
$$
Observing that $k \geq k-i+1 \geq k-n+1$ for $k \in [1,n]$, $w-k \geq w-k-j+1 \geq w-k-c+n+1$ for $j \in [1, c-n]$, and $w \geq w-l+1 \geq w-c+1$ for $l \in [1,c]$, we can bound the binomial coefficient
\begin{equation*}
\underbrace{\Big(\frac{k}{w-c+1}\Big)^{n} \Big(\frac{w-k}{w-c+1} \Big)^{c-n}}_{ = (\frac{w-c+1}{w})^c (\frac{k}{w})^n (1-\frac{k}{w})^{c-n}} \geq \frac{\binom{w-c}{k-n}}{\binom{w}{k}} \geq \underbrace{\Big(\frac{k-n+1}{w} \Big)^n \Big(\frac{w-k-c+n+1}{w}\Big)^{c-n}}_{=\Big(\frac{k}{w}-\frac{n-1}{w} \Big)^n \Big(1-\frac{k}{w}-\frac{c-n-1}{w}\Big)^{c-n} } .
\end{equation*}
Given that $\frac{k}{w} \in [\underline{\mu}_{w},\overline{\mu}_{w}]$ (as $k \in [ \mu(w - w^{\frac{2}{3}}),\mu(w + w^{\frac{2}{3}})]$), the above proves \eqref{eq:binom_coeff_bounds}.
\end{proof} 

\begin{proof}[Proof of \cref{lemma:limit_rev_w_to_infty}.]
\cref{lemma:revenue_sliding_window_W} connects the revenue of $\sigma^{\textsc{random}(w)}$ to $\expect_{K \sim \binomial(w, \mu)}\Big[\iota^{\mathcal{U}(\mathcal{S}_w)}_K \Big]$. By the law ot total expectation, the latter term can be expanded to:
\begin{align*}
\underbrace{\expect_{K \sim \binomial(w, \mu)}\Big[\iota^{\mathcal{U}(\mathcal{S}_w)}_K|\mathcal{E}^{\textsc{Conc}}(w) \Big]}_{ \to_{w \to \infty} \frac{p}{\rev(\random,p)} \text{(\cref{lemma:expectation_iota_concentration})} }\underbrace{\prob[\mathcal{E}^{\textsc{Conc}}(w)]}_{ \to_{w \to \infty} 1
\text{ (\cref{lemma:event_K_concentrates_around_mu_w})}} + \expect_{K \sim \binomial(w, \mu)}\Big[\iota^{\mathcal{U}(\mathcal{S}_w)}_K|\neg \mathcal{E}^{\textsc{Conc}}(w) \Big] \underbrace{\prob[\neg \mathcal{E}^{\textsc{Conc}}(w)]}_{ \to_{w \to \infty} 0
\text{ (\cref{lemma:event_K_concentrates_around_mu_w})}} 
\end{align*}
We now show that $\iota^{\mathcal{U}(\mathcal{S}_w)}_k \leq \frac{1}{q(0)}$, which combined with \cref{lemma:event_K_concentrates_around_mu_w} implies that the second term goes to 0 as $w \to \infty$. As $h(n)$ is increasing, $q(n) \geq q(0)$ for all number of positive review ratings $n$ and thus $q_{\bm{z}}^{\mathcal{U}(\mathcal{S}_w)} = \frac{1}{\binom{w}{c}}\sum_{S \in \mathcal{S}_w} q(\sum_{i \in S} z_i) 
\geq q(0)$ for all review rating states $\bm{z} \in \{0,1\}^w$. Hence, 
$$\iota^{\mathcal{U}(\mathcal{S}_w)}_k =  \frac{1}{\binom{w}{k}}\sum_{\bm{z} \in \{0,1\}^w: N_{\bm{z}} = k} \frac{1}{q_{\bm{z}}^{\mathcal{U}(\mathcal{S}_w)}} \leq \frac{1}{q(0)}.$$
The proof is concluded by invoking \cref{lemma:revenue_sliding_window_W} and bounding the limit as $w\rightarrow \infty$ for the expansion of $\expect_{K \sim \binomial(w, \mu)}\Big[\iota^{\mathcal{U}(\mathcal{S}_w)}_K \Big]$ (using Lemmas \ref{lemma:event_K_concentrates_around_mu_w} and \ref{lemma:expectation_iota_concentration} as well as the boundedness of $\iota^{\mathcal{U}(\mathcal{S}_w)}_k$):  $$\textsc{Rev}_1(\sigma^{\textsc{random}(w)}, p) = \frac{p}{\expect_{K \sim \binomial(w, \mu)}\Big[\iota^{\mathcal{U}(\mathcal{S}_w)}_K \Big]} \to_{w \to \infty} \frac{p}{\frac{p}{\rev(\random, p)}} = \rev(\random, p). $$
\end{proof}

\section{Numerics with exponential idiosyncratic valuations (Remark~\ref{rem:exponential})}\label{appendix:numerics_dynamic_pricing_exponential_idiosyncratic}
In the main body, we considered uniform customer-specific idiosyncratic distributions. We now  compare the values of $\chi(\Pi^{\textsc{static}})$ and $\chi(\Pi^{\textsc{dynamic}})$ on a class of instances where the customer-specific distribution is exponential. We consider instances where the customer-specific valuation is an exponential distribution with parameter $\lambda$, i.e., $\mathcal{F} = \textsc{Exp}(\lambda)$. All other parameters of the instance remain the same as in Section~\ref{subsec:numerics_dynamic_vs_static_conf}. 
The quantity $\frac{1}{\lambda}$ equals the standard deviation of $\textsc{Exp}(\lambda)$, and thus represents the customer-specific valuation variability. We vary $\frac{1}{\lambda}$ on the interval $(0,3)$. We also vary the magnitude of $a$, which represents the strength of the prior belief. We plot the values of $\chi(\Pi^{\textsc{static}})$ and $\chi(\Pi^{\textsc{dynamic}})$ in Figure~\ref{fig:chi_pi_dynamic_chi_pi_static_comparison_exponential}. We plot the values of $\rev(\newest, \Pi^{\textsc{static}})$ and $\rev(\newest, \Pi^{\textsc{dynamic}})$ in Figure~\ref{fig:revenue_newest_static_dynamic_exponential}.

\begin{figure}[!htbp]
\centering
\includegraphics[width=1\textwidth]{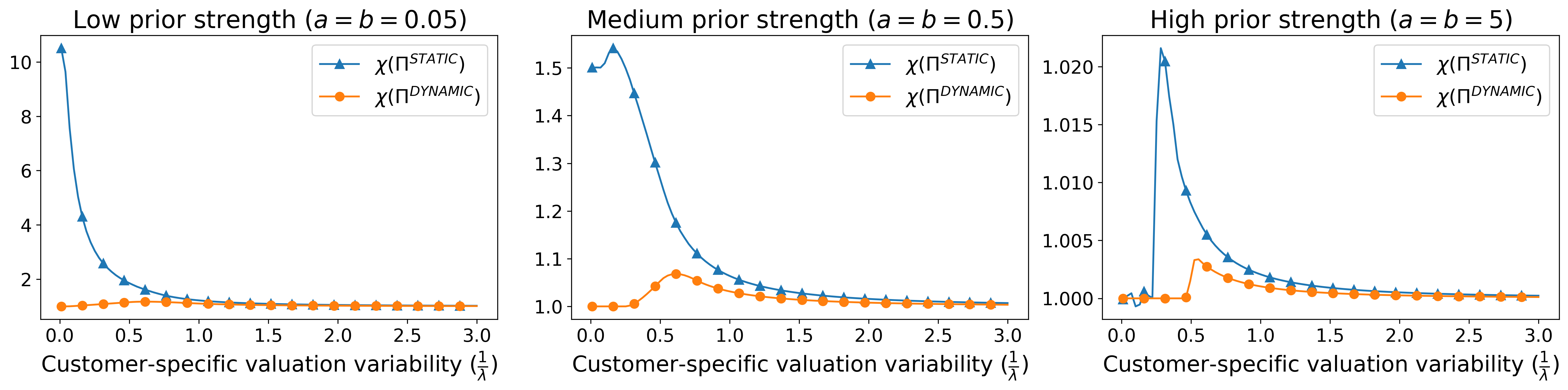}

\caption{Comparison of $\chi(\Pi^{\textsc{static}})$ and $\chi(\Pi^{\textsc{dynamic}})$.} 
\label{fig:chi_pi_dynamic_chi_pi_static_comparison_exponential}
\end{figure}

\begin{figure}[!htbp]
\centering
\includegraphics[width=1\textwidth]{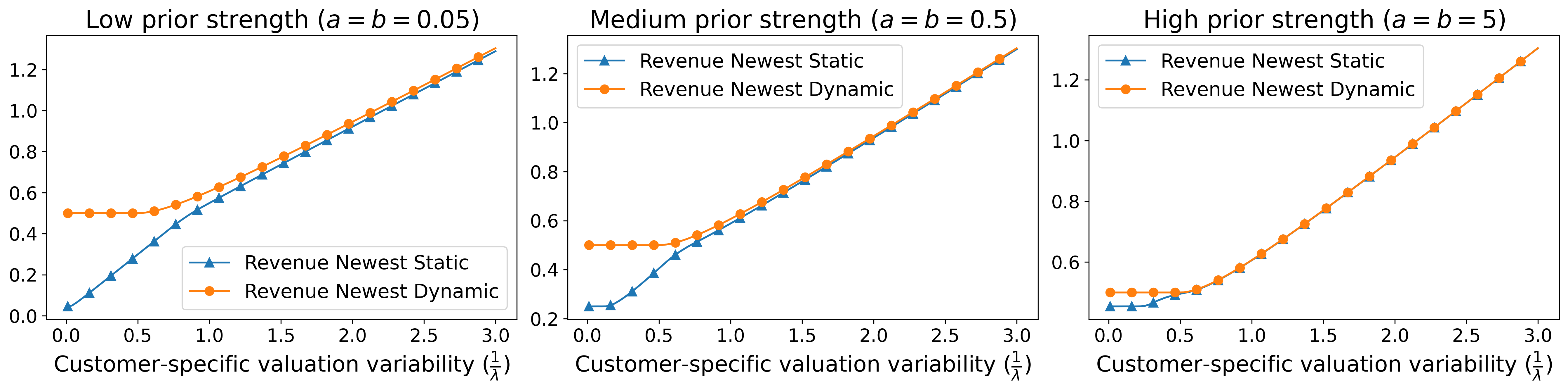}

\caption{Comparison of $\rev(\newest, \Pi^{\textsc{static}})$ and $\rev(\newest, \Pi^{\textsc{dynamic}})$.} 
\label{fig:revenue_newest_static_dynamic_exponential}
\end{figure}

We observe similar results to Section~\ref{subsec:numerics_dynamic_vs_static_conf}. In particular, the CoNF is small (less than $1.1$)  under dynamic pricing under all instances.
For static pricing, the CoNF can be large, especially when the prior strength is low and $\frac{1}{\lambda}$ is small. Importantly, we observe that $\chi(\Pi^{\textsc{dynamic}})$ is strictly smaller than $\chi(\Pi^{\textsc{static}})$ for all instances that were tested. In Figure~\ref{fig:revenue_newest_static_dynamic_exponential}, we see that the revenue of Newest under dynamic pricing can be significantly larger than the revenue of Newest under static pricing, especially when the prior strength is low and $\frac{1}{\lambda}$ is small. The difference between Newest's revenue under dynamic and static pricing decreases as $\frac{1}{\lambda}$ increases. This shows that our results from Section~\ref{subsec:numerics_dynamic_vs_static_conf} hold for exponential idiosyncratic valuations and are not tailored to uniform idiosyncratic valuations.

\section{Comparison to negative bias from quality variability (Section~\ref{sec:relatedwork})}\label{appendix:comparison_negative_bias_due_to_quality_variability}
\cite{decroix2021service} identifies a new explanation for how variability in service quality hurts a firm's revenue. To do so, they study a repeated interaction with a single customer who decides whether to purchase a service or not. The customer's purchase decision follows a logit choice model based on a belief formed by an exponential smoothing model (ESM) of the past service qualities experienced. They consider a benchmark in which the quality is not variable. They establish that when quality is variable, the customer's belief is negatively biased compared to the benchmark. Furthermore, they show that for any fixed-price policy, the revenue when quality is variable is smaller than the revenue under the benchmark. They establish that, under the optimal pricing policy, the revenue when quality is variable matches the revenue under the benchmark. The optimal pricing policy ensures that the purchase probability in each belief state is the same. There are two high-level similarities with our work. First, they also identify a negative bias in average beliefs driven by quality variability in experiences and the customer weighing recent experiences more highly. Second, they show that the optimal pricing policy ensures the same purchase probability in each state.

Our work fundamentally differs from theirs on both conceptual and technical levels. For the purposes of contrast, we think of the ESM in \cite{decroix2021service} and the customer's behavior under Newest in our work as analogous. The following aspects significantly differentiate our work from theirs.

\paragraph{Conceptual differences.} \cite{decroix2021service} studies repeated interactions with a \textit{single} customer who forms beliefs based on their previous experiences based on a fixed ESM model; the focus is on how quality variability impacts revenue under this setting.\footnote{They also provide an extension (Section 7.1 in \cite{decroix2021service}) where multiple customers simultaneously interact and show that such interactions mitigate the negative bias (as experiences aggregate across customers).} In contrast, we study a stream of \textit{different} customers who leave reviews; we ask how the review ordering policy impacts revenue. In our setting, the platform chooses a review ordering policy that shapes how customers behave. Given this review ordering, the customers use the $c$ reviews displayed by the platform to form their belief (Section~\ref{sec:model}). Thus, the platform's review ordering policy \textit{influences} the customers' beliefs.

Moreover, the benchmark considered in our work is conceptually different. The benchmark in \cite{decroix2021service} assumes that the \textit{quality is constant (not variable)}. Instead, our benchmark keeps all the model parameters the same as Newest and only changes the ordering policy to Random. Random is an implementable review-ordering policy and it is the revenue-maximizing review ordering policy under a class of ordering policies which do not take the review ratings into account (Appendix~\ref{appendix:why_random_right_benchmark}). As we illustrate in the technical differences, our benchmark is stronger than the one in \cite{decroix2021service}.

\paragraph{Technical differences. } 
On a technical level, our negative bias result is not an extension or corollary of the result in \cite{decroix2021service}, which relies heavily on the convexity property of the logit model. In particular, there exist models in which the revenue under Newest is arbitrarily larger than the benchmark in \cite{decroix2021service} as we demonstrate in Proposition~\ref{prop:newest_arbitrarily_better_than_knowledgeable_static_price} (Section~\ref{subsec:negative_bias_with_respect_to_no_variability_benchmark_does_not_always_hold_in_our_model}). In contrast, negative bias with respect to our random ordering benchmark does not require any assumptions on the purchase probability (other than monotonicity) and holds irrespective of the behavioral model. 
    
Furthermore, our optimality results for dynamic pricing are fundamentally new and hold against a stronger benchmark.
Our Random benchmark for the optimal dynamic pricing is always at least the benchmark in \cite{decroix2021service} and can be larger by a factor of $\frac{4}{3}$ (Propositions \ref{prop:random_benchmark_is_always_at_least_no_quality_variability_benchmark} and \ref{prop:random_benchmark_can_be_four_thirds_larger_than_no_quality_variability_benchmark} in Section~\ref{subsec:stronger_benchmark_under_dynamic_pricing}). We also establish that the revenue of Newest is at least half of the revenue under our benchmark (Theorem~\ref{theorem:dynamic_pricing_CoNF_bound}). This positive result is not implied by any of their results.

\paragraph{Other important differences.} There are several other differences between our work and \cite{decroix2021service} First, we extend our main result to a dynamically changing product quality (theoretically for a two-state Markov chain in Section~\ref{sec:dynamic_quality}, and empirically for increasing product quality in Section~\ref{subsec:numerics_nonstationarity}). Our Random benchmark can seamlessly apply to these settings while the fixed-quality benchmark of \cite{decroix2021service} is not suitable for non-stationary settings. 
Second, the customer's valuation in our model differentiates between the product's observable and unobservable characteristics. This allows for the potential existence of self-selection bias in our model (as customers with higher observable idiosyncratic value purchase and leave reviews). In Section~\ref{subsec:conf_with_self_selection_bias} we empirically demonstrate that our results continue to hold then. In contrast,  self-selection bias cannot arise in \cite{decroix2021service} as the customer's experiences are drawn from a single distribution and are thus unbiased i.i.d. samples.

\subsection{Brittleness of no-variability benchmark in our model.}\label{subsec:negative_bias_with_respect_to_no_variability_benchmark_does_not_always_hold_in_our_model}

Let $\textsc{Rev}^{K}(p) = p \prob[\Theta + \mu \geq p]$ be the revenue when there is no-variability in the reviews (reviews reveals the true product quality). $\textsc{Rev}^{K}(p)$ is the analgoue of the no-quality-variability benchmark in \cite{decroix2021service}. The following proposition shows that even under static pricing the revenue of $\newest$ can be arbitrarily greater than the revenue under the no-variability benchmark. This result holds even when $h$ is the mean and the customer's prior $\mathrm{Beta}(a,b)$ is correct. 
\begin{proposition}\label{prop:newest_arbitrarily_better_than_knowledgeable_static_price}
    For any $M > 0$, there exists an instance where the customer's prior is correct ($\frac{a}{a+b} = \mu$) and $h$ is the mean, and a price $p > 0$ such that $\frac{\textsc{Rev}(\newest, p)}{\textsc{Rev}^{K}(p)} > M$.
\end{proposition}

\begin{proof}[Proof of Proposition~\ref{prop:newest_arbitrarily_better_than_knowledgeable_static_price}.]
Let $c = 1$, $\mathcal{F} = \bern(q)$, $h$ be the mean, $a>0$ and $b>0$ such that $\frac{a}{a+b} = \mu$, and $p = h(1)$. The revenue under no quality variability is 
\begin{equation}\label{eq:revenue_no_quality_variability_expression}
    \textsc{Rev}^{K}(p) = p \prob[\Theta + \mu \geq p] =  h(1) \prob[\Theta + \mu \geq h(1)] = h(1) q
\end{equation}
where the last equality uses that $\prob[\Theta + \mu \geq h(1)] = \prob[\Theta = 1]$ since $\mu < \frac{a+1}{a+b+1} = h(1)$ (as $\frac{a}{a+b}= \mu$). 
Using Proposition~\ref{theorem:most_recent_C_revenue}
, the revenue of $\newest$ is
\begin{equation}\label{eq:closed_form_for_revenue_of_newest_static_price} 
    \rev(\newest,p) = \frac{p}{\expect_{N \sim \bern(\mu)}\Big[ \frac{1}{\prob[\Theta + h(N) \geq p]}\Big]}.
\end{equation}
The denominator equals
\begin{align}\label{eq:expected_duration_of_stay_in_a_state}
    &\expect_{N \sim \binomial(1,\mu)}\Bigg[ \frac{1}{\prob[\Theta + h(N) \geq p]}\Bigg] = \expect_{N \sim \binomial(1,\mu)}\Bigg[ \frac{1}{\prob[\Theta + h(N) \geq h(1)]}\Bigg] \nonumber\\
    &= \frac{1}{\prob[\Theta + h(0) \geq h(1)]} \prob[N = 0]  + \frac{1}{\prob[\Theta + h(1) \geq h(1)]} \prob[N = 1] =  \frac{1}{q} (1-\mu) + \mu.
\end{align}
The last equality uses $\prob[\Theta + h(1) \geq h(1)] = \prob[\Theta \geq 0] = 1$, $\prob[N = 1] = \mu$, $\prob[N = 0] = 1-\mu$, and
\begin{align*}
    \prob[\Theta + h(0) \geq h(1)] &= \prob\Bigg[\Theta + \frac{a}{a+b+1}   \geq \frac{a+1}{a+b+1}\Bigg] = \prob\Bigg[\Theta  \geq \frac{1}{a+b+1}\Bigg] = \prob[\Theta = 1] = q.
\end{align*}
Substituting \eqref{eq:expected_duration_of_stay_in_a_state} into \eqref{eq:closed_form_for_revenue_of_newest_static_price} and using that $p = h(1)$ yields 
$$\rev(\newest,p) = \frac{p}{\expect_{N \sim \bern(\mu)}\Big[ \frac{1}{\prob[\Theta + h(N) \geq p]}\Big]}= \frac{h(1)}{\frac{1-\mu}{q} + \mu}.$$
Hence, dividing with \eqref{eq:revenue_no_quality_variability_expression}, and taking $\mu = 1- \frac{1}{2M}$ and $q = \frac{1}{2M}$ completes the proof as:
$$\frac{\rev(\newest,p)}{\textsc{Rev}^{K}(p)}= \frac{ \frac{h(1)}{\frac{1-\mu}{q} + \mu}}{h(1)q} = \frac{1}{1-\mu + \mu q} = \frac{1}{\frac{1}{2M} +(1-\frac{1}{2M})\frac{1}{2M}} > \frac{1}{\frac{1}{2M} +\frac{1}{2M}} = M.
$$
\end{proof}

\subsection{Stronger benchmark under dynamic pricing}\label{subsec:stronger_benchmark_under_dynamic_pricing}
We now show that, under the optimal dynamic pricing, the revenue of our Random benchmark  is always at least $\max_{p \geq 0} \textsc{Rev}^{K}(p)$. This is the the revenue under the benchmark in \cite{decroix2021service} and corresponds to the revenue under their optimal dynamic pricing (see Proposition 6 in \cite{decroix2021service}).

\begin{proposition}\label{prop:random_benchmark_is_always_at_least_no_quality_variability_benchmark}
   For any instance where $h$ is the mean and the customer's prior is correct ($\frac{a}{a+b} = \mu$), $\max_{p \geq 0} \textsc{Rev}^{K}(p) \leq \rev(\random, \Pi^{\textsc{dynamic}})$.
\end{proposition}
We next show that the revenue of our Random benchmark can be larger than the revenue of benchmark in \cite{decroix2021service} by a factor of $\nicefrac{4}{3}$. This is similar to Proposition~\ref{lemma:CoNF_dynamic_pricing}.
\begin{proposition}\label{prop:random_benchmark_can_be_four_thirds_larger_than_no_quality_variability_benchmark}
    For any $\alpha < \frac{4}{3}$, there exists an instance where $h$ is the mean and the customer's prior is correct ($\frac{a}{a+b} = \mu$), where $\frac{\rev(\random, \Pi^{\textsc{dynamic}})}{\max_{p \geq 0} \textsc{Rev}^{K}(p)} > \alpha$.
\end{proposition}
To prove both propositions, the following lemma establishes that the optimal-dynamic pricing revenue under Newest equals to the optimal revenue under the benchmark of constant-review rating.
\begin{lemma}\label{lemma:optimal_pricing_revenue_newest_equals_optimal_revenue_under_no_variability}
    For any instance where $h$ is the mean and the customer's prior is correct ($\frac{a}{a+b} = \mu$),
    $$\textsc{Rev}(\newest, \Pi^{\textsc{dynamic}}) = \max_{p \geq 0} \textsc{Rev}^{K}(p).$$
\end{lemma}

\begin{proof}[Proof of Proposition~\ref{prop:random_benchmark_is_always_at_least_no_quality_variability_benchmark}.]
The proof combines Lemma~\ref{lemma:optimal_pricing_revenue_newest_equals_optimal_revenue_under_no_variability} and Proposition~\ref{lemma:CoNF_dynamic_pricing}.
\end{proof}

\begin{proof}[Proof of Proposition~\ref{prop:random_benchmark_can_be_four_thirds_larger_than_no_quality_variability_benchmark}.]
The proof combines Lemma~\ref{lemma:optimal_pricing_revenue_newest_equals_optimal_revenue_under_no_variability} and Proposition~\ref{prop:conf_still_exits_opt_dynamic_pricing}.
\end{proof}

\begin{proof}[Proof of Lemma~\ref{lemma:optimal_pricing_revenue_newest_equals_optimal_revenue_under_no_variability}.]
The expected customer's belief equals:
$$\expect_{N \sim \binomial(c, \mu)}[h(N)] = \expect_{N \sim \binomial(c, \mu)}\Big[\frac{a+N}{a+b+c} \Big] =  \frac{a+c \mu}{a+b+c} = \mu$$
where the first equality uses that $h(N) = \frac{a+N}{a+b+c}$ (since $h$ is the mean), and the last equality uses that $\frac{a+c \mu}{a+b+c} = \mu$ (since $\frac{a}{a+b} = \mu$). Combining this with Corollary~\ref{thm:most_recent_dynamic_opt_rev}, yields 
\begin{align*}
    \textsc{Rev}(\newest, \Pi^{\textsc{dynamic}}) &= \max_{p \geq 0} p \prob[\Theta + \expect_{N \sim \binomial(c, \mu)}[h(N)] \geq p] \\
    &=\max_{p \geq 0} p \prob[\Theta + \mu \geq p]  = \max_{p \geq 0} \textsc{Rev}^{K}(p).
\end{align*}
\end{proof}

\end{document}